\newtheorem{theorem}{Theorem}[section]
\newtheorem{lemma}[theorem]{Lemma}
\newtheorem{prop}[theorem]{Proposition}
\newtheorem{coro}[theorem]{Corollary}
\theoremstyle{definition}
\newtheorem{definition}[theorem]{Definition}
\theoremstyle{remark}
\newtheorem{remark}[theorem]{Remark}
\newtheorem{rhp}{Riemann-Hilbert Problem}
\newcommand{\rhref}[1]{Riemann-Hilbert Problem~\ref{#1}}
\let\Re=\undefined\DeclareMathOperator{\Re}{Re}
\let\Im=\undefined\DeclareMathOperator{\Im}{Im}
\DeclareMathOperator{\res}{Res}
\DeclareMathOperator{\diag}{diag}
\newcommand{\D}{\ensuremath{\,\mathrm{d}}}
\newcommand{\I}{\ensuremath{\mathrm{i}}}
\newcommand{\E}{\ensuremath{\,\mathrm{e}}}
\newcommand{\defeq}{\vcentcolon=}
\newcommand{\eqdef}{=\vcentcolon}
\renewcommand*\env@matrix[1][\arraystretch]{%
  \edef\arraystretch{#1}%
  \hskip -\arraycolsep
  \let\@ifnextchar\new@ifnextchar
  \array{*\c@MaxMatrixCols c}}
\let\originalleft\left
\let\originalright\right
\renewcommand{\left}{\mathopen{}\mathclose\bgroup\originalleft}
\renewcommand{\right}{\aftergroup\egroup\originalright}
\title{A Robust Inverse Scattering Transform for the Focusing Nonlinear Schr\"odinger Equation}
\author{Deniz Bilman}
 \author{Peter D.~Miller}
\address{
University of Michigan, Ann Arbor, MI 48109, USA.
}
\email{bilman@umich.edu}
\email{millerpd@umich.edu}
\keywords{Rogue waves;  Inverse scattering transform;  Nonlinear Schr\"odinger equation.}
\subjclass[2010]{35C05, 35Q15, 35Q55, 37K10}
\date{\today}
\pgfplotsset{compat=1.10}
\begin{document}
 \begin{abstract}
We propose a modification of the standard inverse scattering transform for the
focusing nonlinear Schr\"odinger equation (also other equations by natural generalization) formulated with nonzero boundary conditions at infinity. The
purpose is to deal with arbitrary-order poles and potentially severe spectral singularities in a
simple and unified way. As an application, we use the modified transform to place the Peregrine
solution and related higher-order ``rogue wave'' solutions in an inverse-scattering context for the first time.
This allows one to directly study properties of these solutions such as their dynamical or structural stability, or their asymptotic behavior in the limit of high order.  The modified transform method
also allows rogue waves to be generated on top of other structures by elementary Darboux
transformations, rather than the generalized Darboux transformations in the literature or other related limit processes.
\end{abstract}

\maketitle
\section{Introduction}
Consider the Cauchy initial-value problem for the focusing cubic nonlinear Schr\"odinger (NLS) equation in one space dimension written in the form
\begin{equation}
\I\frac{\partial\psi}{\partial t} +\frac{1}{2}\frac{\partial^2\psi}{\partial x^2} + (|\psi|^2-1)\psi=0,\quad x\in\mathbb{R},\quad t>0.
\label{eq:NLS}
\end{equation}
Here we have written the equation in a ``rotating frame'' by replacing the traditional nonlinear term $|\psi|^2\psi$ with $(|\psi|^2-1)\psi$ which ensures that $\psi(x,t)\equiv 1$ is an exact solution, to which we demand solutions decay for large $|x|$ by imposing the boundary conditions
\begin{equation}
\lim_{x\to\pm\infty}\psi(x,t)=1,\quad t>0.
\label{eq:NZBC}
\end{equation}
The formulation of the Cauchy problem is complete (modulo technical details of the sense in which \eqref{eq:NLS} is satisfied and in which the side conditions are achieved, usually encoded in the choice of an appropriate space) by specifying an initial condition
\begin{equation}
\psi(x,0)=\psi_0(x),\quad x\in\mathbb{R}.
\label{eq:IC}
\end{equation}
The equation \eqref{eq:NLS} frequently arises in an asymptotic limit as an amplitude equation for weakly-nonlinear, monochromatic, strongly dispersive waves in conservative systems.  In this setting, the exact solution $\psi(x,t)\equiv 1$ corresponds in the physical system to a uniform traveling wavetrain, nearly sinusoidal due to weak nonlinearity.  Such nearly sinusoidal waves were studied by G. G. Stokes in the setting of surface water waves and the corresponding solution $\psi(x,t)\equiv 1$ of \eqref{eq:NLS} is sometimes called a \emph{Stokes wave}.  The Cauchy problem \eqref{eq:NLS}--\eqref{eq:IC} is therefore a mathematical formulation of the question of what becomes of a spatially-localized perturbation of the Stokes wave.  We will also follow convention and frequently refer to the solution $\psi=\psi_\mathrm{bg}(x,t)\equiv 1$ as the \emph{background}.

One can check that the solution of the Cauchy problem \eqref{eq:NLS}--\eqref{eq:IC} with initial data
\begin{equation}
\psi_0(x):= 1-4\frac{1-2\I t_0}{1+4(x-x_0)^2+4t_0^2}
\end{equation}
(here $(x_0,t_0)\in\mathbb{R}^2$ is a pair of parameters) is the \emph{Peregrine breather} \cite{Peregrine1983}
\begin{equation}
\psi(x,t)=\psi_\mathrm{P}(x,t):=1-4\frac{1+2\I (t-t_0)}{1+4(x-x_0)^2+4(t-t_0)^2}.
\label{eq:Peregrine}
\end{equation}
This solution may be viewed as a model for a ``rogue wave'':  it converges uniformly to the background solution $\psi_\mathrm{bg}(x,t)\equiv 1$ as $t\to\pm\infty$, but has a peak at $x=x_0$ that occurs at the time $t=t_0$ in the sense that $\max_{(x,t)\in\mathbb{R}^2}|\psi(x,t)|=|\psi(x_0,t_0)|=3$.  See Figure~\ref{f:Peregrine}.
\begin{figure}
    \includegraphics[scale=0.42]{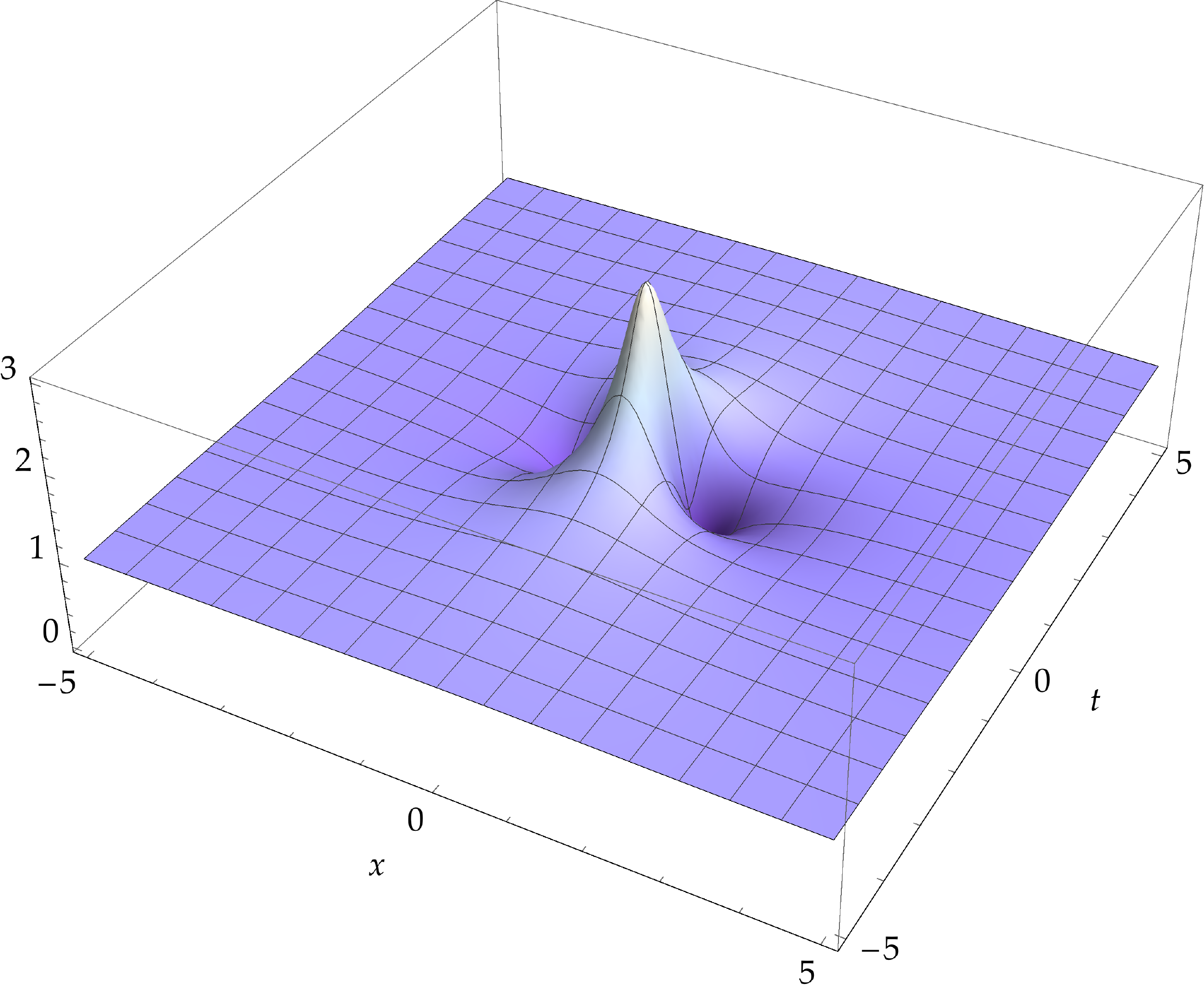}
    \caption{The modulus $|\psi_\mathrm{P}(x,t)|$ of the Peregrine breather solution $\psi_\mathrm{P}(x,t)$ \eqref{eq:Peregrine} of the focusing NLS equation \eqref{eq:NLS} with its peak located at $(x_0,t_0)=(0,0)$.}
    \label{f:Peregrine}
\end{figure}
In studies of surface water waves, ``rogue'', ``extreme'', or ``freak'' waves are commonly defined as waves that are spatially and temporally localized (hence coming out of nowhere and disappearing again without a trace) disturbances of amplitudes exceeding that of the background by an order of magnitude.  Rogue waves are viewed as rare events (and the background amplitude is usually defined in the sense of a statistical average) but they can cause damage to ships and fixed structures such as oil drilling platforms and for these reasons they have attracted substantial scientific attention in recent years.
See \cite{GuoLZLY2017-book,Kharif2009} for a review of these notions.  Rogue waves have been produced in wave-tank experiments \cite{Chabchoub2011}.  A recent application of the focusing NLS equation \eqref{eq:NLS} to model ocean surface rogue waves is the statistical, data-driven computational tool described in \cite{Cousins2016a} (see also this \href{http://news.mit.edu/2016/prediction-tool-rogue-waves-0225}{online article}) which aims to give a warning of a few minutes for the appearance of a rogue wave.  

Of course, as the focusing NLS equation \eqref{eq:NLS} is a universal asymptotic model for weakly-nonlinear waves, analogues of the Peregrine solution and of rogue waves more generally are expected to occur in a myriad of physical systems.  One landmark in this direction was the experimental observation \cite{Kibler2010,Hammani2011} of extreme light waves well-modeled by the Peregrine solution in nonlinear optical fibers.  Some more recent experiments involving the generation and observation of Peregrine-like rogue waves in nonlinear optics are described in \cite{Chabchoub2014,Suret2016,Walczak2015}.  The optical experiments in particular are very interesting because they suggest some sort of robustness (if not proper dynamical stability) of Peregrine rogue waves in the sense that the experiments are reliable and repeatable (and in some cases the reported results are actually averages over many propagating pulses).

Another context in which the Peregrine breather solution \eqref{eq:Peregrine} reliably appears is in the examination of solutions of the focusing NLS equation \eqref{eq:NLS} with \emph{general} initial data but in a certain semiclassical limit.  The semiclassical limit makes the characteristic length scale and duration of the Peregrine solution small compared to the length scale in the initial data and the length of time over which the solution is observed.  Thus it becomes relatively unimportant exactly which kind of boundary conditions are imposed at infinity.  In \cite{Bertola2013} the semiclassical scaling for \eqref{eq:NLS} was studied under the assumption that the boundary condition \eqref{eq:NZBC} was replaced with 
\begin{equation}
  \lim_{x\to\pm\infty}\psi(x,t)=0,\quad t>0,
  \label{eq:ZBC}
\end{equation}
and it was shown that for a large class of (implicitly-specified) initial data the solution of the Cauchy problem develops a universal wave pattern in which numerous Peregrine breathers appear at predictable (via the pole distribution of the \emph{tritronqu\'ee} Painlev\'e-I transcendent) locations $(x_0,t_0)$ within a wedge-shaped domain in the $(x,t)$-plane opening from its vertex in the positive $t$-direction.  There are many Peregrine breathers centered at  points $(x_0,t_0)$ that, in the semiclassical scaling, are very far from one another compared to the characteristic scales of the breathers themselves.  All of these breathers rest upon a common nonzero background field that is asymptotically constant on length scales that contain arbitrarily many breathers, although on larger spatial scales this background has to decay to zero to match the mathematically-imposed boundary conditions.  This kind of pattern of waves might seem to be a strictly theoretical phenomenon, and yet it has recently also been observed in optical experiments \cite{Tikan2017}.

There are many papers that describe generalizations of the elementary Peregrine solution \eqref{eq:Peregrine}, making use of the integrable structure behind the focusing NLS equation \eqref{eq:NLS} to construct ``higher-order'' versions of $\psi_\mathrm{P}(x,t)$ by mostly algebraic means such as the Hirota method or Darboux-like transformations leading to determinantal or Wronskian formulae for exact solutions with interesting properties.  An interesting point that was observed from the beginning is that it is not possible to generate $\psi_\mathrm{P}(x,t)$ or its higher-order analogues from the standard Darboux transformation method.  Instead, one must take certain limits within the standard Darboux transformation method as was done in \cite{AkhmedievAS09} and put into the framework of a generalized Darboux scheme in \cite{Guo2012a}.  See also \cite{Ankiewicz2010,Dubard2010b,Gaillard2013,Gaillard2014a,HeZWPF13}.  The fact that limits are required at some level mirrors what is also required to obtain these solutions via the inverse-scattering method as it is currently understood.

Despite all of this activity, and all of the knowledge available to date on the Peregrine solution and its generalizations mentioned above, \emph{none of these solutions has been obtained directly as the solution of a Cauchy problem of the form \eqref{eq:NLS}--\eqref{eq:IC} via an inverse-scattering transform}.  This makes it challenging to address questions that would seem natural from the point of view of soliton theory:
\begin{itemize}
\item
Are rogue wave solutions stable to localized perturbations?  If not, what kinds of perturbations excite instabilities, and what is the long-term nonlinear saturation of the instabilities?
\item 
Is it possible to make a prediction based on the computation of some relevant scattering data as to how many Peregrine-like peaks will be generated from a localized perturbation of the background?  In other words, what kind of initial conditions generate rogue waves?
\item 
How do rogue waves interact with other coherent structures, such as the time-periodic Kuznetsov-Ma soliton \cite{Kuznetsov1977,Ma1979b} and its Galilean-boosted generalization sometimes called the Tajiri-Watanabe soliton \cite{Tajiri1998a}?  How do they interact with more general waves that are not in the realm of ``exact solutions''?
\item 
Since it is possible to define mathematically a ``rogue wave of order $N$'' for arbitrary $N$, what can be said quantitatively about the asymptotics for large $N$?
\end{itemize}
In this paper, we aim to resolve this difficulty by generalizing the inverse-scattering solution for the Cauchy problem \eqref{eq:NLS}--\eqref{eq:IC} currently in the literature to capture rogue waves.  As will be seen, rogue waves correspond to a particular type of \emph{spectral singularity}; the method we propose handles these and other types of spectral singularities, as well as poles of arbitrary order, in a universal framework.  Moreover, the method also makes taking limits \cite{AkhmedievAS09} or related constructions such as the generalized form of the Darboux transformation \cite{Guo2012a} unnecessary in generating high-order rogue waves from the background field, and it yields an analytical setting in which the large $N$ asymptotics of $N^\mathrm{th}$ order rogue wave solutions can be effectively studied.

Before we proceed, we give a remark on notational conventions used in this paper.
\begin{remark}[Notational convention]
  We denote by $a^*$ the complex conjugate of a complex number $a$. When used with a matrix, $\mathbf{A}^*$ denotes the element-wise complex conjugate, without transposition. We use the ``dagger'' symbol $\mathbf{A}^\dagger$ to denote the conjugate transpose (Hermitian) of a matrix $\mathbf{A}$. In this paper, $\bar{a}$ is \underline{not} used used to denote the complex conjugate of a quantity $a$. With the exception of the Pauli matrices defined by
  \begin{equation}
    \sigma_1 \defeq \begin{bmatrix} 0 & 1 \\ 1 & 0 \end{bmatrix}\,,\quad\sigma_2 \defeq \begin{bmatrix} 0 & -\I \\ \I & 1 \end{bmatrix}\,,\quad\sigma_3 \defeq \begin{bmatrix} 1 & 0 \\ 0 & - 1 \end{bmatrix}\,,
    \label{eq:Pauli}
\end{equation}
and the identity matrix $\mathbb{I}$, we use boldface capital letters to denote matrices and boldface lower-case letters to denote column vectors.
\end{remark}
\subsection{Rogue waves and the known inverse-scattering transform for the Cauchy problem \eqref{eq:NLS}--\eqref{eq:IC}.}
To show what goes wrong when rogue waves are considered in the inverse-scattering transform, we begin by reviewing the method that is already in the literature.  The existing method was introduced by Ma \cite{Ma1979b}, further elucidated by Faddeev and Takhtajan \cite{FaddeevBook}, and substantially developed by Biondini and Kova\v{c}i\v{c} \cite{Biondini2014}, where the inverse problem was formulated as a matrix Riemann-Hilbert problem.  This method has subsequently been used to prove a remarkable result \cite{BiondiniM2017}, namely that under some conditions the solution of the Cauchy problem \eqref{eq:NLS}--\eqref{eq:IC} takes on a universal form as $t\to\infty$ that is independent of the initial condition \eqref{eq:IC}.  

The focusing NLS equation in the form \eqref{eq:NLS} is the $\lambda$-independent compatibility condition for the simultaneous linear equations of a Lax pair~\cite{Shabat1972}
\begin{equation}
\mathbf{u}_x = \mathbf{X}(\lambda;x,t)\mathbf{u},\quad \mathbf{X}(\lambda;x,t):=\begin{bmatrix}-\I\lambda & \psi\\-\psi^* & \I\lambda\end{bmatrix},
\label{eq:Lax-x}
\end{equation}
and
\begin{equation}
\mathbf{u}_t = \mathbf{T}(\lambda;x,t)\mathbf{u},\quad\mathbf{T}(\lambda;x,t):=\begin{bmatrix}-\I\lambda^2 + \I\tfrac{1}{2}\left(|\psi|^2 -1 \right) & \lambda\psi + \I\tfrac{1}{2}\psi_x\\
-\lambda\psi^*+\I\tfrac{1}{2}\psi_x^* & \I\lambda^2 -\I\tfrac{1}{2}\left(|\psi|^2 -1 \right)\end{bmatrix},
\label{eq:Lax-t}
\end{equation}
governing an auxiliary vector $\mathbf{u}$ (or more generally, a column of a fundamental solution matrix) depending on $(x,t)\in\mathbb{R}^2$ and the complex spectral parameter $\lambda\in\mathbb{C}$.
The existence of simultaneous solutions of this Lax pair constitutes the basis of of the inverse-scattering transform (IST) method to solve the Cauchy initial-value problem \eqref{eq:NLS}--\eqref{eq:IC}. 

\subsubsection{Review of the inverse-scattering transform}
\label{s:IST-NZBC}
We now describe the method as it was formulated in \cite{BiondiniM2017}, in which the inverse problem is set in the complex $\lambda$-plane rather than in terms of a uniformization map to the Riemann sphere \cite{Biondini2014}.
\paragraph{\underline{\smash{Direct transform}}}
The Zakharov-Shabat problem \eqref{eq:Lax-x} with the background field $\psi=\psi_\mathrm{bg}(x,t)\equiv 1$ is a constant-coefficient first-order system and hence can be solved by diagonalization (or the eigenvalue method) to obtain a fundamental a matrix of solutions: \begin{equation}
  \mathbf{U}_\mathrm{bg}(\lambda;x) \defeq n(\lambda)\begin{bmatrix} 1 & \I(\lambda-\rho(\lambda)) \\ \I(\lambda-\rho(\lambda)) & 1 \end{bmatrix}\E^{-\I\rho(\lambda)x\sigma_3}\eqdef \mathbf{E}(\lambda)\E^{-\I\rho(\lambda)x\sigma_3}\,,
  \label{eq:NZBC-asymptotic-eigenfunction}
\end{equation}
where $\rho(\lambda)$ is determined from the equation $\rho^2 =\lambda^2 + 1$ (the characteristic equation for eigenvalues $\I\rho$ of the constant coefficient matrix in \eqref{eq:Lax-x} with $\psi\equiv 1$). To be concrete, we suppose that $\rho(\lambda)$ is the function analytic for complex $\lambda$ with the exception of a vertical branch cut $\Sigma_\mathrm{c}$ between the branch points $\lambda=\pm \I$, whose square coincides with $\lambda^2+1$ and that satisfies $\rho(\lambda)=\lambda+ O(\lambda^{-1})$ as $\lambda\to\infty$. The scalar complex-valued function $n(\lambda)$ in \eqref{eq:NZBC-asymptotic-eigenfunction} is well defined as the function analytic for $\lambda\in\mathbb{C}\setminus\Sigma_\text{c}$ satisfying
\begin{equation}
  n(\lambda)^2 = \frac{\lambda + \rho(\lambda)}{2\rho(\lambda)},\quad \lim_{\lambda\to\infty} n(\lambda)=1\,.
  \label{eq:f-def}
\end{equation}
The normalization factor $n(\lambda)$ ensures that $\det(\mathbf{U}_\mathrm{bg}(\lambda;x))=1$. Note that each matrix element of the unimodular matrix $\mathbf{E}(\lambda)$ (and hence those of the fundamental matrix $\mathbf{U}_\mathrm{bg}(\lambda;x)$) has (mild) singularities at the (branch) points $\lambda=\pm \I$, where all four matrix elements of blow up as $|\lambda\mp \I|^{-1/4}$. 
We also observe that $\mathbf{U}_\mathrm{bg}(\lambda;x)$ and $\mathbf{E}(\lambda)$ are well defined as analytic functions for $\lambda\in\mathbb{C}\setminus\Sigma_\mathrm{c}$, but we may consider them as taking two different values on $\Sigma_\mathrm{c}\setminus\{-\I,\I\}$,
corresponding to evaluating $\rho(\lambda)$ and $n(\lambda)$ as boundary values taken on one side or the other.

Roughly speaking, the \emph{continuous spectrum} $\Gamma$ for the direct problem consists of those values of $\lambda$ for which $\rho(\lambda)$ is real.  Since $\rho(\lambda)$ takes distinct real boundary values on the branch cut $\Sigma_\mathrm{c}$, it will be more convenient to distinguish these boundary values by defining $\Gamma$ as the union of the real intervals $(-\infty,0)$ and $(0,+\infty)$ with the boundary of the slit domain $\mathbb{C}\setminus\Sigma_\mathrm{c}$, which is topologically a circle. So, as a set of points in the $\lambda$-plane, we might write $\Gamma:=\mathbb{R}\cup\Sigma_\mathrm{c}$, but with each $\lambda\in\Sigma_\mathrm{c}\setminus\{-\I,\I\}$ we actually associate two distinct points of $\Gamma$, corresponding to the two boundary values of $\rho(\lambda)$.  From here onwards, when we define or discuss quantities for $\lambda\in\Gamma\cap(\Sigma_\mathrm{c}\setminus\{-\I,\I\})$, we are doing so with two distinct boundary values of $\rho(\lambda)$ and $n(\lambda)$ corresponding to distinct points on the boundary of the slit domain $\mathbb{C}\setminus\Sigma_\mathrm{c}$.

Suppose that $t$ is fixed and $\Delta\psi(x,t)\defeq \psi(x,t) - 1 $ is an absolutely integrable complex-valued function of $x\in\mathbb{R}$. 
For $\lambda\in\Gamma\setminus\{-\I,\I\}$, the \emph{Jost matrix solutions} $\mathbf{U}=\mathbf{J}^\pm(\lambda;x,t)$ of \eqref{eq:Lax-x} are defined uniquely\footnote{modulo the interpretation of $\lambda\in\Gamma\cap(\Sigma_\mathrm{c}\setminus\{-\I,\I\})$  discussed above} by the boundary conditions
\begin{equation}
    \mathbf{J}^{\pm}(\lambda;x,t)\E^{\I\rho(\lambda)x\sigma_3} = \mathbf{E}(\lambda)+o(1),\quad x\to\pm\infty,
\label{eq:Jost-BC}
\end{equation}
and through the renormalization $\mathbf{K}^{\pm}(\lambda;x,t)\defeq\mathbf{J}^{\pm}(\lambda;x,t)\E^{\I\rho(\lambda)x\sigma_3}$, they can be obtained from the unique solutions of the Volterra integral equations
\begin{equation}
  \mathbf{K}^{\pm}(\lambda;x,t)=\mathbf{E}(\lambda)+\int\limits_{\pm \infty}^{x}\mathbf{E}(\lambda)\E^{-\I\rho(\lambda)(x-y)\sigma_3}\mathbf{E}(\lambda)^{-1}\mathbf{\Delta\Psi}(y,t)\mathbf{K}^{\pm}(\lambda;y,t)\E^{\I\rho(\lambda)(x-y)\sigma_3}\, \D y\,,~\rho(\lambda)\in\mathbb{R}\,,
  \label{eq:Volterra-NZBC}
\end{equation}
where
\begin{equation}
    \mathbf{\Delta\Psi}(x,t)\defeq \begin{bmatrix}0 & \Delta\psi(x,t) \\ - \Delta\psi(x,t)^* & 0 \end{bmatrix}\,.
\end{equation}
Because $\E^{\I\rho(\lambda)x\sigma_3}$ is a diagonal matrix, the following are consequences of standard analysis of the iterates that (see, for example, \cite{Biondini2014} and the references therein).
\begin{itemize}
  \item The first column $\mathbf{j}^{-,1}(\lambda;x,t)$ of $\mathbf{J}^{-}(\lambda;x,t)$ and the second column $\mathbf{j}^{+,2}(\lambda;x,t)$ of $\mathbf{J}^{+}(\lambda;x,t)$ are the boundary values of  vector-valued functions of $\lambda$ analytic in the domain $\mathbb{C}^+\setminus \Sigma_\mathrm{c}$ (corresponding to $\Im\{\rho(\lambda)\}>0$).  For $\lambda\in\Sigma_\mathrm{c}\setminus\{-\I,\I\}$ this is only relevant for $\Im\{\lambda\}\ge 0$, in which case the statement refers to the columns of the Jost matrices associated with both boundary values of $\rho(\lambda)$ and $\mathbf{E}(\lambda)$ as are needed to traverse the boundary of the slit domain $\mathbb{C}^+\setminus\Sigma_\mathrm{c}$.
  \item The first column $\mathbf{j}^{+,1}(\lambda;x,t)$ of $\mathbf{J}^{+}(\lambda;x,t)$ and the second column $\mathbf{j}^{-,2}(\lambda;x,t)$ of $\mathbf{J}^{-}(\lambda;x,t)$ are the boundary values of   vector-valued functions of $\lambda$ analytic in the domain $\mathbb{C}^-\setminus \Sigma_\mathrm{c}$ (corresponding to $\Im\{\rho(\lambda)\}<0$).  For $\lambda\in\Sigma_\mathrm{c}\setminus\{-\I,\I\}$ this is only relevant for $\Im\{\lambda\}\le 0$, with a similar caveat.
\end{itemize}
Following an analogous argument given in \cite{Demontis2013} for the defocusing problem, one can show that
if also $x^2\Delta\psi(x)\in L^1(\mathbb{R})$, then $\mathbf{j}^{-,1}$ and $\mathbf{j}^{+,2}$ are $O((\lambda-\I)^{-1/4})$ as $\lambda\to \I$ while $\mathbf{j}^{+,1}$ and $\mathbf{j}^{-,2}$ are $O((\lambda+\I)^{-1/4})$ as $\lambda\to -\I$. 
Moreover, $\det(\mathbf{J}^{\pm}(\lambda;x,t))=1$ for $\lambda\in \Gamma\setminus\{-\I,\I\}$, and $\mathbf{J}^\pm(\lambda;x,t)$ are both fundamental matrices of solutions of \eqref{eq:Lax-x} with $\psi=\psi(x,t)$ for $\lambda\in \Gamma\setminus\{-\I,\I\}$.  Thus, they satisfy the \emph{scattering relation} $\mathbf{J}^{+}(\lambda;x,t) = \mathbf{J}^{-}(\lambda;x,t)\mathbf{S}(\lambda;t)$, where $\mathbf{S}(\lambda;t)$ is called the \emph{scattering matrix} defined for $\lambda\in\Gamma\setminus\{-\I,\I\}$ (in the generalized sense of $\Gamma$ described above), and $\det(\mathbf{S}(\lambda;t))=1$. 
We may write the scattering matrix in the following form
\begin{equation}
    \mathbf{S}(\lambda;t)\eqdef \begin{bmatrix} \bar{a}(\lambda;t) & \bar{b}(\lambda;t) \\ - b(\lambda;t) & a(\lambda;t)\end{bmatrix}
\end{equation}
with
\begin{equation}
    \begin{aligned}
        a(\lambda;t) &\defeq \det \left(\begin{bmatrix} \mathbf{j}^{-,1}(\lambda;x,t); & \mathbf{j}^{+,2}(\lambda;x,t)\end{bmatrix}\right)\,,\quad
        \bar{a}(\lambda;t) \defeq \det\left(\begin{bmatrix} \mathbf{j}^{+,1}(\lambda;x,t); & \mathbf{j}^{-,2}(\lambda;x,t)\end{bmatrix}\right)\,,\\
        b(\lambda;t) &\defeq \det\left(\begin{bmatrix} \mathbf{j}^{+,1}(\lambda;x,t); & \mathbf{j}^{-,1}(\lambda;x,t)\end{bmatrix}\right)\,,\quad
        \bar{b}(\lambda;t) \defeq \det\left(\begin{bmatrix} \mathbf{j}^{+,2}(\lambda;x,t); & \mathbf{j}^{-,2}(\lambda;x,t)\end{bmatrix}\right)\,,
    \end{aligned}
    \label{eq:a-b-NZBC}
\end{equation}
and the fact that these determinants are independent of $x$ can be seen another way as a consequence of Abel's theorem. 

These Wronskian formulae show that $a(\lambda;t)$ and $\bar{a}(\lambda;t)$ extend analytically to the domains $\Im\{\rho(\lambda)\}>0$ and $\Im\{\rho(\lambda)\}<0$ respectively, but $b(\lambda;t)$ and $\bar{b}(\lambda;t)$ do not necessarily enjoy analytic continuation from $\Gamma$ in any direction. Moreover, $a(\lambda;t),\,\bar{a}(\lambda;t)\to 1$ as $\lambda\to\infty$ in $\mathbb{C}^+,\,\mathbb{C}^-$ respectively \cite{Biondini2014}.  In fact, we have the following estimate for the rate at which $a(\lambda)\to 1$.
\begin{lemma}
Suppose that $\Delta\psi(\cdot,t):=\psi(\cdot,t)-1\in L^1(\mathbb{R})$, and $\Delta\psi_x(\cdot,t)\in L^1(\mathbb{R})$.  Then $a(\lambda;t)-1=O(\lambda^{-1})$ uniformly in $\mathbb{C}^+$, and moreover, $|a(\lambda;t)|>\tfrac{1}{2}$ provided that $|\lambda|\ge r[\Delta\psi]$, where $r[\Delta\psi]$ is defined by \eqref{eq:lambda-lower-bound}.
\label{lemma:radius}
\end{lemma}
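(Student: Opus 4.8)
The plan is to reduce the Wronskian definition of $a(\lambda;t)$ to a single scalar integral by passing to a modified Jost function, and then to extract the rate $O(\lambda^{-1})$ from two independent sources of smallness for large $\lambda$. I suppress the fixed parameter $t$ throughout. First I would strip off the background matrix $\mathbf{E}(\lambda)$ by setting $\mathbf{m}^{-,1}\defeq\mathbf{E}(\lambda)^{-1}\mathbf{k}^{-,1}$ and $\mathbf{m}^{+,2}\defeq\mathbf{E}(\lambda)^{-1}\mathbf{k}^{+,2}$, where $\mathbf{k}^{-,1}$ and $\mathbf{k}^{+,2}$ are the relevant columns of the renormalized Jost matrices $\mathbf{K}^\pm$. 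Because $\det\mathbf{E}(\lambda)=1$ and the scalar prefactors $\E^{-\I\rho x}$ and $\E^{+\I\rho x}$ coming from $\mathbf{j}^{-,1}$ and $\mathbf{j}^{+,2}$ multiply to $1$ in the determinant, one obtains the $x$-independent identity $a(\lambda)=\det\left[\mathbf{m}^{-,1}(\lambda;x),\mathbf{m}^{+,2}(\lambda;x)\right]$. Sending $x\to+\infty$, where $\mathbf{m}^{+,2}(\lambda;x)$ tends to the second standard basis vector, collapses this determinant to its first entry, and the Volterra equation for $\mathbf{m}^{-,1}$ (got by conjugating \eqref{eq:Volterra-NZBC} by $\mathbf{E}(\lambda)^{-1}$) then yields
\begin{equation*}
a(\lambda)=1+\int_{-\infty}^{\infty}\bigl[\widetilde{\mathbf{\Delta\Psi}}(y)\,\mathbf{m}^{-,1}(\lambda;y)\bigr]_1\,\D y\,,\qquad \widetilde{\mathbf{\Delta\Psi}}\defeq\mathbf{E}(\lambda)^{-1}\mathbf{\Delta\Psi}\,\mathbf{E}(\lambda)\,.
\end{equation*}

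Next I would record the two facts that produce the decay. Writing $\alpha(\lambda)\defeq\I(\lambda-\rho(\lambda))$, one has $\alpha(\lambda)=O(\lambda^{-1})$ and $n(\lambda)=1+O(\lambda^{-1})$ as $\lambda\to\infty$, so a direct computation of $\widetilde{\mathbf{\Delta\Psi}}=\mathbf{E}^{-1}\mathbf{\Delta\Psi}\,\mathbf{E}$ shows that its diagonal entries are $O(\alpha)=O(\lambda^{-1})$ multiples of $\Re\{\Delta\psi\}$, while its off-diagonal entries equal $\Delta\psi$ and $-\Delta\psi^*$ up to $O(\lambda^{-2})$ corrections. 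Hence the top entry of the integrand is $(\widetilde{\mathbf{\Delta\Psi}})_{11}m^{-,1}_1+(\widetilde{\mathbf{\Delta\Psi}})_{12}m^{-,1}_2$, in which the first term is already $O(\lambda^{-1})$ once $m^{-,1}_1$ is bounded, whereas the second term is $O(1)\cdot m^{-,1}_2$ and therefore forces one to show that the \emph{second} component $m^{-,1}_2$ is itself small.

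I would then establish the required bounds on $\mathbf{m}^{-,1}$. For $\lambda\in\mathbb{C}^+$ one has $\Im\{\rho(\lambda)\}>0$, and in the Volterra equation for $\mathbf{m}^{-,1}$ the oscillatory factors acting on the two components have moduli $1$ and $\E^{-2\Im\{\rho\}(x-y)}\le 1$ respectively (since $y\le x$); a standard Neumann-series estimate then gives that $\mathbf{m}^{-,1}$ is uniformly bounded in $\mathbb{C}^+$, with bound $\E^{C\|\Delta\psi\|_{L^1}}$ for large $|\lambda|$. The second component obeys $m^{-,1}_2(x)=\int_{-\infty}^x\E^{2\I\rho(x-y)}\bigl[\widetilde{\mathbf{\Delta\Psi}}(y)\mathbf{m}^{-,1}(y)\bigr]_2\,\D y$, whose leading part is $-\int_{-\infty}^x\E^{2\I\rho(x-y)}\Delta\psi^*(y)\,\D y$. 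This is exactly where the hypothesis $\Delta\psi_x\in L^1(\mathbb{R})$ is used: integrating by parts in $y$ converts the factor $\E^{2\I\rho(x-y)}$ into $1/(2\I\rho)$ times a new integral bounded by $\|\Delta\psi_x\|_{L^1}/(2|\rho|)$, while the boundary term at $y=x$ equals $\Delta\psi^*(x)/(2\I\rho)$ and is uniformly $O(\lambda^{-1})$ because $\Delta\psi_x\in L^1$ also forces $\Delta\psi\in L^\infty$. Thus $m^{-,1}_2=O(\lambda^{-1})$ uniformly in $x$ and in $\lambda\in\mathbb{C}^+$. Substituting both bounds into the scalar formula gives $a(\lambda)-1=O(\lambda^{-1})$ uniformly in $\mathbb{C}^+$ with an explicit constant $C[\Delta\psi]$ depending on $\|\Delta\psi\|_{L^1}$ and $\|\Delta\psi_x\|_{L^1}$; taking $r[\Delta\psi]$ in \eqref{eq:lambda-lower-bound} to be a suitable multiple of $C[\Delta\psi]$ then yields $|a(\lambda)|\ge 1-C[\Delta\psi]/|\lambda|>\tfrac12$ whenever $|\lambda|\ge r[\Delta\psi]$, which is the second assertion.

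The main obstacle is obtaining the sharp rate rather than merely $a(\lambda)-1=o(1)$. The naive Volterra/Neumann estimate only delivers boundedness of $\mathbf{m}^{-,1}$, hence $a(\lambda)=O(1)$; the decay is invisible term-by-term in the series and emerges only after separating the two distinct mechanisms — the $O(\lambda^{-1})$ smallness of $\alpha(\lambda)=\I(\lambda-\rho(\lambda))$ encoded in the diagonal of $\widetilde{\mathbf{\Delta\Psi}}$, and the oscillatory cancellation in the second component that one integration by parts (and therefore $\Delta\psi_x\in L^1$) upgrades into a genuine $O(\lambda^{-1})$ gain. Care is also needed to keep every estimate uniform up to the boundary of $\mathbb{C}^+$ and to track constants explicitly so that the threshold $r[\Delta\psi]$ is well defined; since the branch points $\lambda=\pm\I$ lie at $|\lambda|=1$, they cause no difficulty once $|\lambda|$ is large.
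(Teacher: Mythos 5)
Your proposal is correct in substance and rests on the same two analytic mechanisms as the paper's proof in Appendix B: a Neumann-series a priori bound for the (renormalized) Jost column (this is the paper's Lemma~\ref{lemma:Neumann}), and a single integration by parts in the oscillatory Volterra integral --- which is exactly where $\Delta\psi_x\in L^1(\mathbb{R})$ enters --- combined with the algebraic smallness $\lambda-\rho(\lambda)=O(\lambda^{-1})$. The organization, however, is genuinely different and leaner. The paper keeps the factor $\mathbf{E}(\lambda)$ attached, estimates all four scalar components $u_1,v_1,u_2,v_2$ of \emph{both} columns $\mathbf{k}^{-,1}$ and $\mathbf{k}^{+,2}$, inverts two Volterra operators $\mathscr{I}-\tfrac{1}{2\I\rho(\lambda)}\mathscr{T}_j[\Delta\psi]$, $j=1,2$ (which imposes the extra condition $2|\rho(\lambda)|>\|\Delta\psi\|_1(\|\psi\|_\infty+4)$ that then propagates into $r[\Delta\psi]$), and finally bounds the Wronskian by $2q(\lambda)+q(\lambda)^2+m(\lambda)^2$. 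You instead strip off $\mathbf{E}(\lambda)$ and exploit the $x$-independence of the Wronskian by sending $x\to+\infty$, which collapses $a(\lambda)$ to $1$ plus a single scalar integral involving only the column $\mathbf{m}^{-,1}$; the second column enters only through its boundary limit $\mathbf{e}^2$. This halves the bookkeeping, and because you estimate the self-referential terms using the a priori bound directly (they all carry a factor $\alpha(\lambda)=O(\lambda^{-1})$) rather than by a resolvent inversion, you avoid the operator-norm threshold entirely; your $r[\Delta\psi]$ comes out of a simpler constant chase, which is legitimate since the threshold in the lemma is whatever the proof produces.

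One step needs patching before the argument closes. The integral you integrate by parts is not $-\int_{-\infty}^x\E^{2\I\rho(\lambda)(x-y)}\Delta\psi(y)^*\,\D y$ but rather $-\int_{-\infty}^x\E^{2\I\rho(\lambda)(x-y)}\,n(\lambda)^2\Delta\psi(y)^*\,m^{-,1}_1(\lambda;y)\,\D y$, so when the $y$-derivative falls on the integrand it produces both $\Delta\psi_x(y)^*m^{-,1}_1(\lambda;y)$ \emph{and} $\Delta\psi(y)^*\,\partial_y m^{-,1}_1(\lambda;y)$. The second piece is harmless, but you must say why: differentiating the first component of the Volterra equation gives $\partial_y m^{-,1}_1(\lambda;y)=\bigl[\widetilde{\mathbf{\Delta\Psi}}(y)\,\mathbf{m}^{-,1}(\lambda;y)\bigr]_1$, which by the a priori bound is pointwise dominated by a constant multiple of $|\Delta\psi(y)|$ and hence lies in $L^1(\mathbb{R})$, with norm controlled by $\|\Delta\psi\|_1\|\Delta\psi_x\|_1$ (using $\|\Delta\psi\|_\infty\le\|\Delta\psi_x\|_1$). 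This is precisely the point at which the paper invokes the differential equation \eqref{eq:Lax-x} to eliminate the derivative of the Jost component before estimating. With that one line added, your bounds close, the constant $C[\Delta\psi]$ is explicit in $\|\Delta\psi\|_1$ and $\|\Delta\psi_x\|_1$, and both assertions of the lemma follow.
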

We give the proof in Appendix~\ref{A:lemma-radius-proof}.  If $(1+x^2)\Delta\psi(x)\in L^1(\mathbb{R})$, then it follows from the estimates of the Jost solutions for $\lambda\approx\I$ and the Wronskian formula \eqref{eq:a-b-NZBC} for $a$ that $a(\lambda)=O((\lambda-\I)^{-1/2})$ as $\lambda\to\I$.  In the special case that $\lambda=\I$ is a ``virtual level'' \cite[Appendix B]{Biondini2014}, one can show under a slightly stronger decay assumption on $\Delta\psi$ that $a(\lambda)=O(1)$ as $\lambda\to\I$ (this is the case for the background potential, for instance, for which $a(\lambda)\equiv 1$).    
The ratio 
\begin{equation}
R(\lambda;t)\defeq b(\lambda;t)/a(\lambda;t)\,,\quad \lambda\in\Gamma\setminus\{-\I,\I\}\,,
\label{eq:reflection-coeff}
\end{equation}
(for $\Gamma$ interpreted in the generalized sense)
is called the \emph{reflection coefficient} and it does not necessarily have any analytic extension from $\Gamma$. We also define $\bar{R}(\lambda;t):=\bar{b}(\lambda;t)/\bar{a}(\lambda;t)$ for $\lambda\in\Gamma$.

The coefficient matrix $\mathbf{X}(\lambda;x,t)$ in \eqref{eq:Lax-x} satisfies the Schwarz symmetry 
\begin{equation}
\sigma_2 \mathbf{X}(\lambda^*;x,t)^*\sigma_2=\mathbf{X}(\lambda;x,t).
\label{eq:Schwarz-X}
\end{equation}
This implies that if $\mathbf{u}(\lambda;x,t)$ is a solution of \eqref{eq:Lax-x} for $\lambda\in\Gamma\setminus\{-\I,\I\}$, then so is $\I\sigma_2 \mathbf{u}(\lambda^*,x,t)^*$. Since the Jost matrices $\mathbf{J}^{\pm}(\lambda;x,t)$ are uniquely determined by their boundary conditions \eqref{eq:Jost-BC} for $\lambda\in\Gamma\setminus\{-\I,\I\}$, it follows that 
\begin{equation}
\label{eq:Jost-sym}
\mathbf{J}^{\pm}(\lambda;x,t)=\sigma_2\mathbf{J}^{\pm}(\lambda^{*};x,t)^{*}\sigma_2,\quad\lambda\in\Gamma\setminus\{-\I,\I\},
\end{equation}
where we used the property that $\rho(\lambda)=\rho(\lambda^*)^*$ along with $\mathbf{E}(\lambda)=\sigma_2\mathbf{E}(\lambda^*)^*\sigma_2$. Thus the scattering matrix also enjoys the same Schwarz symmetry on its domain, which implies that $\bar{a}(\lambda;t)=a(\lambda^*;t)^*$ and $\bar{b}(\lambda;t)=b(\lambda^*;t)^*$ for $\lambda\in\Gamma\setminus\{-\I, \I\}$. Moreover, the analytic continuations admitted by the Jost column vector solutions of \eqref{eq:Lax-x} also satisfy the following symmetries in the indicated domains:
\begin{align}
\mathbf{j}^{-,1}(\lambda;x,t)&=\I\sigma_2 \mathbf{j}^{-,2}(\lambda^*;x,t)^*,\quad \lambda\in \mathbb{C}^+\setminus\Sigma_\mathrm{c}\,,\label{eq:Jost-minus-col-sym}\\
\mathbf{j}^{+,1}(\lambda;x,t)&=\I\sigma_2 \mathbf{j}^{+,2}(\lambda^*;x,t)^*,\quad \lambda\in \mathbb{C}^-\setminus\Sigma_\mathrm{c}\label{eq:Jost-plus-col-sym}\,.
\end{align}
These relations extend the symmetry $\bar{a}(\lambda;t)=a(\lambda^*;t)^*$ to the values of $\lambda\in\mathbb{C}^{-}\setminus\Sigma_\mathrm{c}$. 

Assuming for convenience of exposition\footnote{This assumption can be removed; indeed the robust transform we shall introduce later renders it obsolete.} that $a(\lambda;t)\neq 0$ for all $\lambda\in\Gamma\setminus\{-\I,\I\}$ and that its analytic extension to $\mathbb{C}^+\setminus\Sigma_\mathrm{c}$ has only finitely many simple zeros $\xi_1(t),\dots,\xi_N(t)$, the \emph{scattering data} associated with $\psi(\cdot,t)$ is
\begin{equation}
    \mathcal{S}(\psi(\cdot,t)) \defeq \left\{ R(\lambda;t)=b(\lambda;t)/a(\lambda;t)~\text{for}~\lambda\in\Gamma\setminus\{-\I,\I\}\,,~ \{(\xi_j(t),\gamma_j(t)) \}_{j=1}^{N} \right\},
    \label{eq:traditional-scattering-data}
\end{equation}
where the nonzero proportionality constants $\gamma_1(t),\dots,\gamma_N(t)$ are defined in terms of the analytic Jost columns by (cf., \eqref{eq:a-b-NZBC})
\begin{equation}
\mathbf{j}^{-,1}(\xi_j(t);x,t)=\gamma_j(t)\mathbf{j}^{+,2}(\xi_j(t);x,t).
\label{eq:proportionality-constants}
\end{equation}

\paragraph{\underline{\smash{Time dependence}}} 
Because $\psi(x,t)$ is a solution of \eqref{eq:NLS} by assumption, it is easy to verify that the unimodular matrices $\mathbf{J}^{\pm}(\lambda;x,t)\E^{-\I\rho(\lambda)\lambda t\sigma_3}$ are both fundamental matrices of \emph{simultaneous} solutions of the Lax pair \eqref{eq:Lax-x}--\eqref{eq:Lax-t} for $\lambda\in\Gamma\setminus\{-\I,\I\}$ and from this follows the time evolution of the scattering matrix:
\begin{equation}
    \mathbf{S}(\lambda;t)=\E^{-\I\rho(\lambda)\lambda t\sigma_3}\mathbf{S}(\lambda;0)\E^{\I\rho(\lambda)\lambda t\sigma_3}\,,\quad \mathbf{S}(\lambda;0)\defeq \mathbf{J}^{-}(\lambda;x,0)^{-1}\mathbf{J}^{+}(\lambda;x,0)\,.
    \label{eq:S-mat-evolution}
\end{equation}
From this formula, it follows that $a(\lambda;t)=a(\lambda;0)=:a(\lambda)$, and hence the number of zeros $N$ is also constant as are the zeros themselves:  $\xi_j(t)=\xi_j(0)=:\xi_j$, $j=1,\dots,N$.  It also follows that $R(\lambda;t)=R(\lambda;0)\E^{2\I\rho(\lambda)\lambda t}$ holds for $\lambda\in\Gamma\setminus\{-\I,\I\}$.  Finally, it can be shown that $\gamma_j(t)=\gamma_j(0)\E^{2\I\rho(\xi_j)\xi_j t}$, $j=1,\dots,N$.  Therefore, the scattering data $\mathcal{S}(\psi(\cdot,t))$ enjoys explicit and elementary dependence on $t$ when $\psi(\cdot,t)$ evolves in time according to the Cauchy problem \eqref{eq:NLS}--\eqref{eq:IC}.

\paragraph{\underline{\smash{Inverse transform}}}
The \emph{Beals-Coifman} simultaneous solution of \eqref{eq:Lax-x}--\eqref{eq:Lax-t} is the sectionally meromorphic matrix-valued function
\begin{equation}
    \mathbf{U}^{\mathrm{BC}}(\lambda;x,t) \defeq
    \begin{cases}
        \begin{bmatrix} a(\lambda)^{-1}\mathbf{j}^{-,1}(\lambda;x,t)\E^{-\I\rho(\lambda)\lambda t}\,; & \mathbf{j}^{+,2}(\lambda;x,t)\E^{\I\rho(\lambda)\lambda t}\end{bmatrix}\,,
        &\lambda\in\mathbb{C}^+\setminus\Sigma_\mathrm{c}\,,\vspace{0.25em}\\
        \begin{bmatrix} \mathbf{j}^{+,1}(\lambda;x,t)\E^{-\I\rho(\lambda)\lambda t}\,; & \bar{a}(\lambda)^{-1}\mathbf{j}^{-,2}(\lambda;x,t)\E^{\I\rho(\lambda)\lambda t}\end{bmatrix}\,,
        &\lambda\in\mathbb{C}^-\setminus\Sigma_\mathrm{c}\,,
    \end{cases}
    \label{eq:U-mat-NZBC}
\end{equation}
and it is convenient to introduce the related function\footnote{By \eqref{eq:Jost-minus-col-sym}, \eqref{eq:Jost-plus-col-sym},  for $\lambda\in\mathbb{C}^{+}\setminus\Sigma_\mathrm{c}$ with $a(\lambda)\neq 0$, the matrix function $\mathbf{M}^\mathrm{BC}(\lambda;x,t)$ is uniquely determined by fact that the columns of the associated matrix $\mathbf{U}^\mathrm{BC}(\lambda;x,t)$ satisfy the differential equation \eqref{eq:Lax-x} and that boundary conditions 
\begin{equation}
  \begin{aligned}
  \mathbf{M}^{\mathrm{BC}}(\lambda;x,t)&\to\mathbb{I}\,,~\text{as}~x\to +\infty\,,\\
  \mathbf{M}^{\mathrm{BC}}(\lambda;x,t)&~\text{is bounded as}~ x\to-\infty\,,
  \end{aligned}
  \label{eq:BC-asymptotics}
\end{equation}
hold.  These conditions can be combined into an integral equation for $\mathbf{M}^\mathrm{BC}(\lambda;x,t)$ of 
Fredholm type (see, for example, \cite{BealsCoifman} for details).}
\begin{equation}
\mathbf{M}^\mathrm{BC}(\lambda;x,t):=\mathbf{U}^\mathrm{BC}(\lambda;x,t)\E^{\I\rho(\lambda)(x+\lambda t)\sigma_3},\quad \lambda\in\mathbb{C}\setminus\Gamma.
\label{eq:M-mat-NZBC}
\end{equation}
$\mathbf{M}^\mathrm{BC}(\lambda;x,t)$ has simple poles at the points $\{\xi_1,\dots,\xi_N\}$ and $\{\xi_1^*,\dots,\xi_N^*\}$, the zeros of $a(\lambda)$ and $\bar{a}(\lambda):=a(\lambda^*)^*$ respectively, and no other singularities in its domain of definition.  It satisfies the normalization condition 
\begin{equation}
\lim_{\lambda\to\infty}\mathbf{M}^{\mathrm{BC}}(\lambda;x,t)=\mathbb{I},
\label{eq:MBC-norm}
\end{equation}
and $\det(\mathbf{M}(\lambda;x,t))=1$ holds identically. It also satisfies the Schwarz symmetry condition
\begin{equation}
\mathbf{M}^\mathrm{BC}(\lambda^*;x,t)=\sigma_2\mathbf{M}^\mathrm{BC}(\lambda;x,t)^*\sigma_2,\quad \lambda\in\mathbb{C}\setminus\Gamma.
\label{eq:MBC-Schwarz}
\end{equation}
The boundary values taken by $\mathbf{M}^\mathrm{BC}(\lambda;x,t)$ on $\Gamma=\mathbb{R}\cup \Sigma_\mathrm{c}$ from either side are related by the following \emph{jump conditions}.  For $\lambda\in\mathbb{R}$, $\lambda\neq 0$, $\mathbf{M}^\mathrm{BC}(\cdot;x,t)$ has well-defined nontangential boundary values $\mathbf{M}^\mathrm{BC}_\pm(\lambda;x,t):=\lim_{\epsilon\downarrow 0}\mathbf{M}^\mathrm{BC}(\lambda\pm i\epsilon;x,t)$, and from the scattering relation $\mathbf{J}^+=\mathbf{J}^-\mathbf{S}$ and the definition of $\mathbf{M}^\mathrm{BC}(\lambda;x,t)$ we have
\begin{equation}
    \mathbf{M}^{\mathrm{BC}}_+(\lambda;x,t) =
     \mathbf{M}^{\mathrm{BC}}_-(\lambda;x,t)\E^{-\I\rho(\lambda)(x+\lambda t)\sigma_3}\mathbf{V}^\mathbb{R}(\lambda) \E^{\I\rho(\lambda)(x+\lambda t)\sigma_3}, \quad\lambda\in\mathbb{R}\setminus\{0\},
    \label{eq:traditional-jump-real}
\end{equation}
where
\begin{equation}
\mathbf{V}^\mathbb{R}(\lambda):=\begin{bmatrix}1+|R(\lambda;0)|^2 & R(\lambda;0)^* \\ R(\lambda;0) & 1 \end{bmatrix},\quad \lambda\in\mathbb{R}\setminus\{0\}.
\label{eq:VR}
\end{equation}
For $\lambda=\I\Im\{\lambda\}$ with $0<\Im\{\lambda\}<1$, $\mathbf{M}^\mathrm{BC}(\cdot;x,t)$ takes well-defined boundary values given by $\mathbf{M}^\mathrm{BC}_\pm(\lambda;x,t):=\lim_{\epsilon\downarrow 0}\mathbf{M}^\mathrm{BC}(\lambda\pm\epsilon;x,t)$ related by
\begin{equation}
\mathbf{M}^\mathrm{BC}_+(\lambda;x,t)=\mathbf{M}^\mathrm{BC}_-(\lambda;x,t)
\E^{-\I\rho_-(\lambda)(x+\lambda t)\sigma_3}\mathbf{V}^\downarrow(\lambda)\E^{\I\rho_+(\lambda)(x+\lambda t)\sigma_3},\quad \lambda=\I\Im\{\lambda\},\quad
0<\Im\{\lambda\}<1,
\label{eq:traditional-jump-Sigmac-up}
\end{equation}
where
\begin{equation}
\mathbf{V}^\downarrow(\lambda):=\begin{bmatrix}
\I\bar{R}_-(\lambda;0) & -\I\\-\I(1+R_-(\lambda;0)\bar{R}_-(\lambda;0)) & \I R_-(\lambda;0)
\end{bmatrix},\quad \lambda=\I\Im\{\lambda\},\quad
0<\Im\{\lambda\}<1.
\end{equation}
Here, $\rho_\pm(\lambda):=\lim_{\epsilon\downarrow 0}\rho(\lambda\pm\epsilon)$, and $R_-(\lambda;0)$ and $\bar{R}_-(\lambda;0)$ refer to the elements of the scattering matrix constructed from Jost solutions defined using the boundary value $\rho_-(\lambda)$ as well as the corresponding boundary value $\mathbf{E}_-(\lambda)$ taken by $\mathbf{E}(\cdot)$.  Finally, for $\lambda=\I\Im\{\lambda\}$ with $-1<\Im\{\lambda\}<0$, $\mathbf{M}^\mathrm{BC}(\cdot;x,t)$ takes boundary values $\mathbf{M}^\mathrm{BC}_\pm(\lambda;x,t):=\lim_{\epsilon\downarrow 0}\mathbf{M}^\mathrm{BC}(\lambda\mp\epsilon;x,t)$ related by
\begin{equation}
\mathbf{M}_+^\mathrm{BC}(\lambda;x,t)=\mathbf{M}_-^\mathrm{BC}(\lambda;x,t)
\E^{-\I\rho_-(\lambda)(x+\lambda t)\sigma_3}\mathbf{V}^\uparrow(\lambda)\E^{\I\rho_+(\lambda)(x+\lambda t)\sigma_3},\;\;\lambda=\I\Im\{\lambda\},\;\; -1<\Im\{\lambda\}<0,
\label{eq:traditional-jump-Sigmac-down}
\end{equation}
where
\begin{equation}
\mathbf{V}^\uparrow(\lambda):=\mathbf{V}^\downarrow(\lambda^*)^\dagger,\quad \lambda=\I\Im\{\lambda\},\quad -1<\Im\{\lambda\}<0.
\end{equation}
The simple poles of $\mathbf{M}^\mathrm{BC}(\lambda;x,t)$ are characterized by the following \emph{residue conditions}:
\begin{equation}
\begin{split}
\mathop{\mathrm{Res}}_{\lambda=\xi_j}\mathbf{M}^\mathrm{BC}(\lambda;x,t)&=
\lim_{\lambda\to\xi_j}\mathbf{M}^\mathrm{BC}(\lambda;x,t)\E^{-\I\rho(\xi_j)(x+\xi_jt)\sigma_3}\mathbf{N}_j\E^{\I\rho(\xi_j)(x+\xi_jt)\sigma_3},\quad j=1,\dots,N,\\
\mathop{\mathrm{Res}}_{\lambda=\xi_j^*}\mathbf{M}^\mathrm{BC}(\lambda;x,t)&=
\lim_{\lambda\to\xi_j^*}\mathbf{M}^\mathrm{BC}(\lambda;x,t)\E^{-\I\rho(\xi_j^*)(x+\xi_j^*t)\sigma_3}
\sigma_2\mathbf{N}_j^*\sigma_2\E^{\I\rho(\xi_j^*)(x+\xi_j^*t)\sigma_3},\quad j=1,\dots,N,
\end{split}
\label{eq:residues}
\end{equation}
where 
\begin{equation}
\mathbf{N}_j:=\begin{bmatrix}0 & 0\\a'(\xi_j)^{-1}\gamma_j(0) & 0\end{bmatrix},\quad j=1,\dots,N.
\end{equation}

The analyticity of $\mathbf{M}^\mathrm{BC}(\cdot;x,t)$ in the domain $\mathbb{C}\setminus(\Gamma\cup\{\xi_1,\dots,\xi_N,\xi_1^*,\dots,\xi_N^*\})$, the normalization condition $\mathbf{M}^\mathrm{BC}(\lambda;x,t)\to\mathbb{I}$ as $\lambda\to\infty$, the jump conditions \eqref{eq:traditional-jump-real}, \eqref{eq:traditional-jump-Sigmac-up}, and \eqref{eq:traditional-jump-Sigmac-down}, and residue conditions \eqref{eq:residues} are all conditions on $\mathbf{M}^\mathrm{BC}(\lambda;x,t)$ involving the scattering data at $t=0$ that can be computed from the initial condition \eqref{eq:IC} alone via the direct transform.  Moreover, these Riemann-Hilbert conditions are nearly enough to determine $\mathbf{M}^\mathrm{BC}(\lambda;x,t)$.  The only missing piece of information concerns the nature of the boundary values taken by $\mathbf{M}^\mathrm{BC}(\cdot;x,t)$ on $\Gamma$.  The direct transform shows that these boundary values are continuous functions of $\lambda$ with the exception of the points $\{-\I,\I\}$.  These points are singularities of $\mathbf{M}^\mathrm{BC}(\lambda;x,t)$, as must be the case because $\mathbf{V}^\downarrow(\lambda)$ does not tend to the identity as $\lambda\to\I$.  Moreover, in the simplest case of the background potential $\psi_\mathrm{bg}(x,0)\equiv 1$, the Beals-Coifman matrix is $\mathbf{M}_\mathrm{bg}^\mathrm{BC}(\lambda;x,t)=\mathbf{E}(\lambda)$ which exhibits $-1/4$ power singularities at $\lambda=\pm\I$.  We therefore supplement the Riemann-Hilbert conditions with an additional growth condition to account for this singularity.  Taking into account the behavior in the background case, we impose the following growth condition:
\begin{equation}
\mathbf{M}^\mathrm{BC}(\lambda;x,t)=O((\lambda\mp\I)^{-1/4}),\quad \lambda\to \pm\I.
\label{eq:growth-condition}
\end{equation}
Indeed, under the assumption that $(1+x^2)\Delta\psi(x)\in L^1(\mathbb{R})$ and that $a(\I)\neq 0$ if $\lambda=\I$ is a ``virtual level'' \cite[Appendix B]{Biondini2014}, this condition correctly\footnote{It seems to us that it might be possible for $\lambda=\I$ to be a virtual level and also to have $a(\I)=0$ for some special potentials $\Delta\psi$ with $(1+x^2)\Delta\psi(x)\in L^1(\mathbb{R})$, in which case \eqref{eq:growth-condition} is violated.  The robust transform we introduce in Section~\ref{s:New-IST} will sidestep this difficulty as well.} estimates the growth rate of $\mathbf{M}^\mathrm{BC}(\lambda;x,t)$ as $\lambda\to\pm\I$.
Another way to formulate \eqref{eq:growth-condition} is simply to require that the boundary values of $\mathbf{M}^\mathrm{BC}(\cdot;x,t)$ lie in $L^2(\Gamma)$.  The growth condition is not explicitly stated in \cite{BiondiniM2017} but it is necessary to ensure uniqueness of the inverse problem.  
\paragraph{\underline{\smash{Summary of the IST}}}
The IST method to solve the Cauchy initial value problem \eqref{eq:NLS}--\eqref{eq:IC} consists of the following steps. Given suitable initial data $\psi_0(x)=\psi(x;0)$, for $\lambda\in\Gamma\setminus\{-\I,\I\}$ one calculates the Jost matrices $\mathbf{J}^{\pm}(\lambda;x,0)$ and hence $\mathbf{S}(\lambda;0)$.  Assuming finitely many simple zeros $\xi_1,\dots,\xi_N$ of $a(\cdot)=a(\cdot;0)$ in $\mathbb{C}^+\setminus\Sigma_\mathrm{c}$, 
one computes also the proportionality constants $\gamma_1(0),\dots,\gamma_N(0)$ using \eqref{eq:proportionality-constants}.  This completes the calculation of the scattering data at $t=0$ and constitutes the direct transform.

Then one allows the scattering data to evolve explicitly in time $t>0$, and uses it to formulate the inverse problem, namely seeking $\mathbf{M}^\mathrm{BC}(\lambda;x,t)$ that satisfies the Riemann-Hilbert conditions augmented with \eqref{eq:growth-condition}.  Under some technical conditions that are not of concern here, this Riemann-Hilbert problem has a unique solution, and thus $\mathbf{M}^\mathrm{BC}(\lambda;x,t)$ is constructed for all $(x,t)\in\mathbb{R}^2$ from the scattering data obtained from the initial condition \eqref{eq:IC}.  Finally,
from $\mathbf{M}^\mathrm{BC}(\lambda;x,t)$ one obtains a function $\psi(x,t)$ satisfying \eqref{eq:NLS} by the limit
\begin{equation}
  \psi(x,t)=2\I\lim_{\lambda\to\infty} \lambda M^{\mathrm{BC}}_{12}(\lambda;x,t)\,.
\end{equation}
Under some additional assumptions that in particular exclude rogue waves, $\psi(x,t)$ is then the solution of the Cauchy initial-value problem \eqref{eq:NLS}--\eqref{eq:IC}.

\subsubsection{Spectral singularities}
\label{s:spectral-singularities}
\emph{Spectral singularities} are points $\lambda$ in the continuous spectrum $\Gamma$ at which $\mathbf{M}^\mathrm{BC}(\cdot;x,t)$ fails to have a well-defined non-tangential limit from one side or the other.  There are potentially two kinds of spectral singularities in this problem.  

One type of spectral singularity is a zero of the scattering coefficient $a(\cdot)$ ($\bar{a}(\cdot)$ respectively) at a point $\lambda\neq\I$ ($\lambda\neq-\I$ respectively) on the boundary of the domain $\mathbb{C}^+\setminus\Sigma_\mathrm{c}$ ($\mathbb{C}^-\setminus\Sigma_\mathrm{c}$ respectively).  Since the Jost solutions $\mathbf{j}^{\mp,1}(\cdot;x,t)$ and $\mathbf{j}^{\pm,2}(\cdot;x,t)$ are not only analytic in $\mathbb{C}^\pm\setminus\Sigma_\mathrm{c}$ but also continuous up to the boundary except at $\lambda=\pm\I$, from \eqref{eq:U-mat-NZBC} and \eqref{eq:M-mat-NZBC} one sees that such zeros of $a(\cdot)$ or $\bar{a}(\cdot)$ are indeed spectral singularities and they are the only possible ones that can occur for $\lambda\neq \pm\I$.  From one point of view, zeros of $a$ or $\bar{a}$ in $\Gamma\setminus\{-\I,\I\}$ are to be expected because two initial conditions without such spectral singularities but with different values of $n$, the number of zeros of $a$ in $\mathbb{C}^+\setminus\Sigma_\mathrm{c}$ weighted by multiplicity, can be connected by a suitable homotopy in which there must be at least one point at which a zero of finite multiplicity is ``born'' from $\Gamma$.  An even more disturbing situation was described by Zhou \cite{Zhou1989d} in the context of the simpler Cauchy problem for \eqref{eq:NLS} with zero boundary conditions \eqref{eq:ZBC}.  For the latter problem (see \cite{Shabat1972,Ablowitz1974b,BealsCoifman} for details of the IST solution in this setting) it was shown by Beals and Coifman \cite{BealsCoifman} that for an open dense subset of initial conditions $\psi_0\in L^1(\mathbb{R})$ there are no spectral singularities and the analogue of the number $n$ is finite.  On the other hand, as shown by Zhou \cite[Example 3.3.16]{Zhou1989d}, the complement of this dense open subset contains Schwartz-class functions $\psi_0$ for which $n=\infty$ and infinitely many zeros of $a$ accumulate at certain points in the continuous spectrum $\Gamma$ (for zero boundary conditions, $\Gamma=\mathbb{R}$).  Moreover, there exist Schwartz-class $\psi_0$ for which there are infinitely many of these accumulation points, which themselves accumulate from within $\Gamma$ at particularly severe spectral singularities.  In order to deal with spectral singularities in the zero boundary condition setting, Zhou \cite{Zhou1989d} and Deift and Zhou \cite{Deift1991b} developed a method based on combining the standard Beals-Coifman matrix $\mathbf{M}^\mathrm{BC}(\lambda;x,t)$ for values of $\lambda$ suitably large that the latter has no singularities with another simultaneous solution of the Lax pair \eqref{eq:Lax-x}--\eqref{eq:Lax-t} for smaller $\lambda$.  This other solution matrix was constructed as the product of the Beals-Coifman matrix taken for a sufficiently ``cut-off'' version of $\psi_0$ that it generates no singularities and evaluated for $x$ equal to the cut-off point $x=L$ with a transfer matrix solution of \eqref{eq:Lax-x} to obtain the solution at a general $x\in\mathbb{R}$ from that at $x=L$.  This approach leads to a Riemann-Hilbert problem for a sectionally holomorphic matrix (no poles at all) that takes continuous boundary values on $\Gamma=\mathbb{R}$ as well as on a circle of large radius where the two matrix solutions are related by a computable jump condition.  To complete this brief discussion of spectral singularities caused by zeros of $a$ or $\bar{a}$, we simply make the point that while the sort of severe spectral singularities exhibited by Schwartz-class $\psi_0$ in the zero boundary conditions case have not yet been observed in the presence of non-zero boundary conditions, there is no reason to assume that they cannot occur for certain $\psi_0$ with $\Delta\psi_0:=\psi_0-1$ Schwartz-class.

The other spectral singularities in the IST solution of the Cauchy problem \eqref{eq:NLS}--\eqref{eq:IC} with nonzero boundary conditions are those at $\lambda=\pm\I$.  These are present due to the nonzero background $\psi\equiv 1$, of which the initial condition $\psi_0$ is a localized perturbation.  We have already pointed out that the matrix $\mathbf{U}_\mathrm{bg}^\mathrm{BC}(\lambda;x,t):=\mathbf{E}(\lambda)\E^{-\I\rho(\lambda)(x+\lambda t)\sigma_3}$, which is the Beals-Coifman simultaneous solution matrix of \eqref{eq:Lax-x}--\eqref{eq:Lax-t} for $\psi_0(x)\equiv 1$ blows up like $(\lambda\mp\I)^{-1/4}$ as $\lambda\to\pm\I$.  These singularities generally propagate into the Jost matrix solutions $\mathbf{J}^\pm(\lambda;x,t)$ through the integral equations \eqref{eq:Volterra-NZBC}, which involve $\mathbf{E}(\lambda)$, although from \eqref{eq:Volterra-NZBC} alone it is generally (i.e., in absence of an extra condition like $(1+x^2)\Delta\psi(x)\in L^1(\mathbb{R})$) difficult to assess how the severity of the singularities induced at $\lambda=\pm\I$ relates to $\Delta\psi(x,t)$.  We wish to point out that the cut-off potential approach of \cite{Zhou1989e} and \cite{Deift1991b} does not sidestep the issue, because \emph{these spectral singularities are features of the background potential itself} and the cut-off approach simply replaces one potential by another one that is sufficiently close in $L^1(\mathbb{R})$ to the background.

The reason that the background potential $\psi_0(x)\equiv 1$ produces spectral singularities at $\lambda=\pm\I$ in the Beals-Coifman solution $\mathbf{U}_\mathrm{bg}^\mathrm{BC}(\lambda;x,t)$ is simple:  the constant coefficient matrices in the Lax pair \eqref{eq:Lax-x}--\eqref{eq:Lax-t} fail to be diagonalizable at $\lambda=\pm\I$.  Therefore, the eigenvalue method, on which the construction of the matrix $\mathbf{U}_\mathrm{bg}^\mathrm{BC}(\lambda;x,t)$ is based, cannot produce a fundamental solution matrix when $\lambda=\pm\I$.  The spectral singularities thus appear upon normalizing the eigenvectors so that
$\det(\mathbf{U}_\mathrm{bg}^\mathrm{BC}(\lambda;x,t))=1$.  Note however that the essence of the difficulty lies in the collapse in the dimension of the eigenvector span, and this difficulty remains if one chooses to work with non-unimodular solution matrices or attempts to regularize the branch points at $\lambda=\pm\I$ by mapping the slit domain $\mathbb{C}\setminus\Sigma_\mathrm{c}$ to the Riemann sphere via the Joukowski mapping \cite{Biondini2014}. 

\subsubsection{Attempting to capture rogue waves in the IST}
The spectral singularities at $\lambda=\pm\I$ are the reason why the IST as described in Section~\ref{s:IST-NZBC} does not capture rogue waves.  Here the example of the Peregrine solution $\psi_\mathrm{P}(x,t)$ \eqref{eq:Peregrine} is particularly instructive.  Since $\psi_\mathrm{P}(\cdot,t)-1\in L^1(\mathbb{R})$ for all $t\in\mathbb{R}$, the direct transform applies.  Rather than compute the Jost solutions via the Volterra equations \eqref{eq:Volterra-NZBC}, it is more convenient to appeal to algebraic methods such as the generalized Darboux transformation \cite{Guo2012a} to construct a basis of simultaneous solutions of \eqref{eq:Lax-x}--\eqref{eq:Lax-t} for $\psi=\psi_\mathrm{P}(x,t)$ and $\lambda\in\Gamma\setminus\{-\I,\I\}$.  The result (see Corollary~\ref{c:RogueWaveScattering}) is that $\mathbf{S}=\mathbf{S}_\mathrm{P}(\lambda;0)\equiv\mathbb{I}$ for all $\lambda\in\Gamma\setminus\{-\I,\I\}$.  Therefore, there are no poles of $\mathbf{M}_\mathrm{P}^\mathrm{BC}(\lambda;x,t)$ in $\mathbb{C}\setminus\Gamma$, and the reflection coefficient is
$R=R_\mathrm{P}(\lambda;0)\equiv 0$.  This implies that \emph{the Peregrine solution $\psi=\psi_\mathrm{P}(x,t)$ is indistinguishable from the background solution $\psi_\mathrm{bg}(x,t)\equiv 1$ at the level of the scattering data \eqref{eq:traditional-scattering-data}}, i.e., the direct transform $\psi\mapsto\mathcal{S}(\psi)$ is not injective.  Indeed, $\mathcal{S}(\psi_\mathrm{P})=\mathcal{S}(\psi_\mathrm{bg})$.  Although they share the same scattering data, the matrices $\mathbf{U}^\mathrm{BC}_\mathrm{P}(\lambda;x,t)$ and $\mathbf{U}^\mathrm{BC}_\mathrm{bg}(\lambda;x,t)$ do not coincide; they are distinguished by the rate at which they blow up as $\lambda\to\pm\I$.  The difficulty lies in working out how to take this asymptotic behavior of the matrix $\mathbf{M}^\mathrm{BC}(\lambda;x,t)$ as $\lambda\to \pm\I$ into account in the formulation of the IST.  If one begins with the initial condition $\psi_0(x)=\psi_\mathrm{P}(x,0)$ and enforces the growth condition \eqref{eq:growth-condition} in the inverse problem, the IST returns instead the background solution $\psi_\mathrm{bg}(x,t)\equiv 1$.  However, if one simply replaces the growth condition \eqref{eq:growth-condition} with the estimate $\mathbf{M}^\mathrm{BC}(\lambda;x,t)=O((\lambda\mp\I)^{-3/4})$ as $\lambda\to\pm\I$ known to be correct for the matrix $\mathbf{M}^\mathrm{BC}_\mathrm{P}(\lambda;x,t)$ (see Remark~\ref{r:Peregrine-growth}), then one loses the uniqueness of the solution of the inverse problem as $\mathbf{M}^\mathrm{BC}_\mathrm{P}(\lambda;x,t)$ and $\mathbf{M}^\mathrm{BC}_\mathrm{bg}(\lambda;x,t)$ are no longer distinguished by the Riemann-Hilbert conditions augmented with the updated (weakened, really) growth condition.  All of these remarks apply to the higher-order rogue waves \cite{Ankiewicz2010} (see Figures~\ref{f:Peregrine2} and \ref{f:8th-rogue-wave}) as well, although as the order increases, so does the rate at which $\mathbf{M}^\mathrm{BC}(\lambda;x,t)$ blows up as $\lambda\to\pm\I$.  Note that neither the Peregrine solution nor its higher-order analogues satisfy the condition $(1+x^2)\Delta\psi(x)\in L^1(\mathbb{R})$ that, with some additional technical assumptions related to virtual levels at $\pm\I$, guarantees the growth estimate on $\mathbf{M}^\mathrm{BC}(\lambda;x,t)$ in the form \eqref{eq:growth-condition}.

None of this is to say that exact rogue wave solutions cannot be obtained from the IST summarized in Section~\ref{s:IST-NZBC} by a limiting process.  Indeed, it is well known that they can be found by first writing down the reflectionless multi-soliton solutions in which the parameters $\{\xi_1,\dots,\xi_N\}$ and $\{\gamma_1(0),\dots,\gamma_N(0)\}$ are retained as variables, and then subsequently taking a suitable limit in which certain $\xi_j$ tend to $\I$ (and $\gamma_j$ is suitably scaled).  However, the above discussion makes clear the fact that these solutions cannot be obtained directly.  Moreover, the limiting techniques are useful for exact (reflectionless) solutions, but they are not as useful if $R(\lambda;0)\not\equiv 0$.

The purpose of this paper is to introduce a new version of the IST for the Cauchy problem \eqref{eq:NLS}--\eqref{eq:IC} that is sufficiently robust to capture in an elementary way the missing information about the rate of growth of the Beals-Coifman matrix as $\lambda\to\pm\I$.  In fact, as the reader will see, it is never necessary to consider this rate of growth at all, as the information is encoded instead in the analytic columns of the Jost solution matrices \emph{at a finite distance from these spectral singularities}.

\subsection{Outline of the paper}
In Section~\ref{s:New-IST} we formulate the robust IST for the Cauchy problem \eqref{eq:NLS}--\eqref{eq:IC}.  Then in Section~\ref{s:Darboux} we show how the robust IST makes the application of iterated Darboux transformations easier than in the traditional approach, and in particular how the robust IST builds in automatically the missing generalized eigenvectors that have to be extracted by differentiation with respect to $\lambda$ in the generalized Darboux transformation method \cite{Guo2012a}.  Thus rogue waves of arbitrary order can be calculated from the robust IST using only standard Darboux methods.  As a more analytical application, in Section~\ref{s:Linearization} we begin to consider the question of dynamical stability/instability of rogue waves by linearizing the robust IST about a solution of the Cauchy problem \eqref{eq:NLS}--\eqref{eq:IC}.  We present the solution of the linearized problem as a kind of spectral transform involving squared eigenfunctions, which allows us to draw some preliminary conclusions about the nature of instabilities and how they essentially arise from the background itself.  Some of the more technical calculations are relegated to the appendices of the paper.

\subsection{Acknowledgements}
The authors are grateful to Liming Ling for useful discussions.  Both authors were partially supported by the National Science Foundation under grant DMS-1513054.  In addition, D. B. was supported by an AMS-Simons Travel Grant.

\section{The Robust IST}
\label{s:New-IST}
As pointed out in Section~\ref{s:spectral-singularities}, the ``cut-off'' approach of Deift and Zhou \cite{Deift1991b,Zhou1989d} does not apply in the case of nonzero boundary conditions.  However, we may take from it the following key ideas:
\begin{itemize}
\item Simultaneous solutions of the Lax pair \eqref{eq:Lax-x}--\eqref{eq:Lax-t} adapted to the boundary conditions at hand (the Beals-Coifman solutions $\mathbf{U}^\mathrm{BC}(\lambda;x,t)$) are essential to consider for large $\lambda$ in order to build in the correct asymptotic normalization condition for the Riemann-Hilbert problem of inverse scattering.
\item On the other hand, other simultaneous solutions of \eqref{eq:Lax-x}--\eqref{eq:Lax-t} can be useful for bounded $\lambda$.  In particular, solutions of \emph{initial-value problems} for \eqref{eq:Lax-x}--\eqref{eq:Lax-t} have fantastic analytic properties with respect to $\lambda$ although they fail to capture any information about boundary conditions at infinity.
\end{itemize}
To define what we mean by a solution of an initial-value problem for \eqref{eq:Lax-x}--\eqref{eq:Lax-t} and demonstrate its analytic features, we begin with the following proposition.
\begin{prop}
    Fix $L\in\mathbb{R}$. Suppose that $\psi(x,t)$ is a bounded classical solution of the focusing NLS equation \eqref{eq:NLS} defined for $(x,t)$ in a simply-connected domain $\Omega\subseteq\mathbb{R}^2$ that contains the point $(L,0)$. Then for each $\lambda\in\mathbb{C}$ there exists a unique simultaneous fundamental solution matrix $\mathbf{U}=\mathbf{U}^\mathrm{in}(\lambda;x,t)$, $(x,t)\in\Omega$, of the Lax pair equations \eqref{eq:Lax-x}--\eqref{eq:Lax-t} together with the initial condition $\mathbf{U}^\mathrm{in}(\lambda;L,0)=\mathbb{I}$.  Moreover, $\mathbf{U}^\mathrm{in}(\lambda;x,t)$ is an entire function of $\lambda$ for each $(x,t)\in\Omega$, and $\det(\mathbf{U}^\mathrm{in}(\lambda;x,t))\equiv 1$.
    \label{p:Uin-def}
\end{prop}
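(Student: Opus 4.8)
The plan is to treat the Lax pair \eqref{eq:Lax-x}--\eqref{eq:Lax-t} as a single linear total differential (Pfaffian) system for the matrix unknown $\mathbf{U}$ and to invoke the classical theory of flat connections on a simply-connected domain. Introducing the matrix-valued connection one-form $\omega(\lambda;x,t)\defeq \mathbf{X}(\lambda;x,t)\D x + \mathbf{T}(\lambda;x,t)\D t$, the two equations \eqref{eq:Lax-x}--\eqref{eq:Lax-t} are encoded in the single equation $\D\mathbf{U}=\omega(\lambda)\mathbf{U}$. A direct computation shows that its Frobenius integrability condition $\D\omega=\omega\wedge\omega$ is equivalent to the zero-curvature relation $\mathbf{X}_t-\mathbf{T}_x+[\mathbf{X},\mathbf{T}]=\mathbf{0}$, which holds identically precisely because $\psi$ is assumed to be a classical solution of \eqref{eq:NLS}; this is the very compatibility condition behind the Lax pair.

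First I would establish existence and uniqueness on $\Omega$. Since the linear system is completely integrable and $\Omega$ is simply connected, the value at $(x,t)$ obtained by solving the ordinary differential equation $\D\mathbf{U}=\omega\mathbf{U}$ along any piecewise-smooth path in $\Omega$ from $(L,0)$ to $(x,t)$, subject to $\mathbf{U}(L,0)=\mathbb{I}$, is independent of the choice of path: homotopic paths yield the same endpoint value by the integrability condition together with a standard variation-of-path argument, and simple-connectedness guarantees that any two paths in $\Omega$ are homotopic. Concretely, one may integrate the $x$-equation and the $t$-equation in succession and verify their consistency through the zero-curvature identity. This produces a well-defined matrix $\mathbf{U}^\mathrm{in}(\lambda;x,t)$ on all of $\Omega$, and uniqueness is immediate since any solution must agree with the path-integrated one.

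Next, I would obtain the entire dependence on $\lambda$ and the unit determinant. Fixing $(x,t)\in\Omega$ together with a path to it, the solution is represented by the locally uniformly convergent Picard (Peano--Volterra) iteration associated with the line integral of $\omega$ along the path. Because $\mathbf{X}$ and $\mathbf{T}$ are polynomial in $\lambda$ (of degrees one and two) with entries that are continuous in $(x,t)$ — here I use that $\psi$ is a classical solution, so $\psi$ and $\psi_x$ are continuous and hence bounded on the compact path — each iterate is an entire function of $\lambda$, and the resulting Neumann-type series converges uniformly on compact subsets of the $\lambda$-plane; by the Weierstrass/Morera theorem the limit $\mathbf{U}^\mathrm{in}(\lambda;x,t)$ is entire in $\lambda$. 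The unit determinant then follows from Abel's/Jacobi's formula: both $\mathbf{X}$ and $\mathbf{T}$ are trace-free, so $\partial_x\det(\mathbf{U}^\mathrm{in})=\mathrm{tr}(\mathbf{X})\det(\mathbf{U}^\mathrm{in})=0$ and likewise $\partial_t\det(\mathbf{U}^\mathrm{in})=0$, whence $\det(\mathbf{U}^\mathrm{in})$ is constant on the connected set $\Omega$ and equals $\det(\mathbb{I})=1$ at $(L,0)$; in particular $\mathbf{U}^\mathrm{in}$ is invertible and so is genuinely a fundamental solution matrix.

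The main obstacle, and the only place where the hypotheses are truly essential, is the well-posedness of the overdetermined system: two compatible evolution equations must be shown to admit a single-valued common solution across all of $\Omega$. This is exactly where the zero-curvature (NLS) identity and the simple-connectedness of $\Omega$ enter — without the former the path-integration would be inconsistent, and without the latter the resulting solution could be multi-valued. The analyticity and determinant assertions are then comparatively routine consequences of the explicit polynomial $\lambda$-dependence and the trace-free structure of the coefficient matrices.
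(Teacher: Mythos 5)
Your proposal is correct and follows essentially the same route as the paper's proof: zero-curvature compatibility (your Frobenius condition) plus simple-connectedness gives path-independent integration from $(L,0)$, Picard iteration along a path with coefficients polynomial in $\lambda$ yields existence and, via Morera's theorem and locally uniform convergence, entire dependence on $\lambda$, and the trace-free structure of $\mathbf{X}$ and $\mathbf{T}$ gives $\det(\mathbf{U}^\mathrm{in})\equiv 1$ by Abel's theorem. The flat-connection/Pfaffian packaging is only a cosmetic difference from the paper's explicit path-parametrized integral equation.
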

\begin{proof}
    The proof is based on a standard Picard iteration argument for the system of simultaneous linear differential equations \eqref{eq:Lax-x}--\eqref{eq:Lax-t} in $\Omega\subset\mathbb{R}^2$ (see, for example, \cite[Chapter 1]{CoddingtonLevinson} for a more detailed construction). Let $(x,t)\in\Omega$ be fixed. Because $\psi(x,t)$ is a solution of the focusing NLS equation \eqref{eq:NLS} in $\Omega$, the linear differential equations \eqref{eq:Lax-x}--\eqref{eq:Lax-t} are compatible, i.e., the coefficient matrices $\mathbf{X}(\lambda;x,t)$ and $\mathbf{T}(\lambda;x,t)$
satisfy the zero-curvature condition
\begin{equation}
  \mathbf{X}_t-\mathbf{T}_x +[\mathbf{X},\mathbf{T}]=0
\end{equation}
and therefore since $\Omega$ is simply connected,
we can setup a Picard iteration to integrate the simultaneous equations \eqref{eq:Lax-x}--\eqref{eq:Lax-t} along an arbitrary smooth path $\Pi$ in $\Omega$ from $(L,0)$ to $(x,t)$ and be guaranteed that the result of the iteration will be independent of the choice of path. 
Let the path $\Pi$ be parametrized by $x=X(u)$, $t=T(u)$, $0\le u\le 1$, such that $(X(0),T(0))=(L,0)$ and $(X(1),T(1))=(x,t)$.  The integral equation governing the restriction $\mathbf{W}^\Pi(\lambda;u):=\mathbf{U}^\mathrm{in}(\lambda;X(u),T(u))$ of $\mathbf{U}^\mathrm{in}(\lambda;x,t)$ to the path $\Pi$ is then
\begin{equation}
\mathbf{W}^\Pi(\lambda;u)=\mathbb{I}+\int_0^u\left[\mathbf{X}(\lambda;X(v),T(v))X'(v)+\mathbf{T}(\lambda;X(v),T(v))T'(v)\right]\mathbf{W}^\Pi(\lambda;v)\,\D v,\quad 0<u\le 1.
\end{equation}
By assumption on the classical nature of the solution $\psi(x,t)$ of \eqref{eq:NLS}, the expression in square brackets in the integrand is continuous in $v$, and this guarantees uniform convergence of the Picard iterates (possibly first chopping the interval $[0,1]$ into finitely-many subintervals and re-starting the iteration after each).  The expression in square brackets in the integrand is also a (quadratic) polynomial in $\lambda$, so by an argument based on Morera's Theorem and Fubini's Theorem, the iterates are all entire functions.  As the iterates converge uniformly on compact subsets of $\lambda\in\mathbb{C}$, the unique solution of the initial-value problem is entire in $\lambda$.  By compatibility, $\mathbf{U}^\mathrm{in}(\lambda;x,t):=\mathbf{W}^\Pi(\lambda;1)$ is well defined regardless of the choice of path $\Pi$.  The fact that $\det(\mathbf{U}^\mathrm{in}(\lambda;x,t))=1$ then follows from $\det(\mathbf{U}^\mathrm{in}(L,0))=\det(\mathbb{I})=1$ using Abel's Theorem, because $\mathrm{tr}(\mathbf{X}(\lambda;x,t))=\mathrm{tr}(\mathbf{T}(\lambda;x,t))=0$ holds for $\lambda\in\mathbb{C}$ and $(x,t)\in\Omega$.
\end{proof}
Note that the Schwarz symmetry of the coefficient matrices $\mathbf{X}(\lambda;x,t)$ and $\mathbf{T}(\lambda;x,t)$ (see \eqref{eq:Schwarz-X} for $\mathbf{X}$; the analogous result holds also for $\mathbf{T}$) of the simultaneous differential equations \eqref{eq:Lax-x} and \eqref{eq:Lax-t} implies that $\sigma_2\mathbf{U}^{\mathrm{in}}(\lambda^*;x,t)^*\sigma_2$ is also an entire simultaneous solution matrix for the system \eqref{eq:Lax-x}--\eqref{eq:Lax-t}. Since this solution matrix coincides with $\mathbf{U}^\mathrm{in}(\lambda;L,0)=\mathbb{I}$ for $(x,t)=(L,0)$, 
by uniqueness we conclude that
\begin{equation}
\mathbf{U}^{\mathrm{in}}(\lambda;x,t)=\sigma_2\mathbf{U}^{\mathrm{in}}(\lambda^*;x,t)^*\sigma_2.
\label{eq:U-in-sym}
\end{equation}
\begin{remark}
\label{r:Uin-renormalize}
In practice, $\mathbf{U}^{\mathrm{in}}(\lambda;x,t)$ can also be obtained from \emph{any} matrix $\mathbf{U}(\lambda;x,t)$ of simultaneous solutions of \eqref{eq:Lax-x}--\eqref{eq:Lax-t} defined and fundamental for $\lambda\in\mathbb{C}$ with the possible exception of isolated points or even curves and defining for some chosen $L\in\mathbb{R}$
\begin{equation}
\mathbf{U}^{\mathrm{in}}(\lambda;x,t)\defeq \mathbf{U}(\lambda;x,t)\mathbf{U}(\lambda;L,0)^{-1}.
\label{eq:Uin-def}
\end{equation}
Regardless of whether $\mathbf{U}(\lambda;x,t)$ is analytic anywhere, by uniqueness of the solution of the initial-value problem solved in the proof of Proposition~\ref{p:Uin-def} this formula produces 
the desired object for all $\lambda$ with the exception of removable singularities at the aforementioned isolated points or curves.
\end{remark}

Recall from Lemma~\ref{lemma:radius} that if $\Delta\psi_0$ and $\Delta\psi_0'$ are absolutely integrable on $\mathbb{R}$, then $|a(\lambda)|>\tfrac{1}{2}$ holds for $|\lambda|\ge r[\Delta\psi]>1$ and $\Im\{\lambda\}\ge 0$.    Let $\Sigma_0$ denote the circle of radius $r=r[\Delta\psi]$ centered at the origin in the $\lambda$-plane, and let $D_0$ denote the open disk whose boundary is $\Sigma_0$.  Since $r>1$, the branch cut $\Sigma_\mathrm{c}$ is contained in $D_0$.  We also define the following related domains:
\begin{equation}
  D_{\pm}\defeq \lbrace \lambda\in\mathbb{C}\colon |\lambda|>r[\Delta\psi]~\text{and}~\Im\{\lambda\}\gtrless 0 \rbrace,
\end{equation}
and the following contours which divide the $\lambda$-plane into disjoint regions $D_0$, $D_+$, and $D_-$:
\begin{equation}
\begin{aligned}
  \Sigma_{+} &\defeq \lbrace \lambda\in \Sigma_0 \colon \Im\{\lambda\}\geq 0 \rbrace,\\
  \Sigma_{-} &\defeq \lbrace \lambda\in \Sigma_0 \colon \Im\{\lambda\}\leq 0 \rbrace,\\
  \Sigma_{\mathrm{L}} &\defeq \lbrace \lambda\in \mathbb{R} \colon \lambda\in(-\infty,-r[\Delta\psi]] \rbrace,\\
  \Sigma_{\mathrm{R}} &\defeq \lbrace \lambda\in \mathbb{R} \colon \lambda\in[r[\Delta\psi],+\infty) \rbrace.
\end{aligned}
\end{equation}
Note that $\Sigma_0=\Sigma_+ \cup \Sigma_-$. We orient both semicircular arcs $\Sigma_{\pm}$ in the direction from $\lambda=- r[\Delta\psi]$ to $\lambda= r[\Delta\psi]$, $\Sigma_{\mathrm{L}}$ from $\lambda=-\infty$ to $\lambda=- r[\Delta\psi]$, and $\Sigma_{\mathrm{R}}$ from $\lambda=r[\Delta\psi]$ to $\lambda=+\infty$. See Figure~\ref{fig:contours-ZBC}.   We denote by $\Sigma$ the union of all of these contours.
\begin{figure}[h]
\begin{tikzpicture}[scale=0.8]
\def\yL{5}
\def\xL{8}
\def\R{\yL-2}

\coordinate (o) at (0,0);
\coordinate (i) at (0,1);
\coordinate (mi) at (0,-1);
\coordinate (leftcircle) at (-\R,0);
\coordinate (yaxisend) at (0,-\yL);
\draw[help lines,opacity=0.65,<->] (-\xL,0) -- (\xL,0) coordinate (xaxis);
\draw[help lines,opacity=0.65,<->] (0,-\yL) -- (0,\yL) coordinate (yaxis);

\begin{scope}[very thick,decoration={markings,
mark=at position 0.3 with {\arrow[line width =2pt]{>}},
mark=at position 0.7 with {\arrow[line width =2pt]{>}}
}
]
\path[draw,PineGreen,line width =2.4pt,line cap=round , postaction=decorate]
(180:\R) arc(180:0:\R);
\path[draw,PineGreen,line width =2.4pt,line cap=round , postaction=decorate]
(180:\R) arc(180:360:\R);
\end{scope}
\path[fill,PineGreen, opacity=0.1] (\R,0) arc(0:360:\R);

\begin{scope}[very thick,decoration={
  	 markings,
	 mark=at position 0.5 with {\arrow[line width =2pt]{>}},
	 }
]
\path[draw,CadetBlue,line cap=round, line width =2.4pt,postaction=decorate] (\R,0)--(\xL-0.2,0);
\end{scope}

\begin{scope}[very thick,decoration={
  	 markings,
	 mark=at position 0.5 with {\arrow[line width =2pt]{>}}
	 }
]
\path[draw,CadetBlue,line cap=round,line width =2.4pt,postaction=decorate] (180:\xL-0.2)--(180:\R);
\end{scope}

\coordinate (sp) at (150:\R+0.1);
\coordinate (sptext) at (150:\R+2);
\node[yshift=6pt] at (sptext) {\color{PineGreen}$\Sigma_+$};
\draw [->,thin ] (sptext) -- (sp);

\coordinate (sm) at (210:\R+0.1);
\coordinate (smtext) at (210:\R+2);
\node[yshift=-6pt] at (smtext) {\color{PineGreen}$\Sigma_-$};
\draw [->,thin ] (smtext) -- (sm);

\node[below] at (xaxis) {$\Re \lambda$};
\node[above] at (yaxis) {$\Im \lambda$};
\node[above right] at (20:\R+2) {$D_{+}$};
\node[below right] at (-20:\R+2) {$D_{-}$};

\node[below] at (0:\R+3) {$\color{CadetBlue}\Sigma_{\mathrm{R}}$};
\node[below] at (180:\R+3) {$\color{CadetBlue}\Sigma_{\mathrm{L}}$};

\node[above right] at (75:\R-0.1) {$\color{PineGreen}\Sigma_0=\Sigma_{+}\cup\Sigma_{-}$};
\node[above] at (150:\R-1.5) {$D_0$};
\node[below left] at (270:\R)  {$-\I r$};
\node at (270:\R) {\color{PineGreen}\textbullet};
\node[above left] at (90:\R)  {$\I r$};
\node at (90:\R) {\color{PineGreen}\textbullet};
\node[below right] at (0:\R)  {$r$};
\node at (0:\R) {\color{PineGreen}\textbullet};
\node[below left] at (180:\R)  {$-r$};
\node at (180:\R) {\color{PineGreen}\textbullet};
\end{tikzpicture}
\caption{Definitions of the regions $D_0$, $D_{\pm}$, and the contours $\Sigma_0$, $\Sigma_{\mathrm{L}}$, and $\Sigma_\mathrm{R}$.}
\label{fig:contours-ZBC}
\end{figure}
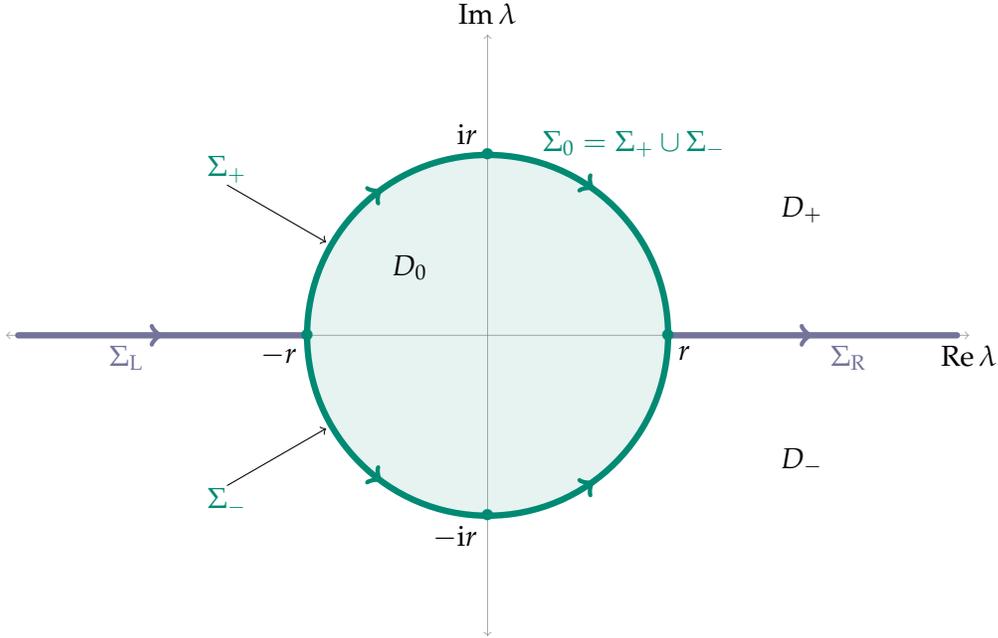

We now define for $\lambda\in\mathbb{C}\setminus\Sigma$ a simultaneous fundamental solution matrix for the Lax pair \eqref{eq:Lax-x}--\eqref{eq:Lax-t} as follows:
\begin{equation}
\mathbf{U}(\lambda;x,t):=\begin{cases}
\mathbf{U}^\mathrm{BC}(\lambda;x,t),&\quad\lambda\in D_+\cup D_-\\
\mathbf{U}^\mathrm{in}(\lambda;x,t),&\quad\lambda\in D_0,
\end{cases}
\label{eq:N-mat}
\end{equation}
where $\mathbf{U}^\mathrm{BC}(\lambda;x,t)$ is defined by \eqref{eq:U-mat-NZBC} and $\mathbf{U}^\mathrm{in}(\lambda;x,t)$ is defined in Proposition~\ref{p:Uin-def}.

Assuming only that the initial condition $\psi_0$ in \eqref{eq:IC} satisfies $\Delta\psi_0:=\psi_0-1\in L^1(\mathbb{R})$ and $\Delta\psi_0'\in L^1(\mathbb{R})$ and generates a classical solution $\psi(x,t)$ of the Cauchy problem \eqref{eq:NLS}--\eqref{eq:IC}, Lemma~\ref{lemma:radius} and the formula \eqref{eq:U-mat-NZBC} show that $\mathbf{U}(\lambda;x,t)$ is analytic in $D_+$ and $D_-$ and takes continuous boundary values from these domains on $\Sigma$.  Similarly, Proposition~\ref{p:Uin-def} guarantees that 
$\mathbf{U}(\lambda;x,t)$ is analytic in $D_0$ and takes continuous boundary values on $\Sigma_0=\partial D_0$.  Furthermore, $\det(\mathbf{U}(\lambda;x,t))=1$ holds for all $\lambda\in\mathbb{C}\setminus\Sigma$.  Combining \eqref{eq:M-mat-NZBC}--\eqref{eq:MBC-Schwarz} and \eqref{eq:U-in-sym}, we see that
\begin{equation}
\mathbf{U}(\lambda^*;x,t)=\sigma_2\mathbf{U}(\lambda;x,t)^*\sigma_2,\quad\lambda\in\mathbb{C}\setminus\Sigma,\quad (x,t)\in\mathbb{R}^2.
\label{eq:U-sym}
\end{equation}

Now define the related matrix $\mathbf{M}(\lambda;x,t)$ by
\begin{equation}
\mathbf{M}(\lambda;x,t) \defeq \mathbf{U}(\lambda;x,t)\E^{\I\rho(\lambda)(x+\lambda t)\sigma_3}\,,
\quad\lambda\in\mathbb{C}\setminus(\Sigma\cup\Sigma_\mathrm{c}),\quad (x,t)\in \mathbb{R}^2,
\label{eq:M-mat-NZBC-new}
\end{equation}
where $\mathbf{U}(\lambda;x,t)$ is defined by \eqref{eq:N-mat}.  
The exponential factor $\E^{\I\rho(\lambda)(x+\lambda t)\sigma_3}$ introduces a new jump discontinuity across $\Sigma_\mathrm{c}$, but otherwise $\mathbf{M}(\lambda;x,t)$ is analytic where it is defined.  Furthermore, from \eqref{eq:M-mat-NZBC} and \eqref{eq:MBC-norm} we see that
for each fixed $(x,t)\in\mathbb{R}^2$, $\mathbf{M}(\lambda;x,t)\to\mathbb{I}$ as $\lambda\to\infty$ in $D_\pm$, uniformly with respect to direction.  It is easy to see that $\mathbf{M}(\lambda;x,t)$ takes continuous boundary values on $\Sigma\cup\Sigma_\mathrm{c}$, \emph{including at the endpoints $\lambda=\pm\I$ of $\Sigma_\mathrm{c}$} (since $\rho(\pm\I)=0$ unambiguously and $\mathbf{U}^\mathrm{in}(\lambda;x,t)$ is analytic at $\lambda=\pm\I$).

We now compute the jump conditions relating these boundary values.  It will be convenient to oppositely orient the parts of $\Sigma_\mathrm{c}$ in the upper and lower half-planes, so we define
\begin{equation}
\Sigma_{\mathrm{c}}^{+} \defeq \{\lambda\in\Sigma_\mathrm{c}\colon \Im\{ \lambda \}\geq 0 \}\,,\quad \Sigma_{\mathrm{c}}^{-} \defeq \{\lambda\in\Sigma_\mathrm{c}\colon \Im\{ \lambda \}\leq 0 \}\,,
\end{equation}
and orient both contours $\Sigma_\mathrm{c}^{\pm}$ in the direction toward the origin $\lambda=0$.  This makes $\Sigma\cup\Sigma_\mathrm{c}$ a Schwarz-symmetric contour (taking orientation into account). See Figure~\ref{fig:contours-NZBC}.  We adopt the usual convention that $\mathbf{M}_+(\lambda;x,t)$ (respectively $\mathbf{M}_-(\lambda;x,t)$) denotes the boundary value taken at a regular (non-self-intersection) point of $\Sigma\cup\Sigma_\mathrm{c}$ from the left (respectively right) side by orientation.
\begin{figure}
\begin{tikzpicture}[scale=0.8]
\def\yL{5}
\def\xL{8}
\def\R{\yL-2}

\coordinate (o) at (0,0);
\coordinate (i) at (0,1);
\coordinate (mi) at (0,-1);
\coordinate (leftcircle) at (-\R,0);
\coordinate (yaxisend) at (0,-\yL);
\draw[help lines,opacity=0.65,<->] (-\xL,0) -- (\xL,0) coordinate (xaxis);
\draw[help lines,opacity=0.65,<->] (0,-\yL) -- (0,\yL) coordinate (yaxis);

\begin{scope}[very thick,decoration={markings,
mark=at position 0.3 with {\arrow[line width =2pt]{>}},
mark=at position 0.7 with {\arrow[line width =2pt]{>}}
}
]
\path[draw,PineGreen,line width =2.4pt,line cap=round , postaction=decorate]
(180:\R) arc(180:0:\R);
\path[draw,PineGreen,line width =2.4pt,line cap=round , postaction=decorate]
(180:\R) arc(180:360:\R);
\end{scope}
\path[fill,PineGreen, opacity=0.1] (\R,0) arc(0:360:\R);

\begin{scope}[very thick,decoration={
  	 markings,
	 mark=at position 0.5 with {\arrow[line width =2pt]{>}},
	 }
]
\path[draw,CadetBlue,line cap=round, line width =2.4pt,postaction=decorate] (\R,0)--(\xL-0.2,0);
\end{scope}

\begin{scope}[very thick,decoration={
  	 markings,
	 mark=at position 0.5 with {\arrow[line width =2pt]{>}}
	 }
]
\path[draw,CadetBlue,line cap=round,line width =2.4pt,postaction=decorate] (180:\xL-0.2)--(180:\R);
\end{scope}
\draw [->,BurntOrange,line width =2.4pt,decorate,line cap=round,decoration={snake, amplitude=.5mm, segment length=3.5mm,post length=2mm}]
(mi) -- (o);
\draw [->,BurntOrange,line width =2.4pt,decorate,line cap=round,decoration={snake, amplitude=.5mm, segment length=3.5mm,post length=2mm}]
(i) -- (o);

\foreach \Point in {(i), (mi)}{
  \node at \Point {\color{BurntOrange}\textbullet};
}

\coordinate (sp) at (150:\R+0.1);
\coordinate (sptext) at (150:\R+2);
\node[yshift=6pt] at (sptext) {\color{PineGreen}$\Sigma_+$};
\draw [->,thin ] (sptext) -- (sp);

\coordinate (sm) at (210:\R+0.1);
\coordinate (smtext) at (210:\R+2);
\node[yshift=-6pt] at (smtext) {\color{PineGreen}$\Sigma_-$};
\draw [->,thin ] (smtext) -- (sm);

\node[below] at (xaxis) {$\Re \lambda$};
\node[above] at (yaxis) {$\Im \lambda$};
\node[above right] at (20:\R+2) {$D_{+}$};
\node[below right] at (-20:\R+2) {$D_{-}$};

\node[below] at (0:\R+3) {$\color{CadetBlue}\Sigma_{\mathrm{R}}$};
\node[below] at (180:\R+3) {$\color{CadetBlue}\Sigma_{\mathrm{L}}$};

\node[left] at (i) {\small{$\I$}};
\node[left] at (mi) {\small{$-\I$}};
\node[above right] at (75:\R-0.1) {$\color{PineGreen}\Sigma_0=\Sigma_{+}\cup\Sigma_{-}$};
\node[above] at (150:\R-1.5) {$D_0$};
\node[right] at (0, 0.65) {\color{BurntOrange} $\Sigma_{\mathrm{c}}^+$};
\node[right] at (0, -0.65) {\color{BurntOrange} $\Sigma_{\mathrm{c}}^-$};
\node[below left] at (270:\R)  {$-\I r$};
\node at (270:\R) {\color{PineGreen}\textbullet};
\node[above left] at (90:\R)  {$\I r$};
\node at (90:\R) {\color{PineGreen}\textbullet};
\node[below right] at (0:\R)  {$r$};
\node at (0:\R) {\color{PineGreen}\textbullet};
\node[below left] at (180:\R)  {$-r$};
\node at (180:\R) {\color{PineGreen}\textbullet};
\end{tikzpicture}
\caption{Definitions of the regions $D_0$, $D_{\pm}$, and the contours $\Sigma_0$, $\Sigma_{\mathrm{L}}$, and $\Sigma_\mathrm{R}$. $\Sigma_\mathrm{c}=\Sigma_\mathrm{c}^{+}\cup \Sigma_\mathrm{c}^{-}$ is the branch cut of the function $\rho(\lambda)$.}
\label{fig:contours-NZBC}
\end{figure}
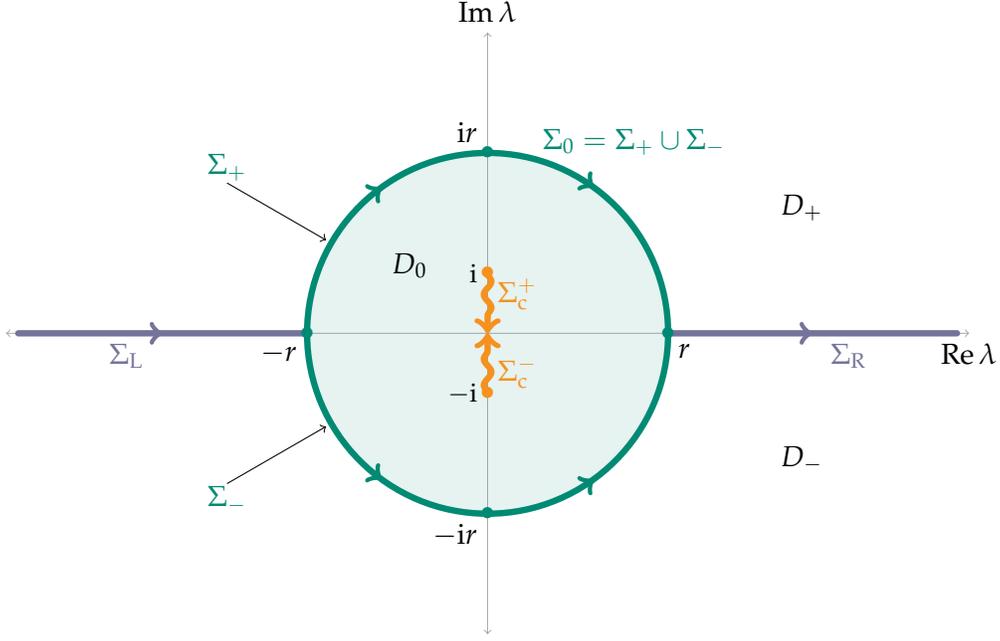
\begin{prop} The continuous boundary values taken by $\mathbf{M}(\lambda;x,t)$ satisfy the jump conditions
\begin{equation}
  \mathbf{M}_+(\lambda;x,t) = \begin{cases}\mathbf{M}_-(\lambda;x,t)\E^{-\I\rho(\lambda)(x+\lambda t)\sigma_3}\mathbf{V}(\lambda)\E^{\I\rho(\lambda)(x+\lambda t)\sigma_3}\,,&\lambda\in\Sigma\,,\\
  \mathbf{M}_-(\lambda;x,t)\E^{2\I\rho_+(\lambda)(x+\lambda t)\sigma_3}\,,&\lambda\in\Sigma_\mathrm{c}\,,
  \end{cases}
  \label{eq:M-mat-jump}
\end{equation}
  with
  \begin{equation}
    \mathbf{V}(\lambda)\defeq \begin{cases}
    \mathbf{V}^+(\lambda)\,,&\lambda\in\Sigma_+\,,\\
    \mathbf{V}^-(\lambda)\,,&\lambda\in\Sigma_-\,,\\
    \mathbf{V}^\mathbb{R}(\lambda)\,,&\lambda\in\Sigma_\mathrm{L}\cup\Sigma_\mathrm{R}\,,
    \end{cases}
    \label{eq:V-mat}
  \end{equation}
  where
  \begin{align}
        \mathbf{V}^+(\lambda)&\defeq\begin{bmatrix} a(\lambda)^{-1}\mathbf{j}^{-,1}(\lambda;L,0); &\mathbf{j}^{+,2}(\lambda;L,0)\end{bmatrix}
        \label{eq:Vplus}\,,\\
        \mathbf{V}^-(\lambda)&\defeq\begin{bmatrix} \mathbf{j}^{+,1}(\lambda;L,0); &\bar{a}(\lambda)^{-1}\mathbf{j}^{-,2}(\lambda;L,0)\end{bmatrix}^{-1}\label{eq:Vminus}\,,
  \end{align}
  \label{p:jump-NZBC}
  and $\mathbf{V}^\mathbb{R}(\lambda)$ is defined by \eqref{eq:VR} in terms of the reflection coefficient associated with the initial data $\psi_0$.
\end{prop}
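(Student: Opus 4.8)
The plan is to verify the jump relation separately on each of the four pieces $\Sigma_\mathrm{c}$, $\Sigma_\mathrm{L}\cup\Sigma_\mathrm{R}$, $\Sigma_+$, and $\Sigma_-$. On each contour I would compare the two local formulas for $\mathbf{U}(\lambda;x,t)$ from the piecewise definition \eqref{eq:N-mat} that meet there, then conjugate by the scalar exponential $\E^{\I\rho(\lambda)(x+\lambda t)\sigma_3}$ appearing in \eqref{eq:M-mat-NZBC-new} to pass from $\mathbf{U}$ to $\mathbf{M}$. The continuity of the one-sided boundary values of $\mathbf{M}$ on $\Sigma\cup\Sigma_\mathrm{c}$ (including at $\lambda=\pm\I$) was already established before the statement, so it remains only to identify the jump matrix $\mathbf{M}_-^{-1}\mathbf{M}_+$ on each piece. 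The algebraic engine for the two circular arcs will be the renormalization identity of Remark~\ref{r:Uin-renormalize}, which expresses $\mathbf{U}^\mathrm{in}$ through any fundamental matrix $\mathbf{U}$ as $\mathbf{U}^\mathrm{in}(\lambda;x,t)=\mathbf{U}(\lambda;x,t)\mathbf{U}(\lambda;L,0)^{-1}$.

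I would dispatch first the two contours where both one-sided limits use the same local formula. On $\Sigma_\mathrm{c}$ both sides lie in $D_0$, where $\mathbf{U}=\mathbf{U}^\mathrm{in}$ is entire and hence takes equal boundary values across the cut; the only discontinuity in $\mathbf{M}$ is then carried by $\rho$, and since $\rho_+(\lambda)=-\rho_-(\lambda)$ on $\Sigma_\mathrm{c}$, computing $\mathbf{M}_-^{-1}\mathbf{M}_+$ collapses the entire factors and leaves the purely exponential jump $\E^{2\I\rho_+(\lambda)(x+\lambda t)\sigma_3}$, as asserted. On $\Sigma_\mathrm{L}\cup\Sigma_\mathrm{R}$ both sides lie in $D_\pm$, where $\mathbf{U}=\mathbf{U}^\mathrm{BC}$ and therefore $\mathbf{M}=\mathbf{M}^\mathrm{BC}$; because these half-lines are disjoint from the branch cut, $\rho$ is continuous across them, so the jump is inherited verbatim from the Beals--Coifman jump \eqref{eq:traditional-jump-real} and equals $\mathbf{V}^\mathbb{R}(\lambda)$.

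The substance lies in the arcs $\Sigma_\pm$. The first observation is that, since $t=0$ trivializes the internal exponentials in \eqref{eq:U-mat-NZBC}, the matrices \eqref{eq:Vplus}--\eqref{eq:Vminus} are simply the transfer data $\mathbf{V}^+(\lambda)=\mathbf{U}^\mathrm{BC}(\lambda;L,0)$ built from the $\mathbb{C}^+$ formula and $\mathbf{V}^-(\lambda)=\mathbf{U}^\mathrm{BC}(\lambda;L,0)^{-1}$ built from the $\mathbb{C}^-$ formula. On $\Sigma_+$ orientation puts the exterior $D_+$ (where $\mathbf{M}=\mathbf{M}^\mathrm{BC}$) on the $+$ side and the interior $D_0$ (where $\mathbf{U}=\mathbf{U}^\mathrm{in}$) on the $-$ side, so $\mathbf{M}_-^{-1}\mathbf{M}_+=\E^{-\I\rho(\lambda)(x+\lambda t)\sigma_3}(\mathbf{U}^\mathrm{in})^{-1}\mathbf{U}^\mathrm{BC}\,\E^{\I\rho(\lambda)(x+\lambda t)\sigma_3}$. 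Applying Remark~\ref{r:Uin-renormalize} with $\mathbf{U}=\mathbf{U}^\mathrm{BC}$ reduces the central product to $(\mathbf{U}^\mathrm{in})^{-1}\mathbf{U}^\mathrm{BC}=\mathbf{U}^\mathrm{BC}(\lambda;L,0)=\mathbf{V}^+(\lambda)$, which is exactly the claimed conjugated jump. On $\Sigma_-$ the arc is traversed in the opposite sense relative to the disk, so now $D_0$ is the $+$ side and $D_-$ the $-$ side; the identical manipulation, using the lower-half-plane $\mathbf{U}^\mathrm{BC}$, yields $(\mathbf{U}^\mathrm{BC})^{-1}\mathbf{U}^\mathrm{in}=\mathbf{U}^\mathrm{BC}(\lambda;L,0)^{-1}=\mathbf{V}^-(\lambda)$.

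The step requiring the most care is the orientation bookkeeping on the two arcs: because $\Sigma_+$ runs clockwise and $\Sigma_-$ counterclockwise about $D_0$, the interior limit (built from $\mathbf{U}^\mathrm{in}$) is the $\mathbf{M}_-$ limit on $\Sigma_+$ but the $\mathbf{M}_+$ limit on $\Sigma_-$, and keeping this straight is precisely what converts the single renormalization identity into $\mathbf{V}^+$ on one arc and into $\mathbf{V}^-=\mathbf{U}^\mathrm{BC}(\lambda;L,0)^{-1}$ on the other. The only analytic input beyond the definitions is Lemma~\ref{lemma:radius}, invoked to guarantee $|a(\lambda)|>\tfrac12>0$ on $\Sigma_0$ (and, by the Schwarz symmetry $\bar a(\lambda)=a(\lambda^*)^*$, likewise $|\bar a(\lambda)|>\tfrac12$), so that $\mathbf{U}^\mathrm{BC}$ is a genuine unimodular fundamental matrix on the arcs and Remark~\ref{r:Uin-renormalize} is applicable; the remainder is the routine matching of the exponential conjugations.
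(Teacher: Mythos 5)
Your proof is correct and follows essentially the same route as the paper: your appeal to Remark~\ref{r:Uin-renormalize} with $\mathbf{U}=\mathbf{U}^\mathrm{BC}$ is exactly the paper's argument on $\Sigma_\pm$, which re-derives the identity $\mathbf{U}^\mathrm{BC}(\lambda;x,t)=\mathbf{U}^\mathrm{in}(\lambda;x,t)\mathbf{U}^\mathrm{BC}(\lambda;L,0)$ column-by-column by expressing the columns of $\mathbf{U}^\mathrm{BC}$ as $\mathbf{U}^\mathrm{in}$ acting on constant vectors fixed at $(x,t)=(L,0)$, and your treatments of $\Sigma_\mathrm{c}$ and of $\Sigma_\mathrm{L}\cup\Sigma_\mathrm{R}$ coincide with the paper's. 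One remark: your orientation bookkeeping (the exterior limit is $\mathbf{M}_+$ on $\Sigma_+$ while the interior limit is $\mathbf{M}_+$ on $\Sigma_-$) is the assignment actually consistent with the jump matrices \eqref{eq:Vplus}--\eqref{eq:Vminus} as stated; the paper's proof contains a sentence asserting the opposite labeling on $\Sigma_+$, which appears to be a typo, since its displayed relation $\mathbf{U}_+=\mathbf{U}_-\mathbf{V}^+$ agrees with yours.
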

The main part of the proof is to establish the jump conditions across $\Sigma_\pm$, and for this purpose it is useful to think of $\mathbf{U}^\mathrm{in}(\lambda;x,t)$ as a ``transfer matrix'' as illustrated in Figure~\ref{fig:jump-condition}.  The sort of reasoning illustrated in this figure is similar to that used to study mixed initial-boundary value problems using the unified transform method; see \cite{FokasUnifiedArticle,FokasUnified}.
\begin{figure}[h]
\begin{tikzpicture}
\coordinate (o) at (0,0);
\def\xtlen{1.4}
\def\xtlenSR{1.48}
\def\xtlenSL{1.32}
\coordinate (xt) at (\xtlen,\xtlen);
\coordinate (xtright) at (\xtlenSR,\xtlen);
\coordinate (xtleft) at (\xtlenSL,\xtlen);
\def\yL{2}
\def\xL{5.8}
\def\R{\yL-2}

\draw[help lines,opacity=0.65,<->] (-\xL,0) -- (\xL,0) coordinate (xaxis);
\draw[help lines,opacity=0.65,->] (0,0) -- (0,\yL) coordinate (yaxis);
\node[right] at (xaxis) {\small $x$};
\node[above] at (yaxis) {\small $t$};
\node[below] at (o) {$(L,0)$};
\node at (xt) {\color{PineGreen}\textbullet};
\node at (o) {\color{PineGreen}\textbullet};
\node[above] at (xt) {$(x,t)$};
\begin{scope}[very thick,decoration={
  	 markings,
	 mark=at position 0.5 with {\arrow[line width =1.5pt]{>}},
	 mark=at position 1 with {\arrow[line width =1.5pt]{>}}
	 }
]
\path[draw,DarkOrchid, line cap=round, line width =2pt, postaction=decorate] (\xL,\xtlen)--(xtright);
\path[draw,DarkOrchid, line cap=round, line width =2pt, postaction=decorate] (\xL,0)--(0.08,0);
\path[draw,BurntOrange, line cap=round, line width =2pt, postaction=decorate] (-\xL,\xtlen)--(xtleft);
\path[draw,BurntOrange, line cap=round, line width =2pt, postaction=decorate] (-\xL,0)--(-0.08,0);
\end{scope}
\begin{scope}[very thick,decoration={
  	 markings,
	 mark=at position 0.5 with {\arrow[line width =1.5pt]{>}},
	 }
]
\path[draw,PineGreen, line cap=round, line width =2pt, postaction=decorate] (o)--(xt);
\end{scope}

\node[above] at (4,\xtlen) {$\color{DarkOrchid}\mathbf{j}^{+,2}(\lambda;x,t)\E^{\I\rho(\lambda)\lambda t\sigma_3}$};
\node[below] at (2.9,0) {$\color{DarkOrchid}\mathbf{j}^{+,2}(\lambda;L,0)$};
\node[above] at (-2,\xtlen) {$\color{BurntOrange}\mathbf{j}^{-,1}(\lambda;x,t)\E^{-\I\rho(\lambda)\lambda t\sigma_3}$};
\node[below] at (-2.9,0) {$\color{BurntOrange}\mathbf{j}^{-,1}(\lambda;L,0)$};
  \node[right,xshift=2] at (0.7,0.7) {${\color{PineGreen}\mathbf{U}^{\mathrm{in}}(\lambda;x,t)}$};
\end{tikzpicture}
\caption{The computation of the jump condition for $\lambda\in\Sigma_+$.  Thus $\mathbf{j}^{-,1}(\lambda;x,t)\E^{-\I\rho(\lambda)\lambda t\sigma_3}=\mathbf{U}^\mathrm{in}(\lambda;x,t)\mathbf{j}^{-,1}(\lambda;L,0)$ and $\mathbf{j}^{+,2}(\lambda;x,t)\E^{\I\rho(\lambda)\lambda t\sigma_3}=\mathbf{U}^\mathrm{in}(\lambda;x,t)\mathbf{j}^{+,2}(\lambda;L,0)$.}
\label{fig:jump-condition}
\end{figure}
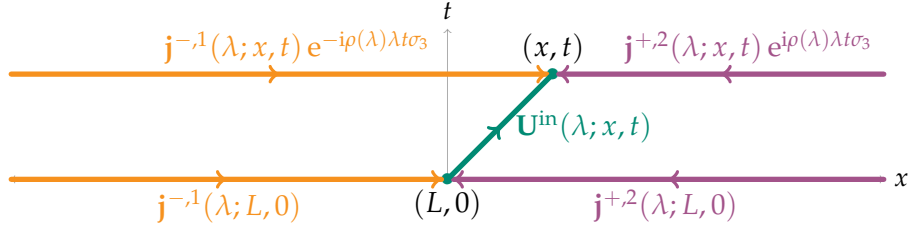

\begin{proof}
Since $\mathbf{M}(\lambda;x,t)=\mathbf{M}^{\mathrm{BC}}(\lambda;x,t)$ for $ \lambda \in D_+\cup D_-$, where the Beals-Coifman matrix function is defined in \eqref{eq:M-mat-NZBC}, the jump condition on $\Sigma_\mathrm{L}\cup\Sigma_\mathrm{R}\subset \mathbb{R}$ is just a special case of the formula \eqref{eq:traditional-jump-real}--\eqref{eq:VR}.

Next, fix $\lambda\in\Sigma_+$.  Observe that since $\mathbf{j}^{-,1}(\lambda;x,t)\E^{-\I\rho(\lambda)\lambda t}$ is a simultaneous solution of the Lax pair \eqref{eq:Lax-x}--\eqref{eq:Lax-t} and since $\mathbf{U}^{\mathrm{in}}(\lambda;x,t)$ is a fundamental solution matrix for the same system, there exists a vector $\mathbf{d}^{+,1}(\lambda)$ (independent of $(x,t)\in\mathbb{R}^2$) such that the first column of $\mathbf{U}^\mathrm{BC}(\lambda;x,t)$ satisfies
  \begin{equation}
    a(\lambda)^{-1}\mathbf{j}^{-,1}(\lambda;x,t)\E^{-\I\rho(\lambda)\lambda t} = \mathbf{U}^{\mathrm{in}}(\lambda;x,t)\mathbf{d}^{+,1}(\lambda)\,,\quad \lambda\in \Sigma_+\,.
    \label{eq:dplus-1}
  \end{equation}
Taking $(x,t)=(L,0)$, the identity $\mathbf{U}^{\mathrm{in}}(\lambda;L,0)=\mathbb{I}$ then determines $\mathbf{d}^{+,1}(\lambda)=a(\lambda)^{-1}\mathbf{j}^{-,1}(\lambda;L,0)$. Similarly, since $\mathbf{j}^{+,2}(\lambda;x,t)\E^{\I\rho(\lambda)\lambda t}$ is a simultaneous solution of the system \eqref{eq:Lax-x}--\eqref{eq:Lax-t}, there exists a vector $\mathbf{d}^{+,2}(\lambda)$ such that the second column of $\mathbf{U}^\mathrm{BC}(\lambda;x,t)$ satisfies
  \begin{equation}
    \mathbf{j}^{+,2}(\lambda;x,t)\E^{\I\rho(\lambda)\lambda t} = \mathbf{U}^{\mathrm{in}}(\lambda;x,t)\mathbf{d}^{+,2}(\lambda)\,,\quad \lambda\in \Sigma_+\,,
    \label{eq:dplus-2}
  \end{equation}
  and again taking $(x,t)=(L,0)$ we get $\mathbf{d}^{+,2}(\lambda)=\mathbf{j}^{+,2}(\lambda;L,0)$. Then \eqref{eq:dplus-1} and \eqref{eq:dplus-2} imply that for $\lambda\in\Sigma_+$, by the clockwise orientation of $\Sigma_+$ we have
  \begin{equation}
    \mathbf{U}_+(\lambda;x,t)=\mathbf{U}_-(\lambda;x,t)\begin{bmatrix}\mathbf{d}^{+,1}(\lambda)\,,& \mathbf{d}^{+,2}(\lambda)\end{bmatrix}=\mathbf{U}_-(\lambda;x,t)\begin{bmatrix}a(\lambda)^{-1}\mathbf{j}^{-,1}(\lambda;L,0)\,, & \mathbf{j}^{+,2}(\lambda;L,0) \end{bmatrix},
  \end{equation}
  because $\mathbf{U}_+(\lambda;x,t)=\mathbf{U}^\mathrm{in}(\lambda;x,t)$ and $\mathbf{U}_-(\lambda;x,t)=\mathbf{U}^\mathrm{BC}(\lambda;x,t)$ for $\lambda\in\Sigma_+$.
  Because $\rho(\lambda)$ is analytic on $\Sigma_0$, this proves that the jump condition satisfied by $\mathbf{M}(\lambda;x,t)$ on $\Sigma_+$ is
  \begin{equation}
  \mathbf{M}_+(\lambda;x,t)=\mathbf{M}_-(\lambda;x,t) \E^{-\I\rho(\lambda)(x+\lambda t)\sigma_3}\mathbf{V}^+(\lambda)\E^{\I\rho(\lambda)(x+\lambda t)\sigma_3}\,,\quad \lambda\in\Sigma_+\,,
  \end{equation}
  where $\mathbf{V}^+(\lambda)$ is defined by \eqref{eq:Vplus}.
A completely analogous calculation shows that
\begin{equation}
\mathbf{M}_+(\lambda;x,t)=\mathbf{M}_-(\lambda;x,t) \E^{-\I\rho(\lambda)(x+\lambda t)\sigma_3}\mathbf{V}^-(\lambda)\E^{\I\rho(\lambda)(x+\lambda t)\sigma_3}\,,\quad \lambda\in\Sigma_-\,,
\end{equation}
where $\mathbf{V}^-(\lambda)$ is defined by \eqref{eq:Vminus}.

Finally, to prove the jump condition of $\mathbf{M}(\lambda;x,t)$ across $\Sigma_\mathrm{c}=\Sigma_\mathrm{c}^+ \cup\Sigma_\mathrm{c}^-$, observe that $\Sigma_\mathrm{c}$ is contained in $D_0$, where $\mathbf{U}^{\mathrm{in}}(\lambda;x,t)$ is single-valued and analytic, and that $\rho_+(\lambda)=-\rho_-(\lambda)$ for $\lambda\in\Sigma_\mathrm{c}$. Thus,
\begin{equation}
\begin{aligned}
\mathbf{M}_+(\lambda;x,t)=\mathbf{U^{\mathrm{in}}}(\lambda;x,t)\E^{\I\rho_+(\lambda)(x+\lambda t)\sigma_3} &=\mathbf{U^{\mathrm{in}}}(\lambda;x,t)\E^{\I\rho_-(\lambda)(x+\lambda t)\sigma_3}\left[\E^{\I\left[\rho_+(\lambda) - \rho_-(\lambda)\right](x+\lambda t)\sigma_3}\right]\\
&=\mathbf{M}_-(\lambda;x,t)\E^{2\I\rho_+(\lambda;x,t)\sigma_3}\,, \quad\lambda\in\Sigma_\mathrm{c}^\pm,
\end{aligned}
\end{equation}
and this completes the proof.
\end{proof}
Note that $a(\lambda)$ and $\bar{a}(\lambda)$ appearing in the jump matrices \eqref{eq:Vplus}--\eqref{eq:Vminus} respectively are defined by \eqref{eq:a-b-NZBC}, and as the relevant Wronskian determinants are independent of the variables $(x,t)$ we may choose $(x,t)=(L,0)$:
\begin{equation}
a(\lambda) = \det \left(\begin{bmatrix} \mathbf{j}^{-,1}(\lambda;L,0); & \mathbf{j}^{+,2}(\lambda;L,0)\end{bmatrix}\right)\,,\quad
        \bar{a}(\lambda) = \det\left(\begin{bmatrix} \mathbf{j}^{+,1}(\lambda;L,0); & \mathbf{j}^{+,2}(\lambda;L,0)\end{bmatrix}\right)\,,
        \label{eq:a-at-L}
\end{equation}
and thus it is obvious that $\det(\mathbf{V}^\pm(\lambda))=1$.

One can now see that the matrix $\mathbf{M}(\lambda;x,t)$ defined in \eqref{eq:M-mat-NZBC-new} satisfies the conditions of the following Riemann-Hilbert problem, the data for which only involves the initial condition $\psi_0(x)$.  Even more significantly, \emph{we require no condition on $\psi_0$ to guarantee any particular rate of growth of $\mathbf{M}^\mathrm{BC}(\lambda;x,t)$ as $\lambda\to\pm\I$}.  Indeed, there is no need for any growth condition like \eqref{eq:growth-condition} to make the following problem well-posed.
\begin{rhp}
  Seek a $2\times 2$ matrix function $\mathbf{M}(\lambda;x,t)$ that has the following properties:
  \begin{itemize}
    \item \textbf{Analyticity:} $\mathbf{M}(\lambda;x,t)$ is analytic for $\lambda\in D_+ \cup D_-$ and for $\lambda\in D_0 \setminus\Sigma_\mathrm{c}$\,.
    \item \textbf{Jump Condition:} $\mathbf{M}(\lambda;x,t)$ takes continuous boundary values $\mathbf{M}_\pm(\lambda;x,t)$ on $\Sigma\cup\Sigma_\mathrm{c}$, and they are related by the jump conditions described in Proposition~\ref{p:jump-NZBC}.
    \item \textbf{Normalization:} $\lim_{\lambda\to\infty} \mathbf{M}(\lambda;x,t) = \mathbb{I}$.
  \end{itemize}
  \label{rhp:M-NZBC}
\end{rhp}
This Riemann-Hilbert problem has many of the features that are expected and desirable in the setting of inverse scattering.  Indeed, 
the ``core'' jump matrix $\mathbf{V}(\lambda)$ in \eqref{eq:V-mat} depends solely on scattering data (well-defined Jost solutions) associated with the initial data $\psi_0$ for the Cauchy initial-value problem \eqref{eq:NLS}--\eqref{eq:IC}, and all $(x,t)$-dependence in the problem enters via conjugation of the core jump matrix by elementary diagonal exponential factors.  Furthermore, we can show that this problem has a unique solution for all $(x,t)\in\mathbb{R}^2$.         

\begin{theorem}
\rhref{rhp:M-NZBC} has a unique solution for all $(x,t)\in\mathbb{R}^2$.
  \label{t:rhp-N-NZBC-unique}
\end{theorem}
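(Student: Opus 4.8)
The plan is to treat \rhref{rhp:M-NZBC} as a standard normalized, unit-determinant matrix Riemann-Hilbert problem and establish well-posedness in two moves: exhibit a solution, then show any two solutions coincide by a Liouville argument applied to their matrix ratio. Existence is essentially already in hand under the standing hypotheses of this section ($\Delta\psi_0,\Delta\psi_0'\in L^1(\mathbb{R})$ and $\psi$ a classical solution of \eqref{eq:NLS}--\eqref{eq:IC}): the matrix $\mathbf{M}(\lambda;x,t)$ defined in \eqref{eq:M-mat-NZBC-new} is analytic in $D_+\cup D_-\cup(D_0\setminus\Sigma_\mathrm{c})$, takes continuous boundary values on $\Sigma\cup\Sigma_\mathrm{c}$ (including at $\lambda=\pm\I$, where $\rho$ vanishes and $\mathbf{U}^\mathrm{in}$ is analytic by Proposition~\ref{p:Uin-def}), satisfies the jump relations of Proposition~\ref{p:jump-NZBC}, and is normalized to $\mathbb{I}$ at infinity as already established from \eqref{eq:M-mat-NZBC} and Lemma~\ref{lemma:radius}. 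Thus \rhref{rhp:M-NZBC} has at least one solution for every $(x,t)\in\mathbb{R}^2$, and the real content of the theorem is uniqueness. (To run the transform purely from scattering data one also wants $\psi$-independent solvability; I return to this at the end.)

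The first step toward uniqueness is to show that every solution has unit determinant. The scalar $\det\mathbf{M}(\lambda;x,t)$ is analytic off $\Sigma\cup\Sigma_\mathrm{c}$, and across each arc its jump is the determinant of the corresponding jump matrix. On $\Sigma_\mathrm{L}\cup\Sigma_\mathrm{R}$ this is $\det\mathbf{V}^\mathbb{R}=(1+|R|^2)-|R|^2=1$; on $\Sigma_\pm$ it is $\det\mathbf{V}^\pm=1$ (noted after \eqref{eq:a-at-L}); on $\Sigma_\mathrm{c}$ it is $\det\E^{2\I\rho_+(x+\lambda t)\sigma_3}=1$ since $\sigma_3$ is traceless; and conjugation by the diagonal exponentials in \eqref{eq:M-mat-jump} leaves the determinant unchanged. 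Hence $\det\mathbf{M}$ extends continuously, and so by Morera's theorem analytically, across every smooth part of the contour. At the finitely many self-intersection points $\lambda=0,\pm r[\Delta\psi]$ and the branch endpoints $\lambda=\pm\I$, continuity of the boundary values of $\mathbf{M}$ forces $\det\mathbf{M}$ to be bounded, so these singularities are removable. Therefore $\det\mathbf{M}$ is entire and tends to $1$ at infinity; by Liouville $\det\mathbf{M}\equiv 1$, and in particular every solution is pointwise invertible.

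Now let $\mathbf{M}_1,\mathbf{M}_2$ be two solutions and set $\mathbf{R}:=\mathbf{M}_1\mathbf{M}_2^{-1}$, which is well defined by the previous step. Because the jump matrix in \eqref{eq:M-mat-jump} depends only on the scattering data and on $(x,t)$ through the explicit exponentials, it is common to both solutions, so the usual cancellation $\mathbf{R}_+=\mathbf{M}_{1,-}\mathbf{V}\mathbf{V}^{-1}\mathbf{M}_{2,-}^{-1}=\mathbf{R}_-$ shows that $\mathbf{R}$ has no jump across any arc of $\Sigma$; the common factor $\E^{2\I\rho_+(x+\lambda t)\sigma_3}$ likewise cancels across $\Sigma_\mathrm{c}$. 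Thus $\mathbf{R}$ continues analytically across all smooth parts of the contour. Near each exceptional point the entries of $\mathbf{M}_1$ and of $\mathbf{M}_2^{-1}$ (the latter equal to the cofactor matrix, since $\det\mathbf{M}_2=1$) are bounded, so $\mathbf{R}$ is bounded there and those singularities are again removable. Hence $\mathbf{R}$ is entire with $\mathbf{R}\to\mathbb{I}$ at infinity, whence $\mathbf{R}\equiv\mathbb{I}$ by Liouville and $\mathbf{M}_1\equiv\mathbf{M}_2$. The conceptually decisive feature here is the removability at $\lambda=\pm\I$: in the traditional formulation these are genuine singularities of $\mathbf{M}^\mathrm{BC}$ and uniqueness requires the growth bound \eqref{eq:growth-condition}, whereas replacing $\mathbf{U}^\mathrm{BC}$ by the entire matrix $\mathbf{U}^\mathrm{in}$ inside $D_0$ turns $\lambda=\pm\I$ into points of continuity and lets the argument close with no growth hypothesis at all.

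The main obstacle, I expect, lies not in this elementary argument but in establishing $\psi$-independent solvability, needed so that solving the Riemann-Hilbert problem actually \emph{produces} $\psi$. The plan there is to recast \rhref{rhp:M-NZBC} as a singular integral equation $(\mathbf{I}-\mathbf{C}_\mathbf{w})\boldsymbol{\mu}=\mathbb{I}$ on $L^2(\Sigma\cup\Sigma_\mathrm{c})$, to verify that $\mathbf{I}-\mathbf{C}_\mathbf{w}$ is Fredholm of index zero (using the Schwarz symmetry \eqref{eq:U-sym} of the jump data together with $R(\lambda)\to 0$), and then to prove a vanishing lemma for the homogeneous problem. The vanishing lemma is the crux: it requires that on the self-conjugate rays $\Sigma_\mathrm{L}\cup\Sigma_\mathrm{R}$ the core jump $\mathbf{V}^\mathbb{R}$ be Hermitian and positive definite (which it manifestly is), while on the conjugate pairs of arcs $\Sigma_\pm$ and branch segments $\Sigma_\mathrm{c}^\pm$ the jumps obey the Schwarz-symmetry relation induced by \eqref{eq:U-sym}, so that their contributions to the relevant contour integral pair off and the positivity on the rays forces the homogeneous solution to vanish. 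Combined with index zero and the uniqueness proved above, this would give solvability for all $(x,t)\in\mathbb{R}^2$.
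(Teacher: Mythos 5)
Your uniqueness argument is sound: given continuous boundary values and unit-determinant jump matrices, the determinant-plus-Liouville ratio argument is correct, and it is a more elementary route to uniqueness than the paper takes (the paper obtains uniqueness as a byproduct of Zhou's framework). However, the real content of Theorem~\ref{t:rhp-N-NZBC-unique} is \emph{existence} of a solution for every $(x,t)\in\mathbb{R}^2$, and there your proposal has a genuine gap. Your primary existence argument --- that the matrix $\mathbf{M}(\lambda;x,t)$ built in \eqref{eq:M-mat-NZBC-new} from the Cauchy-problem solution already solves \rhref{rhp:M-NZBC} --- presupposes that a classical solution $\psi(x,t)$, with $\Delta\psi(\cdot,t)$ remaining in the right class, exists at \emph{every} $(x,t)\in\mathbb{R}^2$. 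That is precisely what the robust IST is meant to deliver, not assume: the point of the theorem (see Remark~\ref{r:rhp-generalize}) is that solvability for all $(x,t)$ follows from the structure of the jump data alone --- the ``core'' matrices live at $t=0$ and $(x,t)$ enters only through explicit exponential conjugation --- so that a dressing argument can then \emph{produce} a global solution of \eqref{eq:NLS}. Deducing existence of the Riemann-Hilbert solution from global existence for NLS is circular in this framework, and it also fails in the generalized setting of Remark~\ref{r:rhp-generalize}, where $\mathbf{V}$ need not originate from any potential at all.

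You acknowledge this and point to the right fix --- Fredholm index zero plus a vanishing lemma --- but that sketch \emph{is} the paper's entire proof, and the two hypotheses whose verification constitutes that proof are asserted rather than checked. Concretely, the paper (i) identifies $\mathbf{V}^+(\lambda)=\mathbf{U}^\mathrm{BC}(\lambda;L,0)$ on $\Sigma_+$ and $\mathbf{V}^-(\lambda)=\mathbf{U}^\mathrm{BC}(\lambda;L,0)^{-1}$ on $\Sigma_-$, and uses the symmetry \eqref{eq:MBC-Schwarz} in the form $\mathbf{U}^\mathrm{BC}(\lambda^*;x,t)^\dagger=\mathbf{U}^\mathrm{BC}(\lambda;x,t)^{-1}$ (valid since $\det\mathbf{U}^\mathrm{BC}=1$), together with $\rho_+(\lambda^*)=\rho_-(\lambda)^*$ on the branch cut, to prove $\mathbf{V}(\lambda^*;x,t)=\mathbf{V}(\lambda;x,t)^\dagger$ on $(\Sigma\cup\Sigma_\mathrm{c})\setminus\mathbb{R}$; and (ii) on $\Sigma_\mathrm{L}\cup\Sigma_\mathrm{R}$, where $\rho(\lambda)$ is real so the conjugating exponentials are unitary, it exhibits the factorization $\mathbf{V}(\lambda;x,t)+\mathbf{V}(\lambda;x,t)^\dagger=\mathbf{H}(\lambda;x,t)^\dagger\mathbf{H}(\lambda;x,t)$ with $\mathbf{H}$ explicit and invertible, giving positive definiteness. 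Only after these computations does Zhou's Vanishing Lemma apply and yield unique solvability for all $(x,t)\in\mathbb{R}^2$. To complete your proposal you would need to carry out exactly these verifications; as written, the unconditional existence claim --- which is what the theorem asserts --- is not established.
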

\begin{proof}
  We will show that the jump conditions and the jump matrices in \rhref{rhp:M-NZBC} satisfy the hypotheses of Zhou's \emph{Vanishing Lemma} \cite[Theorem 9.3]{Zhou1989e}. Note that the jump contour $\Sigma\cup\Sigma_\mathrm{c}$ has the necessary invariance under Schwarz reflection with orientation. 
We first show that for all $\lambda\in(\Sigma\cup\Sigma_\mathrm{c})\setminus\mathbb{R}$, the jump matrix $\mathbf{V}(\lambda;x,t):=\mathbf{M}_-(\lambda;x,t)^{-1}\mathbf{M}_+(\lambda;x,t)$ satisfies $\mathbf{V}(\lambda^*;x,t)=\mathbf{V}(\lambda;x,t)^\dagger$.
Note that $\mathbf{V}^+(\lambda)=\mathbf{U}^\mathrm{BC}(\lambda;L,0)$ for $\lambda\in \Sigma_+$ while $\mathbf{V}^-(\lambda)=\mathbf{U}^\mathrm{BC}(\lambda;L,0)^{-1}$ for $\lambda\in\Sigma_-$.  Suppose that $\lambda\in\Sigma_+$. Since the Schwarz symmetry \eqref{eq:MBC-Schwarz} implies that $\mathbf{U}^\mathrm{BC}(\lambda^*;x,t)=\sigma_2\mathbf{U}^\mathrm{BC}(\lambda;x,t)^*\sigma_2$ which can be re-written as $\mathbf{U}^\mathrm{BC}(\lambda^*;x,t)^\dagger = \mathbf{U}^\mathrm{BC}(\lambda;x,t)^{-1}$ because $\det(\mathbf{U}^\mathrm{BC}(\lambda;x,t))=1$, 
  \begin{equation}
    \begin{aligned}
    \mathbf{V}(\lambda^*;x,t)&=\E^{-\I\rho(\lambda^*)(x+\lambda^*t)\sigma_3}\mathbf{V}^-(\lambda^*)\E^{\I\rho(\lambda^*)(x+\lambda^*t)\sigma_3}\\
    &=\E^{-\I\rho(\lambda)^*(x+\lambda t)^*\sigma_3}
    \mathbf{U}^\mathrm{BC}(\lambda^*;x,t)^{-1}
    \E^{\I\rho(\lambda)^*(x+\lambda t)^*\sigma_3}\\
    &=\left[\E^{\I\rho(\lambda)(x+\lambda t)\sigma_3}\right]^\dagger
    \mathbf{U}^\mathrm{BC}(\lambda;x,t)^\dagger
    \left[\E^{-\I\rho(\lambda)(x+\lambda t)\sigma_3}\right]^\dagger\\
   & =\left[\E^{-\I\rho(\lambda)(x+\lambda t)\sigma_3} \mathbf{V}^+(\lambda)\E^{\I\rho(\lambda)(x+\lambda t)\sigma_3}\right]^\dagger=\mathbf{V}(\lambda;x,t)^\dagger.
  \end{aligned}
  \end{equation}
Supposing next that $\lambda\in\Sigma^+_\mathrm{c}$, we have
  \begin{equation}
      \begin{aligned}
      \mathbf{V}(\lambda^*;x,t) &= \E^{2\I\rho_+(\lambda^*)(x+\lambda^* t)\sigma_3} \\
      &=\E^{2\I\rho_-(\lambda)^*(x+\lambda^* t)\sigma_3}=\left[\E^{2\I\rho_+(\lambda)(x+\lambda t)\sigma_3}\right]^\dagger=\mathbf{V}_{\mathrm{c}}(\lambda;x,t)^\dagger\,.
  \end{aligned}
  \end{equation}

It remains to show that if $\lambda\in\Sigma_\mathrm{L}\cup\Sigma_\mathrm{R}$ (i.e., the part of the jump contour on the real axis), then $\mathbf{V}(\lambda;x,t)+\mathbf{V}(\lambda;x,t)^\dagger$ is positive definite.  
For such $\lambda$, $\rho(\lambda)$ is real-valued and $\mathbf{V}^\mathbb{R}(\lambda)$ defined by \eqref{eq:VR} is hermitian, so we have
  \begin{equation}
   \mathbf{V}(\lambda;x,t)+\mathbf{V}(\lambda;x,t)^\dagger= 2\E^{-\I\rho(\lambda)(x+\lambda t)\sigma_3}\mathbf{V}^{\mathbb{R}}(\lambda)\E^{\I\rho(\lambda)(x+\lambda t)\sigma_3}=\mathbf{H}(\lambda;x,t)^\dagger \mathbf{H}(\lambda;x,t),
  \end{equation}
  where
  \begin{equation}
    \mathbf{H}(\lambda;x,t)\defeq \sqrt{2}\begin{bmatrix}1& 0 \\ R(\lambda)\E^{\I\rho(\lambda)(x+\lambda t)} & 1 \end{bmatrix}
  \end{equation}
  is an invertible matrix ($\det(\mathbf{H}(\lambda))=2$). Since every matrix of the form $\mathbf{H}^\dagger\mathbf{H}$ with $\mathbf{H}$ invertible is positive definite, 
 the desired conclusion follows. 
With the normalization condition $\mathbf{M}(\lambda;x,t)\to\mathbb{I}$ as $\lambda\to\infty$, we have confirmed all the hypotheses of Zhou's Vanishing Lemma. Consequently, \rhref{rhp:M-NZBC} is uniquely solvable.
\end{proof}
From the solution $\mathbf{M}(\lambda;x,t)$ of \rhref{rhp:M-NZBC} we recover the solution $\psi(x,t)$ of the Cauchy problem \eqref{eq:NLS}--\eqref{eq:IC} via the limit
\begin{equation}
  \psi(x,t)=2\I \lim_{\lambda\to\infty}\lambda M_{12}(\lambda;x,t)=2\I\lim_{\lambda\to\infty} \lambda M^{\mathrm{BC}}_{12}(\lambda;x,t).
  \label{eq:NLS-recovery}
\end{equation}
A reasonable question to ask is:  what happened to all of the poles and spectral singularities that can be present for reasonable initial data in the traditional IST outlined in Section~\ref{s:IST-NZBC}?  The answer is that this information is now encoded instead in the in the jump matrices $\mathbf{V}^\pm(\lambda)$ supported on the circle $\Sigma_0$.  These jump matrices are locally analytic on $\Sigma_0$ and upon carrying out analytic continuation into $D_0$ all of these singularities will re-emerge.  We wish to emphasize that in order to compute the jump matrices $\mathbf{V}^\pm(\lambda)$ no analytic continuation away from the continuous spectrum $\Gamma$ is necessary, since the Volterra integral equations characterizing the Jost solutions appearing in \eqref{eq:Vplus}--\eqref{eq:Vminus} can be solved by iteration for fixed $\lambda$ on the corresponding jump contours.  Finally, for the purposes of the robust IST it is sufficient to assume that $\Delta\psi:=\psi-1$ and $\Delta\psi'$ lie in $L^1(\mathbb{R})$; there is no need for any stronger decay condition such as $(1+x^2)\Delta\psi(x)\in L^1(\mathbb{R})$.  Note that the Peregrine solution $\psi_\mathrm{P}$ given by \eqref{eq:Peregrine} and its higher-order generalizations satisfy $\Delta\psi,\Delta\psi'\in L^1(\mathbb{R})$, but \emph{not} $(1+x^2)\Delta\psi(x)\in L^1(\mathbb{R})$.  

\begin{remark}
\label{r:rhp-generalize}
The proof of Theorem~\ref{t:rhp-N-NZBC-unique} relies on only basic properties of the jump matrix $\mathbf{V}(\lambda;x,t)=\E^{-\I\rho(\lambda)(x+\lambda t)\sigma_3}\mathbf{V}(\lambda)\E^{\I\rho(\lambda)(x+\lambda t)\sigma_3}$ defined for $\lambda\in\Sigma$, namely that $\mathbf{V}(\lambda^*)=\mathbf{V}(\lambda)^\dagger$ holds for $\lambda\in\Sigma_0\subset\Sigma$ and that $\mathbf{V}(\lambda)+\mathbf{V}(\lambda)^\dagger$ is positive definite for $\lambda\in\Sigma_\mathrm{L}\cup\Sigma_\mathrm{R}\subset\Sigma$.  Hence, whenever $\mathbf{V}(\lambda)$ has such properties (in additional to standard technical properties characterizing a function space for $\mathbf{V}$ and the sense in which $\mathbf{V}\to\mathbb{I}$ as $\lambda\to\infty$ from within $\Sigma_\mathrm{L}\cup\Sigma_\mathrm{R}\subset\Sigma$), the conclusion of Theorem~\ref{t:rhp-N-NZBC-unique} holds.  Moreover, a standard dressing argument based on the the fact that the matrix $\mathbf{U}(\lambda;x,t)=\mathbf{M}(\lambda;x,t)\E^{-\I\rho(\lambda)(x+\lambda t)\sigma_3}$ satisfies Riemann-Hilbert conditions independent of $(x,t)\in\mathbb{R}^2$ shows that the function $\psi(x,t)$ extracted from the unique solution of \rhref{rhp:M-NZBC} via \eqref{eq:NLS-recovery} is a global classical solution of the focusing NLS equation \eqref{eq:NLS}.  Significantly, this conclusion holds even in the generalized setting in which $\mathbf{V}$ need not originate via the (robust) direct scattering transform for any Cauchy data \eqref{eq:IC}. 
\end{remark}

\begin{remark}
  Note that $L\in\mathbb{R}$ appears as an auxiliary parameter in the construction of $\mathbf{U}^{\mathrm{in}}(\lambda;x,t)$ by Proposition~\ref{p:Uin-def} (and hence in $\mathbf{M}(\lambda;x,t)$). The  
  parameter $L$ then appears in the ``core'' jump matrices $\mathbf{V}^\pm(\lambda)$ on $\Sigma^{\pm}$ in \rhref{rhp:M-NZBC}. Now suppose that $\mathbf{U}^{\mathrm{in},j}(\lambda;x,t)$ is matrix of Proposition~\ref{p:Uin-def} normalized at $(x,t)=(L_j,0)$ for $j=1,2$.
  Then
  \begin{equation}
    \mathbf{U}^{\mathrm{in},2}(\lambda;x,t)=\mathbf{U}^{\mathrm{in},1}(\lambda;x,t)\mathbf{U}^{\mathrm{in},1}(\lambda;L_2,0)^{-1}
  \end{equation}
since $\mathbf{U}^{\mathrm{in},1}(\lambda;x,t)$ is a matrix of fundamental simultaneous solutions of \eqref{eq:Lax-x}--\eqref{eq:Lax-t}. Thus, changing the value of $L$ manifests itself through the multiplication of $\mathbf{U}^{\mathrm{in}}(\lambda;x,t)$ on the right by a matrix that is constant in $(x,t)$ and analytic in $\lambda$. This constant matrix can also be appropriately applied to the jump condition on the circle $\Sigma_0$, leaving $\mathbf{U}^{\mathrm{BC}}(\lambda;x,t)$ unchanged outside $\Sigma_0$. Doing so, for different values of $L$, we leave the residue of $\mathbf{M}(\lambda;x,t)$ at $\lambda=\infty$ (and hence the solution of the NLS equation recovered via \eqref{eq:NLS-recovery}) unchanged. As in \cite{Deift1991b,Zhou1989d}, the IST method we have here provides a correspondence between a suitable solution of the NLS equation \eqref{eq:NLS} and an equivalence class of Riemann-Hilbert problems, or an equivalence class of jump matrices supported on the circle $\Sigma_0$ augmented by the jump matrix on the real line.
  \label{r:equivalence-class-rhp}
\end{remark}

\begin{remark}
  The basic method described here for the case of nonzero boundary conditions \eqref{eq:NZBC} can also be implemented in the case of zero boundary conditions \eqref{eq:ZBC} simply by replacing $\rho(\lambda)$ with $\lambda$, and hence $\mathbf{E}(\lambda)$ by the identity matrix $\mathbb{I}$.  Furthermore, the branch cut $\Sigma_\mathrm{c}$ is not present in the latter case.  Thus the robust IST provides an alternative approach in that case to the ``cut-off'' method advanced in \cite{Deift1991b,Zhou1989d}, and the robust IST avoids the jump on the real axis in $D_0$ that is generally present in the ``cut-off'' method.  The basic approach used to develop the robust IST described in this section can also be applied to other integrable problems like the derivative nonlinear Schr\"odinger equation with minimal modifications.  
\end{remark}


\section{Darboux Transformations in the Context of the Robust IST}
\label{s:Darboux}
In this section, we develop a Darboux transformation scheme for \rhref{rhp:M-NZBC} and apply it to generate arbitrary-order rogue waves in such a way that they directly admit a Riemann-Hilbert representation with no limit process required.  A good general reference for Darboux transformations in integrable systems is the book of Matveev and Salle \cite{MatveevS1991-book}.

\subsection{Basic pole insertion}
\label{s:pole-insertion}
Let $\mathbf{M}(\lambda;x,t)$ be the matrix function characterized by the conditions of \rhref{rhp:M-NZBC}, and let $\mathbf{U}(\lambda;x,t):=\mathbf{M}(\lambda;x,t)\E^{-\I\rho(\lambda)(x+\lambda t)\sigma_3}$.  Choose any point $\xi\in D_0$.  For a two-nilpotent matrix $\mathbf{R}(x,t)$ to be determined, consider the gauge transformation
\begin{equation}
  \dot{\mathbf{U}}(\lambda;x,t)\defeq \left(\mathbb{I}+ \frac{\mathbf{R}(x,t)}{\lambda-\xi}\right)\mathbf{U}(\lambda;x,t),\quad\mathbf{R}(x,t)^2 = \mathbf{0}.
  \label{eq:N-tilde}
\end{equation}
This gauge transformation has the following effects.
\begin{itemize}
\item
Because $\mathbf{R}(x,t)$ is 2-nilpotent, $\det(\mathbf{U}(\lambda;x,t))=1$ implies that also $\det(\dot{\mathbf{U}}(\lambda;x,t))=1$.
\item
The normalization condition $\mathbf{M}(\lambda;x,t)\to\mathbb{I}$ as $\lambda\to\infty$ implies that also $\dot{\mathbf{M}}(\lambda;x,t):=\dot{\mathbf{U}}(\lambda;x,t)\E^{\I\rho(\lambda)(x+\lambda t)\sigma_3}\to\mathbb{I}$ as $\lambda\to\infty$.
\item
At each non-self-intersection point of $\Sigma$, the jump condition $\mathbf{U}_+(\lambda;x,t)=\mathbf{U}_-(\lambda;x,t)\mathbf{V}(\lambda)$ implies that also $\dot{\mathbf{U}}_+(\lambda;x,t)=\dot{\mathbf{U}}_-(\lambda;x,t)\mathbf{V}(\lambda)$.
\end{itemize}
For the gauge transformation to be nontrivial, we assume that although $\mathbf{R}(x,t)^2=\mathbf{0}$, $\mathbf{R}(x,t)\neq \mathbf{0}$, in which case $\mathbf{R}(x,t)$ necessarily has the form
\begin{equation}
\mathbf{R}(x,t)=\mathbf{v}(x,t)\mathbf{v}(x,t)^\top\sigma_2
\label{eq:Rmat-def}
\end{equation}
for a vector-valued function $\mathbf{v}(x,t)$ not identically zero.  The vector $\mathbf{v}(x,t)$ is then to be determined so that
\begin{equation}
\underset{\lambda=\xi}{\res}\,\dot{\mathbf{U}}(\lambda;x,t)=\lim_{\lambda\to\xi}\dot{\mathbf{U}}(\lambda;x,t)\mathbf{C},
\label{eq:res1}
\end{equation}
where $\mathbf{C}\neq\mathbf{0}$ is a $2\times 2$ complex-valued constant two-nilpotent matrix: $\mathbf{C}^2 =\mathbf{0}$.  By analogy with \eqref{eq:Rmat-def}, we may write $\mathbf{C}$ in the form
\begin{equation}
  \mathbf{C}=\mathbf{c}\mathbf{c}^\top\sigma_2,\quad\text{where}~\mathbf{c}= \big[\begin{matrix}c_1 & c_2 \end{matrix}\big]^\top.
\label{eq:C-nilpotent}
\end{equation}
Here $c_1$ and $c_2$ are complex parameters, not both zero.

To see how the condition \eqref{eq:res1} determines $\mathbf{v}(x,t)$ (and hence $\dot{\mathbf{U}}(\lambda;x,t)$) in terms of the parameters $\xi\in D_0$ and $\mathbf{c}\in\mathbb{C}^2\setminus\{\mathbf{0}\}$, one starts from the Laurent expansion about the pole $\lambda=\xi$ of $\dot{\mathbf{U}}(\lambda;x,t)$ defined by \eqref{eq:N-tilde}:
\begin{equation}
\dot{\mathbf{U}}(\lambda;x,t)=\mathbf{R}(x,t)\mathbf{U}(\xi;x,t)(\lambda-\xi)^{-1} + \mathbf{U}(\xi;x,t)+\mathbf{R}(x,t)\mathbf{U}'(\xi;x,t) + O(\lambda-\xi),\quad\lambda\to\xi,
\label{eq:laurent1}
\end{equation}
where the prime denotes differentiation with respect to $\lambda$. On this expansion we impose the condition \eqref{eq:res1}; the existence of the limit on the right-hand side requires that 
\begin{equation}
  \mathbf{R}(x,t)\mathbf{U}(\xi;x,t)\mathbf{C}=\mathbf{0},
  \label{eq:R}
\end{equation}
and then matching the residue on the left-hand side to the limit on the right,
\begin{equation}
  \mathbf{R}(x,t)\mathbf{U}(\xi;x,t)=\left(\mathbf{U}(\xi;x,t) + \mathbf{R}(x,t)\mathbf{U}'(\xi;x,t)\right)\mathbf{C}.
  \label{eq:RC}
\end{equation}
Using the representation \eqref{eq:C-nilpotent} in \eqref{eq:R}, we see that (since $\mathbf{c}^\top\sigma_2\neq\mathbf{0}$), the vector $\mathbf{U}(\xi;x,t)\mathbf{c}$ must lie in the kernel of $\mathbf{R}(x,t)$.  Given the form \eqref{eq:Rmat-def} of $\mathbf{R}(x,t)$, the vector $\mathbf{v}(x,t)$ must be proportional to $\mathbf{U}(\xi;x,t)\mathbf{c}$, so \eqref{eq:Rmat-def} can be rewritten in the form
\begin{equation}
  \mathbf{R}(x,t)=\varphi(x,t)\mathbf{U}(\xi;x,t)\mathbf{c}\mathbf{c}^\top\mathbf{U}(\xi;x,t)^\top\sigma_2,
  \label{eq:R3}
\end{equation}
where the only ambiguity remaining is $\varphi(x,t)$, a nonzero complex scalar.  To determine $\varphi(x,t)$, we substitute \eqref{eq:R3} into \eqref{eq:RC}; using \eqref{eq:C-nilpotent} and the fact that for any $2\times 2$ matrix $\mathbf{A}$ with $\det(\mathbf{A})=1$, $\mathbf{A}^\top\sigma_2\mathbf{A}=\det(\mathbf{A})\sigma_2=\sigma_2$ we find
\begin{equation}
  \mathbf{U}(\xi;x,t)\mathbf{c}\cdot\varphi(x,t)\cdot\mathbf{c}^\top\sigma_2 =
  \mathbf{U}(\xi;x,t)\mathbf{c}\cdot\left(1 + \varphi(x,t)\mathbf{c}^\top\mathbf{U}(\xi;x,t)^\top\sigma_2\mathbf{U}'(\xi;x,t)\mathbf{c}\right)\cdot\mathbf{c}^\top\sigma_2.
  \label{eq:RC3}
\end{equation}
Since $\mathbf{c}\neq\mathbf{0}$ and $\mathbf{U}(\xi;x,t)$ is invertible, neither the column vector $\mathbf{U}(\xi;x,t)\mathbf{c}$ nor the row vector $\mathbf{c}^\top\sigma_2$ can vanish, so \eqref{eq:RC3} is just a scalar linear equation which can be solved explicitly for $\varphi(x,t)$:
\begin{equation}
  \varphi(x,t)=\varphi(\xi,\mathbf{c};x,t)\defeq\frac{1}{1-\mathbf{c}^\top\mathbf{U}(\xi;x,t)^\top\sigma_2\mathbf{U}'(\xi;x,t)\mathbf{c}}.
  \label{eq:alpha1-def}
\end{equation}
 
\subsection{Darboux transformation for the robust IST}
\label{s:DarbouxRobustIST}
The insertion of a single pole breaks the Schwarz symmetry \eqref{eq:U-sym} initially present in the matrix $\mathbf{U}(\lambda;x,t)$.  This is obviously true if $\xi\not\in\mathbb{R}$, but even if one takes $\xi\in D_0\cap\mathbb{R}$ one finds that the condition $\mathbf{R}(x,t)\neq 0$ is inconsistent with Schwarz symmetry of the Laurent expansion of $\dot{\mathbf{U}}(\lambda;x,t)$ about $\lambda=\xi$.  To maintain the Schwarz symmetry it is necessary to insert a complex-conjugate pair of poles at points $\xi$ and $\xi^*$ in $D_0$ with $\xi\neq\xi^*$.  
The insertion of the conjugate pair of poles is done in two consecutive steps, after which the poles are removed in favor of a modified jump:
\begin{itemize}
\item A gauge transformation of the form \eqref{eq:N-tilde} is applied with data $(\xi,\mathbf{c})$ as described in Section~\ref{s:pole-insertion}.  This step generates $\dot{\mathbf{U}}(\lambda;x,t)$ from $\mathbf{U}(\lambda;x,t)$.
\item A second gauge transformation of the form \eqref{eq:N-tilde} is applied with data $(\xi^*,\sigma_2\mathbf{c}^*)$.  This step generates a matrix $\ddot{\mathbf{U}}(\lambda;x,t)$ from $\dot{\mathbf{U}}(\lambda;x,t)$.
\item The two new simple-pole singularities of $\ddot{\mathbf{U}}(\lambda;x,t)$ are removed and instead transferred to the jump condition across the circle $\Sigma_0$ by making an explicit renormalization for $\lambda\in D_0$:
\begin{equation}
\widetilde{\mathbf{U}}(\lambda;x,t):=\begin{cases}\ddot{\mathbf{U}}(\lambda;x,t),&\quad\lambda\in D^+\cup D^-\\
\ddot{\mathbf{U}}(\lambda;x,t)\ddot{\mathbf{U}}(\lambda;L,0)^{-1},&\quad\lambda\in D_0.
\end{cases}
\label{eq:poles-to-circle}
\end{equation}
\end{itemize}
We first construct $\ddot{\mathbf{U}}(\lambda;x,t)$ by explicitly composing the two pole insertions:
\begin{equation}
  \ddot{\mathbf{U}}(\lambda;x,t)=\left(\mathbb{I}+\frac{\dot{\varphi}(\xi^*,\sigma_2\mathbf{c}^*;x,t)\dot{\mathbf{U}}(\xi^*;x,t)\sigma_2\mathbf{c}^*\mathbf{c}^{\dagger}\sigma_2^\top\dot{\mathbf{U}}(\xi^*;x,t)^\top\sigma_2}{\lambda-\xi^*}\right)\dot{\mathbf{U}}(\lambda;x,t),
  \label{eq:second-step}
\end{equation}
where
\begin{equation}
  \dot{\mathbf{U}}(\lambda;x,t)=\left(\mathbb{I}+\frac{\varphi(\xi,\mathbf{c};x,t)\mathbf{U}(\xi;x,t)\mathbf{c}\mathbf{c}^\top\mathbf{U}(\xi;x,t)^\top\sigma_2}{\lambda-\xi}\right){\mathbf{U}}(\lambda;x,t),
  \label{eq:first-step}
\end{equation}
and in \eqref{eq:second-step}, $\dot{\varphi}$ is defined from \eqref{eq:alpha1-def} using the matrix function $\dot{\mathbf{U}}(\lambda;x,t)$ in place of $\mathbf{U}(\lambda;x,t)$.
Inserting \eqref{eq:first-step} into \eqref{eq:second-step} and expanding the product of the parentheses in partial fractions gives $\ddot{\mathbf{U}}(\lambda;x,t)=\mathbf{G}(\lambda;x,t)\mathbf{U}(\lambda;x,t)$, where the composite gauge transformation matrix $\mathbf{G}(\lambda;x,t)$ is
\begin{equation}
\mathbf{G}(\lambda;x,t):=\mathbb{I} + \frac{\mathbf{Y}(x,t)}{\lambda-\xi} + \frac{\mathbf{Z}(x,t)}{\lambda-\xi^*}
  \label{eq:G-partial-fraction}
\end{equation}
with coefficients $\mathbf{Y}(x,t)$ and $\mathbf{Z}(x,t)$ described as follows. Set
\begin{equation}
  \begin{aligned}
    \beta&\defeq\mathrm{Im}(\xi),\\
    \mathbf{s}(x,t) &\defeq \mathbf{U}(\xi;x,t)\mathbf{c},\\
    N(x,t)&\defeq\|\mathbf{s}(x,t)\|^2=\mathbf{s}(x,t)^\dagger \mathbf{s}(x,t).
\end{aligned}
\label{eq:b-s-N}
\end{equation} 
Then, by using the given Schwarz symmetry condition $\mathbf{U}(\lambda^*;x,t)=\sigma_2\mathbf{U}(\lambda;x,t)^*\sigma_2$ (cf., \eqref{eq:U-sym}), along with the identity $\mathbf{v}^\top\sigma_2\mathbf{v}=0$ for all vectors $\mathbf{v}$, we find
\begin{multline}
  \mathbf{Y}(x,t)=\varphi(\xi,\mathbf{c};x,t)\left(1-\frac{\varphi(\xi,\mathbf{c};x,t)\dot{\varphi}(\xi^*,\sigma_2\mathbf{c}^*;x,t)N(x,t)^2}{4\beta^2}\right)\mathbf{s}(x,t)\mathbf{s}(x,t)^\top\sigma_2 \\-\frac{\varphi(\xi,\mathbf{c};x,t)\dot{\varphi}(\xi^*,\sigma_2\mathbf{c}^*;x,t)N(x,t)}{2\I \beta}\sigma_2\mathbf{s}(x,t)^*\mathbf{s}(x,t)^\top\sigma_2
  \label{eq:Y-def}
\end{multline}
and
\begin{multline}
  \mathbf{Z}(x,t)=-\dot{\varphi}(\xi^*,\sigma_2\mathbf{c}^*;x,t)\sigma_2\mathbf{s}(x,t)^*\mathbf{s}(x,t)^\dagger \\
  + \frac{\dot{\varphi}(\xi^*,\sigma_2\mathbf{c}^*;x,t)\varphi(\xi,\mathbf{c};x,t)N(x,t)}{2\I \beta}\mathbf{s}(x,t)\mathbf{s}(x,t)^\dagger.
  \label{eq:Z-def}
\end{multline}
We simplify the scalars $\varphi(\xi,\mathbf{c};x,t)$, $\dot{\varphi}(\xi^*,\sigma_2\mathbf{c};x,t)$, and  $\dot{\varphi}(\xi^*,\sigma_2\mathbf{c}^*;x,t)\varphi(\xi,\mathbf{c};x,t)$ as follows. First note that $\varphi(\xi,\mathbf{c};x,t)$ can be expressed as:
\begin{equation}
  \varphi(\xi,\mathbf{c};x,t)=\frac{1}{1-w(x,t)}=\frac{1-w(x,t)^*}{|1-w(x,t)|^2},
  \label{eq:varphi}
\end{equation}
where
\begin{equation}
  w(x,t)\defeq \mathbf{c}^\top\mathbf{U}(\xi;x,t)^\top \sigma_2 \mathbf{U}'(\xi;x,t)\mathbf{c}=\mathbf{s}(x,t)^\top\sigma_2\mathbf{s}'(x,t)\quad\text{and}\quad\mathbf{s}'(x,t):=\mathbf{U}'(\xi;x,t)\mathbf{c}.
  \label{eq:w-sprime}
\end{equation}
Next, note that
\begin{equation}
  \begin{aligned}
  \mathbf{c}^\dagger\sigma_2^\top\dot{\mathbf{U}}(\xi^*;x,t)^\top\sigma_2\dot{\mathbf{U}}'(\xi^*;x,t)\sigma_2\mathbf{c}^* &=-\frac{\varphi(\xi,\mathbf{c};x,t)N(x,t)^2}{4\beta^2}-\mathbf{s}(x,t)^\dagger\sigma_2\mathbf{s}'(x,t)^*\\
  &=-\frac{\varphi(\xi,\mathbf{c};x,t)N(x,t)^2}{4\beta^2}+w(x,t)^*
  \end{aligned}
\end{equation}
and hence $\dot{\varphi}(\xi^*,\sigma_2\mathbf{c}^*;x,t)$ can be simplified to: 
\begin{equation}
  \begin{aligned}
\dot{\varphi}(\xi^*,\sigma_2\mathbf{c}^*;x,t) &= \frac{1}{1-\mathbf{c}^\dagger \sigma_2^\top \dot{\mathbf{U}}(\xi^*;x,t)^\top\sigma_2 \dot{\mathbf{U}}'(\xi^*;x,t)\sigma_2\mathbf{c}^*}\\
&=4\beta^2\frac{|1-w(x,t)|^2}{\left(4\beta^2|1-w(x,t)|^2 + N(x,t)^2\right)(1-w(x,t)^*)}.
  \end{aligned}
  \label{eq:dot-varphi}
\end{equation}
Combining \eqref{eq:varphi} and \eqref{eq:dot-varphi} then gives
\begin{equation}
\varphi(\xi,\mathbf{c};x,t)\dot{\varphi}(\xi^*,\sigma_2\mathbf{c}^*;x,t)=\frac{4\beta^2}{4\beta^2|1-w(x,t)|^2 + N(x,t)^2}.
\label{eq:alpha1alpha2}
\end{equation}
Using \eqref{eq:varphi}, \eqref{eq:dot-varphi}, and \eqref{eq:alpha1alpha2} in \eqref{eq:Y-def} and \eqref{eq:Z-def} gives:
\begin{multline}
  \mathbf{Y}(x,t)=\frac{4\beta^2(1-w(x,t)^*)}{4\beta^2|1-w(x,t)|^2 + N(x,t)^2}\mathbf{s}(x,t)\mathbf{s}(x,t)^\top\sigma_2 \\+\frac{2\I \beta N(x,t)}{4\beta^2|1-w(x,t)|^2 + N(x,t)^2}\sigma_2\mathbf{s}(x,t)^*\mathbf{s}(x,t)^\top\sigma_2
 \label{eq:Ymat-formula}
\end{multline}
and
\begin{multline}
  \mathbf{Z}(x,t)=\frac{-4\beta^2(1-w(x,t))}{4\beta^2|1-w(x,t)|^2 + N(x,t)^2}\sigma_2\mathbf{s}(x,t)^*\mathbf{s}(x,t)^\dagger  \\+\frac{-2\I \beta N(x,t)}{4\beta^2|1-w(x,t)|^2 + N(x,t)^2}\mathbf{s}(x,t)\mathbf{s}(x,t)^\dagger.
 \label{eq:Zmat-formula}
\end{multline}
Note that
\begin{equation}
  4\beta^2|1-w(x,t)|^2 + N(x,t)^2 \geq N(x,t)^2,
  \label{eq:Ymat-denom-bound}
\end{equation}
which can never vanish for any $(x,t)\in\mathbb{R}^2$ because $\mathbf{c}\neq\mathbf{0}$ and $N(x,t)$ is the squared length of the vector $\mathbf{s}(x,t)$, a non-trivial linear combination of the columns of a matrix with unit determinant. It is now completely obvious that $\mathbf{Z}(x,t)=\sigma_2 \mathbf{Y}(x,t)^*\sigma_2$, which implies that $\mathbf{G}(\lambda^*;x,t)=\sigma_2\mathbf{G}(\lambda;x,t)^*\sigma_2$ and hence $\ddot{\mathbf{U}}(\lambda;x,t)=\mathbf{G}(\lambda;x,t)\mathbf{U}(\lambda;x,t)$ maintains the Schwarz symmetry of the original matrix function $\mathbf{U}(\lambda;x,t)$, namely $\ddot{\mathbf{U}}(\lambda^*;x,t)=\sigma_2\ddot{\mathbf{U}}(\lambda;x,t)^*\sigma_2$.  

A ``dressing'' construction shows that for each $\lambda\in\mathbb{C}\setminus(\Sigma\cup\{\xi,\xi^*\})$, $\ddot{\mathbf{U}}(\lambda;x,t)$ simultaneously satisfies Lax pair equations of the form \eqref{eq:Lax-x}--\eqref{eq:Lax-t} in which $\psi(x,t)$ is replaced by a modified potential $\widetilde{\psi}(x,t)$ given by (here $\ddot{\mathbf{M}}(\lambda;x,t):=\ddot{\mathbf{U}}(\lambda;x,t)\E^{\I\rho(\lambda)(x+\lambda t)\sigma_3}$ is a matrix satisfying $\ddot{\mathbf{M}}(\lambda;x,t)\to\mathbb{I}$ as $\lambda\to\infty$)
\begin{equation}
\begin{split}
\widetilde{\psi}(x,t)&=2i\lim_{\lambda\to\infty}\lambda\ddot{M}_{12}(\lambda;x,t)\\
&=2iY_{12}(x,t)+2iZ_{12}(x,t)+2i\lim_{\lambda\to\infty}\lambda M_{12}(\lambda;x,t)\\
&=\psi(x,t)+2iY_{12}(x,t)+2iZ_{12}(x,t)\\
&=\psi(x,t)+2i(Y_{12}(x,t)-Y_{21}(x,t)^*).
\end{split}
\label{eq:new-psi}
\end{equation}
To see this, one notes that since each gauge transformation step $\mathbf{U}(\lambda;x,t)\to\dot{\mathbf{U}}(\lambda;x,t)\to\ddot{\mathbf{U}}(\lambda;x,t)$ involves multiplication on the left by a matrix that is analytic everywhere except at the pole to be inserted, all pre-existant jump conditions and residue conditions of the form \eqref{eq:res1} are preserved.  Thus $\ddot{\mathbf{U}}(\lambda;x,t)$ is analytic for $\lambda\in\mathbb{C}\setminus(\Sigma\cup\{\xi,\xi^*\})$ and it satisfies the jump condition $\ddot{\mathbf{U}}_+(\lambda;x,t)=\ddot{\mathbf{U}}_-(\lambda;x,t)\mathbf{V}(\lambda)$ for $\lambda\in\Sigma$ exactly as does $\mathbf{U}(\lambda;x,t)$ according to \rhref{rhp:M-NZBC}.  Moreover the simple poles at $\xi$ and $\xi^*$ are characterized by the related residue conditions
\begin{equation}
\underset{\lambda=\xi}{\res}\,\ddot{\mathbf{U}}(\lambda;x,t)=\lim_{\lambda\to\xi}\ddot{\mathbf{U}}(\lambda;x,t)\mathbf{C}\quad\text{and}\quad
\underset{\lambda=\xi^*}{\res}\,\ddot{\mathbf{U}}(\lambda;x,t)=\lim_{\lambda\to\xi^*}\ddot{\mathbf{U}}(\lambda;x,t)\sigma_2\mathbf{C}^*\sigma_2.
\label{eq:res-both}
\end{equation}
Because the ``core'' jump matrix $\mathbf{V}(\lambda)$ and the residue matrix $\mathbf{C}=\mathbf{c}\mathbf{c}^\top\sigma_2$ are independent of $(x,t)\in\mathbb{R}^2$, $\ddot{\mathbf{U}}(\lambda;x,t)$ and its partial derivatives $\ddot{\mathbf{U}}_x(\lambda;x,t)$ and $\ddot{\mathbf{U}}_t(\lambda;x,t)$ are all analytic in the same domain and satisfy the same jump and residue conditions, and it then follows 
that the matrices
\begin{equation}
\ddot{\mathbf{X}}(\lambda;x,t):=\ddot{\mathbf{U}}_x(\lambda;x,t)\ddot{\mathbf{U}}(\lambda;x,t)^{-1}\quad\text{and}\quad
\ddot{\mathbf{T}}(\lambda;x,t):=\ddot{\mathbf{U}}_t(\lambda;x,t)\ddot{\mathbf{U}}(\lambda;x,t)^{-1}
\label{eq:ddotXT}
\end{equation}
have only removable singularities, and hence are essentially entire functions of $\lambda$.
Moreover, assuming that the asymptotic expansion 
\begin{equation}
\ddot{\mathbf{M}}(\lambda;x,t)= \mathbb{I}+\frac{\ddot{\mathbf{M}}^{(1)}(x,t)}{\lambda}+
\frac{\ddot{\mathbf{M}}^{(2)}(x,t)}{\lambda^2}+O(\lambda^{-3}),\quad\lambda\to\infty
\end{equation}
is differentiable term-by-term with respect to $(x,t)$, it follows that
\begin{equation}
\begin{split}
\ddot{\mathbf{X}}(\lambda;x,t)&=\left[\ddot{\mathbf{M}}(\lambda;x,t)\E^{-\I\rho(\lambda)(x+\lambda t)\sigma_3}\right]_x \E^{\I\rho(\lambda)( x+\lambda t)\sigma_3}\ddot{\mathbf{M}}(\lambda;x,t)^{-1}\\
&=-\I\rho(\lambda)\ddot{\mathbf{M}}(\lambda;x,t)\sigma_3\ddot{\mathbf{M}}(\lambda;x,t)^{-1} + \ddot{\mathbf{M}}_x(\lambda;x,t)\ddot{\mathbf{M}}(\lambda;x,t)^{-1}\\
&=-\I\lambda\sigma_3 +\I[\sigma_3,\ddot{\mathbf{M}}^{(1)}(x,t)] + O(\lambda^{-1})\\
&=-\I\lambda\sigma_3 +\I[\sigma_3,\ddot{\mathbf{M}}^{(1)}(x,t)],
\end{split}
\label{eq:A-linear}
\end{equation}
and
\begin{equation}
\begin{split}
\ddot{\mathbf{T}}(\lambda;x,t)&=\left[\ddot{\mathbf{M}}(\lambda;x,t)\E^{-\I\rho(\lambda)(x+\lambda t)\sigma_3}\right]_t \E^{\I\rho(\lambda)(x+\lambda t)\sigma_3}\ddot{\mathbf{M}}(\lambda;x,t)^{-1}\\
&=-\I\rho(\lambda)\lambda\ddot{\mathbf{M}}(\lambda;x,t)\sigma_3\ddot{\mathbf{M}}(\lambda;x,t)^{-1} + \ddot{\mathbf{M}}_t(\lambda;x,t)\ddot{\mathbf{M}}(\lambda;x,t)^{-1}\\
&=-\I\lambda^2\sigma_3+\I\lambda[\sigma_3,\ddot{\mathbf{M}}^{(1)}(x,t)] \\
&\quad\quad{}+ \I[\sigma_3,\ddot{\mathbf{M}}^{(2)}(x,t)]+\I[\ddot{\mathbf{M}}^{(1)}(x,t),\sigma_3\ddot{\mathbf{M}}^{(1)}(x,t)] - \frac{\I}{2}\sigma_3 + O(\lambda^{-1})\\
&=-\I\lambda^2\sigma_3+\I\lambda[\sigma_3,\ddot{\mathbf{M}}^{(1)}(x,t)] + \I[\sigma_3,\ddot{\mathbf{M}}^{(2)}(x,t)]+\I[\ddot{\mathbf{M}}^{(1)}(x,t),\sigma_3\ddot{\mathbf{M}}^{(1)}(x,t)]- \frac{\I}{2}\sigma_3,
\end{split}
\label{eq:B-linear}
\end{equation}
where the last equality in each case is a consequence of Liouville's Theorem. 
The dependence on $\ddot{\mathbf{M}}^{(2)}(x,t)$ in \eqref{eq:B-linear} can be eliminated because the coefficient of $\lambda^{-1}$ in the $O(\lambda^{-1})$ error term on the third line of \eqref{eq:A-linear} is
\begin{equation}
\I[\sigma_3,\ddot{\mathbf{M}}^{(2)}(x,t)] + \I[\ddot{\mathbf{M}}^{(1)}(x,t),\sigma_3\ddot{\mathbf{M}}^{(1)}(x,t)] + \ddot{\mathbf{M}}_x^{(1)}(x,t)-\frac{\I}{2}\sigma_3
\label{eq:M-tilde-error}
\end{equation}
which must vanish (again by Liouville's Theorem). Therefore $\ddot{\mathbf{T}}(\lambda;x,t)$ is the quadratic polynomial
\begin{equation}
\ddot{\mathbf{T}}(\lambda;x,t)=-\I\lambda^2\sigma_3 + \I\lambda[\sigma_3,\ddot{\mathbf{M}}^{(1)}(x,t)] - \ddot{\mathbf{M}}_x^{(1)}(x,t)
\end{equation}
and from the diagonal part of \eqref{eq:M-tilde-error} we see that
\begin{equation}
\ddot{M}^{(1)}_{11,x}(x,t)=2\I \ddot{M}^{(1)}_{12}(x,t)\ddot{M}^{(1)}_{21}(x,t) + \frac{\I}{2}\quad\text{and}\quad\ddot{M}^{(1)}_{22,x}(x,t)=-2\I \ddot{M}^{(1)}_{12}(x,t)\ddot{M}^{(1)}_{21}(x,t) - \frac{\I}{2}.
\label{eq:M2-diag}
\end{equation}
It then follows from the fact that $\ddot{\mathbf{M}}(\lambda^*;x,t)=\sigma_2\ddot{\mathbf{M}}(\lambda;x,t)^*\sigma_2$ and the definition \eqref{eq:new-psi} that $\ddot{\mathbf{X}}(\lambda;x,t)$ and $\ddot{\mathbf{T}}(\lambda;x,t)$ have exactly the form \eqref{eq:Lax-x} and \eqref{eq:Lax-t} respectively, in which $\psi(x,t)$ is merely replaced with $\widetilde{\psi}(x,t)$.  Rewriting \eqref{eq:ddotXT} in the form $\ddot{\mathbf{U}}_x=\ddot{\mathbf{X}}\ddot{\mathbf{U}}$ and $\ddot{\mathbf{U}}_t=\ddot{\mathbf{T}}\ddot{\mathbf{U}}$ shows that $\ddot{\mathbf{U}}(\lambda;x,t)$ is indeed a simultaneous solution matrix for the Lax pair \eqref{eq:Lax-x}--\eqref{eq:Lax-t} for the modified potential $\widetilde{\psi}(x,t)$.  This Lax pair is therefore also compatible, and it follows that $\widetilde{\psi}(x,t)$ is a solution of the focusing NLS equation \eqref{eq:NLS}.
The bound \eqref{eq:Ymat-formula} and the formula \eqref{eq:new-psi} together imply that since by hypothesis (cf., Theorem~\ref{t:rhp-N-NZBC-unique}) $\psi(x,t)$ is a global solution of the focusing NLS equation \eqref{eq:NLS}, so is the transformed potential $\widetilde{\psi}(x,t)$.  Therefore, \eqref{eq:new-psi} constitutes the \emph{B\"acklund transformation} that corresponds to the twice-iterated gauge transformation $\mathbf{U}\mapsto\ddot{\mathbf{U}}$.

Having constructed $\ddot{\mathbf{U}}(\lambda;x,t)$, it remains only to implement the final step by defining $\widetilde{\mathbf{U}}(\lambda;x,t)$ from $\ddot{\mathbf{U}}(\lambda;x,t)$ using \eqref{eq:poles-to-circle}.  For this, we observe that for $\lambda\in D_0$, the original matrix $\mathbf{U}(\lambda;x,t)$ is the solution $\mathbf{U}^\mathrm{in}(\lambda;x,t)$ of an initial-value problem for the simultaneous equations of the Lax pair \eqref{eq:Lax-x}--\eqref{eq:Lax-t} (cf., Proposition~\ref{p:Uin-def}) and therefore $\mathbf{U}(\lambda;L,0)=\mathbb{I}$ holds for all $\lambda\in D_0$.  Consequently, \eqref{eq:poles-to-circle} can be explicitly written in terms of the composite gauge transformation matrix $\mathbf{G}(\lambda;x,t)$ as
\begin{equation}
\widetilde{\mathbf{U}}(\lambda;x,t):=\begin{cases}\mathbf{G}(\lambda;x,t)\mathbf{U}(\lambda;x,t),&\quad\lambda\in D_+\cup D_-\\
\mathbf{G}(\lambda;x,t)\mathbf{U}(\lambda;x,t)\mathbf{G}(\lambda;L,0)^{-1},&\quad\lambda\in D_0.
\end{cases}
\label{eq:poles-to-circle-again}
\end{equation}
Since the multiplication of $\ddot{\mathbf{U}}(\lambda;x,t)$ on the right by $\mathbf{G}(\lambda;L,0)^{-1}$ for $\lambda\in D_0$ preserves the Lax pair equations \eqref{eq:Lax-x}--\eqref{eq:Lax-t} for the modified potential $\widetilde{\psi}(x,t)$, it follows that for all $\lambda\in D_0$ with the possible exception of $\xi$, $\xi^*$, $\widetilde{\mathbf{U}}(\lambda;x,t)$ is a simultaneous solution matrix for the modified Lax pair that satisfies also $\widetilde{\mathbf{U}}(\lambda;L,0)=\mathbb{I}$.  Therefore by the uniqueness asserted in Proposition~\ref{p:Uin-def}, for $\lambda\in D_0$, $\widetilde{\mathbf{U}}(\lambda;L,0)$ has removable singularities at $\xi$ and $\xi^*$ and therefore is analytic within $D_0$ (actually it can be continued to $\lambda\in\mathbb{C}$ as the entire function $\ddot{\mathbf{U}}^\mathrm{in}(\lambda;x,t)$ for the modified potential $\widetilde{\psi}(x,t)$).  Since $\xi,\xi^*\in D_0$, it is also obvious that $\widetilde{\mathbf{U}}(\lambda;x,t)$ is analytic for $\lambda\in D_+\cup D_-$ and since $\mathbf{G}(\lambda;x,t)\to\mathbb{I}$ as $\lambda\to\infty$, the related matrix $\widetilde{\mathbf{M}}(\lambda;x,t):=\widetilde{\mathbf{U}}(\lambda;x,t)\E^{\I\rho(\lambda)(x+\lambda t)\sigma_3}$ satisfies the conditions of \rhref{rhp:M-NZBC} in which the only change is that for $\lambda\in \Sigma_0\subset\Sigma$, the ``core'' jump matrix $\mathbf{V}(\lambda)$ is replaced by
\begin{equation}
\widetilde{\mathbf{V}}(\lambda):=\begin{cases}\mathbf{G}(\lambda;L,0)\mathbf{V}(\lambda),&\quad\lambda\in\Sigma_+\\
\mathbf{V}(\lambda)\mathbf{G}(\lambda;L,0)^{-1},&\quad\lambda\in\Sigma_-.
\end{cases}
\label{eq:modified-jump}
\end{equation}
A crucial fact is that since $\mathbf{G}(\lambda^*;x,t)=\sigma_2\mathbf{G}(\lambda;x,t)^*\sigma_2$, the modified jump matrix satisfies exactly the same Schwarz symmetry condition as does the original jump matrix, and hence the proof of Theorem~\ref{t:rhp-N-NZBC-unique} applies once again to guarantee unique solvability of the transformed Riemann-Hilbert problem for all $(x,t)\in\mathbb{R}^2$.

\begin{remark}
The use of nilpotent residue matrices $\mathbf{C}$ in \eqref{eq:res-both} that are not necessarily triangular matrices may seem unusual to some readers used to applying Darboux transformations to Beals-Coifman solutions in order to introduce poles specifically into one column or the other.  Here the generalization is necessary because $\xi\in D_0$ and in this domain $\mathbf{U}(\lambda;x,t)$ differs from $\mathbf{U}^\mathrm{BC}(\lambda;x,t)$ by a right-multiplication by a matrix depending on $\lambda$ only.  This matrix factor conjugates triangular nilpotent residue matrices into general two-nilpotent form.  
\end{remark}

\begin{remark}
Since after multiplying on the right by $\E^{\I\rho(\lambda)(x+\lambda t)\sigma_3}$ the result of the Darboux transformation $\mathbf{U}(\lambda;x,t)\mapsto\widetilde{\mathbf{U}}(\lambda;x,t)$ described above also satisfies \rhref{rhp:M-NZBC} with only a modified jump matrix on the circle $\Sigma_0$, it is obvious that the transformation may be iterated any number of times.  In each iteration it makes no difference whether the points $\xi,\xi^*$ at which the poles are first introduced and then transferred to a jump on $\Sigma_0$ vary from iteration to iteration or whether they are fixed once and for all.  
\end{remark}

\begin{remark}
In each iteration, the poles may be placed at any non-real conjugate pair of points within $D_0$.  Although the matrix $\widetilde{\mathbf{M}}(\lambda;x,t)=\widetilde{\mathbf{U}}(\lambda;x,t)\E^{\I\rho(\lambda)(x+\lambda t)\sigma_3}$ satisfies the conditions of \rhref{rhp:M-NZBC} for a modified ``core'' jump matrix $\widetilde{\mathbf{V}}(\lambda)$ on $\Sigma_0$, it need not be the case that the resulting potential $\widetilde{\psi}(x,t)$ satisfies the boundary condition \eqref{eq:NZBC} initially satisfied by $\psi(x,t)$.  For instance, if one starts with the Riemann-Hilbert problem for the background solution and applies a Darboux transformation in which $\xi$ is taken to lie on the imaginary axis between the origin and $\pm\I$, then for some choices of the auxiliary parameters $\mathbf{c}$ the resulting solution $\widetilde{\psi}(x,t)$ is an \emph{Akhmediev breather} \cite{Akhmediev}, a solution that is periodic in $x$ for fixed $t$, and hence does not decay to the background as $x\to\pm\infty$.  The deciding factor in whether the boundary conditions are preserved is whether the gauge transformation matrix $\mathbf{G}(\lambda;x,t)$ decays to the identity as $x\to+\infty$ and remains bounded as $x\to-\infty$.  In many cases, this decay can be verified directly.
\end{remark}

\begin{remark}
\label{r:limiting-case}
The composite gauge transformation matrix $\mathbf{G}(\lambda;x,t)$ taking $\mathbf{U}(\lambda;x,t)$ into $\ddot{\mathbf{U}}(\lambda;x,t)$ also makes sense in a limiting case where $\xi\in D_0\setminus\mathbb{R}$ is fixed and $\mathbf{c}$ is taken in the form $\epsilon^{-1}\mathbf{c}_\infty$ for a fixed vector $\mathbf{c}_\infty\in\mathbb{C}^2\setminus\{\mathbf{0}\}$ and $\epsilon\in\mathbb{C}$, after which the limit $\epsilon\to 0$ is taken in the matrices $\mathbf{Y}(x,t)$ and $\mathbf{Z}(x,t)$ given in \eqref{eq:Ymat-formula} and \eqref{eq:Zmat-formula} respectively.  If we denote by $\mathbf{s}_\infty(x,t)$, $\mathbf{s}'_\infty(x,t)$, $N_\infty(x,t)$, and $w_\infty(x,t)$ the quantities defined in \eqref{eq:b-s-N} and \eqref{eq:w-sprime} where $\mathbf{c}$ is simply replaced with $\mathbf{c}_\infty$, then $\mathbf{s}(x,t)=\epsilon^{-1}\mathbf{s}_\infty(x,t)$ and $\mathbf{s}'(x,t)=\epsilon^{-1}\mathbf{s}'_\infty(x,t)$ while $N(x,t)=|\epsilon|^{-2}N_\infty(x,t)$ and $w(x,t)=\epsilon^{-2}w_\infty(x,t)$, so it follows easily that in this situation
\begin{multline}
\label{eq:Y-bar}
\mathbf{Y}_\infty(x,t):=\lim_{\epsilon\to 0}\mathbf{Y}(x,t)=-\frac{4\beta^2w_\infty(x,t)^*}{4\beta^2|w_\infty(x,t)|^2+N_\infty(x,t)^2}\mathbf{s}_\infty(x,t)\mathbf{s}_\infty(x,t)^\top\sigma_2\\{}+\frac{2\I\beta N_\infty(x,t)}{4\beta^2|w_\infty(x,t)|^2+N_\infty(x,t)^2}\sigma_2\mathbf{s}_\infty(x,t)^*\mathbf{s}_\infty(x,t)^\top\sigma_2
\end{multline}
and
\begin{multline}
\label{eq:Z-bar}
\mathbf{Z}_\infty(x,t):=\lim_{\epsilon\to 0}\mathbf{Z}(x,t)=\frac{4\beta^2w_\infty(x,t)}{4\beta^2|w_\infty(x,t)|^2+N_\infty(x,t)^2}\sigma_2\mathbf{s}_\infty(x,t)^*\mathbf{s}_\infty(x,t)^\dagger\\{}-\frac{2\I\beta N_\infty(x,t)}{4\beta^2|w_\infty(x,t)|^2+N_\infty(x,t)^2}\mathbf{s}_\infty(x,t)\mathbf{s}_\infty(x,t)^\dagger.
\end{multline}
The corresponding composite gauge transformation $\mathbf{G}_\infty(\lambda;x,t):=\mathbb{I}+(\lambda-\xi)^{-1}\mathbf{Y}_\infty(x,t)+(\lambda-\xi^*)^{-1}\mathbf{Z}_\infty(x,t)$ can then be used in place of $\mathbf{G}(\lambda;x,t)$ whenever desired.  This limiting case is especially useful for locating rogue wave solutions at the normalization point $(x,t)=(L,0)$ as will be seen shortly.  It is also obvious from the formulae \eqref{eq:Y-bar}--\eqref{eq:Z-bar} that $\mathbf{G}_\infty(\lambda;x,t)$ only depends on $\mathbf{c}_\infty\in\mathbb{C}^2\setminus\{0\}$ up to a nonzero complex multiple; in other words, the parameter space of the limiting Darboux transformation for fixed $\xi$ is complex projective space:  $\mathbf{c}_\infty\in\mathbb{CP}^1$.
\end{remark}

We summarize this description of the Darboux transformation for the robust IST in the following theorem.
\begin{theorem}
\label{t:Darboux}
Let $\mathbf{U}(\lambda;x,t):=\mathbf{M}(\lambda;x,t)\E^{-\I\rho(\lambda)(x+\lambda t)\sigma_3}$ correspond to the unique solution $\mathbf{M}(\lambda;x,t)$ of \rhref{rhp:M-NZBC} formulated with ``core'' jump matrices $\mathbf{V}^\mathbb{R}(\lambda)$ defined on $\Sigma_\mathrm{L}\cup\Sigma_\mathrm{R}$ and $\mathbf{V}^\pm(\lambda)$ defined on $\Sigma_\pm$, and generating a solution $\psi(x,t)$ of the focusing nonlinear Schr\"odinger equation \eqref{eq:NLS} via the formula \eqref{eq:NLS-recovery}.  Given Darboux transformation data $\xi\in D_0\setminus\mathbb{R}$ and $\mathbf{c}\in\mathbb{C}^2\setminus\{\mathbf{0}\}$, define the gauge transformation matrix $\mathbf{G}(\lambda;x,t)$  by \eqref{eq:G-partial-fraction} in terms of the coefficient matrices $\mathbf{Y}(x,t)$ and $\mathbf{Z}(x,t)$
given by \eqref{eq:Ymat-formula} and \eqref{eq:Zmat-formula} respectively, using \eqref{eq:b-s-N} and \eqref{eq:w-sprime}.  Then
\begin{itemize}
\item $\widetilde{\mathbf{M}}(\lambda;x,t):=\widetilde{\mathbf{U}}(\lambda;x,t)\E^{\I\rho(\lambda)(x+\lambda t)\sigma_3}$, where $\widetilde{\mathbf{U}}(\lambda;x,t)$ is defined explicitly in terms of $\mathbf{G}(\lambda;x,t)$ and $\mathbf{U}(\lambda;x,t)$ by \eqref{eq:poles-to-circle-again}, is the unique solution of another Riemann-Hilbert problem of the form of \rhref{rhp:M-NZBC} with the same jump contour and ``core'' jump matrices $\widetilde{\mathbf{V}}^\mathbb{R}(\lambda):=\mathbf{V}^\mathbb{R}(\lambda)$ for $\lambda\in\Sigma_\mathrm{L}\cup\Sigma_\mathrm{R}$ and $\widetilde{\mathbf{V}}(\lambda)$ is defined in terms of $\mathbf{V}(\lambda)$ for $\lambda\in\Sigma_0=\Sigma_+\cup\Sigma_-$ by \eqref{eq:modified-jump}.
\item The transformed Riemann-Hilbert problem generates a new solution $\widetilde{\psi}(x,t)$ of \eqref{eq:NLS} by the formula \eqref{eq:NLS-recovery} in which $\mathbf{M}(\lambda;x,t)$ is replaced by $\widetilde{\mathbf{M}}(\lambda;x,t)$ on the right-hand side.  Equivalently, $\widetilde{\psi}(x,t)$ is given in terms of $\psi(x,t)$ and the Darboux transformation data by the B\"acklund transformation \eqref{eq:new-psi}.
\end{itemize}
All of these statements also apply for transformation data $(\xi,\mathbf{c})$ with $\mathbf{c}=\infty$ in the sense of Remark~\ref{r:limiting-case}, i.e., $\mathbf{c}$ is replaced by $\mathbf{c}_\infty\in\mathbb{CP}^1$ and $\mathbf{Y}(x,t)$ and $\mathbf{Z}(x,t)$ are replaced by $\mathbf{Y}_\infty(x,t)$ and $\mathbf{Z}_\infty(x,t)$ given by \eqref{eq:Y-bar} and \eqref{eq:Z-bar} respectively.
\end{theorem}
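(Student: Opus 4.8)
The plan is to assemble the construction developed in Sections~\ref{s:pole-insertion} and \ref{s:DarbouxRobustIST}: the matrix $\widetilde{\mathbf{M}}(\lambda;x,t)$ has already been built explicitly, so what remains is to certify that it satisfies every condition of a Riemann-Hilbert problem of the form of \rhref{rhp:M-NZBC}, that this problem is uniquely solvable, and that the two descriptions of $\widetilde{\psi}(x,t)$ agree. First I would record that the composite gauge matrix $\mathbf{G}(\lambda;x,t)$ of \eqref{eq:G-partial-fraction}, with coefficients \eqref{eq:Ymat-formula}--\eqref{eq:Zmat-formula}, satisfies $\mathbf{Z}(x,t)=\sigma_2\mathbf{Y}(x,t)^*\sigma_2$ and hence $\mathbf{G}(\lambda^*;x,t)=\sigma_2\mathbf{G}(\lambda;x,t)^*\sigma_2$; these are exactly the facts established just before the theorem, so $\ddot{\mathbf{U}}=\mathbf{G}\mathbf{U}$ is Schwarz-symmetric, carries the jump $\mathbf{V}(\lambda)$ on $\Sigma$ unchanged (left multiplication by an analytic factor), and has only the two simple poles at $\xi,\xi^*$ characterized by \eqref{eq:res-both}.

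Next I would treat the analyticity of $\widetilde{\mathbf{M}}$, which is the conceptual heart of the statement. In $D_+\cup D_-$ one has $\widetilde{\mathbf{U}}=\ddot{\mathbf{U}}=\mathbf{G}\mathbf{U}$, analytic because $\xi,\xi^*\in D_0$. In $D_0$ the renormalization \eqref{eq:poles-to-circle-again} right-multiplies by $\mathbf{G}(\lambda;L,0)^{-1}$; since $\mathbf{U}(\lambda;L,0)=\mathbb{I}$ there (Proposition~\ref{p:Uin-def}), the resulting matrix is a simultaneous Lax-pair solution for the modified potential $\widetilde{\psi}$ normalized to $\mathbb{I}$ at $(L,0)$, so by the uniqueness asserted in Proposition~\ref{p:Uin-def} its singularities at $\xi,\xi^*$ are removable and it coincides with the entire solution for $\widetilde{\psi}$. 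Thus $\widetilde{\mathbf{M}}$ is analytic in $D_0\setminus\Sigma_\mathrm{c}$ with the usual $\Sigma_\mathrm{c}$ jump unaffected (as $\mathbf{G}$ is single-valued across $\Sigma_\mathrm{c}$), is analytic in $D_\pm$, and carries the advertised modified jump \eqref{eq:modified-jump} on $\Sigma_0$ while leaving $\mathbf{V}^\mathbb{R}$ on $\Sigma_\mathrm{L}\cup\Sigma_\mathrm{R}$ untouched; the normalization $\widetilde{\mathbf{M}}\to\mathbb{I}$ follows from $\mathbf{G}(\lambda;x,t)\to\mathbb{I}$. For uniqueness I would reapply Theorem~\ref{t:rhp-N-NZBC-unique}: the only hypothesis needing re-checking is the Schwarz symmetry of the new core jump on $\Sigma_0$, and that is immediate from $\mathbf{G}(\lambda^*;L,0)=\sigma_2\mathbf{G}(\lambda;L,0)^*\sigma_2$, while the positivity condition on $\Sigma_\mathrm{L}\cup\Sigma_\mathrm{R}$ is inherited verbatim since $\widetilde{\mathbf{V}}^\mathbb{R}=\mathbf{V}^\mathbb{R}$.

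For the second bullet, the key point is that the recovery formula \eqref{eq:NLS-recovery} reads off the $\lambda\to\infty$ behavior, which lives in $D_\pm$ where $\widetilde{\mathbf{M}}=\ddot{\mathbf{M}}$; hence the potential extracted from $\widetilde{\mathbf{M}}$ is exactly the one extracted from $\ddot{\mathbf{M}}$, namely \eqref{eq:new-psi}. That this quantity is a global solution of \eqref{eq:NLS} is the dressing argument already given ($\ddot{\mathbf{U}}$ solves the Lax pair for $\widetilde{\psi}$), together with the denominator bound \eqref{eq:Ymat-denom-bound}, which rules out blow-up for every $(x,t)\in\mathbb{R}^2$. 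This simultaneously establishes the equivalence of the recovery-formula and B\"acklund descriptions.

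Finally, the limiting case $\mathbf{c}=\infty$ is handled by continuity: the identity $\mathbf{Z}=\sigma_2\mathbf{Y}^*\sigma_2$ and the bound \eqref{eq:Ymat-denom-bound} pass to the limit, giving $\mathbf{Z}_\infty=\sigma_2\mathbf{Y}_\infty^*\sigma_2$ and $4\beta^2|w_\infty|^2+N_\infty^2\ge N_\infty^2>0$ (with $N_\infty=\|\mathbf{U}(\xi;x,t)\mathbf{c}_\infty\|^2>0$ since $\mathbf{c}_\infty\neq\mathbf{0}$ and $\det\mathbf{U}(\xi;x,t)=1$), so $\mathbf{G}_\infty$ enjoys the same Schwarz symmetry and regularity and the whole argument goes through unchanged. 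I expect the main obstacle to be the analyticity step---specifically, arguing cleanly that the renormalization \eqref{eq:poles-to-circle-again} \emph{removes} the inserted poles rather than merely relocating them---since this is where the ``poles-to-jump'' mechanism is actually justified and where the reduction back to \rhref{rhp:M-NZBC} genuinely rests on the uniqueness in Proposition~\ref{p:Uin-def}.
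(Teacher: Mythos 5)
Your proposal is correct and follows essentially the same route as the paper: the paper's own proof of Theorem~\ref{t:Darboux} is exactly the development of Sections~\ref{s:pole-insertion}--\ref{s:DarbouxRobustIST} that you assemble---Schwarz symmetry of $\mathbf{G}$ via $\mathbf{Z}(x,t)=\sigma_2\mathbf{Y}(x,t)^*\sigma_2$, removal of the inserted poles inside $D_0$ by the uniqueness in Proposition~\ref{p:Uin-def} (using $\mathbf{U}(\lambda;L,0)=\mathbb{I}$ there and the dressing result that $\ddot{\mathbf{U}}$ solves the Lax pair for $\widetilde{\psi}$), re-application of Theorem~\ref{t:rhp-N-NZBC-unique} to the modified jump whose Schwarz symmetry is inherited from that of $\mathbf{G}$, extraction of $\widetilde{\psi}$ from the large-$\lambda$ behavior in $D_\pm$ where $\widetilde{\mathbf{M}}=\ddot{\mathbf{M}}$, and the limiting case via Remark~\ref{r:limiting-case}. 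You also correctly single out the poles-to-jump mechanism (removability of the singularities at $\xi,\xi^*$ via Proposition~\ref{p:Uin-def}) as the crux, which is precisely where the paper places the weight of the argument.
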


\subsection{The simplest case:  Darboux/B\"acklund transformation of the background potential in the setting of the robust IST}
\label{s:OneDarboux}

As mentioned in Section~\ref{s:IST-NZBC}, the background potential $\psi=\psi_\mathrm{bg}(x,t)\equiv 1$ has the simultaneous fundamental solution matrix $\mathbf{U}_\mathrm{bg}(\lambda;x,t)=\mathbf{E}(\lambda)\E^{-\I\rho(\lambda)(x+\lambda t)\sigma_3}$
for the corresponding linear equations of the Lax pair \eqref{eq:Lax-x}--\eqref{eq:Lax-t}.  Here $\mathbf{E}(\lambda)$ is the matrix function defined in \eqref{eq:NZBC-asymptotic-eigenfunction}.  Taking $t=0$ and $\lambda\in\Gamma$, we may normalize this solution in the limits $x\to\pm\infty$ to obtain the Jost solution matrices $\mathbf{J}_\mathrm{bg}^\pm(\lambda;x)$.  It is easy to confirm that in fact $\mathbf{J}^+_\mathrm{bg}(\lambda;x)=\mathbf{J}^-_\mathrm{bg}(\lambda;x)=\mathbf{U}_\mathrm{bg}(\lambda;x,0)$, so:
\begin{itemize}
\item The scattering matrix satisfies $\mathbf{S}_\mathrm{bg}(\lambda;t)=\mathbb{I}$ for all $\lambda\in\Gamma$ and $t\in\mathbb{R}$.
\item The Beals-Coifman matrix is analytic for $\lambda\in\mathbb{C}\setminus\Sigma_\mathrm{c}$ and is given explicitly by 
$\mathbf{U}^\mathrm{BC}_\mathrm{bg}(\lambda;x,t)=\mathbf{U}_\mathrm{bg}(\lambda;x,t)$.
\end{itemize}
Therefore, the scattering data for the background solution $\psi=\psi_\mathrm{bg}(x,t)\equiv 1$ in the setting of the robust transform described in Section~\ref{s:New-IST} consists of:
\begin{itemize}
\item the ``core'' jump matrix $\mathbf{V}^\mathbb{R}_\mathrm{bg}(\lambda)\equiv\mathbb{I}$ for $\lambda\in\Sigma_\mathrm{L}\cup\Sigma_\mathrm{R}$,
\item the ``core'' jump matrix $\mathbf{V}^+_\mathrm{bg}(\lambda)=\mathbf{E}(\lambda)$ for $\lambda\in\Sigma_+$, and
\item the ``core'' jump matrix $\mathbf{V}^-_\mathrm{bg}(\lambda)=\mathbf{E}(\lambda)^{-1}$ for $\lambda\in\Sigma_-$.
\end{itemize}
Here the radius $r=r_\mathrm{bg}$ of $\Sigma_0$ can be taken to be any number larger than $1$.  It will also be useful to have an explicit expression for the matrix $\mathbf{U}^\mathrm{in}_\mathrm{bg}(\lambda;x,t)$ described by Proposition~\ref{p:Uin-def} for the case of $\psi=\psi_\mathrm{bg}(x,t)\equiv 1$.  It can be found by normalizing\footnote{For convenience, for the rest of Section~\ref{s:Darboux} we choose the normalization point of the robust IST to be $L=0$.} the matrix $\mathbf{U}_\mathrm{bg}(\lambda;x,t)$ at $(x,t)=(0,0)$ to the identity:
\begin{equation}
\mathbf{U}^\mathrm{in}_\mathrm{bg}(\lambda;x,t)=\mathbf{U}_\mathrm{bg}(\lambda;x,t)\mathbf{U}_\mathrm{bg}(\lambda;0,0)^{-1}=
\mathbf{E}(\lambda)\E^{-\I\rho(\lambda)(x+\lambda t)\sigma_3}\mathbf{E}(\lambda)^{-1}.
\label{eq:Uin-bg-1}
\end{equation}
As pointed out in Remark~\ref{r:Uin-renormalize}, this formula can have only removable singularities for $\lambda\in\mathbb{C}$, and we can see it explicitly by using the definition of $\mathbf{E}(\lambda)$ given in \eqref{eq:NZBC-asymptotic-eigenfunction}--\eqref{eq:f-def}.  Thus, \eqref{eq:Uin-bg-1}
can be rewritten as
\begin{equation}
\mathbf{U}^\mathrm{in}_\mathrm{bg}(\lambda;x,t)=
(x+\lambda t)\frac{\sin(\theta)}{\theta}\begin{bmatrix}-\I\lambda & 1\\-1 & \I\lambda\end{bmatrix} +
\cos(\theta)\mathbb{I},\quad
\theta:=\rho(\lambda)(x+\lambda t),
\label{eq:Uin-bg-2}
\end{equation}
and since $\sin(\theta)/\theta$ and $\cos(\theta)$ are both even entire functions of $\theta$, the desired analyticity for $\lambda\in\mathbb{C}$ is now obvious because $\rho(\lambda)^2=\lambda^2+1$.

Now we choose an arbitrary point $\xi\in\mathbb{C}\setminus\mathbb{R}$ and assume that the radius $r$ of the circle $\Sigma_0$ exceeds $|\xi|$ so that $\xi\in D_0$.  We also select a vector $\mathbf{c}\in\mathbb{C}^2\setminus\{\mathbf{0}\}$.  To apply the Darboux transformation we need to calculate the vectors $\mathbf{s}(x,t):=\mathbf{U}_\mathrm{bg}(\xi;x,t)\mathbf{c}=\mathbf{U}_\mathrm{bg}^\mathrm{in}(\xi;x,t)\mathbf{c}$ and $\mathbf{s}'(x,t):=\mathbf{U}'_\mathrm{bg}(\xi;x,t)\mathbf{c}=\mathbf{U}_\mathrm{bg}^{\mathrm{in}\prime}(\xi;x,t)\mathbf{c}$.  In general, 
\begin{equation}
\mathbf{s}(x,t)=-\I\xi\mathscr{S}(\xi;x,t)\sigma_3\mathbf{c}+\I\mathscr{S}(\xi;x,t)\sigma_2\mathbf{c}+
\mathscr{C}(\xi;x,t)\mathbf{c},
\label{eq:sxt}
\end{equation}
where
\begin{equation}
\mathscr{S}(\xi;x,t):=\frac{\sin(\rho(\xi)(x+\xi t))}{\rho(\xi)}\quad\text{and}\quad
\mathscr{C}(\xi;x,t):=\cos(\rho(\xi)(x+\xi t))
\end{equation}
are well defined and analytic for $\xi\in D_0$.  Differentiating with respect to $\xi$ gives
\begin{equation}
\mathbf{s}'(x,t)=-\I\xi\mathscr{S}'(\xi;x,t)\sigma_3\mathbf{c}-\I\mathscr{S}(\xi;x,t)\sigma_3\mathbf{c}+\I\mathscr{S}'(\xi;x,t)\sigma_2\mathbf{c}+\mathscr{C}'(\xi;x,t)\mathbf{c}.
\label{eq:sprimext}
\end{equation}
From \eqref{eq:sxt} and \eqref{eq:sprimext} we calculate $N(x,t):=\mathbf{s}(x,t)^\dagger\mathbf{s}(x,t)$ as
\begin{multline}
N(x,t)=(\mathbf{c}^\dagger\mathbf{c})\left[(1+|\xi|^2)|\mathscr{S}(\xi;x,t)|^2+|\mathscr{C}(\xi;x,t)|^2\right]+2\beta(\mathbf{c}^\dagger\sigma_1\mathbf{c})|\mathscr{S}(\xi;x,t)|^2\\
{}+2(\mathbf{c}^\dagger\sigma_3\mathbf{c})\Im\{\xi\mathscr{S}(\xi;x,t)\mathscr{C}(\xi;x,t)^*\}-2(\mathbf{c}^\dagger\sigma_2\mathbf{c})\Im\{\mathscr{S}(\xi;x,t)\mathscr{C}(\xi;x,t)^*\}
\end{multline}
and
$w(x,t):=\mathbf{s}(x,t)^\top\sigma_2\mathbf{s}'(x,t)$ as
\begin{multline}
w(x,t)=
-(\mathbf{c}^\top\sigma_3\mathbf{c})\mathscr{S}(\xi;x,t)^2 + (\mathbf{c}^\top\sigma_1\mathbf{c})\mathscr{S}(\xi;x,t)\mathscr{C}(\xi;x,t) \\
{}+ (\mathbf{c}^\top[\I\mathbb{I}+\xi\sigma_1]\mathbf{c})(\mathscr{C}(\xi;x,t)\mathscr{S}'(\xi;x,t)-\mathscr{S}(\xi;x,t)\mathscr{C}'(\xi;x,t)).
\label{eq:w-one-time}
\end{multline}
In terms of these, the B\"acklund transformation \eqref{eq:new-psi} takes the form ($\psi(x,t)=\psi_\mathrm{bg}(x,t)\equiv 1$ here)
\begin{multline}
\widetilde{\psi}(x,t)=\psi(x,t)\\
{}+8\beta\frac{\beta s_1(x,t)^2(1-w(x,t)^*)-\beta s_2(x,t)^{*2}(1-w(x,t))+ s_1(x,t)s_2(x,t)^*N(x,t)}{4\beta^2|1-w(x,t)|^2+N(x,t)^2}.
\label{eq:Backlund}
\end{multline}

\subsubsection{Properties of the solutions obtained for $\xi\neq\pm\I$}
Suppose that $\xi\neq\pm\I$.
Given such a value of $\xi$, if $\mathbf{c}$ is chosen such that $\mathbf{c}^\top[\I\mathbb{I}+\xi\sigma_1]\mathbf{c}=0$, the solution obtained from the B\"acklund transformation exhibits quite a different character than for $\mathbf{c}$ in general position.  Indeed, this condition removes the terms on the second line of \eqref{eq:w-one-time}, with the result being that all dependence on $(x,t)$ in $\widetilde{\psi}(x,t)-\psi(x,t)$ enters through the functions $\mathscr{S}(\xi;x,t)$ and $\mathscr{C}(\xi;x,t)$.  In the general case, however, derivatives of these functions with respect to $\xi$ appear, and these produce additional dependence on $(x,t)$ via polynomial factors.  For convenience, we illustrate the difference between the special and general case for the Kuznetsov-Ma family of solutions \cite{Kuznetsov1977,Ma1979b}, corresponding to choosing $\xi=(1+\delta)\I$ for some $\delta>0$.  For such $\xi$, the condition $\mathbf{c}^\top[\I\mathbb{I}+\xi\sigma_1]\mathbf{c}=0$ reads $c_1^2+c_2^2+2(1+\delta)c_1c_2=0$, which implies $c_2=c_1(-(1+\delta)\pm\sqrt{(1+\delta)^2-1})$.  Figure~\ref{fig:KM} compares the solutions $\widetilde{\psi}(x,t)$ obtained for $\xi=2\I$ with $c_1=1+\I$ and $c_2=c_1(-2+\sqrt{3})$ (the special case, on the left) and with $c_1=1$ and $c_2=1+\I$ (the general case, on the right).
\begin{figure}[h]
\begin{center}
\includegraphics{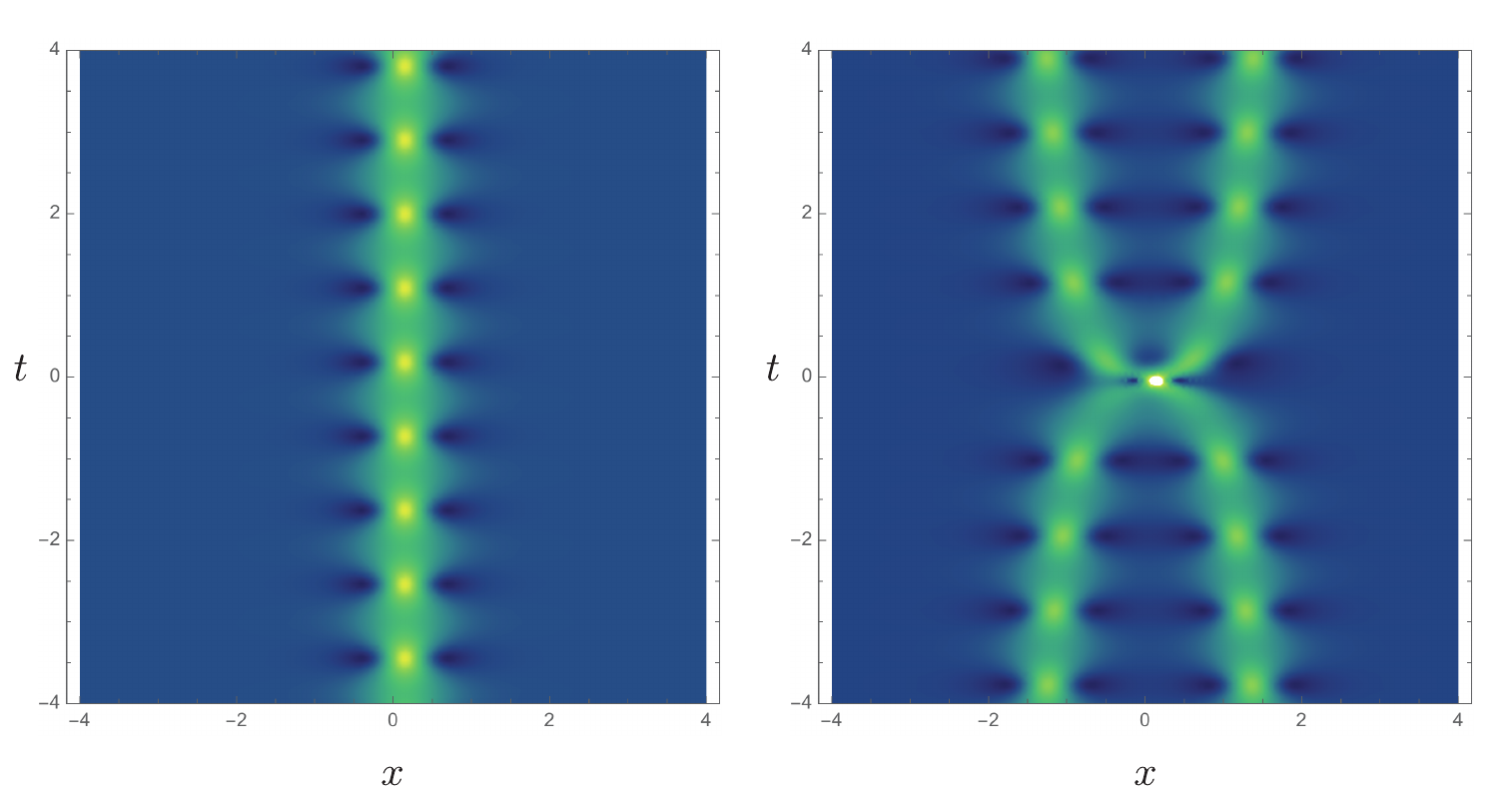}
\end{center}
\caption{Left:  $|\widetilde{\psi}(x,t)|$ for $\xi=2\I$, $c_1=1+\I$, and $c_2=c_1(-2+\sqrt{3})$ for which the solution is periodic in $t$ and exponential in $x$.  Right:  $|\widetilde{\psi}(x,t)|$ for $\xi=2\I$, $c_1=1$, and $c_2=1+\I$ for which the solution formula includes additional terms polynomial in $(x,t)$.}
\label{fig:KM}
\end{figure}
The reader may be familiar with the fact that in the context of the traditional IST for the Cauchy problem \eqref{eq:NLS}--\eqref{eq:IC}, the application of a single Darboux transformation inserting poles at a conjugate pair $\lambda=\pm\I (1+\delta)$ for $\delta>0$ always produces a solution periodic in $t$ and exponentially localized in $x$ as in the special case shown in the plot in the left-hand panel of Figure~\ref{fig:KM}.  The reason that the special case is selected in that setting is that the Darboux transformation is applied to the Beals-Coifman solution $\mathbf{U}^\mathrm{BC}(\lambda;x,t)$ rather than to $\mathbf{U}^\mathrm{in}(\lambda;x,t)$, and then the parameters $(c_1,c_2)$ are chosen to make sure that the pole is inserted into one column only of the simultaneous solution matrix (that is, the Darboux transformation is designed to introduce an additional Blaschke factor $(\lambda-\xi)/(\lambda-\xi^*)$ into the function $a(\lambda)$).  Recalling that working with $\mathbf{U}^\mathrm{in}(\lambda;x,t)$ rather than $\mathbf{U}^\mathrm{BC}(\lambda;x,t)$ introduces a rotation in the parameter space, the condition $\mathbf{c}^\top[\I\mathbb{I}+\xi\sigma_1]\mathbf{c}=0$ is seen as the analogue for the Darboux transformation of the robust IST of the latter choice.  Moreover, we see that for $\mathbf{c}$ in general position, the solution obtained (see the right-hand panel of Figure~\ref{fig:KM}) is what one would expect in the context of the traditional IST from a solution in which $\mathbf{U}^\mathrm{BC}(\lambda;x,t)$ has a conjugate pair of \emph{double poles}; the solution resembles a non-ballistic collision of identical Kuznetsov-Ma solutions.  

Under the special condition that $\mathbf{c}^\top[\I\mathbb{I}+\xi\sigma_1]\mathbf{c}=0$, all $(x,t)$-dependence in the solution enters via the functions $\mathscr{S}(\xi;x,t)$ and $\mathscr{C}(\xi;x,t)$.  These functions are periodic and bounded in the real part of $\rho(\xi)(x+\xi t)$, but they grow exponentially in all directions for which $\Im\{\rho(\xi)(x+\xi t)\}$ is unbounded.  The only values of $\xi\in D_0\setminus\mathbb{R}$ for which $\rho(\xi)$ is real and nonzero are the values $\xi=\I\delta$ with $\delta\in (-1,0)\cup (0,1)$ and only for these values is the solution periodic rather than exhibiting exponential decay to the background in the $x$-direction.  This case corresponds to the so-called \emph{Akhmediev breather} solutions \cite{Akhmediev}, and such solutions clearly do not satisfy the boundary condition \eqref{eq:NZBC}.  Nonetheless, they have a Riemann-Hilbert representation obtained by modifying the jump matrices $\mathbf{V}^\pm_\mathrm{bg}(\lambda)=\mathbf{E}(\lambda)^{\pm 1}$ for $\lambda\in\Sigma_0$ by the gauge transformation matrix $\mathbf{G}(\lambda;0,0)$ as indicated in \eqref{eq:modified-jump}.  Although the solution is no longer periodic in the $x$-direction for $\xi=\I\delta$ with $\delta\in (-1,0)\cup (0,1)$ if $\mathbf{c}$ is in general position ($\mathbf{c}^\top[\I\mathbb{I}+\xi\sigma_1]\mathbf{c}=\I(c_1^2+c_2^2+2\delta c_1c_2)\neq 0$), still the boundary condition \eqref{eq:NZBC} is not satisfied.  See Figure~\ref{fig:Akhmediev}.
\begin{figure}[h]
\begin{center}
\includegraphics{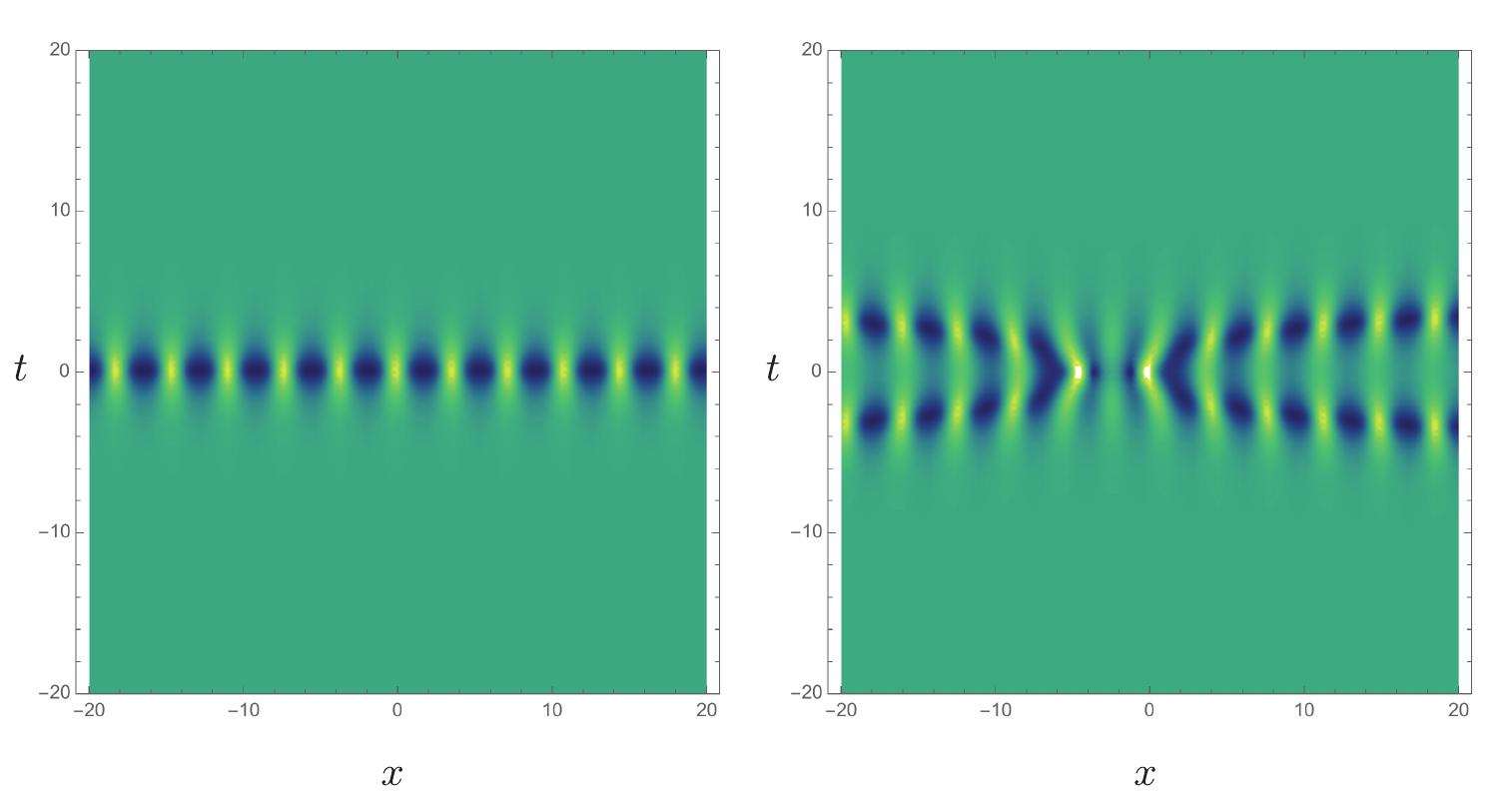}
\end{center}
\caption{Left:  $|\widetilde{\psi}(x,t)|$ for $\xi=\tfrac{1}{2}\I$, $c_1=1$, and $c_2=c_1(-\tfrac{1}{2}+\tfrac{1}{2}\I\sqrt{3})$ for which the solution is periodic in $x$ and exponential in $t$.  Right:  $|\widetilde{\psi}(x,t)|$ for $\xi=\tfrac{1}{2}\I$, $c_1=1$, and $c_2=0$ for which the solution formula includes additional terms polynomial in $(x,t)$.}
\label{fig:Akhmediev}
\end{figure}

\subsubsection{Properties of the solutions obtained for $\xi=\I$}
\label{s:OneDarbouxAtI}
Now we return to the previously-excluded case that $\xi=\pm\I$ (we take $\xi=\I$ to be precise). The reason for considering this case separately is \emph{not} that the Darboux transformation method described in Section~\ref{s:DarbouxRobustIST} requires any modification\footnote{This is a key property of the Darboux transformation in the setting of the robust IST that distinguishes it from other methods in the literature.}, but rather that the nature of the resulting solution is quite different from the general case.  
The reason for the difference emerges upon evaluation of the analytic functions $\mathscr{S}(\xi;x,t)$ and $\mathscr{C}(\xi;x,t)$ and their derivatives at $\xi=\I$ which requires some version of l'H\^opital's rule.  Equivalently, by Taylor expansion of $\sin(\theta)$ and $\cos(\theta)$ about $\theta=0$ one finds
\begin{equation}
\mathscr{S}(\I;x,t)=x+\I t,\quad\mathscr{S}'(\I;x,t)=t-\frac{1}{3}\I(x+\I t)^3,\quad\mathscr{C}(\I;x,t)=1,\quad\mathscr{C}'(\I;x,t)=-\I(x+\I t)^2.
\end{equation}
Therefore, in the special case that $\xi=\I$, the formulae \eqref{eq:sxt} and \eqref{eq:sprimext} become
\begin{equation}
\mathbf{s}(x,t)=\begin{bmatrix}c_1+(c_1+c_2)(x+\I t)\\c_2-(c_1+c_2)(x+\I t)\end{bmatrix},\quad\xi=\I
\label{eq:s-for-I}
\end{equation}
and
\begin{equation}
\mathbf{s}'(x,t)=\begin{bmatrix}(t-\tfrac{1}{3}\I(x+\I t)^3)(c_1+c_2)-\I(x+\I t)c_1-\I(x+\I t)^2c_1\\
-(t-\tfrac{1}{3}\I(x+\I t)^3)(c_1+c_2)+\I(x+\I t)c_2-\I(x+\I t)^2c_2\end{bmatrix},\quad\xi=\I.
\end{equation}
Therefore, $N(x,t)=\mathbf{s}(x,t)^\dagger\mathbf{s}(x,t)$ is
\begin{equation}
      N(x,t)
      =|c_1|^2 + |c_2|^2 + 2\Re\{(c_1-c_2)^*(c_1+c_2)(x+\I t)\}+2|c_1+c_2|^2|x+\I t|^2,\quad \xi=\I,
      \label{eq:norm-bg}
\end{equation}
and $w(x,t)=\mathbf{s}(x,t)^\top\sigma_2\mathbf{s}'(x,t)$ is
\begin{equation}
    w(x,t)=-\frac{2}{3}(c_1+c_2)^2 (x+\I t)^3 - (c_1^2 - c_2^2) (x+\I t)^2 + 2 c_1 c_2 (x+\I t) + \I t(c_1+c_2)^2,\quad \xi=\I.
    \label{eq:w-bg}
\end{equation}

Noting that for large $(x,t)$, the dominant terms in $\mathbf{s}(x,t)$, $\mathbf{s}'(x,t)$, $N(x,t)$, and $w(x,t)$ are all proportional to $c_1+c_2$, it is clear that the solution $\widetilde{\psi}(x,t)$ obtained from the B\"acklund transformation \eqref{eq:Backlund} has a different character if $c_1+c_2=0$ than otherwise.  Indeed, if $c_1=-c_2=c$, then
\begin{equation}
\mathbf{s}(x,t)=\begin{bmatrix}c\\-c\end{bmatrix},\quad N(x,t)=2|c|^2, \quad
\text{and}\quad
w(x,t)=-2c^2(x+\I t),\quad \xi=\I,\quad c_1=-c_2=c.
\label{eq:sNw-Peregrine}
\end{equation}
Using these formulae in \eqref{eq:Backlund} (also with $\beta=\Im\{\xi\}=1$) then yields the Peregrine breather solution \cite{Peregrine1983} $\widetilde{\psi}(x,t)=\psi_\mathrm{P}(x,t)$ \eqref{eq:Peregrine} (see Figure~\ref{f:Peregrine})
with parameters
\begin{equation}
x_0=-\frac{\Re\{c^2\}}{2|c|^4}\quad\text{and}\quad t_0=\frac{\Im\{c^2\}}{2|c|^4}.
\end{equation}
For finite $c$, the peak of the breather can be placed anywhere except the origin (this coincides with the normalization point in Proposition~\ref{p:Uin-def}).  If it is desired to place the breather exactly at the origin, we simply use the limiting case of the Darboux transformation described in Remark~\ref{r:limiting-case} with homogeneous coordinates $\mathbf{c}_\infty=[c: -c]^\top\in\mathbb{CP}^1$.  

If $c_1+c_2\neq 0$, the solution for $\xi=\I$ is more complicated.  We may introduce complex parameters $c$ and $\delta$ such that $c_1=c+\tfrac{1}{2}\delta$ and $c_2=-c+\tfrac{1}{2}\delta$ so that $\delta=c_1+c_2$ measures the deviation from the Peregrine case.  Taking $c\in\mathbb{C}\setminus\{0\}$ fixed and $\delta$ small, examination of the expressions \eqref{eq:norm-bg}--\eqref{eq:w-bg} suggests that a natural scaling of $(x,t)\in\mathbb{R}^2$ is to set $x=\bar{x}/|\delta|$ and $t=\bar{t}/|\delta|$.  If $(\bar{x},\bar{t})\in\mathbb{R}^2$ is also fixed, then as $\delta\to 0$,
\begin{equation}
N(x,t)=2|c|^2+4\Re\{c^*\E^{\I\arg(\delta)}(\bar{x}+\I\bar{t})\}+2(\bar{x}^2+\bar{t}^2)+O(\delta),
\end{equation}
and
\begin{equation}
w(x,t)=|\delta|^{-1}\left[-\frac{2}{3}\E^{2\I\arg(\delta)}(\bar{x}+\I\bar{t})^3-2c\E^{\I\arg(\delta)}(\bar{x}+\I\bar{t})^2-2c^2(\bar{x}+\I\bar{t})\right]+O(1).
\end{equation}
Also, 
\begin{equation}
\mathbf{s}(x,t)=\begin{bmatrix}c+\E^{\I\arg(\delta)}(\bar{x}+\I\bar{t})\\-c-\E^{\I\arg(\delta)}(\bar{x}+\I\bar{t})\end{bmatrix}+O(\delta).
\end{equation}
From the B\"acklund transformation formula \eqref{eq:Backlund} we then see that $\widetilde{\psi}(x,t)\approx 1$ unless the leading term in $w(x,t)$ proportional to $|\delta|^{-1}$ is cancelled.  These terms constitute a cubic equation for $\bar{x}+\I\bar{t}$, one root of which is $\bar{x}+\I\bar{t}=0$ and the other two of which are
\begin{equation}
\bar{x}+\I\bar{t}=\frac{1}{2}c\E^{-\I\arg(\delta)}(-1\pm\I\sqrt{3}).
\end{equation}
Therefore, when $\delta$ is small, the solution $\widetilde{\psi}(x,t)$ is very close to the background solution unless $(x,t)$ lies in $O(1)$ neighborhoods of the three points $x_0+\I t_0=0,\tfrac{1}{2}c\delta^{-1}(-1\pm\I\sqrt{3})$, which are easily seen to be the vertices of a large equilateral triangle of side length $|c|\sqrt{3}/(3|\delta|)$.  Near each of these three points, the solution resembles the Peregrine solution \eqref{eq:Peregrine} located near $(x_0,t_0)$.  See Figure~\ref{f:3Peregrines}.
\begin{figure}
    \includegraphics[scale=0.35]{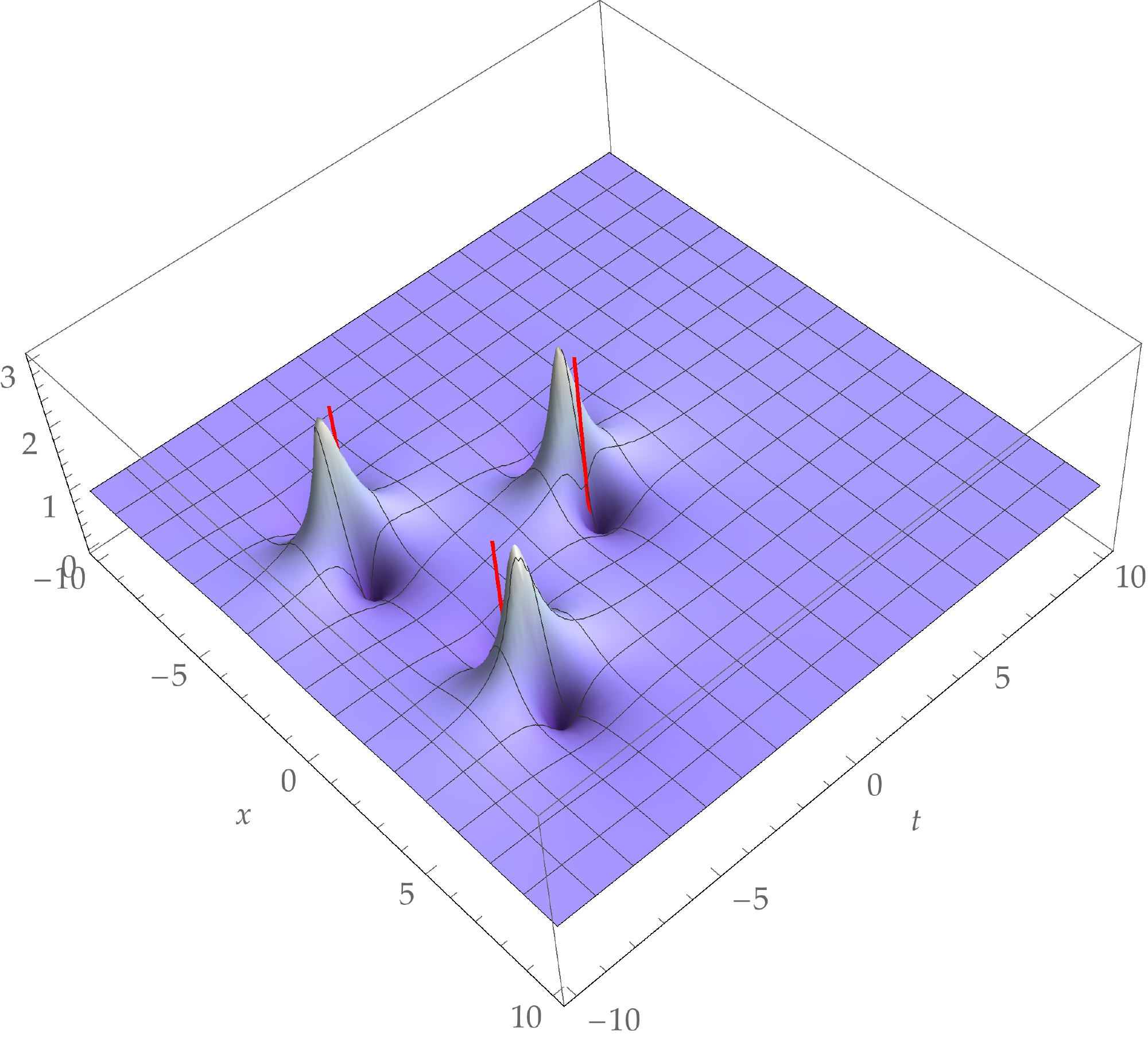}
    \caption{$|\widetilde{\psi}(x,t)|$ for a solution obtained by a single application of the Darboux transformation with the parameters $c_1=\E^{-\I\pi/12}$ and $c_2=-1$.  The red lines indicate  asymptotically valid (in the limit of large separation) predictions for the locations of the three peaks.}
    \label{f:3Peregrines}
\end{figure}
Another interesting limit corresponds to holding $c=(c_1-c_2)/2$ fixed and letting $\delta=c_1+c_2$ become large.  This brings the three vertices of the equilateral triangle, where the (approximate) Peregrine breathers are placed, in toward the origin, and suggests that the limit may result in the fusion of the three peaks into a single structure.  This limit is another application of the generalization described in Remark~\ref{r:limiting-case}, in which the parameter is given by the homogeneous coordinates $\mathbf{c}_\infty=[1:1]^\top\in\mathbb{CP}^1$.  The corresponding solution is given by 
\begin{multline}
\widetilde{\psi}(x,t)=1\\
{}+12\frac{-32 i t^5-80 t^4-16 i t^3 \left(4 x^2+1\right)-24 t^2 \left(4
    x^2+3\right)-2 i t \left(16 x^4-24 x^2-15\right)-16 x^4-24 x^2+3}{64 t^6+48 t^4
   \left(4 x^2+9\right)+12 t^2 \left(16 x^4-24 x^2+33\right)+64 x^6+48 x^4+108 x^2+9}
   \label{eq:Peregrine2}
\end{multline}
and it is plotted in Figure~\ref{f:Peregrine2}.  It is an example of a ``higher-order'' rogue wave solution of \eqref{eq:NLS}.  
\begin{figure}
    \includegraphics[scale=0.42]{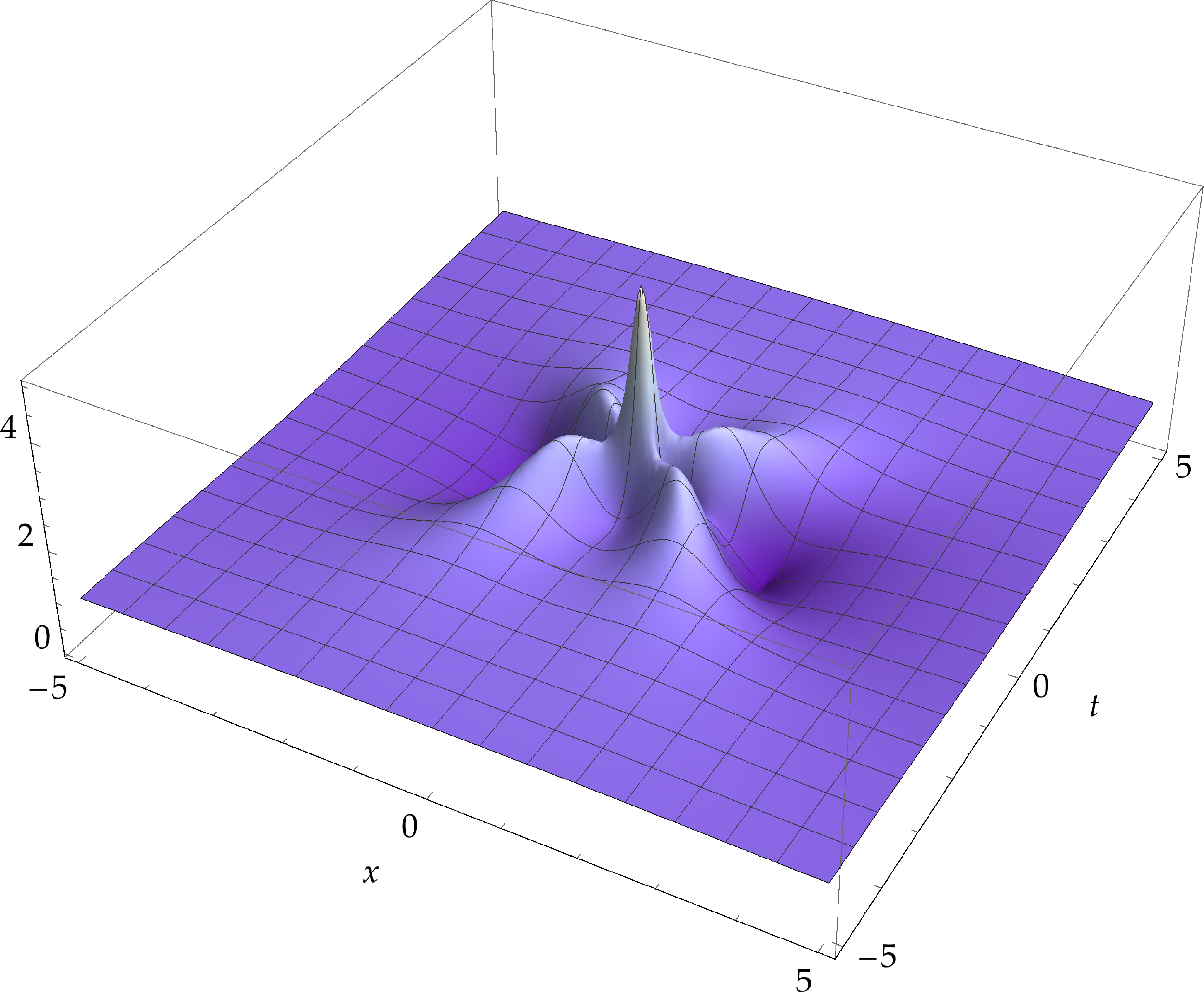}
    \caption{$|\widetilde{\psi}(x,t)|$ corresponding to a single application of the Darboux transformation to the background with parameters $\xi=\I$ and ``$\mathbf{c}=\infty$'' in the sense described in Remark~\ref{r:limiting-case} with parameter $\mathbf{c}_\infty=[1:1]^\top\in\mathbb{CP}^1$.}
    \label{f:Peregrine2}
\end{figure}

A useful property of the matrix coefficient $\mathbf{Y}(x,t)$ in the partial fraction expansion of the composite gauge transformation matrix $\mathbf{G}(\lambda;x,t)$ constructed from the background solution $\mathbf{U}_\mathrm{bg}(\lambda;x,t)$ for $\xi=\I$ is the following.  
\begin{lemma}
\label{l:IST-preserve}
Let $\mathbf{Y}(x,t)$ be defined by \eqref{eq:Ymat-formula} for $\beta=1$ with $\mathbf{s}(x,t)$, $N(x,t)$, and $w(x,t)$ being given by \eqref{eq:s-for-I}, \eqref{eq:norm-bg}, and \eqref{eq:w-bg} respectively.
Then for each $t\in\mathbb{R}$,
\begin{equation}
\mathbf{Y}(x,t)=\frac{\I\tau}{x}\begin{bmatrix}-1 & -1\\1 & 1\end{bmatrix} + O(x^{-2}),\quad x\to\pm\infty
\end{equation}
where $\tau=\tfrac{1}{4}$ if $c_1+c_2= 0$ and $\tau=\tfrac{3}{2}$ otherwise.  The same conclusion holds for $\mathbf{Y}_\infty(x,t)$ defined by \eqref{eq:Y-bar} where $\tau=\tfrac{1}{4}$ if $\mathbf{c}_\infty=[c:-c]^\top\in\mathbb{CP}^1$ and $\tau=\tfrac{3}{2}$ otherwise.
\end{lemma}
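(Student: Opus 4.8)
The plan is to insert the explicit data \eqref{eq:s-for-I}, \eqref{eq:norm-bg} and \eqref{eq:w-bg} (all with $\beta=1$, since $\xi=\I$) directly into \eqref{eq:Ymat-formula} and read off the leading behavior as $x\to\pm\infty$ with $t$ fixed. Write $z:=x+\I t$, so that $z\sim x$ and $1/z=1/x+O(x^{-2})$ for both signs of $x$. The organizing observation is that in each of the two cases below one has $|w(x,t)|\to\infty$, so the denominator $4|1-w|^2+N^2$ is dominated by its $4|1-w|^2$ term; together with the identity $(1-w^*)/|1-w|^2=1/(1-w)$, the scalar coefficient of the first (rank-one) summand of \eqref{eq:Ymat-formula} therefore behaves like $1/(1-w)\sim -1/w$. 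Hence to leading order
\begin{equation}
\mathbf{Y}(x,t)=-\frac{1}{w(x,t)}\,\mathbf{s}(x,t)\mathbf{s}(x,t)^\top\sigma_2+O(x^{-2}),
\end{equation}
and the first step is to check that the second summand of \eqref{eq:Ymat-formula}, whose prefactor is $2\I N/(4|1-w|^2+N^2)$, is genuinely $O(x^{-2})$ in both cases and so contributes only to the error.

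Next I would pin down the matrix structure. In the degenerate case $c_1+c_2=0$ one has $\mathbf{s}\equiv(c,-c)^\top$ exactly, while in the generic case $\mathbf{s}(x,t)=(c_1+c_2)z\,(1,-1)^\top+O(1)$; in either situation the leading direction of $\mathbf{s}$ is $(1,-1)^\top$, and the elementary identity
\begin{equation}
\begin{bmatrix}1\\-1\end{bmatrix}\begin{bmatrix}1 & -1\end{bmatrix}\sigma_2=\I\begin{bmatrix}-1 & -1\\1 & 1\end{bmatrix}
\end{equation}
produces exactly the constant matrix appearing in the statement. Thus the entire claim reduces to tracking a single scalar: the leading power of $x$ in $-\mathbf{s}\mathbf{s}^\top\sigma_2/w$.

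Finally I would separate the two regimes by their scaling. When $c_1+c_2=0$ the vector $\mathbf{s}$ and the norm $N$ are constant while $w$ is \emph{linear} in $z$ (indeed $w=-2c^2 z$ by \eqref{eq:sNw-Peregrine}), whereas when $c_1+c_2\neq 0$ one has $\mathbf{s}=O(x)$, $N=O(x^2)$ and $w=O(x^3)$ with leading term $-\tfrac{2}{3}(c_1+c_2)^2z^3$. Substituting these into $-\mathbf{s}\mathbf{s}^\top\sigma_2/w$ and using cancellations of modulus type (e.g. $(c^2)^*c^2=|c|^4$) collapses the prefactor to a pure numerical multiple of $\I/x$, fixing the value of $\tau$ in each case. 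The computation for $\mathbf{Y}_\infty$ in \eqref{eq:Y-bar} is identical: dropping the ``$1$'' in $1-w$ (which is exactly the $\epsilon\to0$ limit) changes nothing at leading order because $|w|\to\infty$ anyway, so the same direction $(1,-1)^\top$ and the same power counting apply with $\mathbf{c}$ replaced by $\mathbf{c}_\infty$. The only genuinely delicate point is the bookkeeping of the error: one must confirm that the neglected contributions---the $N^2$ term in the denominator, the subleading parts of $\mathbf{s}$, $w$ and $N$, and the entire second summand---are all $O(x^{-2})$ \emph{uniformly} as $x\to+\infty$ and as $x\to-\infty$, which follows because every competing term carries at least one extra power of $1/z=1/x+O(x^{-2})$.
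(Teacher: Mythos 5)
Your strategy is the same as the paper's own proof, which consists of exactly the three asymptotic statements you list---$\mathbf{s}(x,t)=(c_1+c_2)x\,[1;-1]^\top+O(1)$, $N(x,t)=2|c_1+c_2|^2x^2+O(x)$, $w(x,t)=-\tfrac{2}{3}(c_1+c_2)^2x^3+O(x^2)$ in the generic case, and $\mathbf{s}=[c;-c]^\top$, $N=2|c|^2$, $w=-2c^2x+O(1)$ when $c_1+c_2=0$---followed by the instruction to substitute into \eqref{eq:Ymat-formula} and \eqref{eq:Y-bar}. Your structural additions are correct and in fact tighten that terse argument: the exact identity $(1-w^*)/|1-w|^2=1/(1-w)$, the dominance of $4|1-w|^2$ over $N^2$ in both regimes (so the first coefficient behaves like $-1/w$), the verification that the second rank-one summand of \eqref{eq:Ymat-formula} is $O(x^{-2})$, the identification of $[1;-1]^\top$ as the common leading direction of $\mathbf{s}$, and the identity expressing $[1;-1][1,-1]\sigma_2$ as $\I$ times the constant matrix in the statement all hold, uniformly as $x\to+\infty$ and as $x\to-\infty$.

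The genuine gap is that you never evaluate $\tau$, and those two numbers are the entire content of the lemma; everything else is scaffolding. Worse, the deferred arithmetic does not ``collapse'' to the constants you are trying to prove. Carrying out your own recipe in the case $c_1+c_2=0$, where $\mathbf{s}\mathbf{s}^\top\sigma_2=\I c^2\begin{bsmallmatrix}-1&-1\\1&1\end{bsmallmatrix}$ and $w=-2c^2(x+\I t)$, gives
\begin{equation}
-\frac{1}{w}\,\mathbf{s}\mathbf{s}^\top\sigma_2=\frac{\I c^2}{2c^2(x+\I t)}\begin{bmatrix}-1&-1\\1&1\end{bmatrix}=\frac{\I}{2x}\begin{bmatrix}-1&-1\\1&1\end{bmatrix}+O(x^{-2}),
\end{equation}
i.e.\ $\tau=\tfrac12$, not $\tfrac14$. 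Note the cancellation here is $c^2/c^2$, not the modulus cancellation $(c^2)^*c^2=|c|^4$ you invoke; the latter occurs if you instead keep the exact coefficient $4(1-w^*)/(4|1-w|^2+N^2)$, and it yields the same value $\tfrac12$ (a spot check at $c=1$, $t=0$, $x=100$ gives $4(1-w^*)/(4|1-w|^2+N^2)=804/161608\approx 0.004975\approx 1/(2x)$, while the second summand contributes only $O(x^{-2})$). The generic case does produce $\tau=\tfrac32$, as both you and the paper assert. So the step you wave at---``fixing the value of $\tau$ in each case''---cannot be cashed for the statement as printed: executing it proves the lemma with $\tau=\tfrac12$ in the degenerate case and indicates the stated $\tfrac14$ is a slip (harmless for the paper's subsequent use, which needs only that $\tau$ is real and the remainder is $O(x^{-2})$, so that $Y_{12}(x,t)-Y_{21}(x,t)^*=O(x^{-2})$). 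Either way, a complete proof must display this arithmetic; deferring it defers the claim itself.
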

\begin{proof}
If $c_1+c_2\neq 0$, then $\mathbf{s}(x,t)=(c_1+c_2)x[1; -1]^\top + O(1)$, $N(x,t)=2|c_1+c_2|^2x^2+O(x)$, and $w(x,t)=-\tfrac{2}{3}(c_1+c_2)^2x^3+O(x^2)$ as $x\to\pm\infty$.  On the other hand, if $c_1+c_2=0$, then $\mathbf{s}(x,t)=[c;-c]^\top$, $N(x,t)=2|c|^2$, and $w(x,t)=-2c^2x+O(1)$ as $x\to\pm\infty$.  Using these in \eqref{eq:Ymat-formula} and \eqref{eq:Y-bar} completes the proof.
\end{proof}
This lemma has two related consequences. First, for any $\mathbf{c}\in\mathbb{C}^2$,
\begin{equation}
    Y_{12}(x,t)-Y_{21}(x,t)^* = O\left(x^{-2}\right),\quad x\to\pm \infty.
\end{equation}
Thus, according to \eqref{eq:new-psi}, the transformed potential $\widetilde{\psi}(x,t)$ is an $L^1$-perturbation of the background field $\psi(x,t)=\psi_\mathrm{bg}(x,t)\equiv 1$. Second, for any $\mathbf{c}\in\mathbb{C}^2$, the gauge transformation matrix $\mathbf{G}(\lambda;x,t)$ defined by \eqref{eq:G-partial-fraction} for $\xi=\I$ tends to the identity matrix as $x\to\pm\infty$ and hence it preserves the leading order behavior in the asymptotic expansion as $x\to\pm\infty$ when applied as a prefactor to a matrix function of $x$. This implies that for all $\lambda\in\Gamma$, $\mathbf{G}(\lambda;x,0)\mathbf{J}_\mathrm{bg}^\pm(\lambda;x,0)$ are precisely the Jost solution matrices associated with the transformed potential $\widetilde{\psi}(x,0)$. Since $\mathbf{J}^+_\mathrm{bg}(\lambda;x,0)=\mathbf{J}^-_\mathrm{bg}(\lambda;x,0)=\mathbf{U}_\mathrm{bg}(\lambda;x,0)$, we therefore arrive at the following result.
\begin{coro}
\label{c:RogueWaveScattering}
All solutions obtained from the background from a single application of the Darboux transformation described in Section~\ref{s:DarbouxRobustIST} for $\xi=\I$ have the same scattering matrix as the background potential:  $\mathbf{S}(\lambda;t)=\mathbb{I}$ for all $\lambda\in\Gamma\setminus\{\I,-\I\}$.
\end{coro}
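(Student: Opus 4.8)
The plan is to identify the Jost solution matrices of the transformed potential $\widetilde\psi(\cdot,0)$ explicitly as $\mathbf{G}(\lambda;x,0)\mathbf{J}^\pm_\mathrm{bg}(\lambda;x,0)$, and then to read off the scattering matrix from the elementary fact that the two background Jost matrices coincide.

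First I would invoke the dressing computation of Section~\ref{s:DarbouxRobustIST}: the matrix $\ddot{\mathbf{U}}(\lambda;x,t)=\mathbf{G}(\lambda;x,t)\mathbf{U}_\mathrm{bg}(\lambda;x,t)$ is a simultaneous fundamental solution of the Lax pair \eqref{eq:Lax-x}--\eqref{eq:Lax-t} with $\psi$ replaced by $\widetilde\psi$. Restricting to $t=0$ and using $\mathbf{J}^\pm_\mathrm{bg}(\lambda;x,0)=\mathbf{U}_\mathrm{bg}(\lambda;x,0)$ shows that the columns of $\mathbf{G}(\lambda;x,0)\mathbf{J}^\pm_\mathrm{bg}(\lambda;x,0)$ solve the spatial equation \eqref{eq:Lax-x} for the potential $\widetilde\psi(\cdot,0)$ and each $\lambda\in\Gamma\setminus\{-\I,\I\}$, so they are candidate Jost matrices.

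The crux is then the asymptotic normalization \eqref{eq:Jost-BC}. Since $\mathbf{J}^\pm_\mathrm{bg}(\lambda;x,0)\E^{\I\rho(\lambda)x\sigma_3}=\mathbf{E}(\lambda)$ holds exactly for the background, I would compute $\mathbf{G}(\lambda;x,0)\mathbf{J}^\pm_\mathrm{bg}(\lambda;x,0)\E^{\I\rho(\lambda)x\sigma_3}=\mathbf{G}(\lambda;x,0)\mathbf{E}(\lambda)$, and here I use Lemma~\ref{l:IST-preserve}: because $\mathbf{Y}(x,0)=O(x^{-1})$ and $\mathbf{Z}(x,0)=\sigma_2\mathbf{Y}(x,0)^*\sigma_2=O(x^{-1})$ as $x\to\pm\infty$, the gauge matrix $\mathbf{G}(\lambda;x,0)=\mathbb{I}+\mathbf{Y}(x,0)/(\lambda-\I)+\mathbf{Z}(x,0)/(\lambda+\I)$ tends to $\mathbb{I}$ for every fixed $\lambda\in\Gamma\setminus\{-\I,\I\}$, the coefficients $(\lambda\mp\I)^{-1}$ being finite there. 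Hence $\mathbf{G}(\lambda;x,0)\mathbf{J}^\pm_\mathrm{bg}(\lambda;x,0)\E^{\I\rho(\lambda)x\sigma_3}=\mathbf{E}(\lambda)+o(1)$ as $x\to\pm\infty$, which is precisely \eqref{eq:Jost-BC} for $\widetilde\psi(\cdot,0)$; uniqueness of the Jost solutions then gives $\widetilde{\mathbf{J}}^\pm(\lambda;x,0)=\mathbf{G}(\lambda;x,0)\mathbf{J}^\pm_\mathrm{bg}(\lambda;x,0)$. The identical argument applies in the limiting case $\mathbf{c}=\infty$ of Remark~\ref{r:limiting-case}, replacing $\mathbf{G}$ by $\mathbf{G}_\infty$, since Lemma~\ref{l:IST-preserve} covers $\mathbf{Y}_\infty$ as well.

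To finish, I would substitute the identification into the scattering relation $\widetilde{\mathbf{J}}^+=\widetilde{\mathbf{J}}^-\widetilde{\mathbf{S}}$: cancelling the common invertible left factor $\mathbf{G}(\lambda;x,0)$ reduces it to $\mathbf{J}^+_\mathrm{bg}=\mathbf{J}^-_\mathrm{bg}\widetilde{\mathbf{S}}(\lambda;0)$, and since $\mathbf{J}^+_\mathrm{bg}(\lambda;x,0)=\mathbf{J}^-_\mathrm{bg}(\lambda;x,0)$ are invertible this forces $\widetilde{\mathbf{S}}(\lambda;0)=\mathbb{I}$. Propagating with the explicit time evolution \eqref{eq:S-mat-evolution} gives $\widetilde{\mathbf{S}}(\lambda;t)=\E^{-\I\rho(\lambda)\lambda t\sigma_3}\mathbb{I}\,\E^{\I\rho(\lambda)\lambda t\sigma_3}=\mathbb{I}$ for all $t\in\mathbb{R}$, as claimed. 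The main obstacle is the normalization step rather than the algebra: one must know that $\mathbf{G}\to\mathbb{I}$ as $x\to\pm\infty$, equivalently that the Darboux transformation does not alter the leading-order behavior of the eigenfunctions at infinity, and this is exactly the content supplied by Lemma~\ref{l:IST-preserve}; without the $O(x^{-1})$ decay of $\mathbf{Y}$ the transformed matrices would acquire a nontrivial limiting left factor and the scattering data would generically change.
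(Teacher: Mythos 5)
Your proof is correct and follows essentially the same route as the paper: both arguments rest on Lemma~\ref{l:IST-preserve} to conclude that $\mathbf{G}(\lambda;x,0)\to\mathbb{I}$ as $x\to\pm\infty$, identify the transformed Jost matrices as $\mathbf{G}(\lambda;x,0)\mathbf{J}^\pm_\mathrm{bg}(\lambda;x,0)$, and then deduce $\mathbf{S}(\lambda;t)=\mathbb{I}$ from the coincidence $\mathbf{J}^+_\mathrm{bg}=\mathbf{J}^-_\mathrm{bg}=\mathbf{U}_\mathrm{bg}$. Your write-up merely makes explicit the normalization, uniqueness, and time-evolution steps that the paper leaves implicit.
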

In particular, this holds for the Peregrine solution $\psi_\mathrm{P}(x,t)$ defined in \eqref{eq:Peregrine}.  On the other hand, such a result is not true for $\xi\neq\pm\I$, and the Akhmediev breather case shows that it may not even be possible to define the scattering matrix after the application of a Darboux transformation.

\begin{remark}
\label{r:Peregrine-growth}
By definition (see \eqref{eq:Ymat-formula}), $\mathbf{Y}(x,t)$ always annihilates $\mathbf{s}(x,t)$.  According to \eqref{eq:sNw-Peregrine}, in the case that $c_1+c_2=0$ giving rise to the Peregrine solution, the kernel of $\mathbf{Y}(x,t)$ contains the span of $[1; -1]^\top$.  Now, the Beals-Coifman matrix for the background solution is defined by $\mathbf{U}^\mathrm{BC}_\mathrm{bg}(\lambda;x,t)=\mathbf{E}(\lambda)\E^{-\I\rho(x+\lambda t)\sigma_3}$ (cf., \eqref{eq:NZBC-asymptotic-eigenfunction}) and it satisfies
\begin{equation}
\mathbf{U}^\mathrm{BC}_\mathrm{bg}(\lambda;x,t)=n(\lambda)\left(\begin{bmatrix}1 & -1\\-1 & 1\end{bmatrix}-\I\rho(\lambda)\left(\sigma_1 +(x+\I t)\sigma_3\right)+O(\lambda-\I)\right),\quad\lambda\to\I.
\end{equation}
Therefore, applying the gauge transformation $\mathbf{G}(\lambda;x,t)$ to map $\mathbf{U}^\mathrm{BC}_{\mathrm{bg}}(\lambda;x,t)$ into its Peregrine analogue $\mathbf{U}^\mathrm{BC}_\mathrm{P}(\lambda;x,t)$, the leading term is explicitly cancelled:
\begin{equation}
\begin{split}
\mathbf{U}_\mathrm{P}^\mathrm{BC}(\lambda;x,t)&=\mathbf{G}(\lambda;x,t)\mathbf{U}^\mathrm{BC}_\mathrm{bg}(\lambda;x,t)\\
&=\left(\mathbf{Y}(x,t)(\lambda-\I)^{-1}+O(1)\right)\mathbf{U}_\mathrm{bg}^\mathrm{BC}(\lambda;x,t)\\
&=-\I n(\lambda)\rho(\lambda)(\lambda-\I)^{-1}\mathbf{Y}(x,t)\left(\sigma_1+(x+\I t)\sigma_3\right) + O(n(\lambda)),\quad\lambda\to\I.
\end{split}
\end{equation}
Since $n(\lambda)\rho(\lambda)=O((\lambda-\I)^{1/4})$ we see that for the Peregrine solution, the Beals-Coifman fundamental solution matrix exhibits a $(\lambda-\I)^{-3/4}$ singularity in all four matrix entries as $\lambda\to\I$.  Without the condition $c_1+c_2=0$, the cancellation does not occur, and therefore more generally the Beals-Coifman matrix has a $(\lambda-\I)^{-5/4}$ singularity after one application of the Darboux transformation to the background solution.  The same growth rate estimates apply to $\mathbf{M}^\mathrm{BC}(\lambda;x,t)$; in the above expansions one need only omit the term $(x+\I t)\sigma_3$ that comes from expanding the exponential factor $\E^{-\I\rho(\lambda)(x+\lambda t)\sigma_3}$.
\end{remark}

\subsection{Iteration.  Riemann-Hilbert representations for high-order rogue waves}
Since, according to Theorem~\ref{t:Darboux} the result of applying a Darboux transformation in the setting of the robust IST is to transform one Riemann-Hilbert problem into another one of the same form and at the same time to transform the \emph{solution} of the first problem into that of the second, the procedure can be repeated.  The data $(\xi,\mathbf{c})$ associated with each iteration can be related or unrelated to that of the previous step, and the basic procedure remains the same.  In this section, we show how to iterate the Darboux transformation an arbitrary number of times for certain data chosen to produce interesting solutions of the focusing NLS equation \eqref{eq:NLS} and also for which the resulting jump matrix after multiple iterations can be explicitly determined in closed form.  

The solutions we wish to obtain are those commonly referred to as the higher-order rogue wave solutions.  The simplest such solutions have already been obtained in Section~\ref{s:OneDarboux} (and in particular subsection~\ref{s:OneDarbouxAtI}), namely the Peregrine solution \eqref{eq:Peregrine} (a ``first-order'' rogue wave, see Figure~\ref{f:Peregrine}) and another solution obtained by choosing parameters so as to fuse together three copies of the elementary Peregrine solution at a single point (a ``second-order'' rogue wave, see Figure~\ref{f:Peregrine2}).  Let us denote these solutions respectively as $\psi_{\mathrm{P}_1}(x,t)$ and $\psi_{\mathrm{P}_2}(x,t)$.

Both of these solutions were obtained by a single iteration of the Darboux transformation applied to the background solution $\mathbf{U}_\mathrm{bg}(\lambda;x,t)$ with pole location $\xi=\I$ and different auxiliary data taken in the limiting sense of Remark~\ref{r:limiting-case}.  Namely, $\psi_{\mathrm{P}_1}(x,t)$ was generated from the data $\mathbf{c}_\infty=[1:-1]^\top\in\mathbb{CP}^1$ while $\psi_{\mathrm{P}_2}(x,t)$ was generated from $\mathbf{c}_\infty=[1:1]^\top\in\mathbb{CP}^1$.  To generate rogue wave solutions of arbitrary order, we simply apply these basic Darboux transformations iteratively.
\begin{definition}
Let $\mathcal{D}_\mathrm{o}$ denote the Darboux transformation associated with the data $(\xi=\I,\mathbf{c}_\infty=[1:-1]^\top)$ and let $\mathcal{D}_\mathrm{e}$ denote the Darboux transformation associated with the data $(\xi=\I,\mathbf{c}_\infty=[1:1]^\top)$.  Then the rogue wave solution of order $2n-1$ (respectively $2n$) is the solution $\psi=\psi_{\mathrm{P}_{2n-1}}(x,t)$ (respectively $\psi=\psi_{\mathrm{P}_{2n}}(x,t)$) obtained from the iterated Darboux/B\"acklund transformation $\mathcal{D}_\mathrm{o}^n$ (respectively $\mathcal{D}_\mathrm{e}^n$) applied to the background.
\label{d:HighOrderRogueWaves}
\end{definition}
Qualitatively speaking, $\psi_{\mathrm{P}_{k}}(x,t)$ represents a fusion or nonlinear superposition of
$2k-1$ copies of the elementary Peregrine solution $\psi_\mathrm{P}(x,t)$ at the same point, namely $(x,t)=(0,0)$.  It has recently been proven \cite{WangYWH17} that for any value of $k$, the maximum amplitude of this solution is $\max_{(x,t)\in\mathbb{R}^2}|\psi_{\mathrm{P}_k}(x,t)|=|\psi_{\mathrm{P}_k}(0,0)|=2k+1$.

According to Theorem~\ref{t:Darboux}, the effect of each application of $\mathcal{D}_\mathrm{o}$ or $\mathcal{D}_\mathrm{e}$ is to produce a new factor in the jump matrix on the upper/lower semicircles $\Sigma_+$/$\Sigma_-$ of the jump contour, and at each iteration the factor is calculated by evaluating the gauge transformation $\mathbf{G}(\lambda;x,t)=\mathbb{I}+(\lambda-\I)^{-1}\mathbf{Y}_\infty(x,t)+(\lambda+\I)^{-1}\mathbf{Z}_\infty(x,t)$ at $(x,t)=(0,0)$ \emph{after} computing $\mathbf{Y}_\infty(x,t)$ and $\mathbf{Z}_\infty(x,t)$ from the fundamental solution matrix obtained from the previous iteration.  Let
$\mathbf{M}_\mathrm{o}^{[n]}(\lambda;x,t)$ and $\mathbf{M}_\mathrm{e}^{[n]}(\lambda;x,t)$ denote the solution matrices of \rhref{rhp:M-NZBC} after $n$ applications of $\mathcal{D}_\mathrm{o}$ and $\mathcal{D}_\mathrm{e}$ respectively.  Thus $\mathbf{M}_\mathrm{o}^{[0]}(\lambda;x,t)=\mathbf{M}_\mathrm{e}^{[0]}(\lambda;x,t)$ is the solution of \rhref{rhp:M-NZBC} formulated for the background solution $\psi=\psi_\mathrm{bg}(x,t)\equiv 1$.  Let $\mathbf{G}_\mathrm{o}^{[n]}(\lambda;x,t)$ and $\mathbf{G}_\mathrm{e}^{[n]}(\lambda;x,t)$ denote the gauge matrices calculated for the application of $\mathcal{D}_\mathrm{o}$ to $\mathbf{U}_{\mathrm{o}}^{[n]}(\lambda;x,t):=\mathbf{M}_{\mathrm{o}}^{[n]}(\lambda;x,t)\E^{-\I\rho(\lambda)(x+\lambda)t\sigma_3}$ and of $\mathcal{D}_\mathrm{e}$ to $\mathbf{U}_{\mathrm{e}}^{[n]}(\lambda;x,t):=\mathbf{M}_\mathrm{e}^{[n]}(\lambda;x,t)\E^{-\I\rho(\lambda)(x+\lambda t)\sigma_3}$ respectively.  Thus, 
\begin{multline}
\mathbf{U}^{[n]}_{\mathrm{o}/\mathrm{e}}(\lambda;x,t)\\
=\begin{cases}
\mathbf{G}^{[n-1]}_{\mathrm{o}/\mathrm{e}}(\lambda;x,t)\cdots\mathbf{G}^{[0]}_{\mathrm{o}/\mathrm{e}}(\lambda;x,t)\mathbf{U}_\mathrm{bg}(\lambda;x,t),&\;\lambda\in D_+\cup D_-\\
\mathbf{G}^{[n-1]}_{\mathrm{o}/\mathrm{e}}(\lambda;x,t)\cdots\mathbf{G}^{[0]}_{\mathrm{o}/\mathrm{e}}(\lambda;x,t)\mathbf{U}_\mathrm{bg}(\lambda;x,t)\mathbf{G}^{[0]}_{\mathrm{o}/\mathrm{e}}(\lambda;0,0)^{-1}\cdots\mathbf{G}^{[n-1]}_{\mathrm{o}/\mathrm{e}}(\lambda;0,0)^{-1},&\;\lambda\in D_0
\end{cases}
\end{multline}
and $\mathbf{M}_{\mathrm{o}/\mathrm{e}}^{[n]}(\lambda;x,t)=\mathbf{U}_{\mathrm{o}/\mathrm{e}}^{[n]}(\lambda;x,t)\E^{\I\rho(\lambda)(x+\lambda t)\sigma_3}$.  The Riemann-Hilbert problem whose unique solution is $\mathbf{M}_{\mathrm{o}/\mathrm{e}}^{[n]}(\lambda;x,t)$ is then the following.
\begin{rhp}
  Let $n=0,1,2,3,\dots$, and seek a $2\times 2$ matrix function $\mathbf{M}(\lambda;x,t)=\mathbf{M}_{\mathrm{o}/\mathrm{e}}^{[n]}(\lambda;x,t)$ that has the following properties:
  \begin{itemize}
    \item \textbf{Analyticity:} $\mathbf{M}(\lambda;x,t)$ is analytic for $\lambda$ in the exterior of the disk $D_0$ and for $\lambda\in D_0 \setminus\Sigma_\mathrm{c}$\,.
    \item \textbf{Jump Condition:} $\mathbf{M}(\lambda;x,t)$ takes continuous boundary values $\mathbf{M}_\pm(\lambda;x,t)$ on $\Sigma_0\cup\Sigma_\mathrm{c}$, and they are related by a jump condition of the form $\mathbf{M}_+(\lambda;x,t)=\mathbf{M}_-(\lambda;x,t)\mathbf{V}_{\mathrm{o}/\mathrm{e}}^{[n]}(\lambda;x,t)$ for $\lambda\in\Sigma\cup\Sigma_\mathrm{c}$, where
\begin{equation}
\mathbf{V}_{\mathrm{o}/\mathrm{e}}^{[n]}(\lambda;x,t):=\E^{2\I\rho_+(\lambda)(x+\lambda t)\sigma_3},\quad
\lambda\in\Sigma_\mathrm{c},
\end{equation}
\begin{equation}
\mathbf{V}_{\mathrm{o}/\mathrm{e}}^{[n]}(\lambda;x,t):=\E^{-\I\rho(\lambda)(x+\lambda t)\sigma_3}
\mathbf{G}_{\mathrm{o}/\mathrm{e}}^{[n-1]}(\lambda;0,0)\cdots\mathbf{G}_{\mathrm{o}/\mathrm{e}}^{[0]}(\lambda;0,0)\mathbf{E}(\lambda)\E^{\I\rho(\lambda)(x+\lambda t)\sigma_3},\quad\lambda\in\Sigma_+,
\label{eq:iterated-jump-plus}
\end{equation}
and
\begin{multline}
\mathbf{V}_{\mathrm{o}/\mathrm{e}}^{[n]}(\lambda;x,t):=\E^{-\I\rho(\lambda)(x+\lambda t)\sigma_3}
\mathbf{E}(\lambda)^{-1}\mathbf{G}_{\mathrm{o}/\mathrm{e}}^{[0]}(\lambda;0,0)^{-1}\cdots\mathbf{G}_{\mathrm{o}/\mathrm{e}}^{[n-1]}(\lambda;0,0)^{-1}\E^{\I\rho(\lambda)(x+\lambda t)\sigma_3},\\\lambda\in\Sigma_-.
\label{eq:iterated-jump-minus}
\end{multline}
    \item \textbf{Normalization:} $\lim_{\lambda\to\infty} \mathbf{M}(\lambda;x,t) = \mathbb{I}$.
  \end{itemize}
  \label{rhp:M-Rogue-Waves}
\end{rhp}
Note that this Riemann-Hilbert problem, like that for the background solution $\psi_\mathrm{bg}(x,t)\equiv 1$, has no jump across the real axis.  The rogue waves themselves are obtained from the solution in the usual way:
\begin{equation}
\psi_{\mathrm{P}_{2n-1}}(x,t)=2i\lim_{\lambda\to\infty}\lambda M^{[n]}_{\mathrm{o},12}(\lambda;x,t)\quad\text{and}\quad
\psi_{\mathrm{P}_{2n}}(x,t)=2i\lim_{\lambda\to\infty}\lambda M^{[n]}_{\mathrm{e},12}(\lambda;x,t), \quad n=1,2,3,\dots.
\end{equation}
Although the gauge transformation matrix $\mathbf{G}^{[k]}_{\mathrm{o}/\mathrm{e}}(\lambda;x,t)$ is not generally equal to its predecessor $\mathbf{G}^{[k-1]}_{\mathrm{o}/\mathrm{e}}(\lambda;x,t)$ because the former has to be calculated using the values of the latter near $\xi=\I$, a remarkable simplification occurs for $(x,t)=(0,0)$, and this is enough to make the evaluation of the jump matrix in \rhref{rhp:M-Rogue-Waves} completely explicit.
\begin{prop}
  For any integer $n \geq 0$, $\mathbf{G}_{\mathrm{o}/\mathrm{e}}^{[n]}(\lambda;0,0)= \mathbf{G}_{\mathrm{o}/\mathrm{e}}^{[0]}(\lambda;0,0)$.
  \label{p:iterated-jumps}
\end{prop}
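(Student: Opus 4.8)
The plan is to reduce the whole statement to one structural fact: at the base point $(x,t)=(0,0)$ the fundamental matrix feeding each gauge factor collapses to the identity, so the data determining $\mathbf{G}^{[n]}_{\mathrm{o}/\mathrm{e}}(\lambda;0,0)$ is the same at every iteration. Recall that $\mathbf{G}^{[n]}_{\mathrm{o}/\mathrm{e}}(\lambda;x,t)$ is built from $\mathbf{Y}_\infty$ and $\mathbf{Z}_\infty$ in \eqref{eq:Y-bar}--\eqref{eq:Z-bar}, which depend only on the vectors $\mathbf{s}_\infty(x,t)=\mathbf{U}^{[n]}_{\mathrm{o}/\mathrm{e}}(\I;x,t)\mathbf{c}_\infty$ and $\mathbf{s}'_\infty(x,t)=\partial_\lambda\mathbf{U}^{[n]}_{\mathrm{o}/\mathrm{e}}(\I;x,t)\mathbf{c}_\infty$ and on the scalars $N_\infty=\|\mathbf{s}_\infty\|^2$ and $w_\infty=\mathbf{s}_\infty^\top\sigma_2\mathbf{s}'_\infty$. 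Since $\xi=\I\in D_0$, each gauge factor is computed from the $\mathbf{U}^\mathrm{in}$-type branch of $\mathbf{U}^{[n]}_{\mathrm{o}/\mathrm{e}}$. It therefore suffices to evaluate these four quantities at $(x,t)=(0,0)$ and exhibit that they are $n$-independent; the argument is identical for $\mathcal{D}_\mathrm{o}$ and $\mathcal{D}_\mathrm{e}$.

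The central step I would establish, by induction on $n$, is that $\mathbf{U}^{[n]}_{\mathrm{o}/\mathrm{e}}(\lambda;0,0)=\mathbb{I}$ holds \emph{identically in} $\lambda$ on $D_0$. For $n=0$ this is immediate from the explicit formula \eqref{eq:Uin-bg-2}, since $\theta=\rho(\lambda)(x+\lambda t)$ and $(x+\lambda t)$ both vanish at $(x,t)=(0,0)$, giving $\mathbf{U}^\mathrm{in}_\mathrm{bg}(\lambda;0,0)=\mathbb{I}$. For the inductive step, the renormalization \eqref{eq:poles-to-circle-again} reads, on $D_0$,
\begin{equation*}
\mathbf{U}^{[n]}_{\mathrm{o}/\mathrm{e}}(\lambda;x,t)=\mathbf{G}^{[n-1]}_{\mathrm{o}/\mathrm{e}}(\lambda;x,t)\,\mathbf{U}^{[n-1]}_{\mathrm{o}/\mathrm{e}}(\lambda;x,t)\,\mathbf{G}^{[n-1]}_{\mathrm{o}/\mathrm{e}}(\lambda;0,0)^{-1},
\end{equation*}
so setting $(x,t)=(0,0)$ and using the inductive hypothesis $\mathbf{U}^{[n-1]}_{\mathrm{o}/\mathrm{e}}(\lambda;0,0)=\mathbb{I}$ produces $\mathbf{G}^{[n-1]}_{\mathrm{o}/\mathrm{e}}(\lambda;0,0)\,\mathbb{I}\,\mathbf{G}^{[n-1]}_{\mathrm{o}/\mathrm{e}}(\lambda;0,0)^{-1}=\mathbb{I}$. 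As in the discussion following \eqref{eq:poles-to-circle-again}, the apparent poles at $\xi=\I$ and $\xi^*=-\I$ are removable here (by the uniqueness clause of Proposition~\ref{p:Uin-def} applied to the transformed potential), so the identity genuinely extends over the whole disk, including a neighborhood of $\lambda=\I$.

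Granting this, at $(x,t)=(0,0)$ one reads off $\mathbf{s}_\infty(0,0)=\mathbb{I}\,\mathbf{c}_\infty=\mathbf{c}_\infty$, and because $\mathbf{U}^{[n]}_{\mathrm{o}/\mathrm{e}}(\cdot;0,0)$ is the \emph{constant} matrix $\mathbb{I}$, its $\lambda$-derivative vanishes, whence $\mathbf{s}'_\infty(0,0)=\mathbf{0}$. Consequently $N_\infty(0,0)=\|\mathbf{c}_\infty\|^2$ and $w_\infty(0,0)=\mathbf{c}_\infty^\top\sigma_2\mathbf{0}=0$, all independent of $n$. Substituting $w_\infty=0$ (with $\beta=1$) into \eqref{eq:Y-bar}--\eqref{eq:Z-bar} collapses both matrices to
\begin{equation*}
\mathbf{Y}_\infty(0,0)=\frac{2\I}{\|\mathbf{c}_\infty\|^2}\sigma_2\mathbf{c}_\infty^*\mathbf{c}_\infty^\top\sigma_2,\qquad
\mathbf{Z}_\infty(0,0)=-\frac{2\I}{\|\mathbf{c}_\infty\|^2}\mathbf{c}_\infty\mathbf{c}_\infty^\dagger,
\end{equation*}
depending only on the fixed vector $\mathbf{c}_\infty$ ($[1:-1]^\top$ in the $\mathrm{o}$-case, $[1:1]^\top$ in the $\mathrm{e}$-case). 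Hence $\mathbf{G}^{[n]}_{\mathrm{o}/\mathrm{e}}(\lambda;0,0)=\mathbb{I}+(\lambda-\I)^{-1}\mathbf{Y}_\infty(0,0)+(\lambda+\I)^{-1}\mathbf{Z}_\infty(0,0)$ from \eqref{eq:G-partial-fraction} is the same for every $n$, which is the assertion.

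The point deserving the most care — and the real content of the statement — is that the normalization must be used \emph{as a function of $\lambda$}, not merely at the single pole $\lambda=\I$: it is precisely the identity $\mathbf{U}^{[n]}_{\mathrm{o}/\mathrm{e}}(\cdot;0,0)\equiv\mathbb{I}$ that kills the generalized-eigenvector data $\mathbf{s}'_\infty(0,0)$, forces $w_\infty(0,0)=0$, and thereby removes all $n$-dependence. I would therefore be careful to justify the inductive step through Proposition~\ref{p:Uin-def}, confirming both that the $D_0$ branch of $\mathbf{U}^{[n]}_{\mathrm{o}/\mathrm{e}}$ really is the entire $\mathbf{U}^\mathrm{in}$-type solution for the transformed potential and that the poles at $\xi,\xi^*$ inserted in the intermediate steps are removable at $(x,t)=(0,0)$. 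Once that is secured, the collapse of \eqref{eq:Y-bar}--\eqref{eq:Z-bar} is a direct substitution and requires no further computation.
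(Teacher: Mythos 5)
Your proposal is correct and follows essentially the same route as the paper: the whole argument rests on the fact that $\mathbf{U}^{[n]}_{\mathrm{o}/\mathrm{e}}(\lambda;0,0)=\mathbb{I}$ identically in $\lambda$ on $D_0$, which forces $\mathbf{s}_\infty(\I;0,0)=\mathbf{c}_\infty$, $N_\infty(0,0)=\mathbf{c}_\infty^\dagger\mathbf{c}_\infty$, and (via the vanishing $\lambda$-derivative) $w_\infty(0,0)=0$, all independent of $n$. The only difference is cosmetic: the paper simply recalls this normalization as built into the construction \eqref{eq:poles-to-circle-again} (justified earlier via Proposition~\ref{p:Uin-def}), whereas you re-derive it by an explicit induction and additionally compute $\mathbf{Y}_\infty(0,0)$ and $\mathbf{Z}_\infty(0,0)$ in closed form, a step the paper defers to Corollary~\ref{c:simple-jumps}.
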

\begin{proof}
Let $\mathbf{Y}^{[n]}_{\infty,\mathrm{o}/\mathrm{e}}(x,t)$ denote the matrix defined by \eqref{eq:Y-bar}
for the data $\xi=\I$ and $\mathbf{c}_\infty=[1:-1]\in\mathbb{CP}^1$ (for $\mathbf{Y}^{[n]}_{\infty,\mathrm{o}}(x,t)$) or $\mathbf{c}_\infty=[1:1]\in\mathbb{CP}^1$ (for $\mathbf{Y}^{[n]}_{\infty,\mathrm{e}}(x,t)$).
  For any integer $n \geq 0$, the $(x,t)$-dependence of $\mathbf{Y}_{\infty,\mathbf{o}/\mathbf{e}}^{[n]}(\lambda;x,t)$ is encoded only via the quantities
  \begin{equation}
    \begin{aligned}
      \mathbf{s}_\infty(\I;x,t) &= \mathbf{U}_{\mathrm{o}/\mathrm{e}}^{[n]}(\I;x,t)\mathbf{c}_\infty\\
      N_\infty(x,t) &=\|  \mathbf{s}(\I;x,t) \|^2 =\mathbf{c}_\infty^\dagger\mathbf{U}_{\mathrm{o}/\mathrm{e}}^{[n]}(\I;x,t)^\dagger\mathbf{U}_{\mathrm{o}/\mathrm{e}}^{[n]}(\I;x,t)\mathbf{c}_\infty,\\
      w_\infty(x,t) &= \mathbf{c}_\infty^\top\mathbf{U}_{\mathrm{o}/\mathrm{e}}^{[n]}(\I;x,t)^\top\sigma_2\mathbf{U}_{\mathrm{o}/\mathrm{e}}^{[n]\prime}(\I;x,t)\mathbf{c}_\infty.
    \end{aligned}
  \end{equation}
Recall that by definition $\mathbf{U}_{\mathrm{o}/\mathrm{e}}^{[n]}(\lambda;0,0)=\mathbb{I}$ for all $\lambda\in D_0$, and therefore also $\mathbf{U}_{\mathrm{o}/\mathrm{e}}^{[n]\prime}(\I;0,0)=\mathbf{0}$. Thus, evaluating the quantities above at $(x,t)=(0,0)$ gives
  \begin{equation}
    \begin{aligned}
      \mathbf{s}_\infty(\I;0,0) &= \mathbf{c}_\infty,\\
      N_\infty(0,0) &=\mathbf{c}_\infty^\dagger\mathbf{c}_\infty,\\
      w_\infty(0,0) &= 0,
    \end{aligned}
    \label{eq:zero-hour}
  \end{equation}
independent of $n$, and these coincide with the values of $\mathbf{s}_\infty(\I;0,0)$, $N_\infty(0,0)$, and $w_\infty(0,0)$ for the gauge transformation matrix $\mathbf{G}_{\mathrm{o}/\mathrm{e}}^{[0]}(\lambda;x,t)$ built from the Riemann-Hilbert matrix $\mathbf{U}_{\mathrm{o}/\mathrm{e}}^{[0]}(\lambda;x,t)=\mathbf{U}_\mathrm{bg}(\lambda;x,t)$ for the background field. This proves the claim.
\end{proof}

\begin{coro}
\label{c:simple-jumps}
The product of factors appearing in the jump matrices in \eqref{eq:iterated-jump-plus}--\eqref{eq:iterated-jump-minus} is given by 
\begin{equation}
\mathbf{G}^{[n-1]}_{\mathrm{o}/\mathrm{e}}(\lambda;0,0)\cdots\mathbf{G}^{[0]}_{\mathrm{o}/\mathrm{e}}(\lambda;0,0)=\mathbf{G}_{\mathrm{o}/\mathrm{e}}^{[0]}(\lambda;0,0)^n,
\label{eq:power}
\end{equation}
where
\begin{equation}
\mathbf{G}_{\mathrm{o}/\mathrm{e}}^{[0]}(\lambda;0,0)=\mathbb{I}+\frac{\mathbf{Y}^{[0]}_{\infty,\mathrm{o}/\mathrm{e}}(0,0)}{\lambda-\I}+\frac{\mathbf{Z}^{[0]}_{\infty,\mathrm{o}/\mathrm{e}}(0,0)}{\lambda+\I}
\label{eq:G0-representation}
\end{equation}
in which
\begin{equation}
\mathbf{Y}^{[0]}_{\infty,\mathrm{o}}(0,0)=-\mathbf{Z}^{[0]}_{\infty,\mathrm{e}}(0,0)=\mathbf{H}:=\I\begin{bmatrix}1 & 1\\1 & 1\end{bmatrix}
\label{eq:coefficients-H}
\end{equation}
and
\begin{equation}
\mathbf{Y}^{[0]}_{\infty,\mathrm{e}}(0,0)=-\mathbf{Z}^{[0]}_{\infty,\mathrm{o}}(0,0)=\mathbf{K}:=\I\begin{bmatrix}1 & -1\\-1 & 1\end{bmatrix}.
\label{eq:coefficients-K}
\end{equation}
\end{coro}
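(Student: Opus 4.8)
The plan is to reduce everything to Proposition~\ref{p:iterated-jumps} followed by a single evaluation of the limiting-case formulae \eqref{eq:Y-bar}--\eqref{eq:Z-bar} at the normalization point $(x,t)=(0,0)$. For the first assertion \eqref{eq:power}, I would invoke Proposition~\ref{p:iterated-jumps}, which guarantees that $\mathbf{G}^{[k]}_{\mathrm{o}/\mathrm{e}}(\lambda;0,0)=\mathbf{G}^{[0]}_{\mathrm{o}/\mathrm{e}}(\lambda;0,0)$ for every integer $k\ge 0$. Since the product on the left-hand side of \eqref{eq:power} is an ordered product of exactly $n$ such factors, all coinciding with $\mathbf{G}^{[0]}_{\mathrm{o}/\mathrm{e}}(\lambda;0,0)$, it collapses at once to the $n$-th power. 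The representation \eqref{eq:G0-representation} then follows by specializing the general partial-fraction form \eqref{eq:G-partial-fraction} to the pole location $\xi=\I$ (so $\xi^*=-\I$) and evaluating the coefficient matrices at $(x,t)=(0,0)$.

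The substance of the remaining identities \eqref{eq:coefficients-H}--\eqref{eq:coefficients-K} is a direct evaluation of $\mathbf{Y}^{[0]}_{\infty,\mathrm{o}/\mathrm{e}}(0,0)$ and $\mathbf{Z}^{[0]}_{\infty,\mathrm{o}/\mathrm{e}}(0,0)$. I would begin from the limiting formulae \eqref{eq:Y-bar} and \eqref{eq:Z-bar} with $\beta=\Im\{\I\}=1$ and substitute the ``zero-hour'' data recorded in \eqref{eq:zero-hour}, namely $\mathbf{s}_\infty(\I;0,0)=\mathbf{c}_\infty$, $N_\infty(0,0)=\mathbf{c}_\infty^\dagger\mathbf{c}_\infty$, and, crucially, $w_\infty(0,0)=0$. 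The vanishing of $w_\infty(0,0)$ annihilates the first term of \eqref{eq:Y-bar} and the first term of \eqref{eq:Z-bar}, leaving the much simpler expressions
\[
\mathbf{Y}^{[0]}_\infty(0,0)=\frac{2\I}{N_\infty(0,0)}\,\sigma_2\mathbf{c}_\infty^*\mathbf{c}_\infty^\top\sigma_2,\quad\text{and}\quad
\mathbf{Z}^{[0]}_\infty(0,0)=-\frac{2\I}{N_\infty(0,0)}\,\mathbf{c}_\infty\mathbf{c}_\infty^\dagger.
\]

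To finish, I would substitute the two relevant choices of homogeneous coordinates. In both the odd case $\mathbf{c}_\infty=[1:-1]^\top$ and the even case $\mathbf{c}_\infty=[1:1]^\top$ one has $N_\infty(0,0)=2$, so the scalar prefactors reduce to $\pm\I$. A one-line outer-product computation shows that in the odd case $\mathbf{c}_\infty\mathbf{c}_\infty^\dagger$ is the real symmetric rank-one matrix with diagonal entries $1$ and off-diagonal entries $-1$, while in the even case it has all four entries equal to $1$; conjugation of $\mathbf{c}_\infty^*\mathbf{c}_\infty^\top$ by $\sigma_2$ interchanges these two patterns. Reading off the products against $\pm\I$ then gives $\mathbf{Y}^{[0]}_{\infty,\mathrm{o}}(0,0)=\mathbf{H}$ and $\mathbf{Z}^{[0]}_{\infty,\mathrm{o}}(0,0)=-\mathbf{K}$, and symmetrically $\mathbf{Y}^{[0]}_{\infty,\mathrm{e}}(0,0)=\mathbf{K}$ and $\mathbf{Z}^{[0]}_{\infty,\mathrm{e}}(0,0)=-\mathbf{H}$, which is precisely the content of \eqref{eq:coefficients-H}--\eqref{eq:coefficients-K}.

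There is no genuine obstacle in this corollary: the only nontrivial ingredient is Proposition~\ref{p:iterated-jumps}, which has already done the real work of showing that the gauge matrices evaluated at the origin are independent of the iteration index, so that the telescoping to a power is automatic. Everything downstream is a single substitution made especially clean by $w_\infty(0,0)=0$. The one point requiring care is the bookkeeping in the $\sigma_2$-conjugation and the tracking of signs, to ensure that the roles of $\mathbf{H}$ and $\mathbf{K}$ are not inadvertently swapped between the ``o'' and ``e'' labels or between the $\mathbf{Y}$ and $\mathbf{Z}$ coefficients.
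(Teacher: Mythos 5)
Your proposal is correct and follows exactly the paper's own argument: the power identity \eqref{eq:power} is deduced from Proposition~\ref{p:iterated-jumps}, and the coefficient formulae \eqref{eq:coefficients-H}--\eqref{eq:coefficients-K} are obtained by substituting the data \eqref{eq:zero-hour} (in particular $w_\infty(0,0)=0$) and the two choices of $\mathbf{c}_\infty$ into the limiting formulae \eqref{eq:Y-bar}--\eqref{eq:Z-bar} with $\beta=1$. The only difference is that you carry out the rank-one outer-product evaluation explicitly, which the paper leaves as an unstated computation; your intermediate expressions and the resulting identifications of $\mathbf{H}$ and $\mathbf{K}$ are all correct.
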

\begin{proof}
The identity \eqref{eq:power} follows from Proposition~\ref{p:iterated-jumps}.
It only remains to prove the formulae \eqref{eq:coefficients-H}--\eqref{eq:coefficients-K}; but these follow from 
the definitions \eqref{eq:Y-bar}--\eqref{eq:Z-bar} together with \eqref{eq:zero-hour} and the values of $\mathbf{c}_\infty=[1:-1]$ (for case o) and $\mathbf{c}_\infty=[1:1]$ (for case e).
\end{proof}

\begin{remark}
The rogue wave solutions of \eqref{eq:NLS} of arbitrary order have thus been encoded in the solution of a simple Riemann-Hilbert problem with a jump matrix depending explicitly on $(x,t)\in\mathbb{R}^2$ and the order $k$ proportional to $n$, which appears as an exponent.  This kind of problem is likely well-suited to asymptotic analysis by the Deift-Zhou steepest descent method \cite{Deift1993a} to determine asymptotic properties of rogue waves in the limit of large order.  
For instance, the plots in Figures~\ref{f:odd-rogue-waves}, \ref{f:even-rogue-waves}, and \ref{f:8th-rogue-wave} (see \cite{Dubard2010b,Gaillard2013,Gaillard2014a,HeZWPF13} for similar plots) suggest the following questions:
\begin{itemize}
\item Can one describe the asymptotic properties of the peaks and zeros of $\psi_k(x,t)$ for large $k$?  In particular, what are the asymptotics of the extreme zeros?
\item Can one describe the spatio-temporal pattern of rapid oscillations of $\psi_k(x,t)$ in the limit of large $k$ by a proper multiple-scale formula accounting for a slowly-varying envelope modulating a rapidly-varying carrier wave?  What is the nature of the carrier wave?  Is it trigonometric, elliptic, or characterized by some other special function in the limit $k\to\infty$?
\end{itemize}
This is work in progress.
\end{remark}

\subsection{From analytic to algebraic representations of high-order rogue waves}
Although \rhref{rhp:M-Rogue-Waves} is an analytical characterization of arbitrary-order rogue waves, it yields an algebraic representation as well, which is more in line with what is in the literature \cite{AkhmedievAS09,Ankiewicz2010,Gaillard2013,Gaillard2014a,Guo2012a}, and which can lead to compact formulae for solutions in terms of determinants.  While perhaps less useful for determining properties of high-order rogue waves due to combinatorial complexity, such formulae are effective for the calculation of rogue wave solutions of low order.  Fix a value of $n=1,2,3,\dots$ (everything below depends on $n$ but we will not systematically indicate this dependence when making new definitions going forward).  To convert \rhref{rhp:M-Rogue-Waves} into a finite-dimensional linear algebraic system, first observe that for $\lambda\in\mathbb{C}\setminus D_0$, $\mathbf{M}^{[n]}_{\mathrm{o}/\mathrm{e}}(\lambda;x,t)=\mathbf{\Pi}_{\mathrm{o}/\mathrm{e}}(\lambda;x,t)\mathbf{M}^{[0]}(\lambda;x,t)=\mathbf{\Pi}_{\mathrm{o}/\mathrm{e}}(\lambda;x,t)\mathbf{E}(\lambda)$, where $\mathbf{\Pi}_{\mathrm{o}/\mathrm{e}}(\lambda;x,t)$ denotes the ordered product of the gauge transformation matrices:
\begin{equation}
\mathbf{\Pi}_{\mathrm{o}/\mathrm{e}}(\lambda;x,t):=\mathbf{G}^{[n-1]}_{\mathrm{o}/\mathrm{e}}(\lambda;x,t)\cdots\mathbf{G}^{[1]}_{\mathrm{o}/\mathrm{e}}(\lambda;x,t)\mathbf{G}^{[0]}_{\mathrm{o}/\mathrm{e}}(\lambda;x,t).
\end{equation}
As a product of $n$ matrix factors, each having simple poles at $\pm\I$ as its only singularities and decaying to $\mathbb{I}$ as $\lambda\to\infty$, $\mathbf{\Pi}_{\mathrm{o}/\mathrm{e}}(\lambda;x,t)$ also decays to $\mathbb{I}$ for large $\lambda$ and has poles of order $n$ at $\pm\I$ and is otherwise analytic.  Hence, it necessarily has a finite partial fraction expansion of the form
\begin{equation}
\mathbf{\Pi}_{\mathrm{o}/\mathrm{e}}(\lambda;x,t) = \mathbb{I} + \sum\limits_{k=1}^n \frac{\mathbf{A}^+_{\mathrm{o}/\mathrm{e},k}(x,t)}{(\lambda-\I)^k}+\frac{\mathbf{A}^-_{\mathrm{o}/\mathrm{e},k}(x,t)}{(\lambda + \I)^k}.
\label{eq:Pi-partial-fractions}
\end{equation}
Next, taking into account Corollary~\ref{c:simple-jumps}, the jump conditions of \rhref{rhp:M-Rogue-Waves} together imply that
\begin{equation}
\mathbf{\Pi}_{\mathrm{o}/\mathrm{e}}(\lambda;x,t)\mathbf{E}(\lambda)=\mathbf{U}^{[n],\mathrm{in}}_{\mathrm{o}/\mathrm{e}}(\lambda;x,t)\mathbf{G}^{[0]}_{\mathrm{o}/\mathrm{e}}(\lambda;0,0)^n\mathbf{E}(\lambda)\E^{\I\rho(\lambda)(x+\lambda t)\sigma_3}
\label{eq:jump-reinterpret-1}
\end{equation}
holds for $|\lambda|=r$, where $\mathbf{U}^{[n],\mathrm{in}}_{\mathrm{o}/\mathrm{e}}(\lambda;x,t)=\mathbf{M}^{[n],\mathrm{in}}_{\mathrm{o}/\mathrm{e}}(\lambda;x,t)\E^{-\I\rho(\lambda)(x+\lambda t)\sigma_3}$ is a matrix function analytic for $\lambda\in D_0$.  Therefore \eqref{eq:jump-reinterpret-1} can be recast as the identity (the exponent $-n$ indicates the $n^\mathrm{th}$ power of the inverse matrix)
\begin{equation}
\mathbf{\Pi}_{\mathrm{o}/\mathrm{e}}(\lambda;x,t)\cdot\left(\mathbf{E}(\lambda)\E^{-\I\rho(\lambda)(x+\lambda t)\sigma_3}\mathbf{E}(\lambda)^{-1}\right)\cdot\mathbf{G}^{[0]}_{\mathrm{o}/\mathrm{e}}(\lambda;0,0)^{-n} = \mathbf{U}^{[n],\mathrm{in}}_{\mathrm{o}/\mathrm{e}}(\lambda;x,t),\quad |\lambda|=r.
\label{eq:jump-reinterpret-2}
\end{equation}
The expression in parentheses on the left-hand side is precisely the matrix $\mathbf{U}^\mathrm{in}_\mathrm{bg}(\lambda;x,t)$ defined in \eqref{eq:Uin-bg-1}; it is an entire function of $\lambda$ that is written in explicit form in \eqref{eq:Uin-bg-2}.  Since the right-hand side of \eqref{eq:jump-reinterpret-2} admits analytic continuation to the domain $D_0$, the same must be true of the left-hand side.  But $\mathbf{U}^\mathrm{in}_\mathrm{bg}(\lambda;x,t)$ is a known entire function, and $\mathbf{G}_{\mathrm{o}/\mathrm{e}}^{[0]}(\lambda;0,0)^{-n}$ can be regarded as known using Corollary~\ref{c:simple-jumps} and will be shortly shown to have poles of order $n$ at $\lambda=\pm\I$.  Therefore, demanding analyticity of the left-hand side at $\lambda=\pm\I$ imposes conditions on the unknown coefficients in the partial fraction expansion \eqref{eq:Pi-partial-fractions}.  These conditions constitute an algebraic representation of the rogue wave solution.

To make these observations into an effective procedure,
first note that according to \eqref{eq:coefficients-H}--\eqref{eq:coefficients-K}, the following identities are obvious:
\begin{equation}
\mathbf{H}\mathbf{K} =  \mathbf{K}\mathbf{H} = \mathbf{0}
\label{eq:mixed-terms-vanish}
\end{equation}
and 
\begin{equation}
\mathbf{H}^2 = 2\I \mathbf{H} \quad \text{and}\quad
\mathbf{K}^2 = 2\I \mathbf{K}.
\label{eq:Cayley-Hamilton}
\end{equation}
The $n^\mathrm{th}$ power on the right-hand side of \eqref{eq:power} can now be explicitly computed from the representation \eqref{eq:G0-representation}.  Indeed \eqref{eq:mixed-terms-vanish} eliminates all mixed products and \eqref{eq:Cayley-Hamilton} reduces all matrix powers to scalar multiples:
\begin{equation}
\label{eq:G-odd-power}
\begin{aligned}
  \mathbf{G}^{[0]}_\mathrm{o}(\lambda;0,0)^n &= \mathbb{I}+\sum_{k=1}^n \binom{n}{k}\left[\frac{\mathbf{H}^k}{(\lambda-\I)^k} + \frac{(-\mathbf{K})^k}{(\lambda+\I)^k}\right] \\
  &=\mathbb{I}+\sum_{k=1}^n \binom{n}{k}(2\I)^{k-1}\left[\frac{\mathbf{H}}{(\lambda-\I)^k} + \frac{(-1)^{k}\mathbf{K}}{(\lambda+\I)^k}\right],
  \end{aligned}
\end{equation}
\begin{equation}
  \label{eq:G-even-power}
  \begin{aligned}
   \mathbf{G}^{[0]}_\mathrm{e}(\lambda;0,0)^n &= \mathbb{I}+\sum_{k=1}^n \binom{n}{k}\left[\frac{\mathbf{K}^k}{(\lambda-\I)^k} + \frac{(-\mathbf{H})^k}{(\lambda+\I)^k}\right] \\
   &=\mathbb{I}+\sum_{k=1}^n \binom{n}{k}(2\I)^{k-1}\left[\frac{\mathbf{K}}{(\lambda-\I)^k} + \frac{(-1)^{k}\mathbf{H}}{(\lambda+\I)^k}\right],
   \end{aligned}
\end{equation}
Moreover, \eqref{eq:mixed-terms-vanish} and \eqref{eq:Cayley-Hamilton} together imply the identity
\begin{equation}
\mathbf{G}^{[0]}_\mathrm{o}(\lambda;0,0)^{-1}=\mathbf{G}^{[0]}_\mathrm{e}(\lambda;0,0),
\label{eq:G-inverse-relation}
\end{equation}
which further implies $\mathbf{G}_{\mathrm{o}/\mathrm{e}}^{[0]}(\lambda;0,0)^{-n}=\mathbf{G}_{\mathrm{e}/\mathrm{o}}^{[0]}(\lambda;0,0)^n$.  Thus the final factor on the left-hand side of the relation \eqref{eq:jump-reinterpret-2} is given by either \eqref{eq:G-odd-power} or \eqref{eq:G-even-power} and is therefore an explicit meromorphic function with poles of order $n$ at $\lambda=\pm\I$, written in partial fraction expansion form.
\begin{remark}
The relation \eqref{eq:G-inverse-relation} has an interesting consequence. If an application of $\mathcal{D}_\mathrm{o}$ is followed by an application of  $\mathcal{D}_\mathrm{e}$ or vice versa, then the factors introduced in the jump matrix on $\Sigma_0$ by these transformations are inverses of each other. Therefore, the Riemann-Hilbert problem remains unchanged. By uniqueness, we conclude that $\mathcal{D}_\mathrm{o}\circ \mathcal{D}_\mathrm{e}=\mathscr{I}$ and the corresponding B\"acklund transformations are inverses of each other.
\end{remark}

We use this information to obtain the coefficients in the partial fraction expansion \eqref{eq:Pi-partial-fractions} in two steps:
\begin{itemize}
\item
Denote by $\mathbf{P}_{\mathrm{o}/\mathrm{e}}(\lambda;x,t)$ the product of the first two factors on the left-hand side of \eqref{eq:jump-reinterpret-2}:  $\mathbf{P}_{\mathrm{o}/\mathrm{e}}(\lambda;x,t):=\mathbf{\Pi}_{\mathrm{o}/\mathrm{e}}(\lambda;x,t)\mathbf{U}_\mathrm{bg}^{\mathrm{in}}(\lambda;x,t)$.  This function has poles of order $n$ at $\lambda=\pm\I$ and hence has convergent Laurent expansions of the form:
\begin{equation}
\mathbf{P}_{\mathrm{o}/\mathrm{e}}(\lambda;x,t)=\sum_{j=-n}^\infty
\mathbf{P}_{\mathrm{o}/\mathrm{e},j}^{\pm}(x,t)(\lambda \mp\I)^j,\quad |\lambda\mp\I|<2.
\label{eq:P-Laurent}
\end{equation}
Then the condition that the left-hand side of \eqref{eq:jump-reinterpret-2} have removable singularities at $\lambda=\I$ and $\lambda=-\I$ implies the following $4n$ necessary vector-valued conditions on the coefficient functions of the series \eqref{eq:P-Laurent}:
\begin{equation}
\mathbf{P}_{\mathrm{o}/\mathrm{e},j}^{+}(x,t)\mathbf{c}_{\infty,\mathrm{o}/\mathrm{e}} = \mathbf{0}\quad\text{and}\quad \mathbf{P}^{-}_{\mathrm{o}/\mathrm{e},j}(x,t)\sigma_3\mathbf{c}_{\infty,\mathrm{o}/\mathrm{e}} = \mathbf{0},\quad -n\leq j \leq n-1,
\label{eq:P-hom}
\end{equation}
where $\mathbf{c}_{\infty,\mathrm{o}}=[1:-1]^\top\in\mathbb{C}\mathbb{P}^1$ and $\mathbf{c}_{\infty,\mathrm{e}}=[1:1]^\top\in\mathbb{C}\mathbb{P}^1$. In each case, the first $n$ equations are obvious, but to derive the second $n$ equations one makes repeated use of the identities \eqref{eq:mixed-terms-vanish} and \eqref{eq:Cayley-Hamilton}.
\item Now substitute into \eqref{eq:P-hom} the Laurent coefficients of $\mathbf{P}_{\mathrm{o}/\mathrm{e}}(\lambda;x,t)=\mathbf{\Pi}_{\mathrm{o}/\mathrm{e}}(\lambda;x,t)\mathbf{U}_\mathrm{bg}^{\mathrm{in}}(\lambda;x,t)$, which can be explicitly written in terms of the unknown coefficients in the partial fraction expansion \eqref{eq:Pi-partial-fractions} and the Taylor coefficients of the known analytic function $\mathbf{U}^\mathrm{in}_\mathrm{bg}(\lambda;x,t)$ about $\lambda=\pm\I$:
\begin{equation} 
\mathbf{U}_\mathrm{bg}^{\mathrm{in}}(\lambda;x,t) = \sum\limits_{j=0}^{\infty} \mathbf{D}^\pm_j(x,t)(\lambda\mp\I)^j.
\label{eq:U-power}
\end{equation}
Defining associated vectors by 
\begin{equation}
\mathbf{w}^+_{\mathrm{o}/\mathrm{e},k}(x,t):=\mathbf{D}_k^+(x,t)\mathbf{c}_{\infty,\mathrm{o}/\mathrm{e}}\quad\text{and}\quad \mathbf{w}^-_{\mathrm{o}/\mathrm{e},k}(x,t):=\mathbf{D}_k^-(x,t)\sigma_3\mathbf{c}_{\infty,\mathrm{e}/\mathrm{o}},\quad k=0,1,2,3,\dots,
\end{equation} 
the equations \eqref{eq:P-hom} for indices $-n\le j\le -1$ imply the $2n$ vector equations
\begin{align}
\label{eq:hom-A}
\sum\limits_{k=0}^j \mathbf{A}^+_{\mathrm{o}/\mathrm{e},n-j+k}(x,t)\mathbf{w}^{+}_{\mathrm{o}/\mathrm{e},k}(x,t) &=0,\quad 0\leq j \leq n-1,
\\
\label{eq:hom-B}
\sum\limits_{k=0}^j \mathbf{A}^-_{\mathrm{o}/\mathrm{e},n-j+k}(x,t) \mathbf{w}^-_{\mathrm{o}/\mathrm{e},k}(x,t)&=0,\quad 0\leq j \leq n-1.
\end{align}
To express the remaining equations in \eqref{eq:P-hom}, first set
\begin{equation}
\gamma_{km}:=\frac{(-1)^k}{(2\I)^{m+k}}\binom{m+k-1}{k},\quad m=1,2,3,\dots,\quad k=0,1,2,3,\dots,
\end{equation}
so that
\begin{equation}
\frac{1}{(\lambda+\I)^m}=\sum\limits_{k=0}^{\infty}\gamma_{km}(\lambda-\I)^k\,,\quad \frac{1}{(\lambda-\I)^m}=\sum\limits_{k=0}^{\infty}(-1)^{m+k}\gamma_{km}(\lambda+\I)^k\,.
\end{equation}
Introducing the auxiliary unknown matrices
\begin{equation}
{\boldsymbol{\Gamma}}_{\mathrm{o}/\mathrm{e},k}^+(x,t)\defeq\sum\limits_{m=1}^n {\gamma}_{km}\mathbf{A}^-_{\mathrm{o}/\mathrm{e},m}(x,t), \quad {\boldsymbol{\Gamma}}^-_{\mathrm{o}/\mathrm{e},k}(x,t)\defeq\sum\limits_{m=1}^n (-1)^{m+k}{\gamma}_{km}\mathbf{A}^+_{\mathrm{o}/\mathrm{e},m}(x,t),
\label{eq:gamma-matrices}
\end{equation}
the equations \eqref{eq:P-hom} for indices $0\le j\le n-1$ take the form
\begin{align}
\label{eq:inhom-A-Gamma}
\sum\limits_{k=0}^j \boldsymbol{\Gamma}^+_{\mathrm{o}/\mathrm{e},j-k}(x,t)\mathbf{w}^{+}_{\mathrm{o}/\mathrm{e},k} (x,t)+ \sum\limits_{k=1}^n \mathbf{A}^+_{\mathrm{o}/\mathrm{e},k}(x,t)\mathbf{w}^{+}_{\mathrm{o}/\mathrm{e},j+k}(x,t)&=-\mathbf{w}^{+}_{\mathrm{o}/\mathrm{e},j}(x,t),\quad 0\leq j \leq n-1\\
\label{eq:inhom-B-Gamma}
 \sum\limits_{k=0}^j \boldsymbol{\Gamma}^-_{\mathrm{o}/\mathrm{e},j-k}(x,t)\mathbf{w}^{-}_{\mathrm{o}/\mathrm{e},k} (x,t)+\sum\limits_{k=1}^n \mathbf{A}^-_{\mathrm{o}/\mathrm{e},k}(x,t)\mathbf{w}^{-}_{\mathrm{o}/\mathrm{e},j+k}(x,t)&=-\mathbf{w}^{-}_{\mathrm{o}/\mathrm{e},j}(x,t),\quad 0\leq j \leq n-1.
\end{align}
Eliminating $\mathbf{\Gamma}^+_{\mathrm{o}/\mathrm{e}}(x,t)$ and $\mathbf{\Gamma}^-_{\mathrm{o}/\mathrm{e}}(x,t)$ using \eqref{eq:gamma-matrices}, it is clear that 
equations \eqref{eq:hom-A}, \eqref{eq:hom-B}, \eqref{eq:inhom-A-Gamma}, and \eqref{eq:inhom-B-Gamma} 
constitute a square inhomogeneous linear system of dimension $8n\times 8n$ governing the $8n$ entries of the matrix coefficients in the partial fraction expansion \eqref{eq:Pi-partial-fractions}.  
\end{itemize}
 
According to \eqref{eq:NLS-recovery} the solution of the NLS equation \eqref{eq:NLS} stemming from \rhref{rhp:M-Rogue-Waves} is
\begin{equation}
\begin{split}
\psi(x,t)&=2\I\lim_{\lambda\to\infty}\lambda M^{[n]}_{\mathrm{o/e},12}(\lambda;x,t)\\
&=2\I\lim_{\lambda\to\infty}\lambda\left(\mathbf{\Pi}_{\mathrm{o}/\mathrm{e}}(\lambda;x,t)\mathbf{M}^{[0]}(\lambda;x,t)\right)_{12}\\
&=1+2\I \left(\mathbf{A}^+_{\mathrm{o}/\mathrm{e},1}(x,t)+\mathbf{A}^-_{\mathrm{o}/\mathrm{e},1}(x,t)\right)_{12}
\end{split}
\end{equation}
where we have used that the second column of $\mathbf{M}^{[0]}(\lambda;x,t)=\mathbf{E}(\lambda)$ has the expansion $[0; 1]^\top + \lambda^{-1}[2\I; 0]^\top + O(\lambda^{-2})$ as $\lambda\to\infty$.  Therefore it is sufficient to solve for the first row of the matrices in \eqref{eq:Pi-partial-fractions}, so we introduce row vectors
\begin{equation}
[ \begin{matrix} r_{\mathrm{o}/\mathrm{e},k}(x,t) & u_{\mathrm{o}/\mathrm{e},k}(x,t)\end{matrix}]\defeq [ \begin{matrix} 1 & 0\end{matrix}]\mathbf{A}^+_{\mathrm{o}/\mathrm{e},k}(x,t), \quad [ \begin{matrix} s_{\mathrm{o}/\mathrm{e},k}(x,t) & v_{\mathrm{o}/\mathrm{e},k}(x,t)\end{matrix}]\defeq [ \begin{matrix} 1 & 0\end{matrix}]\mathbf{A}^-_{\mathrm{o}/\mathrm{e},k}(x,t)
\end{equation}
for $1\leq k \leq n$, and the system of equations \eqref{eq:hom-A}, \eqref{eq:hom-B}, \eqref{eq:inhom-A-Gamma}, and \eqref{eq:inhom-B-Gamma} results in a square inhomogeneous linear system for the vector unknown
\begin{equation}
\mathbf{y}=[\begin{matrix}r_{\mathrm{o}/\mathrm{e},n}&s_{\mathrm{o}/\mathrm{e},n}&u_{\mathrm{o}/\mathrm{e},n}&v_{\mathrm{o}/\mathrm{e},n} \cdots r_{\mathrm{o}/\mathrm{e},1}&s_{\mathrm{o}/\mathrm{e},1}&u_{\mathrm{o}/\mathrm{e},1}&v_{\mathrm{o}/\mathrm{e},1} \end{matrix}]^\top\in\mathbb{C}^{4n},
\end{equation}
with a coefficient matrix that consists of $2n$ $2\times 2$ blocks:
\setcounter{MaxMatrixCols}{20}
\begin{equation}
\mathbf{R}_{\mathrm{o}/\mathrm{e}}\defeq 
\begin{bsmallmatrix}
\mathbf{F}_0^{[1]}& \mathbf{F}_0^{[2]} & \mathbf{0} & \mathbf{0} &    \cdots   & \cdots &  \cdots  & \mathbf{0}\\
\mathbf{F}_1^{[1]}& \mathbf{F}_1^{[2]}& \mathbf{F}_1^{[0]}& \mathbf{F}_0^{[2]} & \mathbf{0}    & \cdots &  \cdots &\mathbf{0}\\
\vdots &&&&&&&\vdots\\
\mathbf{F}_{n-1}^{[1]} &\mathbf{F}_{n-1}^{[2]} &\mathbf{F}_{n-2}^{[1]} &\mathbf{F}_{n-2}^{[2]} & \cdots & \cdots & \mathbf{F}_{0}^{[1]} & \mathbf{F}_{0}^{[2]}\\
\mathbf{F}_{n}^{[1]}+\mathbf{H}^{[1]}_{0,n}&\mathbf{F}_{n}^{[2]}+\mathbf{H}^{[2]}_{0,n} & \mathbf{F}_{n-1}^{[1]}+\mathbf{H}^{[1]}_{0,n-1}&\mathbf{F}_{n-1}^{[2]}+\mathbf{H}^{[2]}_{0,n-1}&\cdots&\cdots  &\mathbf{F}_{1}^{[1]}+\mathbf{H}^{[1]}_{0,1}&\mathbf{F}_{1}^{[2]}+\mathbf{H}^{[2]}_{0,1}\\
\mathbf{F}_{n+1}^{[1]}+\mathbf{H}^{[1]}_{1,n}&\mathbf{F}_{n+1}^{[2]}+\mathbf{H}^{[2]}_{1,n} & \mathbf{F}_{n}^{[1]}+\mathbf{H}^{[1]}_{1,n-1}&\mathbf{F}_{n}^{[2]}+\mathbf{H}^{[2]}_{1,n-1}&\cdots&\cdots  &\mathbf{F}_{2}^{[1]}+\mathbf{H}^{[1]}_{1,1}&\mathbf{F}_{2}^{[2]}+\mathbf{H}^{[2]}_{1,1}\\
\mathbf{F}_{n+2}^{[1]}+\mathbf{H}^{[1]}_{2,n}&\mathbf{F}_{n+2}^{[2]}+\mathbf{H}^{[2]}_{2,n} & \mathbf{F}_{n+1}^{[1]}+\mathbf{H}^{[1]}_{2,n-1}&\mathbf{F}_{n+1}^{[2]}+\mathbf{H}^{[2]}_{2,n-1}&\cdots  &\cdots &\mathbf{F}_{n-1}^{[1]}+\mathbf{H}^{[1]}_{2,1}&\mathbf{F}_{n-1}^{[2]}+\mathbf{H}^{[2]}_{2,1}\\
\vdots &&&&&&&\vdots\\
\mathbf{F}_{2n-1}^{[1]}+\mathbf{H}^{[1]}_{n-1,n}&\mathbf{F}_{2n-1}^{[2]}+\mathbf{H}^{[2]}_{n-1,n-1} & \mathbf{F}_{2n-2}^{[1]}+\mathbf{H}^{[1]}_{n-1,n-1}&\mathbf{F}_{2n-2}^{[2]}+\mathbf{H}^{[2]}_{n-1,n-1}&\cdots&\cdots   &\mathbf{F}_{n}^{[1]}+\mathbf{H}^{[1]}_{n-1,1}&\mathbf{F}_{n}^{[2]}+\mathbf{H}^{[2]}_{n-1,1}
\end{bsmallmatrix},
\label{eq:lin-sys-coeff-mat}
\end{equation}
where we momentarily suppressed the subscript $\mathrm{o}/\mathrm{e}$ for the block elements. The blocks of $\mathbf{R}=\mathbf{R}(x,t)$ are defined in terms of the following $2\times 2$ matrices:
\begin{align}
\mathbf{F}^{[j]}_k &=\mathbf{F}^{[j]}_k(x,t)\defeq \begin{bmatrix} \left(\mathbf{w}^{+}_{\mathrm{e}/\mathrm{o},k}(x,t)\right)_j & 0 \\ 0 &\left(\mathbf{w}^{-}_{\mathrm{e}/\mathrm{o},k}(x,t)\right)_j \end{bmatrix} \eqdef \begin{bmatrix}{f}^{[j]+}_{\mathrm{e}/\mathrm{o},k}(x,t) & 0 \\ 0 & f^{[j]-}_{\mathrm{e}/\mathrm{o},k}(x,t)\end{bmatrix},\quad j=1,2,\\
\mathbf{H}^{[j]}_{m,k} &=\mathbf{H}^{[j]}_{m,k}(x,t) \defeq\begin{bmatrix} 0 & \sum\limits_{\ell=0}^m \gamma_{\ell m} f^{[j]+}_{\mathrm{e}/\mathrm{o},{m-\ell}}(x,t) \\
 \sum\limits_{\ell=0}^m (-1)^{\ell+m}\gamma_{\ell m} f^{[j]-}_{\mathrm{e}/\mathrm{o},{m-\ell}}(x,t) & 0
\end{bmatrix},\quad j=1,2.
\end{align}
Thus, the first row of the solution of Riemann-Hilbert problem~\ref{rhp:M-Rogue-Waves} can be obtained by solving the linear system
\begin{equation}
\mathbf{R}_{\mathrm{e}/\mathrm{o}}(x,t)\mathbf{y}(x,t) = \begin{bmatrix} 0\\  \vdots \\ 0 \\  -f^{[1]+}_{\mathrm{e}/\mathrm{o},0}(x,t) \\ -f^{[1]-}_{\mathrm{o}/\mathrm{e},0}(x,t)\\-f^{[1]+}_{\mathrm{e}/\mathrm{o},1}(x,t) \\ -f^{[1]-}_{\mathrm{e}/\mathrm{o},1}(x,t) \\ \vdots \\-f^{[1]+}_{\mathrm{e}/\mathrm{o},{n-1}}(x,t) \\ -f^{[1]-}_{\mathrm{e}/\mathrm{o},{n-1}}(x,t) \end{bmatrix},
\label{eq:lin-sys}
\end{equation}
and consequently the rogue wave solutions of arbitrary order are recovered by Cramer's rule:
\begin{align}
\psi_{\mathrm{P}_{2n-1}}(x,t) &=1+2\I\left(u_{\mathrm{o},1}(x,t)+v_{\mathrm{o},1}(x,t)\right)=1+ 2\I\frac{\det(\mathbf{R}_{\mathrm{o},2}(x,t))+\det(\mathbf{R}_{\mathrm{o},1}(x,t))}{\det(\mathbf{R}_\mathrm{o}(x,t))},\\
\psi_{\mathrm{P}_{2n}}(x,t) &=1+2\I\left(u_{\mathrm{e},1}(x,t)+v_{\mathrm{e},1}(x,t)\right)=1+ 2\I\frac{\det(\mathbf{R}_{\mathrm{e},2}(x,t))+\det(\mathbf{R}_{\mathrm{e},1}(x,t))}{\det(\mathbf{R}_\mathrm{e}(x,t))}.
\end{align}
\begin{figure}[ht!]
  \includegraphics[width=\textwidth]{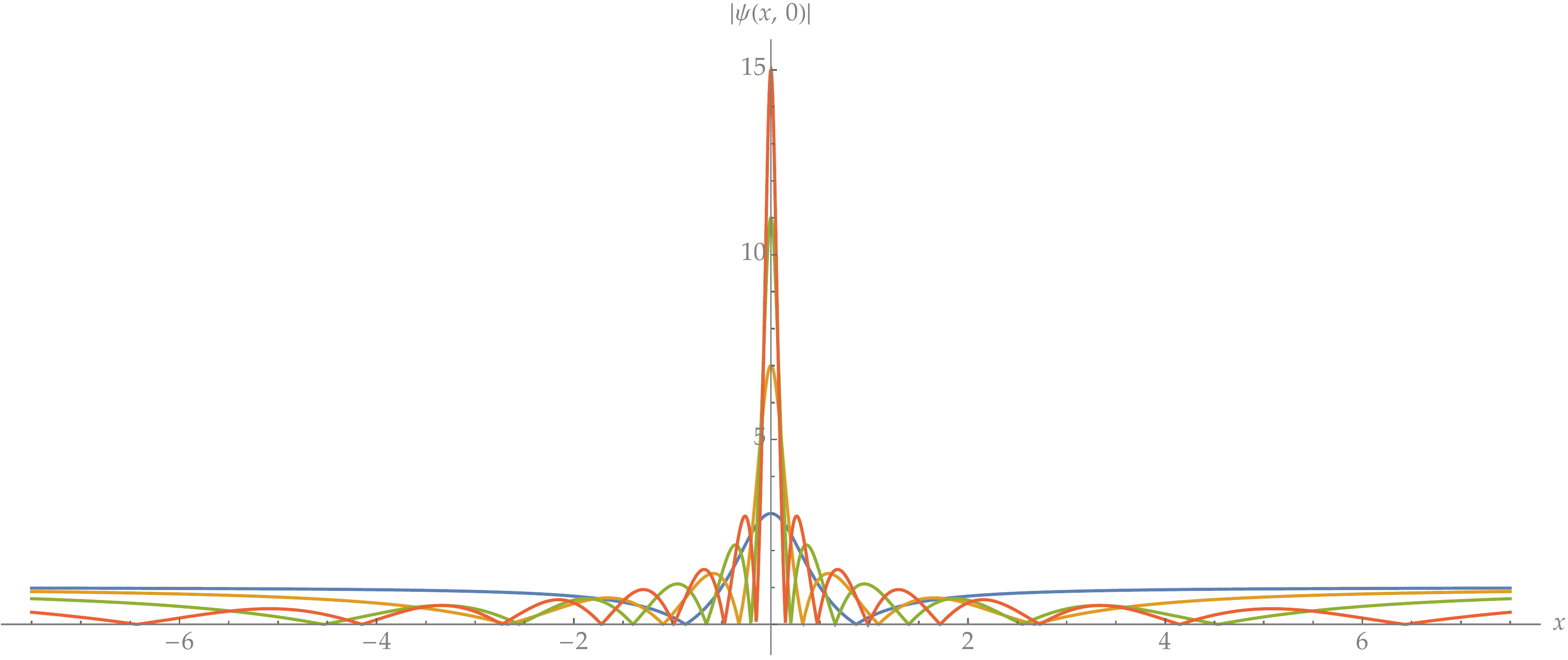}
  \caption{Moduli of the first four ``odd-order'' rogue waves: $|\psi_{\mathrm{P}_1}(x,0)|$ (blue), $|\psi_{\mathrm{P}_3}(x,0)|$ (orange), $|\psi_{\mathrm{P}_5}(x,0)|$ (green), and $|\psi_{\mathrm{P}_7}(x,0)|$ (red).}
  \label{f:odd-rogue-waves}
\end{figure}
Here $\mathbf{R}_{\mathrm{o}/\mathrm{e},k}(x,t)$ stands for the matrix $\mathbf{R}_{\mathrm{o}/\mathrm{e}}(x,t)$ with its $k^{\mathrm{th}}$ column replaced by the right-hand side vector in \eqref{eq:lin-sys}. For $n=1$, the system \eqref{eq:lin-sys} is $4\times 4$ and it can be solved by hand to obtain the Peregrine breather with its peak located at $(0,0)$. As $n$ grows, the complexity of the system grows rapidly and we solve both of the systems $\mathbf{R}_{\mathrm{o}/\mathrm{e},k}$ symbolically by computer for $n=1,2,3,4$. See Figure~\ref{f:odd-rogue-waves} for the ``odd-order'' rogue-wave solutions $\psi_{\mathrm{P}_{2n-1}}(x,t)$ of the NLS equation obtained by solving the linear system of \eqref{eq:lin-sys} with $\mathbf{R}(x,t)=\mathbf{R}_\mathrm{o}(x,t)$ and see Figure~\ref{f:even-rogue-waves} for the ``even-order'' rogue wave solutions $\psi_{\mathrm{P}_{2n}}(x,t)$ of the NLS equation in the case $\mathbf{R}(x,t)=\mathbf{R}_\mathrm{e}(x,t)$. See Figure~\ref{f:8th-rogue-wave} for a surface plot of $|\psi_8(x,t)|$ (corresponding to the linear system with $n=4$ and $\mathbf{R}(x,t)=\mathbf{R}_\mathrm{e}(x,t)$).
\begin{figure}[ht!]
  \includegraphics[width=\textwidth]{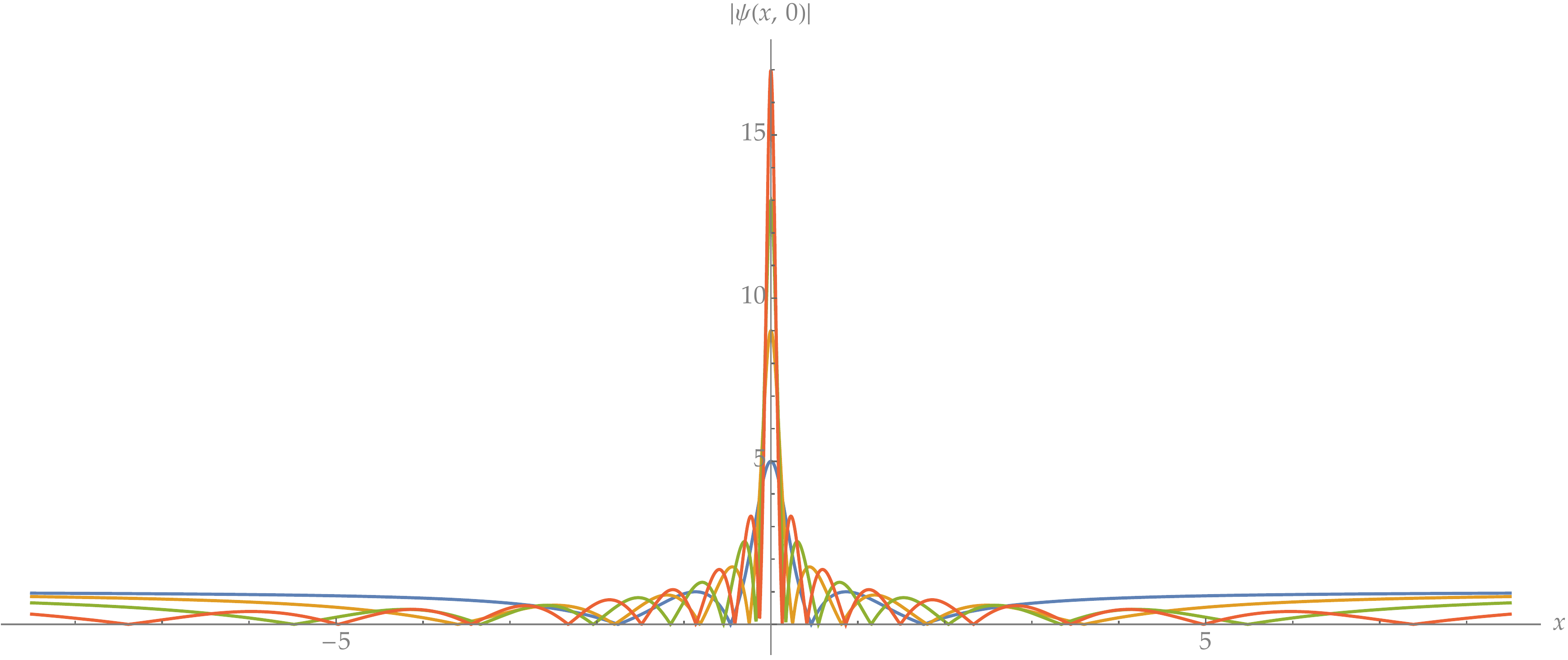}
  \caption{Moduli of the first four ``even-order'' rogue waves: $|\psi_{\mathrm{P}_2}(x,0)|$ (blue), $|\psi_{\mathrm{P}_4}(x,0)|$ (orange), $|\psi_{\mathrm{P}_6}(x,0)|$ (green), and $|\psi_{\mathrm{P}_8}(x,0)|$ (red).}
    \label{f:even-rogue-waves}
\end{figure}
\begin{figure}
  \includegraphics{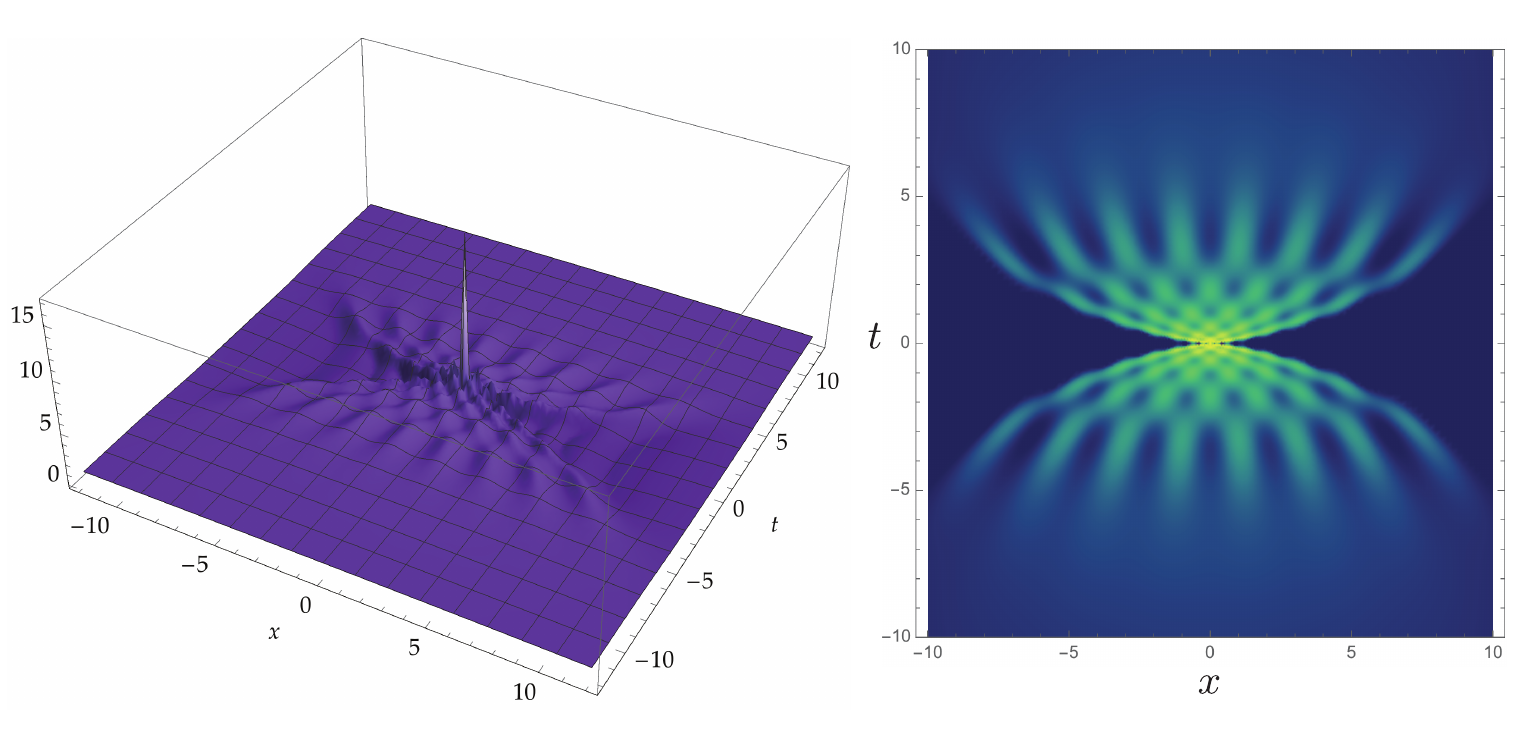}
  \caption{Modulus of the $8^{\mathrm{th}}$ order rogue wave: $|\psi_{\mathrm{P}_8}(x,t)|$, obtained from the solution of Riemann-Hilbert problem~\ref{rhp:M-Rogue-Waves} with $n=4$.  The pattern evident in the density plot (right-hand panel) is reminiscent of the wave patterns typical in solutions of the focusing NLS equation in the semiclassical limit \cite{MillerL2007}.}
  \label{f:8th-rogue-wave}
\end{figure}

\section{Linearization of the Direct and Inverse Scattering Transforms and Application to the Peregrine Solution}
\label{s:Linearization}

\subsection{Complex NLS and linearization preliminaries}
Consider the complex nonlinear Schr\"odinger (cNLS) for the scalar unknown functions $(\psi,\phi)$
\begin{equation}
  \begin{aligned}
    \I\psi_t +\frac{1}{2}\psi_{xx} +(-\psi\phi-1)\psi&=0\\
    \I\phi_t -\frac{1}{2}\phi_{xx} -(-\psi\phi-1)\phi&=0
    \label{eq:cNLS}
  \end{aligned}
\end{equation}
In case the focusing symmetry $\phi=-\psi^*$ holds, \eqref{eq:cNLS} reduces to the focusing nonlinear Schr\"odinger (NLS) equation \eqref{eq:NLS}.
Consider the formal linearization of \eqref{eq:cNLS} around a solution $(\psi_0,\phi_0)$ by considering solutions $(\psi,\phi)$ that are of the form:
\begin{equation}
  \begin{aligned}
  \psi&=\psi_0 +\varepsilon\psi_1+o(\varepsilon),\\
  \phi&=\phi_0+\varepsilon\phi_1+o(\varepsilon),\quad0<\varepsilon\ll 1.
  \end{aligned}
  \label{eq:complex-pert}
\end{equation}
Enforcing \eqref{eq:complex-pert} to solve \eqref{eq:cNLS} and retaining $O(\varepsilon)$ terms yields the following linear partial differential equations for the perturbation functions $(\psi_1,\phi_1)$:
\begin{equation}
  \begin{aligned}
    \I\psi_{1t} +\frac{1}{2}\psi_{1xx} +(-2\psi_0\phi_0-1)\psi_1 -\psi_0^2\phi_1&=0\\
    \I\phi_{1t} -\frac{1}{2}\phi_{1xx} -(-2\psi_0\phi_0-1)\phi_1 +\phi_0^2\psi_1&=0,
    \label{eq:lcNLS}
  \end{aligned}
\end{equation}
which we will refer to as the linearized complex NLS equation (lcNLS), and in case the focusing symmetry holds for the unperturbed fields ($\phi_0=-\psi_0^*$) and first-order perturbations ($\phi_1=-\psi_1^*$) \eqref{eq:lcNLS} reduces to the linearized NLS (lNLS) equation:
\begin{equation}
  \begin{aligned}
    \I\psi_{1t} +\frac{1}{2}\psi_{1xx} +(2|\psi_0|^2-1)\psi_1 +\psi_0^2\psi_1^*=0.
    \label{eq:lNLS}
  \end{aligned}
\end{equation}
\subsection{Squared eigenfunctions}
Let $\zeta$ be an arbitrary complex number, and
let $\mathbf{u}=\mathbf{u}^{[a]}(\zeta;x,t)$ and $\mathbf{u}=\mathbf{u}^{[b]}(\zeta;x,t)$ be any two (not necessarily independent or distinct) $2\times 1$ column vector simultaneous solutions  of the Lax pair associated with with the solutions $(\psi_0,\phi_0)$ of the cNLS equation for the same value of the spectral parameter $\lambda=\zeta$:
\begin{equation}
  \begin{aligned}
  \mathbf{u}_x&=\begin{bmatrix}-\I \lambda & \psi_0 \\ \phi_0 & \I\lambda\end{bmatrix}\mathbf{u}\\
  \mathbf{u}_t&=\begin{bmatrix}-\I \lambda^2 + \frac{\I}{2}(-\psi_0\phi_0-1)& \lambda\psi_0 + \frac{\I}{2}\psi_{0x} \\ \lambda\phi_0- \frac{\I}{2}\phi_{0x}  & \I \lambda^2 - \frac{\I}{2}(-\psi_0\phi_0-1)\end{bmatrix}\mathbf{u}.
\end{aligned}
\label{eq:lax-cNLS}
\end{equation}
Define the corresponding \emph{squared eigenfunctions}
\begin{equation}
  \begin{aligned}
    \mu_1^{[ab]}(\zeta;x,t)&\defeq u_1^{[a]}(\zeta;x,t)u^{[b]}_1(\zeta;x,t)\\
    \mu_2^{[ab]}(\zeta;x,t)&\defeq u_1^{[a]}(\zeta;x,t)u_2^{[b]}(\zeta;x,t)+u_2^{[a]}(\zeta;x,t)u_1^{[b]}(\zeta;x,t)\\
    \mu_3^{[ab]}(\zeta;x,t)&\defeq u_2^{[a]}(\zeta;x,t)u_2^{[b]}(\zeta;x,t).
  \end{aligned}
  \label{eq:sef}
\end{equation}
These squared eigenfunctions $\mu_j=\mu_j^{[ab]}$, $j=1,2,3$, satisfy the $3\times 3$ Lax pair:
\begin{equation}
  \begin{aligned}
  \frac{\partial}{\partial x}\begin{bmatrix}\mu_1\\\mu_2\\\mu_3\end{bmatrix}&=
  \begin{bmatrix}
    -2\I\lambda &\psi_0 & 0\\
    2\phi_0 & 0 & 2\psi_0\\
    0 & \phi_0 & 2\I\lambda
  \end{bmatrix}
  \begin{bmatrix}\mu_1\\\mu_2\\\mu_3\end{bmatrix}\\
  \frac{\partial}{\partial t}\begin{bmatrix}\mu_1\\\mu_2\\\mu_3\end{bmatrix}&=
  \begin{bmatrix}
    -2\I\lambda^2+\I(-\psi_0\phi_0-1) &\lambda\psi_0+\frac{\I}{2}\psi_{0x} & 0\\
    2\lambda\phi_0-\I\phi_{0x} & 0 & 2\lambda\psi_0 + \I\psi_{0x}\\
    0 & \lambda\phi_0-\frac{\I}{2}\phi_{0x} & 2\I\lambda^2-\I(-\psi_0\phi_0-1)
  \end{bmatrix}\begin{bmatrix}\mu_1\\\mu_2\\\mu_3\end{bmatrix}
\end{aligned}
\label{eq:3x3-lax-cNLS}
\end{equation}
for the spectral parameter evaluated at $\lambda=\zeta$.
Using the differential equations \eqref{eq:3x3-lax-cNLS} it follows that for any $\zeta\in\mathbb{C}$ the squared eigenfunctions $(\mu_1^{[ab]}(\zeta;x,t), \mu_3^{[ab]}(\zeta;x,t))$ satisfy the following partial differential equations in which the spectral parameter does not appear explicitly:
\begin{equation}
  \begin{aligned}
    \I\mu^{[ab]}_{1t} +\frac{1}{2}\mu^{[ab]}_{1xx} +(-2\psi_0\phi_0-1)\mu^{[ab]}_1 -\psi_0^2\mu^{[ab]}_3&=0\\
    \I\mu^{[ab]}_{3t} -\frac{1}{2}\mu^{[ab]}_{3xx} -(-2\psi_0\phi_0-1)\mu^{[ab]}_3 +\phi_0^2\mu^{[ab]}_1&=0,
  \end{aligned}
\end{equation}
which is precisely the lcNLS given in \eqref{eq:lcNLS} with $\psi_1\defeq \mu_1^{[ab]}(\zeta;x,t)$ and $\phi_1\defeq \mu_3^{[ab]}(\zeta;x,t)$. Explicitly, we have the lcNLS written as the linear system:
\begin{equation}
  \begin{bmatrix} \I \frac{\partial}{\partial t} + \frac{1}{2}\frac{\partial^2}{\partial x^2} -2\psi_0\phi_0 -1 & -\psi_0^2 \\ \phi_0^2 & \I \frac{\partial}{\partial t} - \frac{1}{2}\frac{\partial^2}{\partial x^2} +2\psi_0\phi_0 +1
  \end{bmatrix}
  \begin{bmatrix}
    \mu^{[ab]}_1(\zeta;x,t) \\ \mu^{[ab]}_3(\zeta;x,t)
  \end{bmatrix}=\begin{bmatrix}0\\0\end{bmatrix}.
\end{equation}

Now suppose that $\phi_0(x,t)=-\psi_0(x,t)^*$.
Unfortunately, the ordered pair of squared eigenfunctions $(\psi_1(x,t),\phi_1(x,t))=(\mu^{[ab]}_1(\zeta;x,t),\mu^{[ab]}_3(\zeta;x,t))$ does \underline{not} necessarily satisfy the focusing symmetry: $ \mu^{[ab]}_3(\zeta;x,t) = -\mu^{[ab]}_1(\zeta;x,t)^* $ needed to reduce \eqref{eq:lcNLS} to \eqref{eq:lNLS}, but this can be remedied by applying the superposition principle making use of different values of the spectral parameter. Indeed, the focusing symmetry $\phi_0(x,t)=-\psi_0(x,t)^*$ of the unperturbed problem implies that the vectors
\begin{equation}
  \mathbf{u}^{[j]}(\zeta^*;x,t):=\begin{bmatrix}-u_2^{[j]}(\zeta;x,t)^*\\u_1^{[j]}(\zeta;x,t)^*\end{bmatrix},\quad j=a,b,
\end{equation}
are simultaneous solutions of \eqref{eq:lax-cNLS} for the spectral parameter $\lambda=\zeta^*$.  Denoting the corresponding squared eigenfunctions defined via \eqref{eq:sef} from these two vectors as $\mu^{[ab]}_k(\zeta^*;x,t)$, $k=1,2,3$, it follows easily that
\begin{equation}
\mu^{[ab]}_1(\zeta^*;x,t)=\mu^{[ab]}_3(\zeta;x,t)^*\quad\text{and}\quad\mu^{[ab]}_3(\zeta^*;x,t)=\mu^{[ab]}_1(\zeta;x,t)^*.
\end{equation}
Now since the spectral parameter does not appear explicitly in the linear system \eqref{eq:lcNLS}, it is solved by both $(\psi_1,\phi_1)=(\mu^{[ab]}_1(\zeta;x,t),\mu^{[ab]}_3(\zeta;x,t))$ and $(\psi_1,\phi_1)=(\mu^{[ab]}_1(\zeta^*;x,t),\mu^{[ab]}_3(\zeta^*;x,t))$, and hence by  superposition
  \begin{equation}
  \begin{split}
      (\psi_1(x,t),\phi_1(x,t))&\defeq \left(\mu^{[ab]}_1(\zeta;x,t)-\mu^{[ab]}_1(\zeta^*;x,t), \mu^{[ab]}_3(\zeta;x,t) - \mu^{[ab]}_3(\zeta^*;x,t)\right)\\ &=
      \left(\mu^{[ab]}_1(\zeta;x,t)-\mu^{[ab]}_3(\zeta;x,t)^*,\mu^{[ab]}_3(\zeta;x,t)-\mu^{[ab]}_1(\zeta;x,t)^*\right)
      \end{split}
  \end{equation}
is a solution of the system \eqref{eq:lcNLS} for each $\zeta\in\mathbb{C}$ that satisfies the focusing symmetry $\phi_1(x,t)=-\psi_1(x,t)^*$. Therefore
\begin{equation}
\begin{split}
  \psi_1\defeq \mu^{[ab]}(\zeta;x,t)&\defeq \mu^{[ab]}_1(\zeta;x,t)-\mu^{[ab]}_3(\zeta;x,t)^*\\
  &=u_1^{[a]}(\zeta;x,t)u_1^{[b]}(\zeta;x,t)-u_2^{[a]}(\zeta;x,t)^*u_2^{[b]}(\zeta;x,t)^*
  \end{split}
  \label{eq:psi-1-first}
\end{equation}
solves the linearized focusing NLS equation \eqref{eq:lNLS} whenever $\mathbf{u}^{[j]}(\zeta;x,t)$, $j=a,b$, are any two simultaneous solutions of \eqref{eq:lax-cNLS} with $\phi_0(x,t)=-\psi_0(x,t)^*$ for the same arbitrary value $\zeta$ of the spectral parameter.  Note that if the vectors $\mathbf{u}^{[j]}(\zeta;x,t)$, $j=a,b$, are multiplied by scalar factors $c^{[j]}(\zeta)$, $j=a,b$, whose product is $C(\zeta)=c^{[a]}(\zeta)c^{[b]}(\zeta)$, then the formula \eqref{eq:psi-1-first} is transformed into
\begin{equation}
\psi_1(x,t)=C(\zeta)u_1^{[a]}(\zeta;x,t)u_1^{[b]}(\zeta;x,t)-C(\zeta)^*u_2^{[a]}(\zeta;x,t)^*u_2^{[b]}(\zeta;x,t)^*,
\label{eq:psi-1-general}
\end{equation}
which again is a solution of the linearized focusing NLS equation \eqref{eq:lNLS} regardless of the value of $C(\zeta)\in\mathbb{C}$.  Writing $C(\zeta)=\alpha(\zeta)+\I \beta(\zeta)$ for real numbers $\alpha(\zeta)$ and $\beta(\zeta)$, the solution \eqref{eq:psi-1-general} can be written as the real linear combination $\alpha(\zeta)\mu^{[ab]}(\zeta;x,t) + \beta(\zeta)\nu^{[ab]}(\zeta;x,t)$ where $\mu(\zeta;x,t)$ is defined by \eqref{eq:psi-1-first} and where
\begin{equation}
\nu^{[ab]}(\zeta;x,t):=\I u_1^{[a]}(\zeta;x,t)u_1^{[b]}(\zeta;x,t)+\I u_2^{[a]}(\zeta;x,t)^*u_2^{[b]}(\zeta;x,t)^*.
\label{eq:nu-def}
\end{equation}
Further solutions of the linearized NLS equation \eqref{eq:lNLS} can be obtained by taking real linear combinations of $\mu^{[ab]}(\zeta;x,t)$ and $\nu^{[ab]}(\zeta;x,t)$ for different complex values of $\zeta$.

A natural question that arises is that of completeness of the set of particular solutions of \eqref{eq:lNLS} of the type indicated above in which $\zeta$ is allowed to range over some subset of the complex numbers.  What class of solutions of \eqref{eq:lNLS} can be represented as suitable finite or infinite superpositions of them?

\subsection{Formal linearization of the inverse-scattering transform}
A systematic way to address the completeness issue is to analyze the solution of the \emph{nonlinear} problem with initial data $\psi(x,0)=\psi_0(x,0)+\varepsilon\psi_1(x,0)$ obtained via the inverse-scattering transform.  This approach was used by Kaup \cite{Kaup1976b} to study localized perturbations of solutions decaying to the zero background.  Here we use the robust inverse-scattering transform introduced in Section~\ref{s:New-IST} and consider localized perturbations of solutions decaying to the nonzero background value $\psi\equiv 1$.

\subsubsection{Linearizing the direct transform}
Let $\psi_0(x,t)$ be a solution of \eqref{eq:NLS} for which $\psi_0-1\in L^1(\mathbb{R})$ as a function of $x$ for each $t\in\mathbb{R}$.
Suppose that $\psi_1(x)$ is a sufficiently-localized perturbation. 
For $\varepsilon>0$ small, we consider the initial data $\psi(x,0;\varepsilon)=\psi_0(x,0)+\varepsilon\psi_1(x)$.  For sufficiently large radius $r>0$ (independent of $\varepsilon$) we calculate the ``core'' jump matrices
\begin{equation}
\mathbf{V}^+(\lambda;\varepsilon):=\left[a(\lambda;\varepsilon)^{-1}\mathbf{j}^{-,1}(\lambda;0,0;\varepsilon); \,\mathbf{j}^{+,2}(\lambda;0,0;\varepsilon)\right],\quad \Im\{\lambda\}>0,\quad |\lambda|=r,
\label{eq:Vplus-Linearization}
\end{equation}
\begin{equation}
\mathbf{V}^-(\lambda;\varepsilon):=\left[\mathbf{j}^{+,1}(\lambda;0,0;\varepsilon);\,a(\lambda^*;\varepsilon)^{-*}\mathbf{j}^{-,2}(\lambda;0,0;\varepsilon)\right],\quad\Im\{\lambda\}<0,\quad |\lambda|=r,
\end{equation}
and
\begin{equation}
\mathbf{V}^{\mathrm{out}}(\lambda;\varepsilon):=\begin{bmatrix}1+|R(\lambda;\varepsilon)|^2 & R(\lambda;\varepsilon)^*\\R(\lambda;\varepsilon) & 1\end{bmatrix},\quad R(\lambda;\varepsilon):=\frac{b(\lambda;\varepsilon)}{a(\lambda;\varepsilon)},\quad \lambda\in\mathbb{R},\quad |\lambda|>r,
\end{equation}
where $\mathbf{j}^{\pm,k}(\lambda;0,0;\varepsilon)$ denote the columns of the Jost matrices for the potential $\psi(x,0;\varepsilon)=\psi_0(x,0)+\varepsilon\psi_1(x)$ evaluated at $x=t=0$ and where $a(\lambda;\varepsilon)$ and $b(\lambda;\varepsilon)$ refer to the associated scattering matrix.  More to the point, the Jost column vectors are the unique solutions of the following Volterra integral equations:  $\mathbf{j}^{\pm,1}(\lambda;x,0;\varepsilon)=\E^{-\I\rho(\lambda)x}\mathbf{k}^{\pm,1}(\lambda;x,0;\varepsilon)$, where
\begin{multline}
\mathbf{k}^{\pm,1}(\lambda;x,0;\varepsilon)-\mathscr{V}^{\pm,1}[\psi_0(\cdot,0)-1]\mathbf{k}^{\pm,1}(\lambda;x,0;\varepsilon)
=\mathbf{e}^1(\lambda) +\varepsilon \mathscr{V}^{\pm,1}[\psi_1]\mathbf{k}^{\pm,1}(\lambda;x,0;\varepsilon), \\
  \mp\Im\{\rho(\lambda)\}\ge 0,
\label{eq:kpm1-epsilon}
\end{multline}
where $\mathscr{V}^{\pm,1}[\psi]$ denote Volterra integral operators
\begin{equation}
\mathscr{V}^{\pm,1}[\psi]\mathbf{f}(x):=\int_{\pm\infty}^x\mathbf{E}(\lambda)
\begin{bmatrix}1 & 0\\0 & \E^{2\I\rho(\lambda)(x-y)}\end{bmatrix}\mathbf{E}(\lambda)^{-1}\begin{bmatrix}0 & \psi(y)\\-\psi(y)^* & 0\end{bmatrix}\mathbf{f}(y)\D y,\quad \mp\Im\{\rho(\lambda)\}\ge 0,
\end{equation}
and $\mathbf{j}^{\pm,2}(\lambda;x,0;\varepsilon)=\E^{\I\rho(\lambda)x}\mathbf{k}^{\pm,2}(\lambda;x,0;\varepsilon)$, where
\begin{multline}
\mathbf{k}^{\pm,2}(\lambda;x,0;\varepsilon)-\mathscr{V}^{\pm,2}[\psi_0(\cdot,0)-1]\mathbf{k}^{\pm,2}(\lambda;x,0;\varepsilon)
=\mathbf{e}^2(\lambda)+ \varepsilon\mathscr{V}^{\pm,2}[\psi_1]\mathbf{k}^{\pm,2}(\lambda;x,0;\varepsilon), \\  \pm\Im\{\rho(\lambda)\}\ge 0,
\label{eq:kpm2-epsilon}
\end{multline}
where $\mathscr{V}^{\pm,2}[\psi]$ denote Volterra operators
\begin{equation}
\mathscr{V}^{\pm,2}[\psi]\mathbf{f}(x):=\int_{\pm\infty}^x\mathbf{E}(\lambda)
\begin{bmatrix}\E^{-2\I\rho(\lambda)(x-y)} & 0\\0 & 1\end{bmatrix}\mathbf{E}(\lambda)^{-1}\begin{bmatrix}0 & \psi(y)\\
-\psi(y)^* & 0\end{bmatrix}\mathbf{f}(y)\D y,\quad\pm\Im\{\rho(\lambda)\}\ge 0.
\end{equation}
Under the assumption that $\psi\in L^1(\mathbb{R})$ and the indicated conditions on $\Im\{\rho(\lambda)\}$, the operators $\mathscr{V}^{\pm,j}[\psi]$, $j=1,2$, are bounded on $L^\infty(\mathbb{R})$, and also $\mathscr{I}-\mathscr{V}^{\pm,j}[\psi]$, $j=1,2$,  have bounded inverses on $L^\infty(\mathbb{R})$.  The corresponding operator norms $\|\mathscr{V}^{\pm,j}[\psi]\|$ and $\|(\mathscr{I}-\mathscr{V}^{\pm,j}[\psi])^{-1}\|$, $j=1,2$, are uniformly bounded with respect to $\lambda$ under the
indicated conditions on $\Im\{\rho(\lambda)\}$.  Thus, \eqref{eq:kpm1-epsilon} can be rewritten in the form
\begin{multline}
\mathbf{k}^{\pm,1}(\lambda;x,0;\varepsilon)-\varepsilon (\mathscr{I}-\mathscr{V}^{\pm,1}[\psi_0(\cdot,0)-1])^{-1}\circ\mathscr{V}^{\pm,1}[\psi_1]\mathbf{k}^{\pm,1}(\lambda;x,0;\varepsilon)=\mathbf{k}_0^{\pm,1}(\lambda;x,0),\\ \mp\Im\{\rho(\lambda)\}\ge 0,
\label{eq:kpm1-epsilon-rewrite}
\end{multline}
where
\begin{equation}
\mathbf{k}_0^{\pm,1}(\lambda;x,0):=(\mathscr{I}-\mathscr{V}^{\pm,1}[\psi_0(\cdot,0)-1])^{-1}\mathbf{e}^1(\lambda),\quad \mp\Im\{\rho(\lambda)\}\ge 0,\quad \rho(\lambda)\neq 0.
\label{eq:k0-pm-1}
\end{equation}
Similarly, \eqref{eq:kpm2-epsilon} can be rewritten as
\begin{multline}
\mathbf{k}^{\pm,2}(\lambda;x,0;\varepsilon)-\varepsilon(\mathscr{I}-\mathscr{V}^{\pm,2}[\psi_0(\cdot,0)-1])^{-1}\circ\mathscr{V}^{\pm,2}[\psi_1]\mathbf{k}^{\pm,2}(\lambda;x,0;\varepsilon)=\mathbf{k}_0^{\pm,2}(\lambda;x,0),\\
\pm\Im\{\rho(\lambda)\}\ge 0,
\label{eq:kpm2-epsilon-rewrite}
\end{multline}
where
\begin{equation}
\mathbf{k}_0^{\pm,2}(\lambda;x,0):=(\mathscr{I}-\mathscr{V}^{\pm,2}[\psi_0(\cdot,0)-1])^{-1}\mathbf{e}^2(\lambda),\quad \pm\Im\{\rho(\lambda)\}\ge 0,\quad \rho(\lambda)\neq 0.
\label{eq:k0-pm-2}
\end{equation}
In both cases, the inhomogeneous terms $\mathbf{k}_0^{\pm,j}(\lambda;x,0)$, $j=1,2$, lie in $L^\infty(\mathbb{R})$  (the additional condition $\rho(\lambda)\neq 0$ is required because otherwise $\mathbf{e}^j(\lambda)$ does not exist), and if $\rho(\lambda)$ is bounded away from zero, their norms are bounded uniformly with respect to $\lambda$.  It then follows that if $\varepsilon \|(\mathscr{I}-\mathscr{V}^{\pm,1}[\psi_0(\cdot,0)-1])^{-1}\|\cdot \|\mathscr{V}^{\pm,1}[\psi_1]\|<1$, then \eqref{eq:kpm1-epsilon-rewrite} can be solved by a Neumann series convergent in $L^\infty(\mathbb{R})$, which takes the form of a \emph{power series} in $\varepsilon$.  A similar result holds for \eqref{eq:kpm2-epsilon-rewrite} under the condition $\varepsilon \|(\mathscr{I}-\mathscr{V}^{\pm,2}[\psi_0(\cdot,0)-1])^{-1}\|\cdot \|\mathscr{V}^{\pm,2}[\psi_1]\|<1$.
This implies that the unique solutions furnished by the sums of the convergent Neumann series are analytic functions of $\varepsilon$ to $L^\infty(\mathbb{R})$ at $\varepsilon=0$.

Note that the inhomogeneous terms defined in \eqref{eq:k0-pm-1} and \eqref{eq:k0-pm-2} if also $\rho(\lambda)\neq 0$ are simply the solutions of the integral equations \eqref{eq:kpm1-epsilon} and \eqref{eq:kpm2-epsilon} respectively for $\varepsilon=0$, and therefore are proportional via the exponential factors $\E^{\pm\I\rho(\lambda)x}$ to the Jost solutions of the unperturbed problem.  We therefore have series representations:
\begin{multline}
\mathbf{j}^{\pm,1}(\lambda;x,0;\varepsilon)=\sum_{n=0}^\infty\varepsilon^n\mathbf{j}^{\pm,1}_n(\lambda;x,0),\quad \mathbf{j}^{\pm,1}_n(\lambda;x,0)=\E^{-\I\rho(\lambda)x}\mathbf{k}_n^{\pm,1}(\lambda;x,0),\\
 \mp\Im\{\rho(\lambda)\}\ge 0,\quad \rho(\lambda)\neq 0,
\label{eq:series1}
\end{multline}
and
\begin{multline}
\mathbf{j}^{\pm,2}(\lambda;x,0;\varepsilon)=\sum_{n=0}^\infty\varepsilon^n\mathbf{j}^{\pm,2}_n(\lambda;x,0),\quad\mathbf{j}^{\pm,2}_n(\lambda;x,0)=\E^{\I\rho(\lambda)x}\mathbf{k}_n^{\pm,2}(\lambda;x,0),\\\pm\Im\{\rho(\lambda)\}\ge 0,\quad\rho(\lambda)\neq 0,
\label{eq:series2}
\end{multline}
both convergent for sufficiently small $|\varepsilon|$, with the convergence interpreted in a weighted $L^\infty(\mathbb{R})$ space corresponding to unweighted $L^\infty(\mathbb{R})$ convergence for $\mathbf{k}^{\pm,j}(\lambda;x,0;\varepsilon)$, $j=1,2$.

We may now solve explicitly for the first-order correction terms $\mathbf{j}^{\pm,j}_1(\lambda;x,0)$.  These are given at first by the expressions
\begin{multline}
\mathbf{k}_1^{\pm,1}(\lambda;x,0)=(\mathscr{I}-\mathscr{V}^{\pm,1}[\psi_0(\cdot,0)-1])^{-1}\circ\mathscr{V}^{\pm,1}[\psi_1]\mathbf{k}_0^{\pm,1}(\lambda;x,0),\\ \mp\Im\{\rho(\lambda)\}\ge 0,\quad \rho(\lambda)\neq 0
\end{multline}
and
\begin{multline}
\mathbf{k}_1^{\pm,2}(\lambda;x,0)=(\mathscr{I}-\mathscr{V}^{\pm,2}[\psi_0(\cdot,0)-1])^{-1}\circ\mathscr{V}^{\pm,2}[\psi_1]\mathbf{k}_0^{\pm,2}(\lambda;x,0),\\\pm\Im\{\rho(\lambda)\}\ge 0,\quad\rho(\lambda)\neq 0.
\end{multline}
Multiplying these through by $\E^{-\I\rho(\lambda)x}(\mathscr{I}-\mathscr{V}^{\pm,1}[\psi_0(\cdot,0)-1])$ and $\E^{\I\rho(\lambda)x}(\mathscr{I}-\mathscr{V}^{\pm,2}[\psi_0(\cdot,0)-1])$ respectively, eliminating $\mathbf{k}_n^{\pm,j}(\lambda;x,0)$ in favor of $\mathbf{j}_n^{\pm,j}(\lambda;x,0)$, $j=1,2$, and combining the resulting equations with their derivatives with respect to $x$ yields in all cases the same differential equation
\begin{multline}
\frac{\partial\mathbf{j}^{\pm,j}_1}{\partial x}(\lambda;x,0)=\begin{bmatrix}-\I\lambda & \psi_0(x,0)\\-\psi_0(x,0)^* & \I\lambda\end{bmatrix}
\mathbf{j}^{\pm,j}_1(\lambda;x,0) + \begin{bmatrix}0 & \psi_1(x)\\-\psi_1(x)^* & 0\end{bmatrix}\mathbf{j}_0^{\pm,j}(\lambda;x,0),\\ j=1,2.
\end{multline}
The general solution of this equation may be obtained from a fundamental matrix for the homogeneous problem and variation of parameters.
To obtain $\mathbf{j}^{-,1}_1(\lambda;x,0)$ and $\mathbf{j}^{+,2}_1(\lambda;x,0)$ for $|\lambda|=r$ with $\Im\{\lambda\}>0$, we use the fundamental matrix $\mathbf{U}_{0}^\mathrm{BC}(\lambda;x,0)=\mathbf{M}_{0}^\mathrm{BC}(\lambda;x,0)\E^{-\I\rho(\lambda)x\sigma_3}$ which is given explicitly in terms of the Jost solutions of the unperturbed problem by
\begin{equation}
\mathbf{U}_{0}^\mathrm{BC}(\lambda;x,0)=\left[ a_0(\lambda)^{-1}\mathbf{j}^{-,1}_0(\lambda;x,0);\,\mathbf{j}^{+,2}_0(\lambda;x,0)\right], \quad |\lambda|=r,\quad\Im\{\lambda\}>0,
\end{equation}
where $a_0(\lambda)$ is the Wronskian determinant that ensures that $\det(\mathbf{U}_{0}^\mathrm{BC}(\lambda;x,0))=1$; note that $a_0(\lambda)\neq 0$ by choice of sufficiently large radius $r$.  We observe also that $\mathbf{V}^+(\lambda;0)=\mathbf{U}_{0}^\mathrm{BC}(\lambda;0,0)$.  Thus, the variation of parameters formula gives, if $|\lambda|=r$ and $\Im\{\lambda\}>0$,
\begin{equation}
\mathbf{j}_1^{-,1}(\lambda;x,0)=\mathbf{U}_{0}^\mathrm{BC}(\lambda;x,0)\int_{-\infty}^x\mathbf{U}_{0}^\mathrm{BC}(\lambda;y,0)^{-1}\begin{bmatrix}0 & \psi_1(y)\\-\psi_1(y)^* & 0\end{bmatrix}\mathbf{j}_0^{-,1}(\lambda;y,0)\,\D y,
\end{equation}
where the lower limit of integration is fixed by the requirement that $\mathbf{k}_1^{-,1}(\lambda;x,0)\to \mathbf{0}$ as $x\to-\infty$, and
\begin{equation}
\mathbf{j}_1^{+,2}(\lambda;x,0)=\mathbf{U}_{0}^\mathrm{BC}(\lambda;x,0)\int_{+\infty}^x\mathbf{U}_{0}^\mathrm{BC}(\lambda;y,0)^{-1}\begin{bmatrix}0 & \psi_1(y)\\
-\psi_1(y)^* & 0\end{bmatrix}\mathbf{j}_0^{+,2}(\lambda;y,0)\,\D y
\end{equation}
where the lower limit of integration is fixed by the requirement that $\mathbf{k}_1^{+,2}(\lambda;x,0)\to\mathbf{0}$ as $x\to+\infty$.  With the squared eigenfunctions  $\mu^{[jk]}_1(\lambda;x,t):=U_{0,1j}^\mathrm{BC}(\lambda;x,t)U_{0,1k}^\mathrm{BC}(\lambda;x,t)$ and
$\mu^{[jk]}_3(\lambda;x,t):=U_{0,2j}^\mathrm{BC}(\lambda;x,t)U_{0,2k}^\mathrm{BC}(\lambda;x,t)$, these formulae become
\begin{equation}
\mathbf{j}_1^{-,1}(\lambda;x,0)=a_0(\lambda)\mathbf{U}_{0}^\mathrm{BC}(\lambda;x,0)\int_{-\infty}^x\begin{bmatrix}\psi_1(y)\mu_3^{[12]}(\lambda;y,0) + \psi_1(y)^*\mu_1^{[12]}(\lambda;y,0)\\
-\psi_1(y)\mu_3^{[11]}(\lambda;y,0)-\psi_1(y)^*\mu_1^{[11]}(\lambda;y,0)\end{bmatrix}\,\D y
\label{eq:j1m1}
\end{equation}
and
\begin{equation}
\mathbf{j}_1^{+,2}(\lambda;x,0)=\mathbf{U}_{0}^\mathrm{BC}(\lambda;x,0)\int_{+\infty}^x\begin{bmatrix}
\psi_1(y)\mu_3^{[22]}(\lambda;y,0) +\psi_1(y)^*\mu_1^{[22]}(\lambda;y,0)\\
-\psi_1(y)\mu_3^{[12]}(\lambda;y,0)-\psi_1(y)^*\mu_1^{[12]}(\lambda;y,0)\end{bmatrix}\,\D y.
\label{eq:j1p2}
\end{equation}
Since $a(\lambda;\varepsilon):=\det(\mathbf{j}^{-,1}(\lambda;x,t;\varepsilon),\mathbf{j}^{+,2}(\lambda;x,t;\varepsilon))$, from the convergent series \eqref{eq:series1}--\eqref{eq:series2} we get that
\begin{equation}
a(\lambda;\varepsilon)=a_0(\lambda)+\varepsilon a_1(\lambda)+O(\varepsilon^2),\quad\varepsilon\to 0
\end{equation}
holds uniformly for $|\lambda|=r$ with $\Im\{\lambda\}\ge 0$ as well as for $\lambda\in\mathbb{R}$, $|\lambda|>r$, where for any $x\in\mathbb{R}$,
\begin{equation}
\begin{split}
a_1(\lambda)&=\det\left[\mathbf{j}_0^{-,1}(\lambda;x,0);\,\mathbf{j}_1^{+,2}(\lambda;x,0)\right] + \det\left[\mathbf{j}_1^{-,1}(\lambda;x,0);\,\mathbf{j}_0^{+,2}(\lambda;x,0)\right]\\
& = a_0(\lambda)\int_{-\infty}^{+\infty}(\psi_1(y)\mu_3^{[12]}(\lambda;y,0)+\psi_1(y)^*\mu_1^{[12]}(\lambda;y,0))\,\D y.
\end{split}
\label{eq:a1}
\end{equation}
Now, the formula \eqref{eq:Vplus-Linearization} admits a convergent expansion in powers of $\varepsilon$ of the form
$\mathbf{V}^+(\lambda;\varepsilon)=\mathbf{V}_0^+(\lambda)+\varepsilon\mathbf{V}^+_1(\lambda) + O(\varepsilon^2)$,
and combining the explicit expansion of \eqref{eq:Vplus-Linearization} with \eqref{eq:j1m1}--\eqref{eq:j1p2} evaluated at $x=0$ with $\mathbf{U}_{0}^\mathrm{BC}(\lambda;0,0)=\mathbf{V}^+_0(\lambda)$, as well as \eqref{eq:a1} gives
\begin{equation}
\mathbf{V}_1^+(\lambda)=\left[a_0(\lambda)^{-1}\mathbf{j}_1^{-,1}(\lambda;0,0)-a_0(\lambda)^{-2}a_1(\lambda)\mathbf{j}_0^{-,1}(\lambda;0,0);\,\mathbf{j}_1^{+,2}(\lambda;0,0)\right]
=\mathbf{V}_0^+(\lambda)\mathbf{T}(\lambda)\sigma_3\sigma_1,
\label{eq:V1-plus}
\end{equation}
where $\mathbf{T}(\lambda)$ is a matrix of transforms given by
\begin{equation}
\mathbf{T}(\lambda):=\begin{bmatrix} \displaystyle -\int_0^{+\infty}\left(\psi_1\mu_3^{[22]}+\psi_1^*\mu_1^{[22]}\right)\,\D y &\displaystyle
\int_0^{+\infty}\left(\psi_1\mu_3^{[12]}+\psi_1^*\mu_1^{[12]}\right)\,\D y \\ \displaystyle \int_0^{+\infty}\left(\psi_1\mu_3^{[12]}+\psi_1^*\mu_1^{[12]}\right)\,\D y &\displaystyle
\int_{-\infty}^0\left(\psi_1\mu_3^{[11]}+\psi_1^*\mu_1^{[11]}\right)\,\D y  \end{bmatrix},
\end{equation}
and in the integrands we use the abbreviated notation $\psi_1=\psi_1(y)$, $\psi_1^*=\psi_1(y)^*$, and $\mu^{[jk]}_{1,3}=\mu^{[jk]}_{1,3}(\lambda;y,0)$.  From the exact symmetry $\mathbf{V}^-(\lambda;\varepsilon)=\mathbf{V}^+(\lambda^*;\varepsilon)^\dagger$ that holds for all $\varepsilon\in\mathbb{R}$, we deduce that the coefficients of the convergent power series in $\varepsilon$ of both sides are also related:  $\mathbf{V}_n^-(\lambda)=\mathbf{V}_n^+(\lambda^*)^\dagger$, and therefore if $|\lambda|=r$ and $\Im\{\lambda\}<0$,
\begin{equation}
\mathbf{V}_1^-(\lambda)=\mathbf{V}_1^+(\lambda^*)^\dagger
=\sigma_1\sigma_3\mathbf{T}(\lambda^*)^\dagger \mathbf{V}_0^+(\lambda^*)^\dagger =
\sigma_1\sigma_3\mathbf{T}(\lambda^*)^\dagger\mathbf{V}_0^-(\lambda).
\label{eq:V1-minus}
\end{equation}
Finally, to calculate the expansion of the matrix $\mathbf{V}^\mathbb{R}(\lambda;\epsilon)=\mathbf{V}_0^\mathbb{R}(\lambda)+\varepsilon\mathbf{V}_1^\mathbb{R}(\lambda)+O(\varepsilon^2)$ for $\lambda\in\Sigma_\mathrm{L}\cup\Sigma_\mathrm{R}$, we recall that the jump matrices $\mathbf{V}^\pm(\lambda;\varepsilon)$ make sense not only on the curves $\Sigma_\pm$ but also (via analytic continuation) in the part of the closed half-planes $\overline{\mathbb{C}_\pm}$ exterior to these curves, and that in terms of these the jump matrix on $\Sigma_\mathrm{L}\cup\Sigma_\mathrm{R}$ can be written in factored form as $\mathbf{V}^\mathbb{R}(\lambda;\varepsilon)=\mathbf{V}^-(\lambda;\varepsilon)\mathbf{V}^+(\lambda;\varepsilon)$.  Therefore, combining \eqref{eq:V1-plus} and \eqref{eq:V1-minus} shows that
\begin{equation}
\begin{split}
\mathbf{V}_1^\mathbb{R}(\lambda)&=\mathbf{V}_0^-(\lambda)\mathbf{V}_0^+(\lambda)\mathbf{T}(\lambda)\sigma_3\sigma_1+\sigma_1\sigma_3\mathbf{T}(\lambda)^\dagger\mathbf{V}_0^-(\lambda)\mathbf{V}_0^+(\lambda)  \\ &=
\mathbf{V}_0^\mathbb{R}(\lambda)\mathbf{T}(\lambda)\sigma_3\sigma_1 + \sigma_1\sigma_3\mathbf{T}(\lambda)^\dagger\mathbf{V}_0^\mathbb{R}(\lambda).
\end{split}
\label{eq:V1-R}
\end{equation}

Finally, we make the observation that, whether $\mathbf{T}(\lambda)$ is evaluated for $\lambda\in\Sigma_+$ or for $\lambda\in\Sigma_\mathrm{L}\cup\Sigma_\mathrm{R}$ (by analytic continuation), its elements can be expressed in terms of half-line $L^2$ inner products of $\psi_1$ with squared eigenfunction solutions of the linearized NLS equation \eqref{eq:lNLS} evaluated at $t=0$.  Indeed, taking the squared eigenfunctions $\mu^{[jk]}(\lambda;x,t)$ and $\nu^{[jk]}(\lambda;x,t)$ to be defined from the two columns $\mathbf{u}^{[1]}(\lambda;x,t)$ and $\mathbf{u}^{[2]}(\lambda;x,t)$ of the matrix $\mathbf{U}_{0+}(\lambda;x,t)=\mathbf{U}_0^\mathrm{BC}(\lambda;x,t)$ (evaluated as a limit from $\mathbb{C}_+$ for $\lambda\in\Sigma_\mathrm{L}\cup\Sigma_\mathrm{R}$) as indicated in \eqref{eq:psi-1-first} and \eqref{eq:nu-def}, we see that
\begin{equation}
\begin{split}
T_{11}(\lambda)&=\Im\left\{\int_0^{+\infty}\psi_1(y)\nu^{[22]}(\lambda;y,0)^*\,\D y\right\} + \I
\Im\left\{\int_0^{+\infty}\psi_1(y)\mu^{[22]}(\lambda;y,0)^*\,\D y\right\},\\
T_{12}(\lambda)=T_{21}(\lambda)&=-\Im\left\{\int_0^{+\infty}\psi_1(y)\nu^{[12]}(\lambda;y,0)^*\,\D y\right\} - \I\Im\left\{\int_0^{+\infty}\psi_1(y)\mu^{[12]}(\lambda;y,0)^*\,\D y\right\},\\
T_{22}(\lambda)&=-\Im\left\{\int_{-\infty}^0\psi_1(y)\nu^{[11]}(\lambda;y,0)^*\,\D y\right\} -\I
\Im\left\{\int_{-\infty}^0\psi_1(y)\mu^{[11]}(\lambda;y,0)^*\,\D y\right\}.
\end{split}
\label{eq:T-inner-products}
\end{equation}

\subsubsection{Linearizing the inverse transform}
The jump matrix for \rhref{rhp:M-NZBC} has been shown to have a convergent power series expansion in $\varepsilon$, and we here assume that the convergence holds in the sense of $L^\infty(\Sigma)\cap L^2(\Sigma)$.   Note that the jump matrix on the branch cut $\Sigma_\mathrm{c}$ of $\rho(\lambda)$ is independent of $\varepsilon$.  For $\varepsilon\ll 1$ we use the solution $\mathbf{M}_0(\lambda;x,t)$ for $\varepsilon=0$ as a global parametrix; we therefore define a new unknown $\mathbf{E}(\lambda;x,t)$ to take the place of $\mathbf{M}(\lambda;x,t)$ by
\begin{equation}
\mathbf{E}(\lambda;x,t):=\mathbf{M}(\lambda;x,t)\mathbf{M}_0(\lambda;x,t)^{-1}.
\label{eq:E-define}
\end{equation}
It is easy to check that since $\mathbf{M}(\lambda;x,t)$ and $\mathbf{M}_0(\lambda;x,t)$ both take continuous boundary values on $\Sigma_\mathrm{c}$ and satisfy the same jump condition there, $\mathbf{E}(\lambda;x,t)$ is analytic in the domain $\mathbb{C}\setminus\Sigma'$, where $\Sigma':=\Sigma_+\cup\Sigma_-\cup\Sigma_\mathrm{L}\cup\Sigma_\mathrm{R}=\Sigma\setminus\Sigma_\mathrm{c}$.  Since both factors in \eqref{eq:E-define} tend to the identity matrix $\mathbb{I}$ as $\lambda\to\infty$ in each half-plane, the same is true of $\mathbf{E}(\lambda;x,t)$.  Finally, we may compute the jump conditions satisfied by $\mathbf{E}(\lambda;x,t)$ on its jump contour:
\begin{equation}
\begin{split}
\mathbf{E}_+(\lambda;x,t)&=\mathbf{E}_-(\lambda;x,t)\mathbf{M}_{0-}(\lambda;x,t)\E^{-\I\rho(\lambda)(x+\lambda t)\sigma_3}\mathbf{V}(\lambda;\varepsilon)\mathbf{V}_0(\lambda)^{-1}\E^{\I\rho(\lambda)(x+\lambda t)\sigma_3}\mathbf{M}_{0-}(\lambda;x,t)^{-1}\\
&=\mathbf{E}_-(\lambda;x,t)\mathbf{U}_{0-}(\lambda;x,t)\mathbf{V}(\lambda;\varepsilon)\mathbf{V}_0(\lambda)^{-1}\mathbf{U}_{0-}(\lambda;x,t)^{-1},\quad \lambda\in\Sigma'.
\end{split}
\label{eq:E-jump}
\end{equation}
Now observe that since $\mathbf{V}(\lambda;\varepsilon)=\mathbf{V}_0(\lambda)+\varepsilon\mathbf{V}_1(\lambda)+O(\varepsilon^2)$, the jump matrix for $\mathbf{E}(\lambda;x,t)$ is a small (in $L^\infty(\Sigma)\cap L^2(\Sigma)$) perturbation of the identity matrix when $\varepsilon\ll 1$.  We may write it in the form
\begin{equation}
\mathbf{U}_{0-}(\lambda;x,t)\mathbf{V}(\lambda;\varepsilon)\mathbf{V}_0(\lambda)^{-1}\mathbf{U}_{0-}(\lambda;x,t)^{-1}=\mathbb{I}+\varepsilon\mathbf{W}(\lambda;x,t;\varepsilon),
\end{equation}
where
\begin{equation}
\quad\mathbf{W}(\lambda;x,t;\varepsilon)=\mathbf{U}_{0-}(\lambda;x,t)\mathbf{V}_1(\lambda)\mathbf{V}_0(\lambda)^{-1}\mathbf{U}_{0-}(\lambda;x,t)^{-1}+O(\varepsilon).
\label{eq:W-define}
\end{equation}
This means that $\mathbf{E}(\lambda;x,t)$ satisfies the conditions of a Riemann-Hilbert problem of \emph{small-norm type}.

To solve for $\mathbf{E}(\lambda;x,t)$, we subtract $\mathbf{E}_-(\lambda;x,t)$ from both sides of \eqref{eq:E-jump} and use the Plemelj formula:
\begin{equation}
\mathbf{E}(\lambda;x,t)=\mathbb{I}+\frac{\varepsilon}{2\pi\I}\int_{\Sigma'}\frac{\mathbf{E}_-(z;x,t)\mathbf{W}(z;x,t;\varepsilon)\, \D z}{z-\lambda}.
\label{eq:E-formula}
\end{equation}
Letting $\lambda$ tend to the jump contour from the right yields a closed singular integral equation for the boundary value $\mathbf{E}_-(\lambda;x,t)$; we write this equation in the form:
\begin{equation}
\mathbf{F}(\lambda;x,t)-\varepsilon\mathcal{C}_-[\mathbf{F}(\cdot;x,t)\mathbf{W}(\cdot;x,t;\varepsilon)](\lambda) =
\varepsilon\mathcal{C}_-[\mathbf{W}(\cdot;x,t;\varepsilon)](\lambda),
\label{eq:sing-int-eqn}
\end{equation}
where $\mathbf{F}(\lambda;x,t):=\mathbf{E}_-(\lambda;x,t)-\mathbb{I}$ and where the Cauchy projection operator $\mathcal{C}_-$ is defined by
\begin{equation}
\mathcal{C}_-[\mathbf{H}(\cdot)](\lambda):=\frac{1}{2\pi\I}\int_{\Sigma'}\frac{\mathbf{H}(z)\,\D z}{z-\lambda_-},\quad\lambda\in\Sigma',
\end{equation}
where the subscript on $\lambda$ indicates the nontangential boundary value taken from the right side of $\Sigma'$ by orientation.  Once \eqref{eq:sing-int-eqn} is solved, $\mathbf{E}_-(\lambda;x,t)=\mathbb{I}+\mathbf{F}(\lambda;x,t)$ holds on $\Sigma'$, and then \eqref{eq:E-formula}
gives the solution of the Riemann-Hilbert problem.
It is well-known that $\mathcal{C}_-$ is bounded on $L^2(\Sigma')$ with norm independent of $\varepsilon$ (the norm only has to do with the geometry of $\Sigma'$), so the fact that $\|\mathbf{W}(\cdot;x,t;\varepsilon)\|_2$ and $\|\mathbf{W}(\cdot;x,t;\varepsilon)\|_\infty$ are both bounded independent of $\varepsilon\ll 1$ means that \eqref{eq:sing-int-eqn} can be solved by Neumann series with the result that $\|\mathbf{F}(\cdot;x,t)\|_2=O(\varepsilon)$.  Using this result in \eqref{eq:E-formula} gives
\begin{equation}
\mathbf{E}(\lambda;x,t)=\mathbb{I}+\frac{\varepsilon}{2\pi\I}\int_\Sigma\frac{\mathbf{W}(z;x,t;0)\,\D z}{z-\lambda} +O(\varepsilon^2\lambda^{-1}).
\end{equation}
The solution $\psi(x,t;\varepsilon)$ of the focusing NLS equation corresponding to the jump matrix $\mathbf{V}(\lambda;x,t;\varepsilon)$ and hence to the initial data $\psi(x,0;\varepsilon)=\psi_0(x)+\varepsilon\psi_1(x)$ is (cf., \eqref{eq:NLS-recovery})
\begin{equation}
\begin{split}
\psi(x,t;\varepsilon)&=2\I\lim_{\lambda\to\infty}\lambda M_{12}(\lambda;x,t)\\
&=2\I\lim_{\lambda\to\infty}\lambda \left[\mathbf{E}(\lambda;x,t)\mathbf{M}_0(\lambda;x,t)\right]_{12}\\
&=2\I\lim_{\lambda\to\infty}\lambda \left[E_{11}(\lambda;x,t)M_{0,12}(\lambda;x,t) +
E_{12}(\lambda;x,t)M_{0,22}(\lambda;x,t)\right]\\
&=\psi_0(x,t)+2\I\lim_{\lambda\to\infty}\lambda E_{12}(\lambda;x,t)\\
&=\psi_0(x,t) + \varepsilon\psi_1(x,t)+ O(\varepsilon^2),
\end{split}
\end{equation}
where
\begin{equation}
\psi_1(x,t):=-\frac{1}{\pi}\int_{\Sigma'}W_{12}(\lambda;x,t;0)\,\D \lambda.
\label{eq:psi-1-integral}
\end{equation}

We now display this formula as a superposition of squared eigenfunction solutions of the lNLS equation \eqref{eq:lNLS} linearized about the solution $\psi_0(x,t)$ of the NLS equation.  First we consider $\lambda\in\Sigma_\mathrm{L}\cup\Sigma_\mathrm{R}\subset\mathbb{R}$.  Then, using \eqref{eq:V1-R} and $\mathbf{U}_{0+}(\lambda;x,t)=\mathbf{U}_{0-}(\lambda;x,t)\mathbf{V}_0^\mathbb{R}(\lambda)$,
\begin{equation}
\begin{split}
W_{12}(\lambda;x,t;0)&=\left[\mathbf{U}_{0-}(\lambda;x,t)\mathbf{V}^\mathbb{R}_1(\lambda)\mathbf{V}^\mathbb{R}_0(\lambda)^{-1}
\mathbf{U}_{0-}(\lambda;x,t)^{-1}\right]_{12}\\
&=\left[\mathbf{U}_{0+}(\lambda;x,t)\mathbf{T}(\lambda)\sigma_3\sigma_1
\mathbf{U}_{0+}(\lambda;x,t)^{-1}\right]_{12} \\
&\quad\quad\quad{}+
\left[\mathbf{U}_{0-}(\lambda;x,t)\sigma_1\sigma_3\mathbf{T}(\lambda)^\dagger\mathbf{U}_{0-}(\lambda;x,t)^{-1}\right]_{12}.
\end{split}
\end{equation}
Recall also the symmetry $\mathbf{M}(\lambda^*;x,t)^\dagger=\mathbf{M}(\lambda;x,t)$, which implies that
$\mathbf{U}_0(\lambda^*;x,t)^\dagger=\mathbf{U}_0(\lambda;x,t)$, so letting $\lambda$ tend to $\Sigma_\mathrm{L}\cup\Sigma_\mathrm{R}\subset\mathbb{R}$ from either side we see that $\mathbf{U}_{0-}(\lambda;x,t)=\mathbf{U}_{0+}(\lambda;x,t)^{-\dagger}$ holds for $\lambda\in\Sigma_\mathrm{L}\cup\Sigma_\mathrm{R}$.  Therefore
\begin{multline}
W_{12}(\lambda;x,t;0)=
\left[\mathbf{U}_{0+}(\lambda;x,t)\mathbf{T}(\lambda)\sigma_3\sigma_1
\mathbf{U}_{0+}(\lambda;x,t)^{-1}\right]_{12} \\
{}+
\left[\mathbf{U}_{0+}(\lambda;x,t)^{-\dagger}\sigma_1\sigma_3\mathbf{T}(\lambda)^\dagger\mathbf{U}_{0+}(\lambda;x,t)^\dagger\right]_{12}.
\label{eq:W12-real-first}
\end{multline}
Letting $\mathbf{u}^{[k]}(\lambda;x,t)$, $k=1,2$, denote the columns of the matrix $\mathbf{U}_{0+}(\lambda;x,t)$, and using the fact that $\mathbf{U}_{0+}(\lambda;x,t)$ has unit determinant,
\begin{equation}
\begin{split}
W_{12}(\lambda;x,t;0)&=\begin{bmatrix} u^{[1]}_1(\lambda;x,t) & u^{[2]}_1(\lambda;x,t)\end{bmatrix}\mathbf{T}(\lambda)\begin{bmatrix} u^{[1]}_1(\lambda;x,t)\\ u^{[2]}_1(\lambda;x,t)\end{bmatrix} \\
&\quad\quad{}-
\begin{bmatrix}u^{[1]}_2(\lambda;x,t)^* & u^{[2]}_2(\lambda;x,t)^*\end{bmatrix}\mathbf{T}(\lambda)^\dagger\begin{bmatrix} u_2^{[1]}(\lambda;x,t)^*\\u_2^{[2]}(\lambda;x,t)^*\end{bmatrix}\\
&=\begin{bmatrix} u^{[1]}_1(\lambda;x,t) & u^{[2]}_1(\lambda;x,t)\end{bmatrix}\mathbf{T}(\lambda)\begin{bmatrix} u^{[1]}_1(\lambda;x,t)\\ u^{[2]}_1(\lambda;x,t)\end{bmatrix} \\
&\quad\quad{}-
\begin{bmatrix}u^{[1]}_2(\lambda;x,t)^* & u^{[2]}_2(\lambda;x,t)^*\end{bmatrix}\mathbf{T}(\lambda)^*
\begin{bmatrix}u^{[1]}_2(\lambda;x,t)^* \\ u^{[2]}_2(\lambda;x,t)^*\end{bmatrix}.
\end{split}
\end{equation}
Writing $\mathbf{T}(\lambda)=\mathbf{A}(\lambda)+\I \mathbf{B}(\lambda)$ with the elements of the matrix functions $\mathbf{A}$ and $\mathbf{B}$ being real, we arrive at
\begin{equation}
W_{12}(\lambda;x,t;0)=\sum_{j,k=1}^2A_{jk}(\lambda)\mu^{[jk]}(\lambda;x,t) +\sum_{j,k=1}^2
B_{jk}(\lambda)\nu^{[jk]}(\lambda;x,t),
\end{equation}
where $\mu^{[jk]}(\lambda;x,t)$ and $\nu^{[jk]}(\lambda;x,t)$ are precisely the squared eigenfunction solutions of the linearized NLS equation \eqref{eq:lNLS} defined respectively by \eqref{eq:psi-1-first} and \eqref{eq:nu-def} in terms of the two columns $\mathbf{u}^{[1]}(\lambda;x,t)$ and $\mathbf{u}^{[2]}(\lambda;x,t)$ of the boundary value $\mathbf{U}_{0+}(\lambda;x,t)$ taken by the matrix $\mathbf{U}_0(\lambda;x,t)$ from the left on $\Sigma_\mathrm{L}\cup\Sigma_\mathrm{R}$.
Since $\mu^{[jk]}(\lambda;x,t)$ and $\nu^{[jk]}(\lambda;x,t)$ satisfy \eqref{eq:lNLS} for each $\lambda\in\Sigma_\mathrm{L}\cup\Sigma_\mathrm{R}$, and since the space of solutions is closed under taking real linear combinations, it follows that the contributions from $\lambda\in\Sigma_\mathrm{L}\cup\Sigma_\mathrm{R}\subset\mathbb{R}$ to $\psi_1(x,t)$ given by the formula \eqref{eq:psi-1-integral} formally satisfy \eqref{eq:lNLS}.

Next consider the contributions to \eqref{eq:psi-1-integral} from $\Sigma_+\cup\Sigma_-$.  We parametrize $\Sigma_+$ by $\lambda=\Lambda(u)$, $\underline{u}<u<\overline{u}$, and by Schwarz symmetry we similarly parametrize $\Sigma_-$ by $\lambda=\Lambda(u)^*$, $\underline{u}<u<\overline{u}$.  Therefore
\begin{equation}
\int_{\Sigma_+\cup\Sigma_-}W_{12}(\lambda;x,t;0)\,\D\lambda = \int_{\underline{u}}^{\overline{u}}\left[W_{12}(\Lambda(u);x,t;0)\Lambda'(u) + W_{12}(\Lambda(u)^*;x,t;0)\Lambda'(u)^*\right]\,\D u.
\end{equation}
Substituting from \eqref{eq:V1-plus}, \eqref{eq:V1-minus}, and \eqref{eq:W-define}, and using $\mathbf{U}_{0+}(\lambda;x,t)=\mathbf{U}_{0-}(\lambda;x,t)\mathbf{V}_0(\lambda)$ gives
\begin{multline}
W_{12}(\Lambda(u);x,t;0)\Lambda'(u) + W_{12}(\Lambda(u)^*;x,t;0)\Lambda'(u)^*\\
\begin{aligned}
&= \left[\mathbf{U}_{0+}(\Lambda(u);x,t)\widetilde{\mathbf{T}}(u)\sigma_3\sigma_1\mathbf{U}_{0+}(\Lambda(u);x,t)^{-1}\right]_{12} \\
&\quad\quad\quad{}+
\left[\mathbf{U}_{0-}(\Lambda(u)^*;x,t)\sigma_1\sigma_3\widetilde{\mathbf{T}}(u)^\dagger\mathbf{U}_{0-}(\Lambda(u)^*;x,t)^{-1}\right]_{12}\\
&=\left[\mathbf{U}_{0+}(\Lambda(u);x,t)\widetilde{\mathbf{T}}(u)\sigma_3\sigma_1\mathbf{U}_{0+}(\Lambda(u);x,t)^{-1}\right]_{12} \\
&\quad\quad\quad{}+
\left[\mathbf{U}_{0+}(\Lambda(u);x,t)^{-\dagger}\sigma_1\sigma_3\widetilde{\mathbf{T}}(u)^\dagger\mathbf{U}_{0+}(\Lambda(u);x,t)^\dagger\right]_{12},
\end{aligned}
\end{multline}
where $\widetilde{\mathbf{T}}(u):=\mathbf{T}(\Lambda(u))\Lambda'(u)$, and where in the second line we used the Schwarz symmetry $\mathbf{U}(\lambda^*;x,t)=\mathbf{U}(\lambda;x,t)^{-\dagger}$, which also has the effect of switching the boundary value from the right of $\Sigma_-$ to the left of $\Sigma_+$.  Comparing now with \eqref{eq:W12-real-first} we see that the rest of the argument employed in the case of $\lambda\in\Sigma_\mathrm{L}\cup\Sigma_\mathrm{R}$ goes through essentially unchanged, with the result that
\begin{multline}
W_{12}(\Lambda(u);x,t;0)\Lambda'(u) + W_{12}(\Lambda(u)^*;x,t;0)\Lambda'(u)^* \\=
\sum_{j,k=1}^2A_{jk}^+(u)\mu^{[jk]}(\Lambda(u);x,t)+
\sum_{j,k=1}^2B_{jk}^+(u)\nu^{[jk]}(\Lambda(u);x,t)
\end{multline}
where $\widetilde{\mathbf{T}}(u)=\mathbf{A}^+(u)+\I \mathbf{B}^+(u)$ with $\mathbf{A}^+(\cdot)$ and $\mathbf{B}^+(\cdot)$ being real matrix-valued functions on $(0,1)$, and where $\mu^{[jk]}(\lambda;x,t)$ and $\nu^{[jk]}(\lambda;x,t)$ are again solutions of \eqref{eq:lNLS} constructed via \eqref{eq:psi-1-first} and \eqref{eq:nu-def} respectively, but now using the columns $\mathbf{u}^{[1]}(\lambda;x,t)$ and $\mathbf{u}^{[2]}(\lambda;x,t)$ of the boundary value $\mathbf{U}_{0+}(\lambda;x,t)$ taken from the left by the matrix $\mathbf{U}_0(\lambda;x,t)$ on the arc $\Sigma_+$.
It therefore follows that again the contributions to \eqref{eq:psi-1-integral} from $\Sigma_+\cup\Sigma_-$ taken together formally satisfy the focusing linearized NLS equation \eqref{eq:lNLS}.

\subsubsection{Summary}
The Cauchy problem for the linearized NLS equation \eqref{eq:lNLS} (based on the solution $\psi_0(x,t)$ of the NLS equation \eqref{eq:NLS} for which $\psi_0-1\in L^1(\mathbb{R})$) with initial data $\psi_1(x)$ having sufficiently rapid decay can be obtained formally as follows.
\begin{itemize}
\item For $\lambda\in\Sigma_\mathrm{L}\cup\Sigma_+\cup\Sigma_\mathrm{R}$, let $\mathbf{M}_{0+}(\lambda;x,t)$ denote the boundary value taken from the left of the simultaneous solution of the direct/inverse problem for the solution $\psi_0(x,t)$ of \eqref{eq:NLS}.  Taking $(\mathbf{u}^{[1]}(\lambda;x,t),\mathbf{u}^{[2]}(\lambda;x,t))=\mathbf{U}_{0+}(\lambda;x,t)=\mathbf{M}_{0+}(\lambda;x,t)\E^{-\I\rho(\lambda)(x+\lambda t)\sigma_3}$, construct from the columns $\mathbf{u}^{[j]}(\lambda;x,t)$, $j=1,2$, the squared eigenfunction solutions $\mu^{[11]}(\lambda;x,t)$, $\mu^{[12]}(\lambda;x,t)=\mu^{[21]}(\lambda;x,t)$, and $\mu^{[22]}(\lambda;x,t)$ from \eqref{eq:psi-1-first} and
  $\nu^{[11]}(\lambda;x,t)$, $\nu^{[12]}(\lambda;x,t)=\nu^{[21]}(\lambda;x,t)$, and $\nu^{[22]}(\lambda;x,t)$ from \eqref{eq:nu-def} for $\lambda\in\Sigma_\mathrm{L}\cup\Sigma_+\cup\Sigma_{\mathrm{L}}$.
\item Evaluating the squared eigenfunction solutions at $t=0$, compute the inner products with the Cauchy data $\psi_1$ for the linearized problem necessary to obtain the transform matrix $\mathbf{T}(\lambda)$ from \eqref{eq:T-inner-products} for $\lambda\in\Sigma_\mathrm{L}\cup\Sigma_+\cup\Sigma_\mathrm{R}$.
\item
The solution $\psi_1(x,t)$ of the linearized Cauchy problem is then $\psi_1(x,t)=\psi_1^\mathbb{R}(x,t)+\psi_1^+(x,t)$, where we have the explicit integral representations
\begin{equation}
\begin{split}
\psi_1^\mathbb{R}(x,t)&:=-\frac{1}{\pi}\sum_{j,k=1}^2
\int_{\Sigma_\mathrm{L}\cup\Sigma_\mathrm{R}}\left(\Re\{T_{jk}(\lambda)\}\mu^{[jk]}(\lambda;x,t)+
\Im\{T_{jk}(\lambda)\}\nu^{[jk]}(\lambda;x,t)\right)\,\D\lambda \\
\psi_1^+(x,t)&:=-\frac{1}{\pi}\sum_{j,k=1}^2
\int_{\underline{u}}^{\overline{u}}\left(\Re\{T_{jk}(\Lambda(u))\Lambda'(u)\}\mu^{[jk]}(\Lambda(u);x,t)\right.\\
&\quad\quad\quad\quad\left.{}+ \Im\{T_{jk}(\Lambda(u))\Lambda'(u)\}\nu^{[jk]}(\Lambda(u);x,t)\right)\,\D u
\end{split}
\label{eq:psi1-pieces}
\end{equation}
with $\Lambda:(\underline{u},\overline{u})\to\Sigma_+$ being a piecewise-smooth parametrization of $\Sigma_+$.
\end{itemize}

\subsubsection{Simplification of $\psi_1^\mathbb{R}(x,t)$ using Schwarz symmetry}
When $\lambda\in\mathbb{R}$, the six squared eigenfunctions $\mu^{[11]}(\lambda;x,t)$, $\mu^{[12]}(\lambda;x,t)=\mu^{[21]}(\lambda;x,t)$, $\mu^{[22]}(\lambda;x,t)$, $\nu^{[11]}(\lambda;x,t)$, $\nu^{[12]}(\lambda;x,t)=\nu^{[21]}(\lambda;x,t)$, and $\nu^{[22]}(\lambda;x,t)$, are not linearly independent over the real numbers, leading to some simplifications in the formula for $\psi_1^{\mathbb{R}}(x,t)$.  To obtain the relations among the solutions, we eliminate $\mathbf{U}_{0-}(\lambda;x,t)$ between the jump condition $\mathbf{U}_{0+}(\lambda;x,t)=\mathbf{U}_{0-}(\lambda;x,t)\mathbf{V}_0^\mathbb{R}(\lambda)$ and the Schwarz symmetry condition $\mathbf{U}_{0+}(\lambda;x,t)^\dagger=\mathbf{U}_{0-}(\lambda;x,t)^{-1}$, both of which hold for $\lambda\in\Sigma_\mathrm{L}\cup\Sigma_\mathrm{R}$. Expressing $\mathbf{V}_0^\mathbb{R}(\lambda)$ in terms of the reflection coefficient $R_0(\lambda)$ for the unperturbed initial condition $\psi_0(x,0)$ by
\begin{equation}
\mathbf{V}_0^\mathbb{R}(\lambda)=\begin{bmatrix}1+|R_0(\lambda)|^2 & R_0(\lambda)^*\\R_0(\lambda) & 1
\end{bmatrix},
\end{equation}
we systematically write all six of the squared eigenfunction solutions for a given $\lambda\in\Sigma_\mathrm{L}\cup\Sigma_\mathrm{R}$ in terms of the others, obtaining a $6\times 6$ real-linear homogeneous system.  A tedious but straightforward row reduction shows that the rank of this system is exactly $3$, containing only the following three independent relations:
\begin{equation}
\begin{split}
\mu^{[11]}(\lambda;x,t)&=2\Re\{R_0(\lambda)\}\mu^{[12]}(\lambda;x,t)-(1+|R_0(\lambda)|^2)\mu^{[22]}(\lambda;x,t)\\
\nu^{[11]}(\lambda;x,t)&=-2\Im\{R_0(\lambda)\}\mu^{[12]}(\lambda;x,t)+(1+|R_0(\lambda)|^2)\nu^{[22]}(\lambda;x,t)\\
\nu^{[12]}(\lambda;x,t)&=-\Im\{R_0(\lambda)\}\mu^{[22]}(\lambda;x,t)+\Re\{R_0(\lambda)\}\nu^{[22]}(\lambda;x,t).
\end{split}
\label{eq:real-relations}
\end{equation}
It follows that for $\lambda\in\Sigma_\mathrm{L}\cup\Sigma_\mathrm{R}$,
\begin{multline}
\sum_{j,k=1}^2\left(\Re\{T_{jk}(\lambda)\}\mu^{[jk]}(\lambda;x,t)+\Im\{T_{jk}(\lambda)\}\nu^{[jk]}(\lambda;x,t)\right)\\
=\left[(1+|R_0(\lambda)|^2)\Im\{T_{11}(\lambda)\} + 2\Re\{R_0(\lambda)\}\Im\{T_{12}(\lambda)\} + \Im\{T_{22}(\lambda)\}\right]\nu^{[22]}(\lambda;x,t)\\
{}+\left[2\Re\{R_0(\lambda)\}\Re\{T_{11}(\lambda)\}-2\Im\{R_0(\lambda)\}\Im\{T_{11}(\lambda)\} + 2\Re\{T_{12}(\lambda)\}\right]\mu^{[12]}(\lambda;x,t)\\
{}+\left[-(1+|R_0(\lambda)|^2)\Re\{T_{11}(\lambda)\}-2\Im\{R_0(\lambda)\}\Im\{T_{12}(\lambda)\}+\Re\{T_{22}(\lambda)\}\right]\mu^{[22]}(\lambda;x,t),
\end{multline}
where we have also used $T_{21}=T_{12}$, $\mu^{[jk]}=\mu^{[kj]}$, and $\nu^{[jk]}=\nu^{[kj]}$.  Now combining the relations \eqref{eq:real-relations} with \eqref{eq:T-inner-products}, we observe the following identities:
\begin{multline}
(1+|R_0(\lambda)|^2)\Im\{T_{11}(\lambda)\} + 2\Re\{R_0(\lambda)\}\Im\{T_{12}(\lambda)\} + \Im\{T_{22}(\lambda)\}\\
= -\Im\left\{\int_\mathbb{R}\psi_1(y)\mu^{[11]}(\lambda;y,0)^*\,\D y\right\},
\end{multline}
\begin{equation}
2\Re\{R_0(\lambda)\}\Re\{T_{11}(\lambda)\}-2\Im\{R_0(\lambda)\}\Im\{T_{11}(\lambda)\} + 2\Re\{T_{12}(\lambda)\}=0,
\end{equation}
and
\begin{multline}
-(1+|R_0(\lambda)|^2)\Re\{T_{11}(\lambda)\}-2\Im\{R_0(\lambda)\}\Im\{T_{12}(\lambda)\}+\Re\{T_{22}(\lambda)\}\\
=-\Im\left\{\int_\mathbb{R}\psi_1(y)\nu^{[11]}(\lambda;y,0)^*\,\D y\right\}.
\end{multline}
Therefore, from the definition \eqref{eq:psi1-pieces} we see that $\psi_1^\mathbb{R}(x,t)$ takes a simpler form:
\begin{equation}
\psi_1^\mathbb{R}(x,t)=\int_{\Sigma_\mathrm{L}\cup\Sigma_\mathrm{R}}\left(C_\mu(\lambda)\mu^{[22]}(\lambda;x,t) + C_\nu(\lambda)\nu^{[22]}(\lambda;x,t)\right)\,\D\lambda,
\end{equation}
where the real-valued coefficients $C_\mu(\lambda)$ and $C_\nu(\lambda)$ are now given by explicit \emph{whole-line} spectral transforms:
\begin{equation}
\begin{split}
C_\mu(\lambda)&:=\frac{1}{\pi}\Im\left\{\int_\mathbb{R}\psi_1(y)\nu^{[11]}(\lambda;y,0)^*\,\D y\right\}\\
C_\nu(\lambda)&:=\frac{1}{\pi}\Im\left\{\int_\mathbb{R}\psi_1(y)\mu^{[11]}(\lambda;y,0)^*\,\D y\right\}.
\end{split}
\end{equation}

\subsection{Solution of the linearized equation for the Peregrine solution}
Consider the case that the base solution of the NLS equation \eqref{eq:NLS} about which we will linearize is the Peregrine breather
\begin{equation}
     \psi_0(x,t)=\psi_{P}(x,t):=1-4\frac{1+2\I (t-t_0)}{1+4(x-x_0)^2+4(t-t_0)^2}
     \label{eq:Peregrine-recall}
\end{equation}
centered at $(x_0,t_0)$ in the $(x,t)$-plane.  Without loss of generality, we will take $x_0=0$ but let $t_0$ be arbitrary.  We write $\tilde{t}:=t-t_0$ below.
As we have seen in Section~\ref{s:OneDarboux} this solution can be obtained in the setting of the robust IST by applying a standard Darboux transformation with suitable parameters to the Riemann-Hilbert matrix that represents the background state $\psi(x,t)\equiv 1$.  The Darboux transformation also supplies the matrix $\mathbf{U}_0(\lambda;x,t)=\mathbf{M}_0(\lambda;x,t)\E^{-\I\rho(\lambda)(x+\lambda t)\sigma_3}$ associated with the solution \eqref{eq:Peregrine-recall} for $x_0=0$.
For $\lambda$ in the domain $D_+$, the matrix $\mathbf{U}_0(\lambda;x,t)$ takes the form
\begin{multline}
\mathbf{U}_0(\lambda;x,t)\\
=\frac{n(\lambda)}{\rho(\lambda)(1+4x^2+4\tilde{t}^2)}
\left[
\begin{matrix}
2\I\lambda(2\tilde{t}-\I)+\rho(\lambda)(2\tilde{t}-\I)^2+4x(\rho(\lambda)x-\I) \\
4\rho(\lambda)(\lambda-\rho(\lambda))\tilde{t}^2-4\I\tilde{t}+(\lambda-\rho(\lambda))(\rho(\lambda)+4x(\rho(\lambda)x-\I))+2
\end{matrix}\right.\\
\left.\begin{matrix}
4\rho(\lambda)(\lambda-\rho(\lambda))\tilde{t}^2+4\I\tilde{t}+(\lambda-\rho(\lambda))(\rho(\lambda)+2x(\rho(\lambda)x+\I))+2 \\ -2\I\lambda(2\tilde{t}+\I)+\rho(\lambda)(2\tilde{t}+\I)^2+4x(\rho(\lambda)x+\I)
\end{matrix}\right]
\E^{\I\rho(\lambda)(x+\lambda t)\sigma_3}.
\label{eq:U0-Peregrine}
\end{multline}
We observe that this formula admits analytic continuation to the whole complex plane with the branch cut $\Sigma_\mathrm{c}$ of $\rho$ and $n$ omitted.  This implies that, provided $\varepsilon\psi_1(x)$ is such that it introduces no further singularities in the solution for the perturbed potential $\psi_0(x)+\varepsilon\psi_1(x)$, we may replace the contour $\Sigma_+\cup\Sigma_-$ with a ``dogbone'' contour that lies against the two sides of the branch cut $\Sigma_\mathrm{c}$ except near the endpoints $\lambda=\pm\I$ where it is augmented by small circles of radius $\delta\ll 1$.

Such a deformation of $\Sigma_\pm$ also results in $\Sigma_\mathrm{L}$ being extended to $(-\infty,0)$ and $\Sigma_\mathrm{R}$ being extended to $(0,+\infty)$.  The contribution $\psi_1^\mathbb{R}(x,t)$ therefore takes the form
\begin{equation}
\psi_1^\mathbb{R}(x,t)=\int_\mathbb{R} \left(C_\mu(\lambda)\mu^{[22]}(\lambda;x,t)+C_\nu(\lambda)\nu^{[22]}(\lambda;x,t)\right)\,\D\lambda,
\end{equation}
where, applying \eqref{eq:real-relations} with $R_0(\lambda)=0$ for all $\lambda\in\mathbb{R}\setminus\{0\}$,
\begin{equation}
\begin{split}
C_\mu(\lambda)&=\frac{1}{\pi}\Im\left\{\int_\mathbb{R}\psi_1(y)\nu^{[22]}(\lambda;y,0)^*\,\D y\right\}\\
C_\nu(\lambda)&=-\frac{1}{\pi}\Im\left\{\int_\mathbb{R}\psi_1(y)\mu^{[22]}(\lambda;y,0)^*\,\D y\right\}.
\end{split}
\label{eq:C-mu-nu}
\end{equation}
In this situation, we therefore have a very compelling representation of $\psi_1^\mathbb{R}(x,t)$ involving $L^2(\mathbb{R})$ projections onto the functions $\mu^{[22]}(\lambda;x,0)$ and $\nu^{[22]}(\lambda;x,0)$ followed by superposition with the very same solutions now evaluated at $t\neq 0$.

Now $\Sigma_+$ has been deformed to consist of three parts joined end-to-end:  the vertical contour connecting $\lambda=0$ with $\lambda=\I (1-\delta)$ and lying on the left side of $\Sigma_\mathrm{c}$, followed by the negatively-oriented circle of radius $\delta$ centered at $\lambda=\I$, followed by the vertical contour connecting $\lambda=\I (1-\delta)$ with $\lambda=0$ and lying on the right side of $\Sigma_\mathrm{c}$.  Let us call the contributions to $\psi_1^+(x,t)$ arising from integration on the circle $\psi_1^\circ(x,t)$ and set $\psi_1^\|(x,t):=\psi_1^+(x,t)-\psi_1^\circ(x,t)$.  We also adopt the subscript notation $+$/$-$ for the boundary values of $\mathbf{U}_0(\lambda;x,t)$ and the corresponding squared eigenfunctions taken on $\Sigma_\mathrm{c}$ from the left/right half-plane.  Then we have the jump condition $\mathbf{U}_{0+}(\I s;x,t)=\mathbf{U}_{0-}(\I s;x,t)\I\sigma_1$ for $0<s<1$, and it follows that the squared eigenfunctions satisfy
\begin{equation}
\begin{aligned}
\mu_\pm^{[11]}(\I s;x,t)&=-\mu_\mp^{[22]}(\I s;x,t);\quad\mu_+^{[12]}(\I s;x,t)&=-\mu_-^{[12]}(\I s;x,t);\\
\nu_\pm^{[11]}(\I s;x,t)&=-\nu_\mp^{[22]}(\I s;x,t);\quad \nu_+^{[12]}(\I s;x,t) &= -\nu_-^{[12]}(\I s;x,t)
\end{aligned}
\label{eq:squared-ef-cut}
\end{equation}
for $0<s<1$.  Taking $\Lambda(u)=\I u$, $0<u<1-\delta$ for the parametrization of the upward contribution to $\psi_1^\|(x,t)$ and $\Lambda(u)=\I(1-\delta-u)$, $0<u<1-\delta$ for the parametrization of the downward contribution to $\psi_1^\|(x,t)$, combining \eqref{eq:T-inner-products}, \eqref{eq:psi1-pieces}, and \eqref{eq:squared-ef-cut} gives
\begin{multline}
\psi_1^\|(x,t)=\int_0^{1-\delta}\left(D_\mu^{[11]}(u)\mu_+^{[11]}(\I u;x,t) + D_\mu^{[22]}(u)\mu_+^{[22]}(\I u;x,t) \right.\\
{}\left.+
D_\nu^{[11]}(u)\nu_+^{[11]}(\I u;x,t) + D_\nu^{[22]}(u)\nu_+^{[22]}(\I u;x,t)\right)\,\D u,
\end{multline}
where for $0<u<1-\delta$,
\begin{equation}
\begin{split}
D_\mu^{[11]}(u)&:=\frac{1}{\pi}\Im\left\{\int_\mathbb{R}\psi_1(y)\mu_+^{[22]}(\I u;y,0)^*\,\D y\right\}\\
D_\mu^{[22]}(u)&:=-\frac{1}{\pi}\Im\left\{\int_\mathbb{R}\psi_1(y)\mu_+^{[11]}(\I u;y,0)^*\,\D y\right\}\\
D_\nu^{[11]}(u)&:=-\frac{1}{\pi}\Im\left\{\int_\mathbb{R}\psi_1(y)\nu_+^{[22]}(\I u;y,0)^*\,\D y\right\}\\
D_\nu^{[22]}(u)&:=\frac{1}{\pi}\Im\left\{\int_\mathbb{R}\psi_1(y)\nu_+^{[11]}(\I u;y,0)^*\,\D y\right\}.
\end{split}
\end{equation}
Therefore, the contribution $\psi_1^\|(x,t)$ is represented also as a real linear combination of squared eigenfunction solutions of the linearized problem with spectral transform coefficients written in terms of $L^2(\mathbb{R})$ inner products of the initial condition $\psi_1$ with the same solutions at $t=0$.  The only part of the solution formula $\psi_1(x,t)=\psi_1^\mathbb{R}(x,t)+\psi_1^\|(x,t)+\psi_1^\circ(x,t)$ left in terms of transform integrals of $\psi_1$ integrated against the squared eigenfunctions on the half line is $\psi_1^\circ(x,t)$.  Taking the limit $\delta\downarrow 0$ is generally not feasible, because from the factor $n(\lambda)/\rho(\lambda)$ in \eqref{eq:U0-Peregrine} we see that the quadratic forms $\mu^{[jk]}(\lambda;x,t)$ and $\nu^{[jk]}(\lambda;x,t)$ formed from the elements of the matrix $\mathbf{U}_0(\lambda;x,t)$ blow up proportional to $|\lambda-\I|^{-3/2}$, and hence $\psi_1^\circ(x,t)=O(\delta^{-2})$ as $\delta\downarrow 0$.

We can furthermore observe that, due to the simple exponential dependence of the columns of the matrix $\mathbf{U}_0(\lambda;x,t)$ given in \eqref{eq:U0-Peregrine}, for fixed $t\in\mathbb{R}$ the quadratic forms $\mu^{[jk]}(\lambda;x,t)$ and $\nu^{[jk]}(\lambda;x,t)$ are uniformly bounded for $x\in\mathbb{R}$
whenever $\rho(\lambda)\in\mathbb{R}$.  The same quadratic forms are uniformly bounded for $(x,t)\in\mathbb{R}^2$ whenever both $\rho(\lambda)\in\mathbb{R}$ and $\lambda\in\mathbb{R}$.  On the other hand, if $\rho(\lambda)\in\mathbb{R}$ but $\Im\{\lambda\}\neq 0$, then the quadratic forms are bounded with respect to $x$ but grow exponentially in $t$.  Moreover, the above bounds also hold uniformly as $\lambda\to\infty$.  These results imply that under reasonable conditions the component $\psi_1^\mathbb{R}(x,t)$ of the solution $\psi_1(x,t)$ of the linearized NLS equation \eqref{eq:lNLS} is uniformly bounded for all time.
\begin{theorem}
Suppose that the initial condition $\psi_1(\cdot)$ is such that $C_\mu(\cdot)$ and $C_\nu(\cdot)$ defined by \eqref{eq:C-mu-nu} are in $L^1(\mathbb{R})$.  Then $\psi_1^\mathbb{R}(x,t)$ is uniformly bounded for $(x,t)\in\mathbb{R}^2$.
\label{theorem:partial-stability}
\end{theorem}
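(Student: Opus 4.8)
The plan is to estimate $\psi_1^\mathbb{R}(x,t)$ directly from its integral representation
\[
\psi_1^\mathbb{R}(x,t)=\int_\mathbb{R}\left(C_\mu(\lambda)\mu^{[22]}(\lambda;x,t)+C_\nu(\lambda)\nu^{[22]}(\lambda;x,t)\right)\,\D\lambda,
\]
isolating the fixed spectral densities $C_\mu,C_\nu$ (which depend only on $\psi_1$ through \eqref{eq:C-mu-nu}) from the squared eigenfunctions $\mu^{[22]},\nu^{[22]}$, which carry all of the $(x,t)$-dependence. The decisive structural fact is that the contour is the real axis, and for $\lambda\in\mathbb{R}$ one has simultaneously $\lambda\in\mathbb{R}$ and $\rho(\lambda)=\pm\sqrt{\lambda^2+1}\in\mathbb{R}$. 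This places the entire integrand in precisely the regime in which, by the boundedness observation recorded immediately before the theorem, the quadratic forms $\mu^{[22]}(\lambda;x,t)$ and $\nu^{[22]}(\lambda;x,t)$ are uniformly bounded for $(x,t)\in\mathbb{R}^2$, with the bound holding uniformly also as $\lambda\to\infty$.

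Accordingly, the first step is to set
\[
M:=\sup_{\lambda\in\mathbb{R}}\,\sup_{(x,t)\in\mathbb{R}^2}\,\max\left\{\left|\mu^{[22]}(\lambda;x,t)\right|,\left|\nu^{[22]}(\lambda;x,t)\right|\right\}
\]
and to record that $M<\infty$. Tracing this back to \eqref{eq:U0-Peregrine}, for real $\lambda$ the exponential factor $\E^{\I\rho(\lambda)(x+\lambda t)\sigma_3}$ is unimodular since $\rho(\lambda)(x+\lambda t)\in\mathbb{R}$, while the algebraic prefactor, whose denominator $\rho(\lambda)(1+4x^2+4\tilde{t}^2)$ grows quadratically, has entries that stay bounded for all $(x,t)\in\mathbb{R}^2$ and remain controlled as $\lambda\to\infty$ because $\lambda/\rho(\lambda)\to1$ and $n(\lambda)\to1$; the squared eigenfunctions, being products of two such column entries, inherit a uniform bound. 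A single detail to dispatch is that the cut $\Sigma_\mathrm{c}$ meets $\mathbb{R}$ only at $\lambda=0$, where $\rho(0)=\pm1\neq0$ so that $\mathbf{U}_0$ is nonsingular; in any event this one point is of measure zero in the $\lambda$-integral.

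With $M<\infty$ in hand, the conclusion follows from an $L^1$--$L^\infty$ pairing:
\[
\left|\psi_1^\mathbb{R}(x,t)\right|\le\int_\mathbb{R}\left(|C_\mu(\lambda)|\left|\mu^{[22]}(\lambda;x,t)\right|+|C_\nu(\lambda)|\left|\nu^{[22]}(\lambda;x,t)\right|\right)\D\lambda\le M\left(\|C_\mu\|_{L^1(\mathbb{R})}+\|C_\nu\|_{L^1(\mathbb{R})}\right),
\]
the right-hand side being finite by hypothesis and independent of $(x,t)$. I do not anticipate a genuine obstacle: the entire content of the theorem is the uniform-in-$(x,t)$ boundedness of the squared eigenfunctions along $\mathbb{R}$, which is already established, after which the estimate is routine. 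The only point demanding care is to keep the integration confined to the real axis, where both $\lambda$ and $\rho(\lambda)$ are real; off the real axis these same quadratic forms may grow exponentially in $t$ (as happens when $\rho(\lambda)\in\mathbb{R}$ but $\Im\{\lambda\}\neq0$), and it is exactly the restriction to $\mathbb{R}$ that excludes such contributions from $\psi_1^\mathbb{R}$.
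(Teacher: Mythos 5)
Your proof is correct and is essentially the paper's own argument: both rest on the uniform boundedness of $\mu^{[22]}(\lambda;x,t)$ and $\nu^{[22]}(\lambda;x,t)$ for $\lambda\in\mathbb{R}$ (where $\rho(\lambda)\in\mathbb{R}$ makes the exponential factors unimodular), followed by the same $L^1$--$L^\infty$ pairing with $\|C_\mu\|_{L^1}+\|C_\nu\|_{L^1}$. The only difference is that you spell out, via \eqref{eq:U0-Peregrine}, why the uniform bound $M$ (the paper's $K$) is finite, which the paper records as an observation immediately preceding the theorem rather than inside the proof.
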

\begin{proof}
This is a straightforward estimate, using the fact that $\lambda\in\mathbb{R}$ implies $\rho(\lambda)\in\mathbb{R}$:
\begin{equation}
|\psi_1^\mathbb{R}(x,t)|\le \int_\mathbb{R}\left(|C_\mu(\lambda)||\mu^{[22]}(\lambda;x,t)| + |C_\nu(\lambda)||\nu^{[22]}(\lambda;x,t)|\right)\,\D\lambda \le K\int_\mathbb{R}\left(|C_\mu(\lambda)|+|C_\nu(\lambda)|\right)\,\D\lambda,
\end{equation}
where $K$ is a uniform upper bound for $|\mu^{[22]}(\lambda;x,t)|$ and $|\nu^{[22]}(\lambda;x,t)|$.
\end{proof}
On the other hand, the component $\psi_1^\|(x,t)$ can grow exponentially in $t$.  \emph{This is for exactly the same reason that the solution of the linearized problem for the background solution $\psi_0(x,t)\equiv 1$ typically grows exponentially.}  The squared eigenfunctions $\mu_+^{[11]}(\I u;x,t)$, $\mu_+^{[22]}(\I u;x,t)$, $\nu_+^{[11]}(\I u;x,t)$, and $\nu_+^{[22]}(\I u;x,t)$, while bounded in $x$ for $0<u<1$ because $\rho_+(\I u)=-\sqrt{1-u^2}$, exhibit exponential growth in $t$ because $\rho_+(\I u)\lambda=-\I u\sqrt{1-u^2}$; the exponential factors that appear are precisely the same ones as in the case of linearization about $\psi_0(x,t)\equiv 1$ for which the analogue of \eqref{eq:U0-Peregrine} is simply $\mathbf{U}_0(\lambda;x,t)=\mathbf{E}(\lambda)\E^{\I\rho(\lambda)(x+\lambda t)\sigma_3}$.
The linearization about the Peregrine solution $\psi_0(x,t)=\psi_\mathrm{P}(x,t)$ therefore predicts  instability of exactly the same sort as in the case $\psi_0(x,t)\equiv 1$, namely the (modulational) instability of the background field on which the Peregrine solution rests.  However, Theorem~\ref{theorem:partial-stability} predicts linearized stability for suitable perturbations of the Peregrine solution $\psi_0(x,t)=\psi_\mathrm{P}(x,t)$ for which the components $\psi_1^\|(x,t)$ and $\psi_1^\circ(x,t)$ vanish identically (due to the vanishing of the corresponding integral transforms in the integrand).

\appendix

\section{A Priori Estimates of Jost Solutions}
\begin{lemma}[A priori estimates]
\label{lemma:Neumann}
Suppose that $\psi(x)$ is defined for all $x\in\mathbb{R}$ with $\Delta\psi = \psi - 1 \in L^{1}(\mathbb{R})$ and let $|\lambda|\geq s>1$. Then we have the following a priori bounds for the indicated columns of $\mathbf{K}^{\pm}(\lambda;x)$ given in \eqref{eq:Volterra-NZBC}:
\begin{align}
\sup_{x\in\mathbb{R}}\| \mathbf{k}^{-,1}(\lambda;x)\|_{\ell^1} &\leq \left(1+\frac{1}{\sqrt{s^2-1}}\right)\E^{c(s)\|\Delta\psi\|_1},\\
\sup_{x\in\mathbb{R}}\| \mathbf{k}^{+,2}(\lambda;x)\|_{\ell^1} &\leq \left(1+\frac{1}{\sqrt{s^2-1}}\right)\E^{c(s)\|\Delta\psi\|_1}.
\end{align}
where $\| \mathbf{v} \|_{\ell^1}$ denotes the $\ell^1$-norm of a vector $\mathbf{v}\in\mathbb{C}^2$ and
\begin{equation}
c(s)=\frac{1}{2}\left(1+\frac{s}{\sqrt{s^2-1}}\right)+\frac{1}{\sqrt{s^2-1}}+\frac{1}{2(s^2-1)}.
\end{equation}
\end{lemma}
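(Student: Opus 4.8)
The plan is to solve the decoupled column equations coming from the Volterra system \eqref{eq:Volterra-NZBC} by Picard iteration, estimating the iterates in the $\ell^1$ vector norm. First I would note that since $\E^{\I\rho(\lambda)(x-y)\sigma_3}$ is diagonal, \eqref{eq:Volterra-NZBC} uncouples column-by-column: extracting the first column of $\mathbf{K}^-$ gives a scalar-kernel Volterra equation
\[
\mathbf{k}^{-,1}(\lambda;x)=\mathbf{E}(\lambda)\mathbf{e}_1+\int_{-\infty}^{x}\mathbf{G}(\lambda;x,y)\,\mathbf{\Delta\Psi}(y)\,\mathbf{k}^{-,1}(\lambda;y)\,\D y,\qquad \mathbf{G}(\lambda;x,y)\defeq \E^{\I\rho(\lambda)(x-y)}\mathbf{E}(\lambda)\E^{-\I\rho(\lambda)(x-y)\sigma_3}\mathbf{E}(\lambda)^{-1},
\]
while $\mathbf{k}^{+,2}$ obeys the analogous equation with $\int_{+\infty}^{x}$ and the scalar prefactor $\E^{-\I\rho(\lambda)(x-y)}$ (the $(2,2)$ entry of the diagonal exponential). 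Using $n(\lambda)^2=(\lambda+\rho)/(2\rho)$ from \eqref{eq:f-def} and $\mathbf{E}(\lambda)=n(\lambda)(\mathbb{I}+\I(\lambda-\rho)\sigma_1)$, a direct computation puts the kernel in the explicit form
\[
\mathbf{G}(\lambda;x,y)=\frac{1}{2\rho}\begin{bmatrix} (\lambda+\rho)-(\lambda-\rho)\E^{2\I\rho(x-y)} & -\I\bigl(\E^{2\I\rho(x-y)}-1\bigr) \\ \I\bigl(\E^{2\I\rho(x-y)}-1\bigr) & (\lambda+\rho)\E^{2\I\rho(x-y)}-(\lambda-\rho)\end{bmatrix},\qquad \rho=\rho(\lambda).
\]

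Next I would record the elementary facts about the branch $\rho$ defined below \eqref{eq:NZBC-asymptotic-eigenfunction}: for $|\lambda|\ge s>1$ one has $|\rho(\lambda)|^2=|\lambda^2+1|\ge|\lambda|^2-1\ge s^2-1$, and $|\lambda|^2/|\rho|^2=|\lambda|^2/|\lambda^2+1|\le|\lambda|^2/(|\lambda|^2-1)\le s^2/(s^2-1)$; moreover $|\lambda|>1$ forces $1+\lambda^{-2}$ into the open right half-plane, so the principal value $\rho/\lambda=\sqrt{1+\lambda^{-2}}$ has positive real part, whence $\Re(\bar\lambda\rho)>0$ and therefore $|\lambda+\rho|\ge|\rho|$, which with $|\lambda-\rho|=|\lambda+\rho|^{-1}$ yields $|\lambda-\rho|\le|\rho|^{-1}$. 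On the half-line relevant to each column (where $x-y$ and $\Im\rho\ge0$ are sign-compatible) the exponential obeys $|\E^{2\I\rho(x-y)}|\le1$. Feeding these into the kernel and taking the $\ell^1\to\ell^1$ operator norm (maximal absolute column sum) gives
\[
\|\mathbf{G}(\lambda;x,y)\|_{\ell^1\to\ell^1}\le \frac{|\lambda+\rho|}{2|\rho|}+\frac{|\lambda-\rho|}{2|\rho|}+\frac{1}{|\rho|}\le \tfrac12\Bigl(1+\tfrac{s}{\sqrt{s^2-1}}\Bigr)+\tfrac{1}{2(s^2-1)}+\tfrac{1}{\sqrt{s^2-1}}=c(s),
\]
where the three summands are bounded termwise via $|\lambda+\rho|\le|\lambda|+|\rho|$ with the ratio bound, then $|\lambda-\rho|/(2|\rho|)\le 1/(2|\rho|^2)\le 1/(2(s^2-1))$, and finally $|\rho|^{-1}\le(s^2-1)^{-1/2}$.

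For the inhomogeneous term I would use the additive split $\|\mathbf{E}(\lambda)\mathbf{e}_1\|_{\ell^1}=|n|+|n(\lambda-\rho)|$, bounding $|n|\le(\tfrac12(1+s/\sqrt{s^2-1}))^{1/2}$ as above and $|n(\lambda-\rho)|=(2|\rho||\lambda+\rho|)^{-1/2}\le(2(s^2-1))^{-1/2}$, so that an elementary inequality yields $\|\mathbf{E}(\lambda)\mathbf{e}_1\|_{\ell^1}\le 1+(s^2-1)^{-1/2}$ uniformly for $|\lambda|\ge s$; the column $\mathbf{E}(\lambda)\mathbf{e}_2$ has the identical $\ell^1$ norm. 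With the kernel bounded by $c(s)$ and $\|\mathbf{\Delta\Psi}(y)\|_{\ell^1\to\ell^1}=|\Delta\psi(y)|$, the Volterra structure makes the Picard iterates satisfy $\sup_x\|\mathbf{k}^{-,1}_m(\lambda;\cdot)\|_{\ell^1}\le(1+(s^2-1)^{-1/2})\,(c(s)\|\Delta\psi\|_1)^m/m!$ by the usual induction against the increasing mass $\int_{-\infty}^x|\Delta\psi|$. Summing over $m$ (the series converges for every $\|\Delta\psi\|_1$ precisely because of the factorial denominator) gives the claimed estimate, and the bound for $\mathbf{k}^{+,2}$ follows identically since its kernel entries and inhomogeneous column have the same moduli.

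The conceptual content here is light — it is the standard Neumann/Picard argument for a Volterra equation — so I expect the main obstacle to be purely the uniform-in-$\lambda$ bookkeeping needed to land on the exact constants $c(s)$ and $1+(s^2-1)^{-1/2}$: one must check that every $\lambda$-dependent quantity ($|\rho|$, $|\lambda\pm\rho|$, $|n|$) is controlled by its extremal value attained at $\lambda=\I s$ on the circle $|\lambda|=s$, and in particular establish $|\lambda+\rho|\ge|\rho|$ on the correct sheet of $\rho$, which is what simultaneously produces the $(s^2-1)^{-1}$ and $(s^2-1)^{-1/2}$ contributions. Some care is also required to align the sign of $\Im\rho$ with the orientation of the half-line of integration so that the exponential factor is genuinely bounded by $1$; this is exactly where the analyticity domains of $\mathbf{k}^{-,1}$ (in $\Im\{\rho\}>0$) and of $\mathbf{k}^{+,2}$ enter the estimate.
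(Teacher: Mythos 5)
Your proposal is correct and follows essentially the same route as the paper's proof: a Neumann/Picard series for the column-wise Volterra equations, a uniform $\ell^1$ operator-norm bound of $c(s)$ on the kernel (your explicit entry-wise computation of $\mathbf{E}(\lambda)\mathrm{diag}(1,\E^{2\I\rho(\lambda)(x-y)})\mathbf{E}(\lambda)^{-1}$ reduces to exactly the same three-term expression $\tfrac{1}{|\rho|}+\tfrac{|\lambda-\rho|+|\lambda+\rho|}{2|\rho|}$ that the paper obtains from $\|\mathbf{E}(\lambda)\|\,\|\mathbf{E}(\lambda)^{-1}\|$), the same factorial estimate against the mass $\int|\Delta\psi|$, and the same bound $1+(s^2-1)^{-1/2}$ on the inhomogeneous column. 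If anything, you supply a detail the paper leaves implicit, namely the proof that $|\lambda+\rho(\lambda)|\geq|\rho(\lambda)|$ via $\Re\{\rho(\lambda)/\lambda\}>0$, which is needed to bound $|\lambda-\rho(\lambda)|\leq|\rho(\lambda)|^{-1}$.
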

\begin{proof}
Fix $|\lambda|\geq s$ for some $s>1$ with $\Im \{\lambda \}\geq 0$. Then from \eqref{eq:Volterra-NZBC} it is seen that
\begin{align}
\label{eq:kminus-col-1-E}
\mathbf{k}^{-,1}(\lambda;x) &= \mathbf{e}^{1}(\lambda)+ \int\limits_{-\infty}^x \mathbf{E}(\lambda)\begin{bmatrix}1 & 0 \\ 0 & \E^{2\I\rho(\lambda)(x-y)} \end{bmatrix}\mathbf{E}(\lambda)^{-1}\Delta\mathbf{\Psi}(y)\mathbf{k}^{-,1}(\lambda;y)\,\D y \\
\mathbf{k}^{+,2}(\lambda;x) &= \mathbf{e}^{2}(\lambda)+ \int\limits_{+\infty}^x \mathbf{E}(\lambda)\begin{bmatrix} \E^{-2\I\rho(\lambda)(x-y)} & 0 \\ 0 &1 \end{bmatrix}\mathbf{E}(\lambda)^{-1}\Delta\mathbf{\Psi}(y)\mathbf{k}^{+,2}(\lambda;y)\,\D y
\label{eq:kplus-col-2-E}
\end{align}
To solve the integral equation \eqref{eq:kminus-col-1-E}, we set $\mathbf{C}^{[1]}(\lambda;x,y)= \mathbf{E}(\lambda)\diag\big(1,\E^{2\I\rho(\lambda)(x-y)}\big)\mathbf{E}(\lambda)^{-1}\Delta\mathbf{\Psi}(y)$ and introduce the Neumann series
\begin{equation}
\mathbf{k}^{-,1}(\lambda;x) = \sum\limits_{n=0}^{\infty}\boldsymbol {\omega}^{[1]}_n(\lambda;x,t),
\end{equation}
where 
\begin{equation}
\boldsymbol{\omega}^{[1]}_0\defeq \mathbf{e}^{1}(\lambda),\quad \boldsymbol{\omega}^{[1]}_{n+1}(\lambda;x)\defeq \int\limits_{-\infty}^{x} \mathbf{C^{[1]}}(\lambda;x,y)\boldsymbol{\omega}^{[1]}_{n}(\lambda;y)\,\D y.
\end{equation}
First, observe that if $\lambda=|\lambda|\E^{\I \theta}$, $\theta\in\mathbb{R}$, then $|\rho(\lambda)|=(|\lambda|^4 + 2 |\lambda|^2\cos(2\theta) + 1)^{1/4}$ which implies that
\begin{equation}
\sqrt{|\lambda|^2 - 1}\leq|\rho(\lambda)|\leq \sqrt{|\lambda|^2 + 1}.
\label{eq:rho-bound}
\end{equation}
Then since $x\mapsto x/\sqrt{x^2-1}$ is a monotone decreasing map for $x>1$ we have
\begin{equation}
\frac{|\lambda|}{|\rho(\lambda)|} \leq \frac{s}{\sqrt{s^2 -1}}.
\end{equation}
Now let $\| \cdot \|$ denote the (subordinate) matrix norm induced by the $\ell^1$-norm on $\mathbb{C}^2$ and observe that
\begin{equation}
\begin{aligned}
\|\mathbf{E}(\lambda)\|\|\mathbf{E}(\lambda)^{-1}\| &= \frac{|\lambda + \rho(\lambda)|}{2|\rho(\lambda)|}(1+|\lambda-\rho(\lambda)|)^2=\frac{1}{|\rho(\lambda)|}+\frac{|\lambda-\rho(\lambda)|+|\lambda+\rho(\lambda)|}{2|\rho(\lambda)|}\\ &\leq \frac{1}{\rho(\lambda)} + \frac{|\lambda|}{2|\rho(\lambda)|} + \frac{1}{2} + \frac{1}{2|\rho(\lambda)|^2} \\&\leq \frac{1}{2}\left(1+\frac{s}{\sqrt{s^2-1}}\right)+\frac{1}{\sqrt{s^2-1}}+\frac{1}{2(s^2-1)}\eqdef c(s).
\end{aligned}
\label{eq:c-definition}
\end{equation}
This uniform bound implies
\begin{equation}
\|\mathbf{C}^{[1]}(\lambda;x,y)\| \leq c(s)\max\{1,\E^{-2\Im\{ \rho(\lambda)\}(x-y)} \}|\Delta\psi(y)|,
\end{equation}
which for $-\infty<y<x$ becomes
\begin{equation}
\|\mathbf{C}^{[1]}(\lambda;x,y)\| \leq c(s)|\Delta\psi(y)|.
\end{equation}
Then for the $n^\mathrm{th}$ iterate in the Neumann series we have
\begin{multline}
\| \boldsymbol{\omega}^{[1]}_n(\lambda;x) \|_{\ell^1}\\
\begin{aligned}
&\leq \|\mathbf{e}^{1}(\lambda)\|_{\ell^1} \int\limits_{\pm \infty}^{x} \int\limits_{\pm \infty}^{x_1} \cdots \int\limits_{\pm \infty}^{x_{n-1}}\|\mathbf{C}^{[1]}(\lambda;x,x_1) \| \cdots \|\mathbf{C}^{[1]}(\lambda;x_{n-1},x_n) \|\,\D x_n\,\, \D x_{n-1}\cdots \D x_1\\
& \leq \|\mathbf{e}^{1}(\lambda)\|_{\ell^1} \int\limits_{\pm \infty}^{x} \int\limits_{\pm \infty}^{x_1} \cdots \int\limits_{\pm \infty}^{x_{n-1}} |\Delta\psi(x_1)||\Delta\psi(x_2)|\cdots |\Delta\psi(x_n)|\cdot c(s)^n \,
\D x_n \cdots \D x_1.
\end{aligned}
\end{multline}
Setting
\begin{equation}
\tau(x_n)=c(s)\int\limits_{\pm\infty}^{x_n} |\Delta\psi(x_{n-1})|\, \D x_{n-1}, \quad x_0=x
\end{equation}
and $\tau_n \defeq \tau(x_n)$ gives
\begin{equation}
\begin{split}
\| \boldsymbol{\omega}^{[1]}_n(\lambda;x) \|_{\ell^1}&\leq  \|\mathbf{e}^{1}(\lambda)\|_{\ell^1} \int\limits_{\pm \infty}^{\tau(x)} \int\limits_{\pm \infty}^{\tau_1} \cdots \int\limits_{\pm \infty}^{\tau_{n-1}}  \,d\tau_n\,d\tau_{n-1}\cdots d\tau_1\\
&= \|\mathbf{e}^{1}(\lambda)\|_{\ell^1}\frac{\tau(x)^n}{n!}\leq \|\mathbf{e}^{1}(\lambda)\|_{\ell^1}\frac{c(s)^n\|\Delta\psi \|^n_1}{n!}.
\end{split}
\label{eq:omega-bound}
\end{equation}
Therefore the Neumann series converges absolutely and uniformly in $x$ for $|\lambda|>s$ restricted to $\Im\{\lambda \}\geq 0$, and we obtain
\begin{equation}
\sup_{x\in\mathbb{R}}\|\mathbf{k}^{-,1}(\lambda;x)\|_{\ell^1} \leq \left(1+\frac{1}{\sqrt{s^2-1}}\right)\E^{c(s)\|\Delta\psi\|_1}.
\label{eq:kminus-col-1-apriori-bound}
\end{equation}
Similarly, because $\| \mathbf{e}^2(\lambda) \|_{\ell^1} = \| \mathbf{e}^1(\lambda) \|_{\ell^1}$ and $\| \diag(\E^{-2\I\rho(x-y)},1) \|\leq 1$ on $x<y<+\infty$, we have
\begin{equation}
\sup_{x\in\mathbb{R}}\|\mathbf{k}^{+,2}(\lambda;x)\|_{\ell^1} \leq \left(1+\frac{1}{\sqrt{s^2-1}}\right)\E^{c(s)\|\Delta\psi\|_1}.
\label{eq:kplus-col-2-apriori-bound}
\end{equation}
\end{proof}

\section{Asymptotic Properties of $a(\lambda)$ for Large $\lambda$}
\label{A:lemma-radius-proof}
\begin{proof}[Proof of Lemma~\ref{lemma:radius}]
Since $a(\lambda;t)=a(\lambda;0)\eqdef a(\lambda)$, we omit the time-dependence of the quantities involved in the proof. We suppose that $\psi(x;0)$ is not identically $1$ since otherwise $a(\lambda)\equiv 1$, and we also a priori assume that $|\lambda|\geq2$, and hence $c(s)$ defined in \eqref{eq:c-definition} satisfies
\begin{equation}
c(s)<2, \quad\text{for~}s>2.
\label{eq:c-bound}
\end{equation}
With his assumption Lemma~\ref{lemma:Neumann} implies the following a priori bounds 
\begin{equation}
\sup_{x\in\mathbb{R}}\|\mathbf{k}^{-,1}(\lambda;x)\|_{\ell^1} \leq 2 \E^{2\|\Delta\psi\|_1},\quad \sup_{x\in\mathbb{R}}\|\mathbf{k}^{+,2}(\lambda;x)\|_{\ell^1} \leq 2 \E^{2\|\Delta\psi\|_1}\,.
\label{eq:columns-a-priori}
\end{equation}
Note that $\mathbf{E}(\lambda)=\mathbb{I}+\I(\lambda-\rho(\lambda))\sigma_1$. 
We write $\mathbf{k}^{-,1}(\lambda;x)\eqdef [u_1(\lambda;x);\,\, v_1(\lambda;x)\E^{2\I\rho(\lambda)x}]^\top$. Then the second component of \eqref{eq:kminus-col-1-E} reads
\begin{multline}
\E^{2\I\rho(\lambda)x} v_1(\lambda;x)= \I(\lambda-\rho(\lambda)) +n(\lambda)^2 \Biggl\{- \int\limits_{-\infty}^x \E^{2\I\rho(\lambda)(x-y)}\Delta\psi(y)^* u_1(\lambda;y) \,\D y \biggr. \\ + \I(\lambda-\rho(\lambda)) \int\limits_{-\infty}^x \left(1-\E^{2\I\rho(\lambda)(x-y)}\right)\Delta\psi(y)  \E^{2\I\rho(\lambda)y}v_1(\lambda;y)\,\D y 
-\ (\lambda-\rho(\lambda))^2 \int\limits_{-\infty}^x\Delta\psi(y)^*u_1(\lambda;y)\biggl. \Biggr\}.
\label{eq:v1-1}
\end{multline}
Note that since $\Delta\psi$, $\psi'\in L^{1}(\mathbb{R})$, we have $\psi, \Delta\psi \in L^{\infty}(\mathbb{R})$. Using this together with the boundedness of $u_1(\lambda;y)$ for $-\infty<y\leq x$, $\Im\rho(\lambda)>0$, integrating by parts in the first integral and using the differential equation \eqref{eq:Lax-x} gives
\begin{multline}
\E^{2\I\rho(\lambda)x} v_1(\lambda;x)={n(\lambda)^2} \I(\lambda-\rho(\lambda)) +n(\lambda)^2 \Biggl\{\frac{\Delta\psi(x)^* u_1(\lambda;x)}{2\I\rho(\lambda)} \Biggr. + \frac{1}{2\I\rho(\lambda)}\mathscr{T}_1[\Delta\psi]\E^{2\I\rho(\lambda)x}v_1(\lambda;x)\\{}
- \frac{1}{2\I\rho(\lambda)}\int\limits_{-\infty}^x \E^{2\I\rho(\lambda)(x-y)}\left(\psi'(y)^*+\I(\rho(\lambda)-\lambda)\Delta\psi(y)^* \right) u_1(\lambda;y)\, \D y \\ - (\lambda-\rho(\lambda))^2 \int\limits_{-\infty}^x\Delta\psi(y)^*u_1(\lambda;y)\, \D y \Biggr. \Biggl. \Biggr\},
\label{eq:v1-2}
\end{multline}
where $\mathscr{T}_1[\Delta\psi]$ denotes the following Volterra integral operator on $L^{\infty}(\mathbb{R})$:
\begin{multline}
\mathscr{T}_1[\Delta\psi]h(x)\defeq-\int\limits_{-\infty}^x \E^{2\I\rho(\lambda)(x-y)} \Delta\psi(y)^*\psi(y)h(y)\,\D y \\ - 2\rho(\lambda)(\lambda-\rho(\lambda))\int\limits_{-\infty}^{x}\left(1-\E^{2\I\rho(\lambda)(x-y)}\right)\Delta\psi(y)h(y)\, \D y.
\end{multline}
The assumption $|\lambda|\geq 2$ implies that $|\lambda-\rho(\lambda)||\rho(\lambda)|<1$ and hence the bound
\begin{equation}
\| \mathscr{T}_1[\Delta\psi]\|_\infty \leq \|\Delta\psi\|_1 (\|\psi\|_\infty + 4)
\end{equation}
on the operator norm of $\mathscr{T}_1[\Delta\psi]$ on $L^{\infty}(\mathbb{R})$. Thus, with the uniform bounds \eqref{eq:columns-a-priori}, if $2|\rho(\lambda)|> \|\Delta\psi\|_1 (\|\psi\|_\infty + 4)$ then we have
\begin{equation}
\begin{aligned}
\| \E^{2\I\rho(\lambda)\cdot} v_1(\lambda;\cdot) \|_\infty &\leq \left\|\left(\mathscr{I}-\tfrac{1}{2\I\rho(\lambda)}\mathscr{T}_1[\Delta\psi]\right)^{-1} \right\|_{\infty}\\
&\quad\quad\quad\quad{}\cdot\frac{1}{|\rho(\lambda)|}\left(2+ \left[\|\Delta\psi\|_\infty + \|\psi'\|_1 + \frac{2\|\Delta\psi\|_1}{|\rho(\lambda)|} \right]2\E^{2\|\Delta\psi\|_1}\right)\\
&\leq \left(1-\frac{\|\Delta\psi\|_1 (\|\psi\|_\infty + 4)}{2|\rho(\lambda)|}\right)^{-1}\\
&\quad\quad\quad\quad{}\cdot\frac{1}{|\rho(\lambda)|}\left(2+ \left[\|\Delta\psi\|_\infty + \|\psi'\|_1 + \frac{2\|\Delta\psi\|_1}{|\rho(\lambda)|} \right]2\E^{2\|\Delta\psi\|_1}\right),
\end{aligned}
\label{eq:v1-sup-bound}
\end{equation}
where we have again used the bound $|\lambda-\rho(\lambda)||\rho(\lambda)|<1$, together with $|\lambda-\rho(\lambda)|^2|\rho(\lambda)|^2< 1/2$, and $|n(\lambda)^2| < 2$ for $|\lambda|\geq 2$. Similarly, the first component of \eqref{eq:kminus-col-1-E} reads
\begin{multline}
u_1(\lambda;x) = {n(\lambda)^2} + n(\lambda)^2 \Biggl\lbrace \int\limits_{-\infty}^x \Delta\psi(x)\E^{2\I\rho(\lambda)y} v_1(y)\, \D y \\
-\I(\lambda-\rho(\lambda)) \int\limits_{-\infty}^x \Delta\psi(y)^*(\E^{2\I\rho(\lambda)(x-y)} - 1)u_1(\lambda;y)\, \D y \Biggr. \\
+(\lambda-\rho(\lambda))^2\int\limits_{-\infty}^{x}\E^{2\I\rho(\lambda)(x-y)}\Delta\psi(y) \E^{2\I\rho(\lambda)y}v_1(\lambda;y)\,\D y
\Biggl. \Biggr \rbrace ,
\end{multline}
from which we directly obtain the bound
\begin{multline}
\|u_1(\lambda;\cdot) - 1 \|_\infty \leq \frac{1}{2|\rho(\lambda)|^2} \\{}+ 2\left(\|\E^{2\I\rho(\lambda)\cdot}v_1(\lambda;\cdot) \|_{\infty}\left(1+ \frac{1}{2|\rho(\lambda)|^2} \right) \|\Delta\psi\|_1 +\frac{2}{|\rho(\lambda)|}\|\Delta\psi\|_12\E^{2\|\Delta\psi\|_1}\right)\,.
\label{eq:u1-sup-bound}
\end{multline}
We now proceed with getting the analogous estimates for the elements of the other column
\begin{equation}
\mathbf{k}^{+,2}(\lambda;x)\eqdef [\E^{-2\I\rho(\lambda)x}u_2(x);\,\, v_2(x)]^\top.
\end{equation}
The first component of \eqref{eq:kplus-col-2-E} reads:
\begin{multline}
\E^{-2\I\rho(\lambda)x}u_2(x)=n(\lambda)^2\I(\lambda-\rho(\lambda)) + n(\lambda)^2 \Biggl\{\int\limits_{+\infty}^x \E^{-2\I\rho(\lambda)(x-y)}\Delta\psi(y)v_{1}(\lambda;y)\,\D y\Biggr.\\
-\I(\lambda-\rho(\lambda))\int\limits_{+\infty}^{x} (1-\E^{-2\I\rho(\lambda)(x-y)})\Delta\psi(y)^* \E^{-2\I\rho(\lambda)x}u_1(\lambda;y)\,\D y \\{}+(\lambda-\rho(\lambda))^2 \int\limits_{+\infty}^{x}\Delta\psi(y)v_2(\lambda;y)\,\D y\Biggl. \Biggl\}.
\end{multline}
As before, integrating by parts in the first integral and using the differential equation \eqref{eq:Lax-x} to eliminate $v_{1y}(\lambda;y)$ gives
\begin{multline}
\E^{-2\I\rho(\lambda)x}u_2(x)=n(\lambda)^2\I(\lambda-\rho(\lambda)) + n(\lambda)^2 \Biggl\{\frac{\Delta\psi(x)v_2(\lambda;x)}{2\I\rho(\lambda)} + \frac{1}{2\I\rho(\lambda)}\mathscr{T}_2[\Delta\psi]\E^{2\I\rho(x)u_2(\lambda;x)}\\
 - \frac{1}{2\I\rho(\lambda)}\int\limits_{+\infty}^x \E^{-2\I\rho(\lambda)(x-y)}\left(\psi'(y) + \Delta\psi(y)\I(\lambda-\rho(\lambda))\right)v_2(\lambda;y)\, \D y\Biggr. \\
 +(\lambda-\rho(\lambda))^2 \int\limits_{+\infty}^{x}\Delta\psi(y)v_2(\lambda;y)\,\D y\Biggl. \Biggl\},
\end{multline}
where
\begin{multline}
\mathscr{T}_2[\Delta\psi]h(x)\defeq \int\limits_{+\infty}^x \E^{-2\I\rho(\lambda)(x-y)}\Delta\psi(y)\psi(y)^*h(y)\,\D y \\+ 2\rho(\lambda)(\lambda-\rho(\lambda))\int\limits_{+\infty}^x \left( 1 - \E^{-2\I\rho(\lambda)(x-y)}\right)\Delta\psi(y)^* h(y)\,\D y.
\end{multline}
Again the assumption $|\lambda|\geq 2$ the bound
\begin{equation}
\| \mathscr{T}_2[\Delta\psi]\|_\infty \leq \|\Delta\psi\|_1 (\|\psi\|_\infty + 4)
\end{equation}
on the operator norm of $\mathscr{T}_2[\Delta\psi]$ on $L^{\infty}(\mathbb{R})$. Thus, if $2|\rho(\lambda)|> \|\Delta\psi\|_1 (\|\psi\|_\infty + 4)$ then we obtain
\begin{multline}
\| \E^{-2\I\rho(\lambda)\cdot} u_2(\lambda;\cdot) \|_\infty
\leq \left(1-\frac{\|\Delta\psi\|_1 (\|\psi\|_\infty + 4)}{2|\rho(\lambda)|}\right)^{-1}\\{}\cdot\frac{1}{|\rho(\lambda)|}\left(2+ \left[\|\Delta\psi\|_\infty + \|\psi'\|_1 + \frac{2\|\Delta\psi\|_1}{|\rho(\lambda)|} \right]2\E^{2\|\Delta\psi\|_1}\right),
\label{eq:u2-sup-bound}
\end{multline}
for $|\lambda|\geq 2$. Similarly, for the second component of \eqref{eq:kplus-col-2-E} we have
\begin{multline}
v_2(\lambda;x) = n(\lambda)^2 + n(\lambda)^2\Biggl \{\int\limits_{+\infty}^{x}-\Delta\psi(y)^* \E^{-2\I\rho(\lambda)y}u_1(\lambda;y)\, \D y \\
+\I(\lambda-\rho(\lambda))\int\limits_{+\infty}^{x}(\E^{-2\I\rho(\lambda)(x-y)} - 1)\Delta\psi(y)v_2(\lambda;y) \\ - (\lambda-\rho(\lambda))^2\int\limits_{+\infty}^{x}\E^{-2\I\rho(\lambda)(x-y)}\Delta\psi(y)^*\E^{-2\I\rho(\lambda)y}u_2(\lambda;y)\,\D y
\end{multline}
from which we can directly obtain the estimate
\begin{multline}
\|v_2(\lambda;\cdot) - 1 \|_\infty \leq \frac{1}{2|\rho(\lambda)|^2} \\+ 2\left(\|\E^{-2\I\rho(\lambda)\cdot}u_1(\lambda;\cdot) \|_{\infty}\left( \frac{2|\rho(\lambda)|^2+1}{2|\rho(\lambda)|^2} \right) \|\Delta\psi\|_1 +\frac{2}{|\rho(\lambda)|}\|\Delta\psi\|_12\E^{2\|\Delta\psi\|_1}\right)\,.
\label{eq:v2-sup-bound}
\end{multline}
For convenience we denote the (identical) bound in the estimates \eqref{eq:u1-sup-bound} and \eqref{eq:v2-sup-bound} by $m(\lambda)$, namely
\begin{equation}
\max\{ \| \E^{-2\I\rho(\lambda)\cdot} u_2(\lambda;\cdot) \|_\infty, \| \E^{2\I\rho(\lambda)\cdot} v_1(\lambda;\cdot) \|_\infty\}\leq m(\lambda),
\label{eq:common-bound-off-diag}
\end{equation}
where
\begin{equation}
m(\lambda)\defeq \left(1-\frac{\|\Delta\psi\|_1 (\|\psi\|_\infty + 4)}{2|\rho(\lambda)|}\right)^{-1}\cdot\frac{1}{|\rho(\lambda)|}\left(2+ \left[\|\Delta\psi\|_\infty + \|\psi'\|_1 + \frac{2\|\Delta\psi\|_1}{|\rho(\lambda)|} \right]2\E^{2\|\Delta\psi\|_1}\right),
\end{equation}
which is well-defined for $2|\rho(\lambda)|>\|\Delta\psi\|_1 (\|\psi\|_\infty + 4)$ and $m(\lambda)>0$, with $m(\lambda)=O(\lambda^{-1})$ as $\lambda\to\infty$ since $\rho(\lambda)=O(\lambda)$ as $\lambda\to\infty$. Then from \eqref{eq:u1-sup-bound} and \eqref{eq:v2-sup-bound} we obtain:
\begin{multline}
\max\{ \|u_1(\lambda;\cdot)- 1\|_\infty, \| v_2(\lambda;\cdot-1)\|_\infty\} \leq 
\frac{1}{2|\rho(\lambda)|^2} + 2{m(\lambda)\|\Delta\psi\|_1} + \frac{m(\lambda)\|\Delta\psi\|_1}{|\rho(\lambda)|^2}\\
+\frac{8\|\Delta\psi\|_1\E^{2\|\Delta\psi\|_1}}{|\rho(\lambda)|}\eqdef q(\lambda),
\label{eq:common-bound-diag}
\end{multline}
where clearly $q(\lambda)>0$ and $q(\lambda)=O(\lambda^{-1})$ as $\lambda\to\infty$. Then from the definition \eqref{eq:a-b-NZBC} of $a(\lambda)$ and the estimates \eqref{eq:common-bound-off-diag} and \eqref{eq:common-bound-diag} it follows that
\begin{equation}
|a(\lambda)-1| \leq 2q(\lambda)+q(\lambda)^2+m(\lambda)^2 = O(\lambda),\quad \lambda\to \infty, \quad\Im\{ \rho(\lambda)\}>0,
\label{eq:a-asymptotics}
\end{equation}
which proves the first claim. To guarantee that $|a(\lambda)|>\frac{1}{2}$ outside a disk centered at the origin in the $\lambda$-plane, it suffices to ensure that $|q(\lambda)|\leq\frac{1}{8}$ and $m(\lambda)\leq\frac{1}{4}$ by the inequality in \eqref{eq:a-asymptotics}. First, from \eqref{eq:common-bound-diag} it is seen that if 
\begin{equation}
|\rho(\lambda)|\geq\max\left\{ 4, 256\| \Delta\psi\|_1\E^{2\| \Delta\psi \|_1}\right\}
\label{eq:condition-1}
\end{equation}
and
\begin{equation}
m(\lambda)\leq \min\left\{\frac{1}{4}, \frac{1}{64\| \Delta\psi \|_1} \right\},
\label{eq:condition-2}
\end{equation}
then $q(\lambda)\leq\frac{1}{8}$. Note also that we have the condition 
\begin{equation}
|\rho(\lambda)|> \frac{1}{2}\|\Delta\psi\|_1 (\|\psi\|_\infty + 4)
\label{eq:condition-3}
\end{equation}
for $m(\lambda)$ to be well-defined and it remains to obtain a lower bound on $|\rho(\lambda)|$ to guarantee \eqref{eq:condition-2}. We first strengthen \eqref{eq:condition-3} and demand
\begin{equation}
|\rho(\lambda)|\geq 2\|\Delta\psi\|_1 (\|\psi'\|_1 + 5)\geq 2\|\Delta\psi\|_1 (\|\psi\|_\infty + 4).
\label{eq:condition-4}
\end{equation}
Then using \eqref{eq:condition-4} we obtain:
\begin{equation}
m(\lambda)\leq\frac{4}{3|\rho(\lambda)|}\left(2+\frac{1}{64}+ 2\left(\|\Delta\psi\|_\infty + \|\psi'\|_1  \right)\E^{2\|\Delta\psi\|_1}\right).
\end{equation}
Thus, to ensure \eqref{eq:condition-2}, one needs to choose $\lambda$ such that
\begin{equation}
|\rho(\lambda)|\geq\frac{1}{3}\max\left\{ 16, 256\|\Delta\psi\|_1  \right\}\cdot \left(2+\frac{1}{64}+ 4 \|\psi'\|_1 \E^{2\|\Delta\psi\|_1}\right)
\label{eq:condition-5}
\end{equation}
since $\|\Delta\psi\|_\infty \leq \|\psi' \|_1$.
Now, to guarantee all of the conditions \eqref{eq:condition-1}, \eqref{eq:condition-4}, and \eqref{eq:condition-5} simultaneously, $\rho(\lambda)$ needs to satisfy
\begin{equation}
|\rho(\lambda)|\geq\max\left\{4,256\|\Delta\psi\|_1\E^{2\|\Delta\psi\|_1}, 2\|\Delta\psi\|_1 (\|\psi'\|_1 + 5),\frac{1}{3}h(\lambda), 256\|\Delta\psi\|_1 h(\lambda)\right\},
\label{eq:condition-6}
\end{equation}
where
\begin{equation}
h(\lambda)\defeq \left(2+2^{-6}+ 4\|\psi'\|_1 \E^{2\|\Delta\psi\|_1}\right).
\end{equation}
Note that the initial assumption $|\lambda|\geq 2$ is absorbed in this condition. Using \eqref{eq:rho-bound}, it is enough to choose:
\begin{equation}
|\lambda| \geq \sqrt{1 + \max\left\{4,256\|\Delta\psi\|_1\E^{2\|\Delta\psi\|_1}, 2\|\Delta\psi\|_1 (\|\psi'\|_1 + 5),\frac{1}{3}h(\lambda), 256\|\Delta\psi\|_1 h(\lambda)\right\}^2}\defeq r[\Delta\psi]\,.
\label{eq:lambda-lower-bound}
\end{equation}
to guarantee \eqref{eq:condition-6}. With this choice $m(\lambda)\leq\frac{1}{4}$ and $q(\lambda)\leq\frac{1}{8}$, and hence
\begin{equation}
|a(\lambda)-1|\leq \frac{21}{64}<\frac{1}{2}\,,
\end{equation}
which implies the second claim.
\end{proof}

\bibliographystyle{siam}
 \bibliography{BilmanMiller-NewIST}

\end{document}